\renewcommand{\Pr}[2][]{\mbox{\rm Pr}_{#1}\left[#2\right]}
\newcommand{\CandidatesArg}[1]{\mathcal{Z}_{#1}}
\newcommand{\CandidateSet}[1]{Z_{#1}}
\def\curveP{P}
\def\curveQ{Q}
\def\eps{\varepsilon}
\def\RR{{\mathbb R}}
\def\NN{{\mathbb N}}
\def\ZZ{{\mathbb Z}}
\def\XX{{\mathbb X}}
\def\cD{\mathcal{D}}
\def\RSpace{{\pazocal{R}}}
\def\Dist{{\pazocal{D}}}
\def\length{\lambda}
\DeclareMathOperator{\poly}{poly}
\DeclareMathAlphabet{\pazocal}{OMS}{zplm}{m}{n}
\renewcommand{\emph}[1]{\textit{\textbf{#1}}}
\newcommand{\disk}[2][\Delta]{
    \ifthenelse{\isempty{#2}}
    {\mathrm{b}_{#1}}
    {\mathrm{b}_{#1}(#2)}
}
\newcommand{\mdisk}[3][\Delta]{
    \ifthenelse{\isempty{#3}}
    {\mathrm{b}_{#1}^{#2}}
    {\mathrm{b}_{#1}^{#2}(#3)}
}
\newcommand{\dfree}[3][\Delta]{%
    \ifthenelse{\equal{#2}{} }
    {\cD_{#1}}
    {\cD^{(#2,#3)}_{#1}}
}
\newcommand{\Param}[1]{
    {\mathbb{T}_{#1}}
}
\newcommand{\ConcreteCandidates}[1]{
    {C_{#1}}
}
\title{Faster Approximate Covering of Subcurves under the Fr\'echet Distance}
\author{Frederik Br\"uning}{Department of Computer Science, University of Bonn, Germany}{}{}{}
\author{Jacobus Conradi}{Department of Computer Science, University of Bonn, Germany}{}{}{}
\author{Anne Driemel}{Hausdorff Center for Mathematics, University of Bonn, Germany}{}{}{}
\authorrunning{Conradi, Brüning, and Driemel}
\keywords{Clustering, Set cover, Fr\'echet distance, Approximation algorithms}
\begin{document}

\hideLIPIcs

\maketitle

\begin{abstract}
Subtrajectory clustering is an important variant of the trajectory clustering problem, where the start and endpoints of trajectory patterns within the collected trajectory data are not known in advance. We study this problem in the form of a set cover problem for a given polygonal curve: find the smallest number $k$ of representative curves such that any point on the input curve is contained in a subcurve that has Fréchet distance at most a given $\Delta$ to a representative curve. We focus on the case where the representative curves are line segments and approach this NP-hard problem with classical techniques from the area of geometric set cover: we use a variant of the multiplicative weights update method which was first suggested by Brönniman and Goodrich for set cover instances with small VC-dimension. 
We obtain a bicriteria-approximation algorithm that computes a set of  $O(k\log(k))$ line segments that cover a given polygonal curve of $n$ vertices under Fréchet distance at most $O(\Delta)$. We show that the algorithm runs in  $\widetilde{O}(k^2 n +  k n^3)$ time in expectation and uses $ \widetilde{O}(k n + n^3)$ space. For two dimensional input curves that are $c$-packed, we bound the expected running time  by $\widetilde{O}(k^2 c^2 n)$  and the space by $ \widetilde{O}(kn + c^2 n)$. In $\RR^d$ the dependency on $n$ instead is quadratic.
In addition, we present a variant of the algorithm that uses implicit weight updates on the candidate set and thereby achieves  near-linear running time in $n$ without any assumptions on the input curve, while keeping the same approximation bounds. This comes at the expense of a small (polylogarithmic) dependency on the relative arclength.
\end{abstract}

\newpage
\setcounter{page}{1}

\section{Introduction}
\pagenumbering{arabic}

The advancement of tracking technology  made it possible to record the movement of single entities at a large scale in various application areas ranging from vehicle navigation over sports analytics to the socio-ecological study of animal and human behaviour. The types of trajectories that are analyzed range from GPS-trajectories \cite{acmsurvey20} to full-body-motion trajectories~\cite{ionescu2013human3} and complex gestures \cite{Qiao2017RealtimeHG}, and even include the positions of the focus point of attention from a human eye \cite{ duchowski2002breadth, holmqvist2011eye}. In many such applications, a flood of data presents us with the challenging task of extracting useful information. If a long trajectory is given as a  sequence of positions in some parameter space, it is rarely known in advance which specific movement patterns occur. In particular, it is challenging to find the start and endpoints of such patterns, which is why popular clustering algorithms heuristically partition the trajectories into smaller subtrajectories. An example is the popular algorithm by Lee, Han and Whang~\cite{LeeHW07}. 
Since the criteria according to which one should detect, group and represent behaviour patterns vary greatly among different kinds of application, there are many different variants of the subtrajectory clustering problem, see also the survey papers~\cite{BuchinW20, wang2021survey, yuan2017review}. One line of research uses the well-established Fr\'echet distance to define similarity between subcurves, for example the works of Agarwal et al.~\cite{agarwal2018}, Buchin et al.~\cite{buchinGroup20} and  Akitaya et al.~\cite{akitaya2021subtrajectory}.
In an attempt to unify previous definitions of the underlying algorithmic problem, Akitaya et al.~\cite{akitaya2021subtrajectory} define the following geometric set cover problem. Given a polygonal curve, the goal is to ``cover'' the whole curve with a minimum number of simpler representative curves, such that each point of the trajectory is contained in a subcurve with small Fréchet distance to its closest representative curve. This is in line with traditional clustering formulations such as metric $k$-center, where clusters may overlap. In this paper, we study the set cover problem introduced by  Akitaya et al.\ and improve upon their results.

\subsection{Preliminaries} 
For any $n>1$, a sequence of points $p_1,\dots,p_n \in \RR^d$ defines a \emph{polygonal curve} $\curveP$ by linearly interpolating consecutive points, that is, for each $i$, we obtain the \emph{edge} $e_i:[0,1]\rightarrow\RR^d;t\mapsto (1-t) p_i + t p_{i+1}$. We may write $e_i=\overline{p_i\,p_{i+1}}$ for edges. We may think  $P$  as a continuous function  $P:[0,1]\rightarrow \RR^d$ by fixing $n$ values $0=t_1<\ldots<t_n=1$, and defining $P(t)=e_i\left(\frac{t-t_i}{t_{i+1}-t_i}\right)$ for $t_i\leq t\leq t_{i+1}$. We call the set $(t_1,\ldots,t_n)$ the \emph{vertex parameters} of the parametrized curve $P:[0,1]\rightarrow\RR^d$.
For $n=1$, we may slightly abuse notation to view a point $p_1$ in $\RR^d$ as a polygonal curve defined by an edge of length zero with $p_2=p_1$.
We call the number of vertices $n$ the \emph{complexity} of the curve. 

We define the \emph{concatenation} of two  curves $P,Q:[0,1]\rightarrow\RR^d$ with $P(1)=Q(0)$ by $P\oplus Q:[0,1]\rightarrow\RR^d$ with $(P\oplus Q)(t)=P(2t)$ if $t\leq 1/2$, and $(P\oplus Q)(t)=Q(2t-1)$ if $t\geq 1/2$. Note that the concatenation of two polygonal curves $P$ and $Q$ with vertices $p_1,\dots,p_n$ and $q_1,\dots,q_m$ such that $p_n=q_1$ is the polygonal curve defined by the vertices $p_1,\ldots,p_n,q_2,\ldots,q_m$.
For any two $a,b \in [0,1]$ we denote with $\curveP[a,b]$ the \emph{subcurve} of $\curveP$ that starts at $\curveP(a)$ and ends at $\curveP(b)$. Note, that $a>b$ is specifically allowed and results in a subcurve in reverse direction. 

We call the subcurves of edges \emph{subedges}. Let $\XX^d_{\ell} = (\RR^d)^{\ell}$, and think of
the elements of this set as the
set of all  polygonal curves of $\ell$ vertices in $\RR^d$.

For two parametrized curves $\curveP$ and $\curveQ$, we define their \emph{Fréchet distance} as
 \[ d_F(\curveP,\curveQ) = \inf_{\alpha,\beta:[0,1] \rightarrow [0,1]} \sup_{t \in [0,1]}
\| \curveP(\alpha(t)) - \curveQ(\beta(t)) \|, \] 

\noindent where $\alpha$ and $\beta$ range over all functions that are non-decreasing, surjective and continuous.  We call the pair $(\alpha,\beta)$ a traversal. Every traversal has a distance $\sup_{t \in [0,1]}\| \curveP(\alpha(t)) - \curveQ(\beta(t)) \|$ associated to it.
We call a curve $X$ in $\RR^d$ \emph{$c$-packed}, if for any point $p$ and and radius $r$, the length of $X$ inside the $2$-disk is bounded by $||X\cap \disk[r]{p}||\leq cr$, where $\disk[r]{p}=\{x\in\RR^d\mid\|p-x\|\leq r\}$.
Let $X$ be a set. We call a set $\RSpace$ where any $r \in \RSpace$ is of the form $r \subseteq X$ a \emph{set system} with \emph{ground set} $X$. 
We say a subset $A \subseteq X$ is \emph{shattered} by $\RSpace$ if for any $A' \subseteq A$ there exists an $r \in \RSpace$ such that $A'=r \cap A$. The \emph{VC-dimension} of $\RSpace$ is the maximal size of a set $A$ that is shattered by $\RSpace$.
For a given  weight function $w$ on the ground set $X$ and a real value $\eps > 0$, we say that a subset $C \subset X$ is an \emph{$\eps$-net} if every set of $\RSpace$ of weight at least $\eps \cdot w(X)$ contains at least one element of $C$. For any $A\subseteq X$, we write $w(A)$ short for $\sum_{a\in A}w(a)$.

\subparagraph{Computational Model} We describe our algorithms in the real-RAM model of computation, which allows to store real numbers and to perform simple operations in constant time on them. We call the following operations \emph{simple operations}. The arithmetic operations $+,-,\times,/$. The comparison operations $=,\neq,>,\geq,\leq,<,$ for real numbers with output $0$ or $1$. In addition to the simple operations, we allow the square-root operation. In Appendix~\ref{sec:squareroots}, we describe how to circumvent the square-root operation with little extra cost.

\subsection{Problem definition}
\label{sec:def}

We study the same problem as Akitaya, Chambers, Brüning and Driemel~\cite{akitaya2021subtrajectory}. Throughout the paper we assume that $d$ is a constant independent of $n$.

Let $\curveP:[0,1] \rightarrow \RR^d$ be a polygonal curve of $n$ vertices and let $\ell \in \mathbb{N}$ and $\Delta \in \RR$ be fixed parameters. 
Define the $\Delta$-\emph{coverage} of a set of center curves $C \subset \XX^d_{\ell}$ as follows:
\[ \Psi_{\Delta}(P,C) = \bigcup_{q \in C} ~ \bigcup_{0 \leq t \leq t' \leq 1} \{ s \in [t,t'] \mid  d_F(\curveP[t,t'], q)\leq \Delta   \}. \]

The $\Delta$-coverage corresponds to the part of the curve $P$ that is covered by the set of all subtrajectories that are within Fréchet distance $\Delta$ to some  curve in $C$.
If for some $P,C,\Delta$ it holds that
$\Psi_{\Delta}(P,C) = [0,1]$, then we call $C$ a $\Delta$-\emph{covering} of $P$. The problem we study in this paper is to find a $\Delta$-covering $C \subset \XX^d_{\ell}$ of $P$ of minimum size.
In particular, we study bicriterial approximation algorithms for this problem, which we formalize as follows.

\begin{figure}[t]
    \centering
    \includegraphics[scale=0.8]{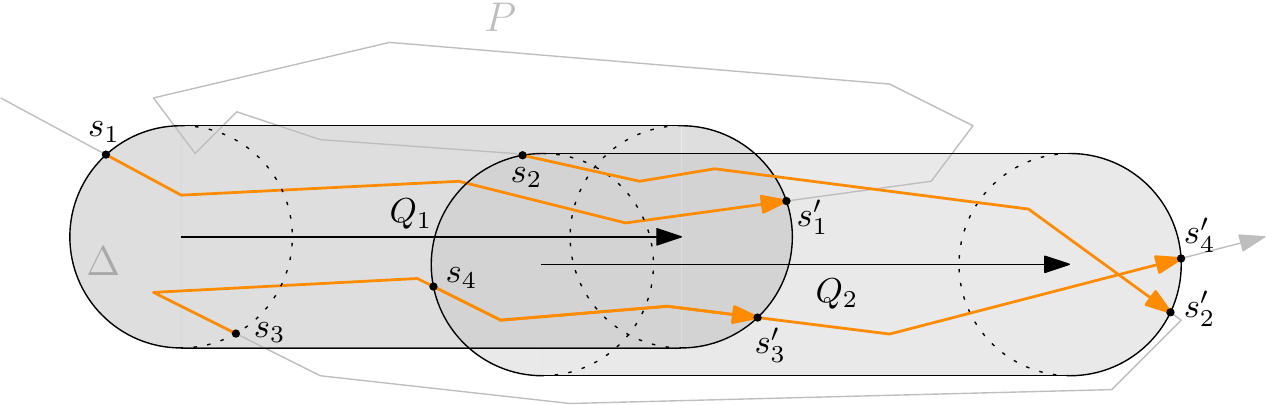}
    \caption{Illustration of the $\Delta$-coverage of a set $C=\{Q_1,Q_2\}$ and a curve $P$. Here we have $\Psi_{\Delta}(P,C)=[s_1,s'_1]\cup[s_2,s'_2]\cup[s_3,s'_4]$, since the subcurves $P[s_1,s'_1]$ and $P[s_3,s'_3]$ have Fréchet distance $\Delta$ to $Q_1$, the subcurves $P[s_2,s'_2]$ and $P[s_4,s'_4]$ have Fréchet distance $\Delta$ to $Q_2$  and each other subcurve of $P$ that has Fréchet distance at most $\Delta$ to $Q_1$ or $Q_2$ is a subcurve of $P[s_i,s'_i]$ for some $1\leq i\leq 4$.}
    \label{fig:cover}
\end{figure}

\begin{definition}[$(\alpha,\beta)$-approximate solution]\label{def:approx}
Let $P \in \XX^d_n$ be a polygonal curve, $\Delta\in\RR_+$ and $\ell \in \NN$. A set $C \subseteq \XX^d_{\ell}$ is an $(\alpha,\beta)$-approximate solution to the $\Delta$-coverage problem on $P$, if $C$ is a $\alpha\Delta$-covering of $P$ and there exists no $\Delta$-covering $C' \subseteq \XX^d_{\ell}$ of $P$ with $\beta |C'| < |C|$. 
\end{definition}

\subsection{Related work}

Buchin, Buchin, Gudmundsson, L{\"{o}}ffler and Luo were the first to consider the problem of clustering subtrajectories under the Fr\'echet distance~\cite{BuchinBGLL11}. They consider the problem of finding a single cluster of subtrajectories with certain qualities, like the number of distinct subtrajectories, or the length of the longest subtrajectory assigned to it. In their paper, they suggested a sweepline approach in the parameter space of the curves and obtain constant-factor approximation algorithms for finding the largest cluster. They also show NP-completeness of the corresponding decision problems. This hardness result extends to $(2-\eps)$-approximate algorithms. For their $2$-approximation algorithm, Buchin et al.~\cite{BuchinBGLL11} develop an algorithm that finds a legible cluster center among the subcurves of the input curve. Gudmundsson and Wong~\cite{abs-2110-15554} present a cubic conditional lower bound for this problem and show that it is tight up to a factor of $O(n^{o(1)})$, where $n$ is the number of vertices.

The algorithmic ideas presented in~\cite{BuchinBGLL11} were implemented and extended by Gudmundsson and Valladares~\cite{GudmundssonV15} who obtained  practical speed ups using GPUs.
In a series of papers, these ideas were also applied to the problem of reconstructing road maps from GPS data~\cite{buchinGroup2017,BuchinBGHSSSSSW20}.  
In a similar vain, Buchin, Kilgus and  Kölzsch~\cite{buchinGroup20} studied the trajectories of migrating animals and defined so-called group diagrams which are meant to represent the underlying migration patterns in the form of a graph. In their algorithm, to build the group diagram, they repeatedly find the largest cluster and remove it from the data, inspired by the classical greedy set cover algorithm.

The above cited works however do not offer theoretical guarantees when used for computing a clustering of subtrajectories, nor do they explicitly formulate a clustering objective. 
Agarwal, Fox, Munagala, Nath, Pan, and Taylor~\cite{agarwal2018} define an objective function for clustering subtrajectories based on the metric facility location problem, which consists of a weighted sum over different quality measures such as the number of centers and the distances between cluster centers and their assigned trajectories. While they show NP-hardness for determining whether an input curve can be covered with respect to the Fréchet distance, they also present a $O(\log^2 n)$-approximation algorithm for clustering $\kappa$-packed curves (for some constant $\kappa$) under the discrete Fr\'echet distance, where $n$ denotes the total complexity of the input. The overall running time of their algorithm is roughly quadratic in $n$, cubic in $\kappa$ and depends logarithmically on the spread of the vertex coordinates.

In our paper, we focus on the clustering formulation previously studied by  Akitaya, Chambers, Brüning, and Driemel~\cite{akitaya2021subtrajectory}. They present a pseudo-polynomial algorithm that computes a bi-criterial approximation in the sense of Definition~\ref{def:approx} with expected running time in $\tilde{O}(k(\frac{\lambda}{\Delta})^2+\frac{\lambda}{\Delta} n)$, where $\lambda$ denotes the total arclength of the input trajectory. The algorithm finds an $(\alpha,\beta)$-approximate solution with $\alpha\in O(1)$ and $\beta\in O(\ell^2 \log(k\ell))$. It should be noted that in this problem formulation some complexity constraint on the eligible cluster centers is needed to prevent the entire input curve being a cluster center in a trivial clustering.

\subsection{Our contribution}
Our main result is an  algorithm that computes an $(\alpha,\beta)$-approximate solution with $\alpha \in O(1)$  and $\beta = O(\ell \log k)$, where $k$ is the size of an optimal solution.
For general curves, the algorithm runs in   $\widetilde{O}({k}^2 n +  k n^3)$ time in expectation and uses $ \widetilde{O}({k} n + n^3)$ space. (The $\widetilde{O}(\cdot)$ notation hides polylogarithmic factors in $n$ to simplify the exposition.) If the input curve is a $c$-packed polygonal curve in the plane, the expected running time can be bounded by $\widetilde{O}({k}^2 c^2  n )$  and the space is in $ \widetilde{O}({k}n + c^2 n)$.  
Our second result is an algorithm that achieves  near-linear running time in $n$---even for general polygonal curves---while keeping the same approximation bounds at the expense of a small dependency on the arclength in the running time. The algorithm needs in expectation  $\widetilde{O}(n{k}^3 \log^4(\frac{\lambda}{\Delta k}))$ time and $\widetilde{O}(n{k}\log(\frac{ \lambda}{\Delta k}))$ space, where $\lambda$ is the arclength of the input curve. Here, we stated our results for general $\ell$ using the reduction described at the end of this section. 

In our algorithms we use a variant of the multiplicative weights update method~\cite{arora2012multiplicative}, which has been used earlier for set cover problems with small VC-dimension~\cite{bronnimann1995almost}. The difficulty in our case is that the set system initially has high VC-dimension, as shown by Akitaya et al.~\cite{akitaya2021subtrajectory}---namely $\Theta(\log n)$ in the worst case. We circumvent this by defining an intermediate set cover problem where the VC-dimension is significantly reduced.  A key idea that enables our results is a curve simplification that requires the curve to be locally maximally simplified, a notion that is borrowed from de Berg, Cook, and Gudmundsson~\cite{de2013fast}. To the best of our knowledge, our candidate generation yields the first the first strongly polynomial algorithm for approximate subtrajectory clustering under the continuous Fr\'echet distance. In particular, the running time of the algorithm does not depend on the relative arclength $\lambda/\Delta$ of the input curve or the spread of the coordinates. Our second algorithm improves the dependency on the relative arclength from quadratic to polylogarithmic as compared to previous results~\cite{akitaya2021subtrajectory}. 

\subparagraph{Reduction to line segments}
In the remainder of the paper, we will focus on finding a $\Delta$-covering with line segments, that is $\ell=2$. The following lemma provides the reduction for general $\ell$ at the expense of an increased approximation factor.

\begin{lemma}\label{lem:reduction}
Let $P \in \XX^d_n$ be a polygonal curve, $\Delta\in\RR_+$ and $\ell \in \NN$. 
Let $C \subseteq \XX^d_{\ell}$ be a $\Delta$-covering of $P$ of minimum cardinality. There exists a set of line segments $C' \subseteq \XX^d_{2}$ that is a $\Delta$-covering of $P$ with $|C'| \leq (\ell-1)|C|$.
\end{lemma}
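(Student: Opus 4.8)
The plan is to replace every center curve $q \in C$ by the set $E(q)$ of its $\ell-1$ edges, each viewed as a line segment in $\XX^d_2$, and to take $C' = \bigcup_{q \in C} E(q)$. The cardinality bound $|C'| \le \sum_{q \in C} |E(q)| \le (\ell-1)|C|$ is then immediate, so the whole content of the lemma is the claim that $C'$ is still a $\Delta$-covering of $P$. Since the $\Delta$-coverage distributes over the center set, we have $\Psi_{\Delta}(P,C') = \bigcup_{q \in C}\Psi_{\Delta}(P,E(q))$ and $\Psi_{\Delta}(P,C) = \bigcup_{q \in C}\Psi_{\Delta}(P,\{q\}) = [0,1]$, so it suffices to prove the per-center containment $\Psi_{\Delta}(P,\{q\}) \subseteq \Psi_{\Delta}(P,E(q))$ for each $q \in C$; unioning over $q$ then gives $\Psi_{\Delta}(P,C') = [0,1]$. (Throughout we assume $\ell \ge 2$.)

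To prove this containment, fix $q$ with vertex parameters $0 = \tau_1 < \dots < \tau_{\ell} = 1$ and edges $e_j = q[\tau_j,\tau_{j+1}]$, and fix $s \in \Psi_{\Delta}(P,\{q\})$. By definition there are $0 \le t \le t' \le 1$ with $s \in [t,t']$ and $d_F(P[t,t'],q) \le \Delta$; write $R = P[t,t']$. Since $R$ and $q$ are polygonal curves, their Fréchet distance is attained, so fix a traversal $(\alpha,\beta)$ of $R$ and $q$ of distance at most $\Delta$ (a routine limiting argument would remove the need for attainment). Let $u \in [0,1]$ be the parameter of $R$ corresponding to $s$ under the monotone surjective identification of $[0,1]$ with $[t,t']$, choose $x$ with $\alpha(x) = u$ (possible since $\alpha$ is surjective), and let $e_j$ be an edge with $\beta(x) \in [\tau_j,\tau_{j+1}]$.

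The key step is to extract from $R$ the subcurve that $(\alpha,\beta)$ matches to $e_j$. Set $x_1 = \min\{\,x' : \beta(x') \ge \tau_j\,\}$ and $x_2 = \max\{\,x' : \beta(x') \le \tau_{j+1}\,\}$; continuity and monotonicity of $\beta$ make these well defined with $\beta(x_1) = \tau_j$, $\beta(x_2) = \tau_{j+1}$, and $x_1 \le x \le x_2$. Restricting and rescaling $(\alpha,\beta)$ to the interval $[x_1,x_2]$ yields a traversal of $R[\alpha(x_1),\alpha(x_2)]$ and $e_j$ of distance at most $\Delta$, hence $d_F(R[\alpha(x_1),\alpha(x_2)], e_j) \le \Delta$, using reparametrization- and reversal-invariance of $d_F$. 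Since $u \in [\alpha(x_1),\alpha(x_2)]$, this subcurve contains $R(u) = P(s)$; transporting $R[\alpha(x_1),\alpha(x_2)]$ back to a subcurve $P[a,b]$ of $P$ with $t \le a \le b \le t'$ and $a \le s \le b$, we obtain $s \in \Psi_{\Delta}(P,\{e_j\}) \subseteq \Psi_{\Delta}(P,E(q))$, which completes the argument.

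I expect the only real obstacle to be the bookkeeping of the nested reparametrizations — the identifications between parameters of $P$, of $R = P[t,t']$, and of the sub-subcurve matched to $e_j$ — together with verifying that the restriction of a traversal to $[x_1,x_2]$ is again a valid (surjective, monotone, continuous) traversal onto $[\tau_j,\tau_{j+1}]$ and onto $[\alpha(x_1),\alpha(x_2)]$. No geometric difficulty arises beyond the standard invariances of the Fréchet distance.
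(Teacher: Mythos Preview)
Your proposal is correct and follows exactly the approach of the paper: take $C'$ to be the union of the edges of the curves in $C$. The paper's proof simply asserts that this set is ``clearly'' a $\Delta$-covering of $P$ with the claimed cardinality, whereas you carefully spell out the traversal-splitting argument that justifies this claim.
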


\begin{proof} Choose as set $C'$ the union of the set of edges of the polygonal curves of $C$. Clearly, this set has the claimed cardinality and is a $\Delta$-covering of $P$.
\end{proof}

\subsection{Roadmap}

In Section~\ref{sec:structure} we develop a structured variant of our problem that allows us to apply the multiplicative weight update method in the style of Brönniman and Goodrich~\cite{bronnimann1995almost} in an efficient way. Our intermediate goal in this section is to obtain a structured set of candidates for a modified coverage problem that is on the one hand easy to compute and on the other hand sufficient to obtain good approximation bounds for the original problem.
In Section~\ref{sec:simplifications}, we define a notion of curve simplification that is inspired by the work of de Berg, Gudmundsson and Cook~\cite{de2013fast}. A crucial property of this simplification is that subcurves of the input are within small Fréchet distance to subcurves of constant complexity of the simplification.
In Sections~\ref{sec:structuredCoverage}, we define a  structured notion of $\Delta$-coverage and a candidate space, which lets us take advantage of this fact. In Sections~\ref{sec:traversals} and  \ref{sec:alg_candidates}, we show we can narrow our choice down even further, to a finite set of subedges of the simplification, and still sufficiently preserve the quality of the solution.
In Section \ref{sec:sample}, we present our main algorithm. The algorithm uses the concepts and techniques developed in Section~\ref{sec:structure} in combination with the multiplicative weights update method. 
In Section \ref{sec:analysis:main} we analyze the approximation factor and running time of this algorithm. Crucially, we show that the VC-dimension of the induced set system which is implicitly used by our algorithm is small by design~(see Sections~\ref{sec:struc:feasible} to \ref{sec:feas}). We obtain results both for general as well as $c$-packed polygonal curves in Sections~\ref{sec:analysis:general} and~\ref{sec:analysis:cpacked}.
In Section \ref{sec:implicit:weights}, we present a slightly different approach to the problem. Instead of computing a finite set of candidates explicitly,  we exploit the shape of the feasible sets and show that it is possible to implicitly update the weights of a much larger set of candidates. This improves the overall dependency on the complexity of the input curve in the running time, when compared to the previous algorithm---at the cost of a logarithmic factor of the relative arc-length of the curve.
In Section~\ref{sec:conclusions}, we conclude with a brief discussion of how our techniques may be applied to other problem variants and give an overview of further research directions.

\section{Structuring the solution space}\label{sec:structure}

In this section, we introduce key concepts that allow us to transfer the problem to a set cover problem on a finite set system with small VC-dimension and still obtain good approximation bounds. The main result of this section is Theorem~\ref{thm:alg_candidates}.

\subsection{Simplifications and containers}\label{sec:simplifications}

We start by defining the notion of curve-simplification that we will use throughout the paper.
\begin{definition}[simplification]\label{def:goodsimp}
    Let $P$ be a polygonal curve in $\RR^d$. Let $(t_1,\ldots,t_n)$ be the vertex-parameters of $P$, and $p_i=P(t_i)$ the vertices of $P$. Consider an index set $1 \leq i_1<\ldots<i_k \leq n$ that defines vertices $p_{i_j}$. We call a curve $S$ defined by such an ordered set of vertices $(p_{i_1},\ldots,p_{i_k})\in(\RR^d)^k$ a \emph{simplification} of $P$. We say the simplification is $\Delta$-\emph{good}, if the following properties hold:
    \begin{compactenum}[(i)]
        \item $\|p_{i_j}-p_{i_{j+1}}\|\geq\frac{\Delta}{3}$ for $1\leq j<k$
        \item $d_F(P[t_{i_j},t_{i_{j+1}}],\overline{p_{i_j}\,p_{i_{j+1}}}) \leq 3\Delta$ for all $1\leq j<k$.
        \item $d_F(P[t_1,t_{i_1}],\overline{p_{i_1}\,p_{i_1}})\leq3\Delta$ and $d_F(P[t_{i_k},t_n],\overline{p_{i_k}\,p_{i_k}})\leq3\Delta$
        \item $ d_F(P[t_{i_j},t_{i_{j+2}}],\overline{p_{i_j}\,p_{i_{j+2}}}) > 2\Delta$ for all $1\leq j<k-1 $
    \end{compactenum}
\end{definition}
Our intuition is the following.
Property $(i)$ guarantees that $S$ does not have short edges. Property $(ii)$ and $(iii)$ together tell us, that the simplification error is small. Property $(iv)$ tells us, that the simplification is (approximately) maximally simplified, that is, we cannot remove a vertex, and hope to stay within Fr\'echet distance $2\Delta$ to $P$.

\begin{definition}[Container]\label{def:container}
    Let $P$ be a polygonal curve, let $\pi= P[s,t]$ be a subcurve of $P$, and let $(t_1,\ldots,t_n)$ be the vertex-parameters of $P$. For a simplification $S$ of $P$ defined by index set $I=(i_1,\ldots,i_k)$, define the \emph{container} $c_S(\pi)$ of $\pi$ on $S$ as $S[t_a,t_b]$, with $a=\max\left(\{i_1\}\cup\{i\in I\mid t_i\leq s\}\right)$ and $b=\min\left(\{i\in I\mid t_i\geq t\}\cup\{i_k\}\right)$.
\end{definition}

The following lemma has been proven by de Berg et al.~\cite{de2013fast}. We restate and reprove it here with respect to our notion of simplification.

\begin{restatable}{lemma}{threeedges}\cite{de2013fast}
\label{lem:three:edges}
    Let $P$ be a polygonal curve in $\RR^d$, and let $S$ be a $\Delta$-good simplification of $P$. Let $Q$ be an edge in $\RR^d$ and let $\pi$ be a subcurve of $P$ with $d_F(Q,\pi)\leq\Delta$. Then $c_S(\pi)$ consists of at most $3$ edges.
\end{restatable}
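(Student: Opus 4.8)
The plan is to argue by contradiction. Write $\pi = P[s,t]$; we may assume $s\le t$, as the reverse case is symmetric. Let $c_S(\pi)=S[t_{i_j},t_{i_m}]$, where $i_j=a$ and $i_m=b$ are the indices from Definition~\ref{def:container}, sitting at positions $j\le m$ in the index set $I=(i_1,\ldots,i_k)$. The container then has exactly $m-j$ edges, so it suffices to rule out $m-j\ge 4$.

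First I would record how the bracketing indices relate to $s$ and $t$. By the maximality in the definition of $a$, the vertex $p_{i_{j+1}}$ (which exists since $j<m\le k$) satisfies $t_{i_{j+1}}>s$; this also holds in the boundary case $a=i_1$, because then $s<t_{i_1}<t_{i_2}$. Symmetrically, $t_{i_{m-1}}<t$. If $m-j\ge 4$, then $j+3\le m-1$, so $s<t_{i_{j+1}}<t_{i_{j+3}}\le t_{i_{m-1}}<t$, and hence $\gamma:=P[t_{i_{j+1}},t_{i_{j+3}}]$ is a subcurve of $\pi$.

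Next I would transfer the bound $d_F(\pi,Q)\le\Delta$ down to $\gamma$. Restricting an optimal traversal of $(Q,\pi)$ to the portion that traverses $\gamma$ is a standard argument that produces two points $q_1,q_2$ occurring in this order on the edge $Q$ with $d_F(\gamma,\overline{q_1\,q_2})\le\Delta$; moreover, since a Fréchet traversal matches first endpoint to first endpoint and last to last, $\|q_1-p_{i_{j+1}}\|\le\Delta$ and $\|q_2-p_{i_{j+3}}\|\le\Delta$. Because the Fréchet distance of two line segments equals the maximum of their two endpoint distances, $d_F(\overline{q_1\,q_2},\overline{p_{i_{j+1}}\,p_{i_{j+3}}})\le\Delta$, and the triangle inequality for the Fréchet distance yields $d_F(\gamma,\overline{p_{i_{j+1}}\,p_{i_{j+3}}})\le 2\Delta$. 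This contradicts property~(iv) of a $\Delta$-good simplification applied with index $j+1$ (which is admissible, since $1\le j+1$ and $j+3\le m-1\le k-1$), namely $d_F(P[t_{i_{j+1}},t_{i_{j+3}}],\overline{p_{i_{j+1}}\,p_{i_{j+3}}})>2\Delta$. Hence $m-j\le 3$.

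Note that the argument uses only property~(iv) of the simplification, together with two elementary facts about the Fréchet distance: endpoint matching, and the two-segment formula. The only point requiring genuine care---and the reason the bound is exactly $3$ rather than smaller---is the treatment of the boundary cases in Definition~\ref{def:container}, where $\{i\in I\mid t_i\le s\}$ or $\{i\in I\mid t_i\ge t\}$ is empty and $a$ (resp.\ $b$) defaults to $i_1$ (resp.\ $i_k$); I expect this bookkeeping, rather than any geometric difficulty, to be the main thing to get right.
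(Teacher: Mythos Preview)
Your proof is correct and follows essentially the same approach as the paper: assume the container spans at least four edges, extract two simplification vertices $p_{i_{j+1}},p_{i_{j+3}}$ that lie strictly inside $\pi$, restrict the traversal with $Q$ to the corresponding subcurve, and use the triangle inequality together with the two-segment endpoint formula to contradict property~(iv). The paper's write-up is terser (it simply asserts that the three internal simplification vertices are interior to $\pi$ and picks matched points $q_1,q_2,q_3$ on $Q$), whereas you spell out the indexing and the boundary cases of Definition~\ref{def:container} more carefully; but the substance is identical.
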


\begin{proof}
    Assume for the sake of contradiction, that $c_S(\pi)$ contains $4$ edges, that is it has three internal vertices $s_1,s_2,s_3$. By Definition~\ref{def:container} these three vertices are also interior vertices of $\pi$. As the Fréchet distance $d_F(Q,\pi)\leq\Delta$, there are points $q_1,q_2,q_3\in Q$, that get matched to $s_1,s_2$ and $s_3$ respectively during the traversal, with $\|s_i-q_i\|\leq\Delta$. This implies $d_F(\pi[s_1,s_3],\overline{q_1\,q_3})\leq\Delta$. It also implies, that $d_F(\overline{s_1\,s_3},\overline{q_1\,q_3})\leq\Delta$. But then 
    \[d_F(\overline{s_1\,s_3},P[s_1,s_3])=d_F(\overline{s_1,s_3},\pi[s_1,s_3])\leq d_F(\overline{s_1\,s_3},\overline{q_1\,q_3}) + d_F(\pi[s_1,s_3],\overline{q_1\,q_3})\leq2\Delta,\]
    
    contradicting the assumption that $S$ is a $\Delta$-good simplification.
\end{proof}

\subsection{Structured coverage and candidate space}\label{sec:structuredCoverage}\label{sec:structuredCandidates}

We want to make use of the property of $\Delta$-good simplifications shown in Lemma \ref{lem:three:edges}. For this we adapt the notion of $\Delta$-coverage from Section~\ref{sec:def} as follows.

\begin{definition}\label{def:structuredcoverage}
Let $S$ be a polygonal curve in $\RR^d$. Let $(t_1,\ldots,t_n)$ be the vertex-parameters of $S$. Let $\ell \in \mathbb{N}$ and $\Delta \in \RR$ be fixed parameters. 
Define the \emph{structured} $\Delta$-\emph{coverage} of a set of center curves $C \subset \XX^d_{\ell}$ as 
\[ \Psi'_{\Delta}(S,C) = \bigcup_{q \in C} ~ \bigcup_{(i,j)\in J}
\Psi^{(i,j)}_{\Delta}(S,q) \]
where 
\[ \Psi^{(i,j)}_{\Delta}(S,q) = 
\{ s \in [t,t'] \mid  t_i\leq t \leq t_{i+1};\, t \leq t';\, t_{j-1}\leq t'\leq t_j;\, d_F(\curveP[t,t'], q) \leq \Delta \}, \]
and where
$J = \{  1 \leq i \leq j \leq n  \mid 1 \leq j-i \leq 4 \}$. 

If it holds that $\Psi'_{\Delta}(S,C)=[0,1]$, then we call $C$ a \emph{structured} $\Delta$-\emph{covering} of $S$.
\end{definition}

\begin{figure}
    \centering
    \includegraphics[scale=0.8]{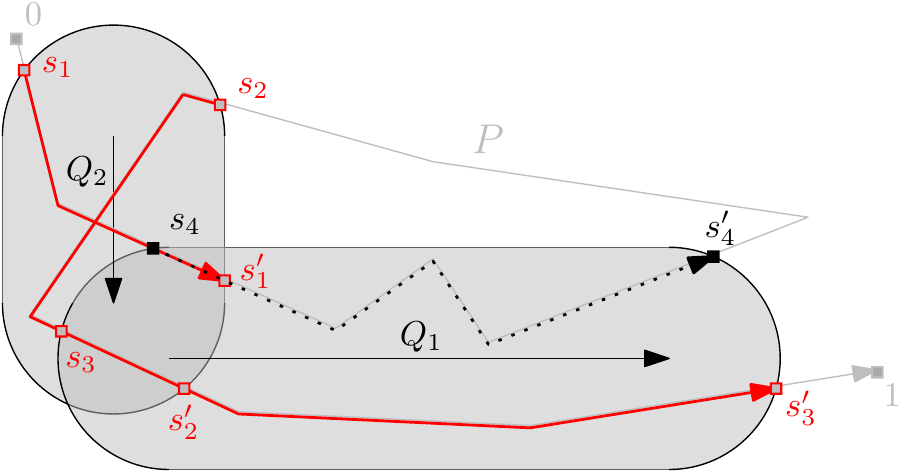}
    \caption{Example of the structured $\Delta$-coverage of a set $C=\{Q_1,Q_2\}$ and a curve $P$. Here we have $\Psi'_{\Delta}(P,C)=[s_1,s'_1]\cup[s_2,s'_3]$ since the subcurves $P[s_1,s'_1]$ and $P[s_2,s'_2]$ have Fréchet distance $\Delta$ to $Q_1$ and $P[s_3,s'_3]$ has Fréchet distance $\Delta$ to $Q_2$. Note that $[s_4,s'_4]$ is not part of the coverage since the subcurve $P[s_4,s'_4]$ consists of $4$ edges. }
    \label{fig:struct_cover}
\end{figure}

\begin{observation}\label{obs:unstructure_cover}
    In general for any polygonal curve $S$ and set of center curves $C$ it holds that $\Psi'_\Delta(S,C)\subseteq \Psi_\Delta(S,C)$. 
\end{observation}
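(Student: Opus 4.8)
The statement to prove is Observation~\ref{obs:unstructure_cover}: for any polygonal curve $S$ and set of center curves $C$, we have $\Psi'_\Delta(S,C) \subseteq \Psi_\Delta(S,C)$.

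This is essentially immediate from comparing the two definitions. The plan is to argue that each set $\Psi^{(i,j)}_{\Delta}(S,q)$ appearing in the union defining $\Psi'_\Delta(S,C)$ is a subset of one of the sets $\{s \in [t,t'] \mid d_F(S[t,t'],q) \le \Delta\}$ appearing in the union defining $\Psi_\Delta(S,C)$. Indeed, fix $q \in C$ and $(i,j) \in J$. Take any $s \in \Psi^{(i,j)}_{\Delta}(S,q)$. By definition there exist parameters $t \le t'$ with $t_i \le t \le t_{i+1}$, $t_{j-1} \le t' \le t_j$, and $d_F(S[t,t'],q) \le \Delta$, such that $s \in [t,t']$. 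But then the very same pair $(t,t')$ witnesses that $s \in \bigcup_{0 \le t \le t' \le 1}\{s \in [t,t'] \mid d_F(S[t,t'],q)\le\Delta\}$, which is one of the terms in the union over $q \in C$ defining $\Psi_\Delta(S,q) \subseteq \Psi_\Delta(S,C)$. Hence $s \in \Psi_\Delta(S,C)$.

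Taking the union over all $q \in C$ and all $(i,j) \in J$ then yields $\Psi'_\Delta(S,C) = \bigcup_{q\in C}\bigcup_{(i,j)\in J}\Psi^{(i,j)}_\Delta(S,q) \subseteq \Psi_\Delta(S,C)$, as claimed. The only subtlety worth noting — and the closest thing to an ``obstacle'' — is the constraint $a > b$ (reversed subcurves) allowed in the definition of $P[a,b]$: one should check that the extra interval and ordering constraints in $J$ and in $\Psi^{(i,j)}_\Delta$ do not permit configurations outside the range $0 \le t \le t' \le 1$ used in $\Psi_\Delta$. Since the definition of $\Psi^{(i,j)}_\Delta(S,q)$ explicitly requires $t \le t'$ and $t,t' \in [0,1]$ (being between vertex parameters), this is automatic, so no reversal issue arises. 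I would write this up as a two-sentence proof.

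\begin{proof}
Fix $q \in C$ and $(i,j) \in J$, and let $s \in \Psi^{(i,j)}_{\Delta}(S,q)$. By Definition~\ref{def:structuredcoverage} there are $t,t' \in [0,1]$ with $t \le t'$ and $d_F(S[t,t'],q) \le \Delta$ and $s \in [t,t']$; the pair $(t,t')$ then witnesses that $s$ lies in the term for $q$ in the union defining $\Psi_\Delta(S,C)$, so $s \in \Psi_\Delta(S,C)$. Taking the union over all $q \in C$ and all $(i,j) \in J$ gives $\Psi'_\Delta(S,C) \subseteq \Psi_\Delta(S,C)$.
\end{proof}
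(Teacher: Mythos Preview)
Your proof is correct. The paper states this as an \emph{Observation} without proof, since it is immediate from comparing Definition~\ref{def:structuredcoverage} with the definition of $\Psi_\Delta$: the structured coverage merely restricts the union to those subcurves $S[t,t']$ whose start and end lie on edges at most four apart, so every witnessing pair $(t,t')$ for $\Psi'_\Delta$ is automatically a witnessing pair for $\Psi_\Delta$. Your write-up makes exactly this point.
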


We now want to restrict the candidate set to subedges of a simplification of the input curve, thereby imposing more structure on the solution space. For this we begin by defining a more structured parametrization of the set of edges of a polygonal curve.

\begin{definition}[Edge space]

    We define the \emph{edge space} $\Param{n}=\{1,\ldots,n-1\}\times[0,1]$.
    We denote the set of edges of $\curveP$ with $E(\curveP)$. 
\end{definition}

\begin{definition}[Candidate space]
Let $E = \{e_1,\dots,e_{n-1}\}$ be an ordered set of edges in $\RR^d$. We define the \emph{candidate space} induced by $E$ as the set  $\CandidatesArg{E} = \{ (t_1,i_1,t_2,i_2) \in \Param{n} \times \Param{n} \mid i_1=i_2\} $. We associate an element $(t_1,i,t_2,i) \in \CandidatesArg{E}$ with the subedge $\overline{e_i(t_1)\,e_i(t_2)}$. We may abuse notation by denoting the associated edge to an element $t \in \CandidatesArg{E}$ simply with $t$. 
\end{definition}

The following theorem summarizes and motivates the above definitions of structured coverage and candidate space. Namely, we can restrict the search space to subedges of the simplification $S$ and still obtain a good covering of $P$. Moreover, we can evaluate the coverage of our solution solely based on $S$. The structured coverage only allows subcurves of $S$ that consist of at most three edges to contribute to the coverage. This technical restriction is necessary to obtain a small VC-dimension in our main algorithm later on, and it is well-motivated by Lemma~\ref{lem:three:edges}.
The proof of the theorem is rather technical and we divert it to Section~\ref{sec:proofssec3}.

\begin{theorem}\label{thm:candidate_space}
Let $S$ be a $\Delta$-good simplification of a  curve $P$. Let $C$ be a set of subedges of edges of $S$. If $C$ is a structured $8\Delta$-covering of $S$, then $C$  is an $11\Delta$-covering of $P$. Moreover, if $k$ is the size of an optimal $\Delta$-covering of $P$, then there exists such a set $C$ of size at most $3k$.
\end{theorem}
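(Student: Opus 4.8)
The plan is to prove the two assertions separately. For the first assertion, assume $C$ is a structured $8\Delta$-covering of $S$, and let $s \in [0,1]$ be an arbitrary parameter of $P$; I want to exhibit a subcurve of $P$ through $P(s)$ that is within Fréchet distance $11\Delta$ of some element of $C$. The idea is to ``lift'' the point $P(s)$ to a corresponding point on $S$ and use the structured coverage there. Concretely, since the vertices of $S$ are a subset of the vertices of $P$, the parameter $s$ of $P$ lies in some subcurve $P[t_{i_j}, t_{i_{j+1}}]$ between two consecutive simplification vertices (or in one of the two end pieces $P[t_1,t_{i_1}]$, $P[t_{i_k},t_n]$), and by properties (ii) and (iii) of Definition~\ref{def:goodsimp} this subcurve is within Fréchet distance $3\Delta$ of the corresponding edge $\overline{p_{i_j}p_{i_{j+1}}}$ of $S$ (resp.\ of a degenerate edge at an endpoint). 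Pick the parameter $\tilde s$ on that edge of $S$ that is matched to $P(s)$ under an optimal traversal. Now apply the structured $8\Delta$-covering of $S$ at $\tilde s$: there is an element $q \in C$ and a subcurve $S[t,t']$ with $\tilde s \in [t,t']$, with $S[t,t']$ spanning at most $4$ edges of $S$ (by the definition of $J$), and $d_F(S[t,t'], q) \leq 8\Delta$. Then I extend $S[t,t']$ slightly to the subcurve $S[t_a, t_b]$ where $t_a \leq t_{i_j}$ and $t_b \geq t_{i_{j+1}}$ are the simplification-vertex parameters bracketing the edge containing $\tilde s$ — this is essentially the container of $P[t_{i_j},t_{i_{j+1}}]$ — and correspondingly take the subcurve $P[t_a', t_b']$ of $P$ with the same vertex endpoints. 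Using property (ii) edge by edge and the triangle inequality along the (at most constantly many) edges involved, $d_F(P[t_a',t_b'], S[t_a,t_b]) \leq 3\Delta$, and $P(s)$ lies on $P[t_a',t_b']$. Combining $d_F(P[t_a',t_b'], S[t_a,t_b]) \leq 3\Delta$ with $d_F(S[t_a,t_b], q) \leq 8\Delta$ (the latter needs $S[t,t'] \subseteq S[t_a,t_b]$ to only differ by full edges that are themselves within the Fréchet tube — this is the delicate bookkeeping step) gives $d_F(P[t_a',t_b'], q) \leq 11\Delta$, so $P(s) \in \Psi_{11\Delta}(P,C)$. Since $s$ was arbitrary, $C$ is an $11\Delta$-covering of $P$.

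For the second assertion, let $C^\ast$ be an optimal $\Delta$-covering of $P$ of size $k$; I construct a set $C$ of subedges of $S$ of size at most $3k$ that is a structured $8\Delta$-covering of $S$. Fix a curve $q^\ast \in C^\ast$. The part of $P$ it covers is a union of maximal subcurves $\pi$ with $d_F(\pi, q^\ast) \leq \Delta$. For such a maximal covered subcurve $\pi$, consider its container $c_S(\pi)$ on $S$; by Lemma~\ref{lem:three:edges} (applicable since $\ell=2$ here, i.e.\ $q^\ast$ is an edge) $c_S(\pi)$ consists of at most $3$ edges. I want to replace $q^\ast$ by a small number of subedges of these (at most $3$) edges of $S$ that together cover, in the structured sense on $S$, everything $q^\ast$ covered on $P$. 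The natural choice: for each of the $\leq 3$ edges $e$ of $c_S(\pi)$, take as candidate the subedge of $e$ consisting of the portion of $e$ traced by the container; this yields $\leq 3$ subedges per curve $q^\ast$, hence $\leq 3k$ total. It remains to check that these subedges structurally $8\Delta$-cover the relevant portion of $S$: any point $\tilde s \in S$ lying ``over'' a point $s \in \pi$ lies on $c_S(\pi)$, hence on one of the chosen subedges, and the subcurve of $S$ consisting of that subedge (a single edge, so certainly in $J$) has Fréchet distance to itself zero $\leq 8\Delta$ trivially — but this is too weak; instead I should argue that $\tilde s$ lies in a structured covered interval by taking the subcurve of $S$ from the start to the end of $c_S(\pi)$ (at most $3$ edges, so $j - i \leq 3 \leq 4$, in $J$) and bounding its Fréchet distance to the chosen candidate. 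Here I use: $d_F(c_S(\pi), \pi) \leq 3\Delta$ (container vs.\ the subcurve it contains, via property (ii) applied to the at most $3$ constituent edges and their $P$-preimages, plus the extra end-bits of $\pi$ which are themselves $\Delta$-close to $q^\ast$ hence bounded), and $d_F(\pi, q^\ast) \leq \Delta$; choosing the candidate $q$ itself to be (a subcurve of) $c_S(\pi)$ rather than individual edges simplifies this, but then $q$ might have $3$ edges which is fine since $\ell$ is not constrained for the candidates — wait, candidates must be \emph{subedges}. So the candidate must be a single subedge; I therefore split $c_S(\pi)$ into its $\leq 3$ subedges and must show each point of $S$ over $\pi$ is structurally covered using a $J$-subcurve of $S$ of $\leq 4$ edges around it together with one of these subedges as the center. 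The Fréchet distance to check is between a $\leq 4$-edge piece of $S$ containing $\tilde s$ and the single subedge; bounding this by $8\Delta$ is the crux and should follow by composing: (piece of $S$) $\leftrightarrow$ (piece of $P$) within $3\Delta$, (piece of $P$) $\leftrightarrow$ (subedge of $q^\ast$) within $\Delta$, (subedge of $q^\ast$) $\leftrightarrow$ (subedge of $S$, namely one edge of the container) within $\Delta$ by the matching underlying Lemma~\ref{lem:three:edges}, giving $\leq 5\Delta < 8\Delta$ with slack to absorb the container end-pieces.

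The main obstacle I expect is the careful bookkeeping in the second part: making precise how a point $\tilde s$ on $S$ that is matched to a covered point $s$ of $P$ gets assigned to one specific subedge of the container and one specific $\le 4$-edge window of $S$, and then verifying that the relevant Fréchet distance really stays within $8\Delta$ — in particular handling the boundary cases where $\pi$ starts or ends in the middle of an edge of $P$ (so the container's first/last edge is only partially traced) and where $\pi$ touches the degenerate end-pieces of $S$. The factor $8\Delta$ (rather than something tighter) is presumably chosen exactly to give enough slack for all these triangle-inequality compositions and for the overlap between consecutive containers of the same $q^\ast$; the factor $3k$ comes directly from Lemma~\ref{lem:three:edges} giving $\le 3$ edges per covered maximal subcurve, provided one also argues that a single $q^\ast$ need only contribute $3$ subedges total and not $3$ per maximal subcurve — this requires noting that the distinct maximal subcurves covered by one $q^\ast$, when lifted to $S$, can be handled by the same at most $3$ edges of $S$, or alternatively that the relevant count is per element of the \emph{optimal} solution and a more global argument consolidates them; I would set this up by charging subedges to edges of $S$ visited by the container and using that these are determined by $q^\ast$.
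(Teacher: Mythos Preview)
Your proposal has the right overall shape but contains two genuine gaps, one in each direction.

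\textbf{First assertion.} The extension step is where you go wrong. You have $d_F(S[t,t'],q)\le 8\Delta$ and want to enlarge $[t,t']$ to a vertex-aligned interval $[t_a,t_b]$ so that property~(ii) gives $d_F(P[t_a',t_b'],S[t_a,t_b])\le 3\Delta$ edge by edge. But enlarging $S[t,t']$ to $S[t_a,t_b]$ does \emph{not} preserve the bound $d_F(\cdot,q)\le 8\Delta$; the added prefix or suffix of $S$ can be arbitrarily far from $q$. The paper avoids this entirely: by Observation~\ref{obs:dfsimp} there is a single traversal of $P$ and $S$ with cost $\le 3\Delta$, and restricting that traversal to the interval $[t,t']$ on $S$ yields a matching interval on $P$ (containing your point $s$) at Fr\'echet distance $\le 3\Delta$ from $S[t,t']$ --- no vertex alignment needed. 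This is the content of Lemma~\ref{lem:covertransfer}(i). Triangle inequality then gives $11\Delta$ directly.

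\textbf{Second assertion.} Your worry that a single $q^*$ might contribute three subedges \emph{per maximal subcurve} rather than three total is exactly the gap, and the global charging you sketch at the end does not close it. The paper's resolution is clean and worth internalizing: first transfer the covering from $P$ to $S$ (Lemma~\ref{lem:covertransfer}(ii)), so that $C^*$ itself is a structured $4\Delta$-covering of $S$. Now work entirely on $S$. For each $q^*\in C^*$, fix \emph{one} subcurve $\pi$ of $S$ (not of $P$) with $d_F(\pi,q^*)\le 4\Delta$ consisting of at most three edges $e_1,e_2,e_3$; split $q^*$ into $q^*_1\oplus q^*_2\oplus q^*_3$ along the matching so $d_F(e_i,q^*_i)\le 4\Delta$. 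Then for \emph{any other} subcurve $\pi'$ of $S$ with $d_F(\pi',q^*)\le 4\Delta$ (and at most three edges, since we are in the structured setting), split $\pi'$ into $\pi'_1\oplus\pi'_2\oplus\pi'_3$ along the same partition of $q^*$, and the triangle inequality through $q^*_i$ gives $d_F(\pi'_i,e_i)\le 8\Delta$. The same three edges $e_1,e_2,e_3$ work for every such $\pi'$, so you get exactly $3k$ subedges. Your chain of inequalities going through pieces of $P$ is both unnecessary and where the bookkeeping gets tangled; staying on $S$ after the initial transfer makes the $8\Delta$ bound and the factor $3$ fall out immediately.
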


\subsection{Partial traversals and coverage}\label{sec:traversals}

\begin{figure}[t]
    \centering
    \includegraphics[width=\textwidth]{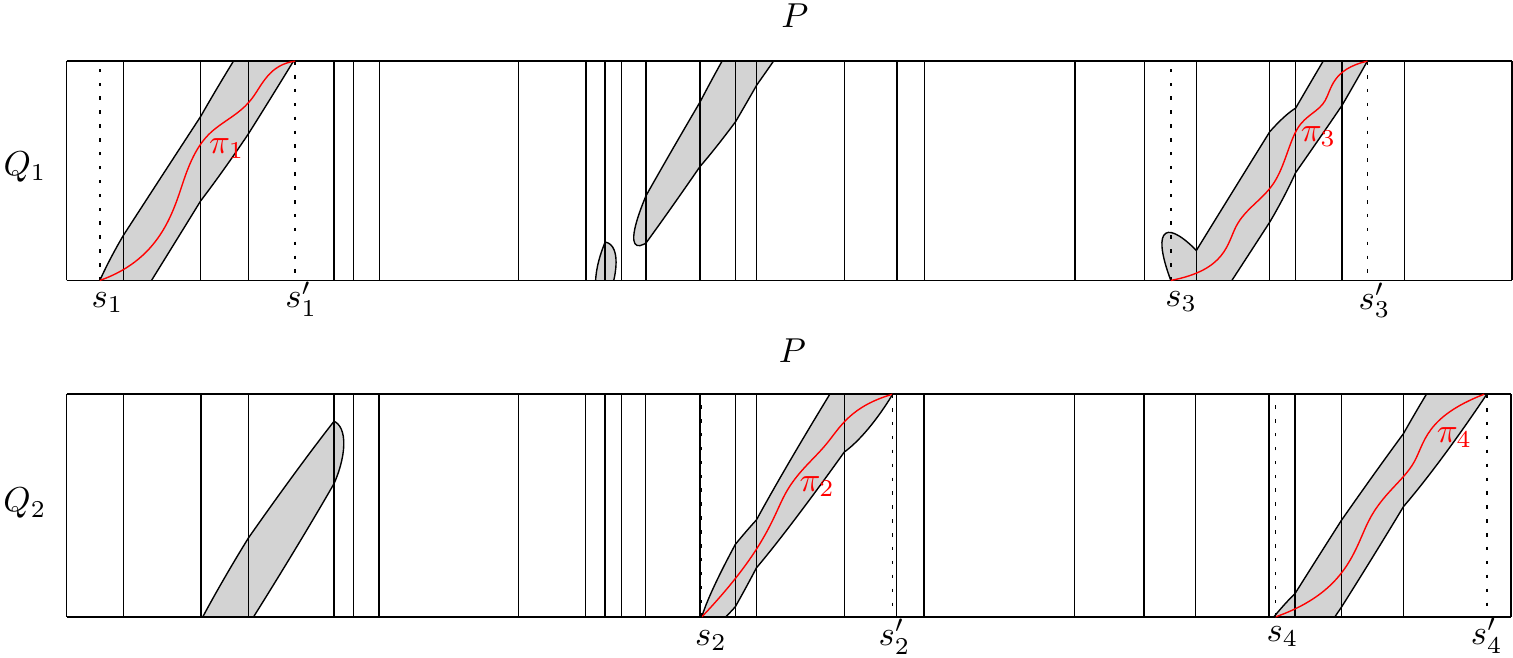}
    \caption{Free space diagrams of the curves $P$ and $Q_1$ (resp. $Q_2$) depicted in Figure~\ref{fig:cover}. The monotone pathes $\pi_i$ illustrate that the Fréchet distance between $P[s_i,s_i']$ and $Q_1$ (resp. $Q_2$) is equal to $\Delta$ for $1\leq i\leq 4$. }
    \label{fig:cover_freespace}
\end{figure}

Our algorithm and analysis use the notion of the free space diagram which was first introduced by Alt and Godau~\cite{AltG95} in an algorithm for computing the Fr\'echet distance. It is instructive to consider this concept in the context of the coverage problem.

\begin{definition}[Free space diagram]
	Let $P$ and $Q$ be two polygonal curves parametrized over $[0,1]$.
	The free space diagram of $P$ and $Q$ is the joint parametric space $[0,1]^2$ together with a not necessarily uniform grid, where each vertical line corresponds to a vertex of $P$ and each horizontal line to a vertex of $Q$.
	The $\Delta$-\emph{free space} of $P$ and $Q$ is defined as \[ \dfree{}{}(P,Q) = \left\{(x,y)\in[0,1]^2 \mid \|P(x) -Q(y)\|\leq\Delta \right\} \] 
	This is the set of points in the parametric space, whose corresponding points on $P$ and $Q$ are at a distance at most $\Delta$.
	The edges of $P$ and $Q$ segment the free space into cells.
	We call the intersection of $\dfree{}{}(P,Q)$ with the boundary of cells the \emph{free space intervals}.
\end{definition}%

Alt and Godau~\cite{AltG95} showed that the $\Delta$-free space inside any cell is convex and has constant complexity.
More precisely, it is an ellipse intersected with the cell.
Furthermore, the Fréchet distance between two curves is less than or equal to $\Delta$ if and only if there exists a path $\pi:[0,1]\rightarrow\dfree{}{}(P,Q)$ that starts at $(0,0)$, ends in $(1,1)$ and is monotone in both coordinates. 

We define the notion of partial traversal, which is a path inside the free space diagram, and relate it to the coverage problem. This will help us to define a finite set of candidates in the next section.

\begin{definition}[Partial traversal]
    Let $P$ be a polygonal curve in $\RR^d$, and let $(t_1,\ldots,t_n)$ be the vertex-parameters of $P$. Let $1 \leq i < j \leq n$ be integer values. Let $Q$ be an edge in $\RR^d$. We define an $(i,j)$-\emph{partial traversal} as a pair of continuous, monotone increasing and surjective functions, $f:[0,1]\rightarrow[a,b]$ and $g:[0,1]\rightarrow[c,d]$, where $t_i\leq a\leq t_{i+1}$,  $t_{j-1}\leq b\leq t_{j}$, $0 \leq a \leq b \leq 1$, and $0\leq c\leq d\leq 1$. We say that $(f,g)$ is a partial traversal from $(a,c)$ to $(b,d)$. 
    We say that a partial traversal is $\Delta$-\emph{feasible} if the image of the path $\pi:[0,1] \rightarrow [0,1]^2$ defined by $\pi(t)=(f(t),g(t))$ is contained inside the $\Delta$-free space $\dfree{}{}(P,Q)$. 
    We say that $\pi$ \emph{covers} a point $t$ on $P$ if $t\in[a,b]$  and  we say that $\pi$ covers a point $t$ on $Q$ if $t \in [c,d]$.   
\end{definition}

\begin{figure}[t]
    \centering
    \includegraphics[width=0.8\textwidth]{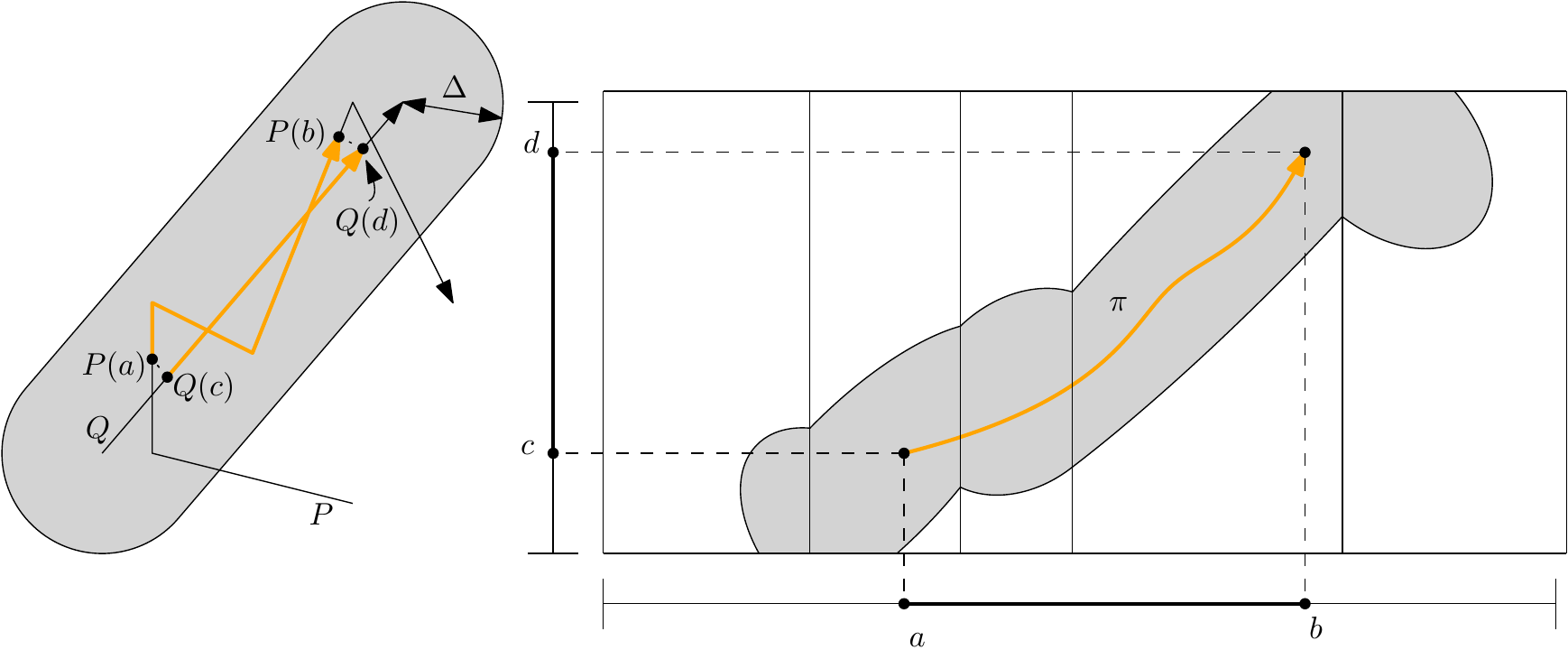}
    \caption{An illustration of a $\Delta$-feasible $(2,4)$-partial traversal $\pi$ from $(a,c)$ to $(b,d)$ of $P$ and $Q$. $\pi$ covers all points between $a$ and $b$ on $P$, and all points between $c$ and $d$ on $Q$.}
    \label{fig:coverage_radius}
\end{figure}

\subsection{A finite set of candidates}
\label{sec:alg_candidates}

By Theorem~\ref{thm:candidate_space}, it is sufficient to find a structured covering using a suitable simplification of the input curve. 
However the corresponding search space would still be infinite, even for a single edge. In this section, we will define a finite set of candidates and show that it contains a good solution.
In particular, our goal is to prove the following theorem.

\begin{theorem}\label{thm:alg_candidates}
     Let $P$ be a polygonal curve of complexity $n$ in $\RR^d$ for some $\Delta>0$. Let $S$ be a $\Delta$-good simplification of $P$. Assume there exists a $\Delta$-covering $C$ of $P$ of cardinality $k$.
    Then, there exists an algorithm that computes in $O(n^3)$ time and space a set of candidates $B \subset\CandidatesArg{E(S)}\subset \XX^d_2$ with $|B| \in O(n^3)$, such that $B$ contains a structured $8\Delta$-covering $C_B$ of $S$ of size at most $12k$. Moreover, $C$ is a $11\Delta$-covering of $P$. 
\end{theorem}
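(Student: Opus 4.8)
The plan is to leverage Theorem~\ref{thm:candidate_space}, which already tells us that a structured $8\Delta$-covering $C$ of $S$ consisting of subedges of $S$ of size at most $3k$ exists, and that such a covering is automatically an $11\Delta$-covering of $P$. The only new work in Theorem~\ref{thm:alg_candidates} is to replace the (infinite) set of \emph{all} subedges of $S$ by a finite, efficiently computable candidate set $B \subset \CandidatesArg{E(S)}$ of size $O(n^3)$ that still contains a structured $8\Delta$-covering of size at most $12k$ (the slack from $3k$ to $12k$ is the budget we get for discretization). So the heart of the proof is a \emph{snapping argument}: take an arbitrary subedge $q = \overline{e_i(t_1)\,e_i(t_2)}$ that appears in the size-$3k$ structured $8\Delta$-covering, and show it can be replaced by a nearby candidate drawn from a canonical finite set, while losing only a constant factor in the covering radius and only a constant factor in the number of curves.

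First I would fix, for each such $q$ in the optimal structured covering, the part of $S$ that $q$ is responsible for: by Definition~\ref{def:structuredcoverage} the contribution of $q$ is a union of partial-traversal-covered intervals $\Psi^{(a,b)}_{8\Delta}(S,q)$ with $b-a \le 4$. For each such interval, pick a $\Delta$-feasible $(a,b)$-partial traversal $(f,g)$ realizing it (using the free space diagram of $S$ and $q$ as in Section~\ref{sec:traversals}), and look at its endpoints $(f(0),g(0))$ and $(f(1),g(1))$ in the free space. The key geometric observation is that the reachable region in a single cell is an ellipse-cell intersection of constant complexity (Alt--Godau), so the set of ``extreme'' reachable points on the boundary of any cell, and hence the relevant endpoints of monotone paths, can be described by a constant number of algebraic conditions. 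I would then define the candidate set $B$ as follows: for each edge $e_i$ of $S$ and each pair of vertices $p_j, p_{j'}$ of $S$ (or, more precisely, each pair of free-space cell boundaries in the diagram of $e_i$ against the rest of $S$ — but since $q$ ranges over subedges of $S$ it is cleaner to take cell boundaries induced by vertices of $S$), compute the canonical candidate subedges of $e_i$ whose endpoints are the points on $e_i$ at distance exactly $c\Delta$ (for a suitable constant, say $2\Delta$ or $3\Delta$) from the vertices $p_j$, together with the endpoints $p_i, p_{i+1}$ themselves and possibly the orthogonal-projection points. This gives $O(n)$ canonical points per edge, hence $O(n^2)$ canonical subedges per edge and $O(n^3)$ candidates in total, each computable in $O(1)$ time from the real-RAM primitives (including square roots), so the whole set $B$ is computable in $O(n^3)$ time and space.

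Next I would prove the covering guarantee: given $q$ from the size-$3k$ structured $8\Delta$-covering, I claim there is a single candidate $q' \in B$ (or a bounded number of them — this is where the factor $4$ comes from, turning $3k$ into $12k$) such that $q'$ structurally $8\Delta$-covers at least as much of $S$ as $q$ does. The argument: each interval $\Psi^{(a,b)}_{8\Delta}(S,q)$ is covered by a $\Delta$-feasible partial traversal of $S$ and $q$; the subcurve of $S$ involved spans at most $4$ edges, so its vertices on $S$ are at most $5$ of the $p_j$'s. Snap the endpoints of $q$ to the nearest canonical points determined by these vertices — by construction a canonical point on $e_i$ at the right threshold distance from $p_j$ exists, and the Fréchet distance between $q$ and $q'$ is at most $O(\Delta)$ because they are subedges of the same edge $e_i$ with endpoints within $O(\Delta)$ of each other (this is where I use Property $(i)$ of $\Delta$-goodness, $\|p_{i_j} - p_{i_{j+1}}\| \ge \Delta/3$, to control how far endpoints can move relative to edge length, and the triangle inequality for $d_F$). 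Then by the triangle inequality on the free space / Fréchet distance, $d_F(S[t,t'],q') \le d_F(S[t,t'],q) + d_F(q,q') \le 8\Delta$ still holds after re-tuning the constants so that the optimum covering radius we start from is slightly below $8\Delta$ — i.e., I would actually start from the fact (provable by the same reasoning as Theorem~\ref{thm:candidate_space} but with a slightly smaller constant, say a structured $(8-c')\Delta$-covering of size $3k$) and absorb the snapping error $O(\Delta)$ into the gap. Replacing the (at most) $4$ partial traversals of $q$ by their snapped versions, I get a set $C_B \subseteq B$ of size at most $4 \cdot 3k = 12k$ that is a structured $8\Delta$-covering of $S$, and then Theorem~\ref{thm:candidate_space} (applied to $C_B$) gives that any structured $8\Delta$-covering of $S$ made of subedges of $S$ is an $11\Delta$-covering of $P$, finishing the proof.

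The main obstacle I expect is making the snapping argument respect the \emph{structured} coverage rather than just ordinary coverage: it is not enough that $d_F(q,q')$ is small, because the structured coverage $\Psi'$ only counts subcurves of $S$ spanning $\le 4$ edges, and I must ensure that after snapping, the relevant partial traversals still live in the same $(a,b)$ window with $b-a \le 4$. I would handle this by snapping \emph{within} each fixed $(a,b)$ block — the window indices $a,b$ of the original partial traversal of $q$ are preserved, and I re-derive a $\Delta$-feasible (now $8\Delta$-feasible) $(a,b)$-partial traversal of $S$ and $q'$ from the one for $q$ by perturbing the path in the $q$-coordinate only (the $S$-coordinate image $[t,t']$ is unchanged), which keeps it in the same window. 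A secondary subtlety is bounding the number of canonical points per edge: naively one might get $O(n)$ per vertex and hence too many, so I need to argue that only the $O(1)$ ``threshold-distance'' intersection points per (edge, vertex) pair matter, which follows from the constant complexity of the free-space ellipse in each cell — the reachable interval on a cell boundary is delimited by $O(1)$ algebraic points, and these are exactly the canonical points I enumerate. Once these two points are pinned down, the counting ($O(n)$ canonical points per edge $\Rightarrow O(n^2)$ subedges per edge $\Rightarrow O(n^3)$ total, each in $O(1)$ time) and the final invocation of Theorem~\ref{thm:candidate_space} are routine.
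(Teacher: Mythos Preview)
Your high-level plan is right: start from the size-$3k$ structured $8\Delta$-covering of $S$ given by Theorem~\ref{thm:candidate_space}, replace each subedge by a bounded number of candidates from a finite set $B$ of size $O(n^3)$, and then quote Theorem~\ref{thm:candidate_space} again for the $11\Delta$-covering of $P$. The paper does exactly this, and the $O(n^3)$ counting argument you give matches the paper's.

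The gap is in your snapping step. You claim $d_F(q,q')\in O(\Delta)$ because the endpoints of $q$ are within $O(\Delta)$ of a canonical point, but you never establish this, and in fact it is false for your canonical set: if $q=e_i[t_1,t_2]$ is part of the structured $8\Delta$-covering, all you know is that $e_i(t_1)$ is within $8\Delta$ of \emph{some} point $S(t)$ on an edge of $S$, not of any \emph{vertex} $p_j$; since edges of $S$ can be arbitrarily long, $e_i(t_1)$ can be far from every point of the form ``distance $c\Delta$ from a vertex on $e_i$''. Your fallback of starting from a structured $(8-c')\Delta$-covering is also not available: the $8\Delta$ in Lemma~\ref{thm:approx} arises as $4\Delta+4\Delta$ where each $4\Delta$ already uses the full $3\Delta$ simplification error, so there is no slack to absorb an additive $O(\Delta)$ snapping error without changing the constants in Definition~\ref{def:goodsimp} and redoing the earlier lemmas.

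The paper's replacement argument is not metric snapping at all; it is a \emph{monotone} move that incurs zero radius inflation. For each generating subcurve $Y$ and edge $e$, the $8\Delta$-extremal points $\mathcal{E}_{8\Delta}(Y,e)=(s^*,t^*)$ are chosen so that sliding the endpoints of $q=e[s,t]$ toward $(s^*,t^*)$ can only \emph{enlarge} $\Psi^{(1,m)}_{8\Delta}(Y,\cdot)$ (Lemma~\ref{lem:pathcreator}); this is proved directly in the free space, not via a triangle inequality on $d_F(q,q')$. The factor $4$ then comes from a different place than you suggest: different generating subcurves $Y$ have extremal points on different sides of $(s,t)$, so Lemma~\ref{lem:settransform} partitions them into the four classes $\{<,\geq\}\times\{<,\geq\}$ and picks one candidate per class (each candidate being $e[s_1,t_2]$ for a pair $(Y_1,Y_2)$ of generating subcurves, which is exactly Definition~\ref{def:candidates}). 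This is the missing idea: replace nearest-point snapping by coverage-dominating extremal points, so the $8\Delta$ bound is preserved exactly and no constant-juggling is needed.
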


The main steps to constructing this set of candidates $B$ are as follows. We first define a special set of subcurves of the simplification $S$. Intuitively, these are the containers of $S$ of subcurves of $P$ that may contribute to the coverage.

\begin{definition}[Generating subcurves]
    Let $S$ be a $\Delta$-good simplification of a polygonal curve $P$. Let $(t_1,\ldots,t_m)$ be the vertex-parameters of $S$. For any $1\leq i$, $1\leq j\leq 4$ and $i+j\leq m$, we say the subcurve
    $S[t_i,t_{i+j}]$ is a \emph{generating subcurve}. In particular, this defines all subcurves of at most three edges starting and ending at vertices of $S$. 
\end{definition}

\begin{remark}
We remark that it may be possible to work directly on $P$, instead of $S$, by defining generating subcurves on $P$. However, this would likely lead to a higher number of candidates, since we potentially would have to consider  $O(n^2)$ generating subcurves. Using the definitions above, we obtain at most $O(n)$ generating subcurves on $S$ and, as a result, we will get an upper bound of $O(n^3)$ on the total size of the candidate set. Moreover, we will show in Section~\ref{sec:analysis:cpacked} that, using this definition, the  bound on the size of the  candidate set can be  improved even further if $P$ is $c$-packed.
\end{remark}

Now, for every generating subcurve $Y$ of $S$ and every edge $e$ of $S$, we can identify an interval on $e$, that maximizes the $\Delta$-coverage on $Y$ over all subedges of $e$, for the exact statement refer to Corollary~\ref{cor:pathcreator}. For this reason, we call the endpoints of this interval extremal.

\begin{figure}[t]
    \centering
    \includegraphics[width=0.9\textwidth]{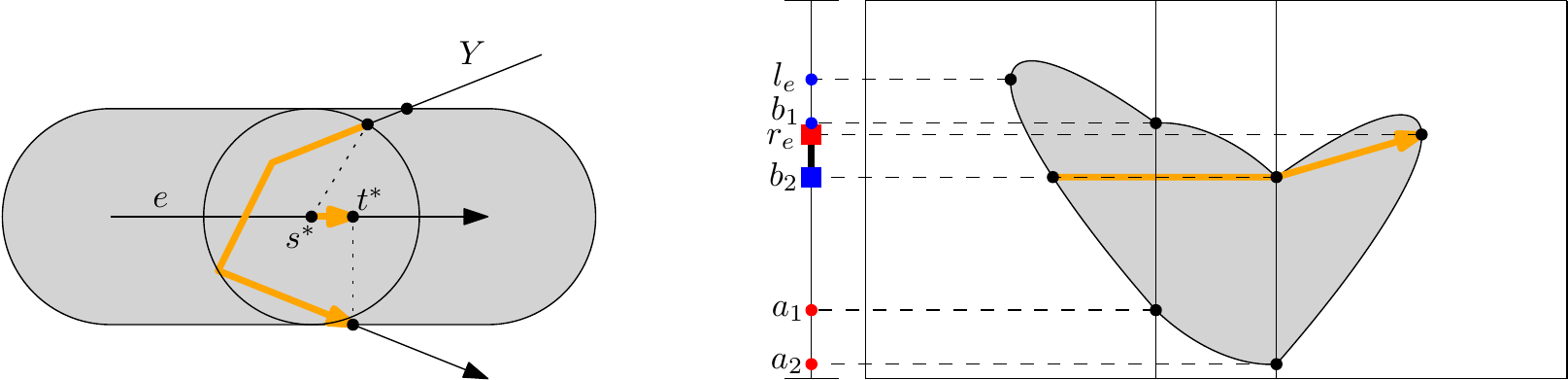}
    \caption{Examples of extremal points. Shown on the right are the two free space intervals $[a_1,b_1]$ and $[a_2,b_2]$ as well as the left- and rightmost points $l_e$ and $r_e$ of the $\Delta$-free space of $e$ and $Y$. The extremal points are defined by $b_2$ and $r_e$. All points considered for the first extremal point are shown in blue. Similarly all points considered for the second extremal point are shown in red. A traversal from the first extremal point to the second extremal point is illustrated. On the left the resulting subedge $e[s^*,t^*]$ and the maximal subcurve of $Y$ that can be matched are illustrated.}
    \label{fig:canonicalpoints}
\end{figure}

\begin{definition}[$\Delta$-extremal points]
    Given a value of $\Delta > 0$, a polygonal curve $Y:[0,1]\rightarrow\RR^d$ of $m$ edges and an edge $e:[0,1]\rightarrow\RR^d$, such that they permit a $\Delta$-feasible $(1,m)$-partial traversal. As $e$ is a single edge, the $\Delta$-free space of $Y$ and $e$ consists of a single row. Let $[a_i,b_i]$ be the $i$th vertical free space interval of the $\Delta$-free space of $Y$ and $e$. Denote by $l=(l_Y,l_e)$ the leftmost point in the $\Delta$-free space of $Y$ and $e$ and $r=(r_Y,r_e)$ the rightmost point (in case $l$ is not unique, chose the point with smallest $y$-coordinate, and $r$ as the point with the biggest $y$-coordinate). We define the $\Delta$-\emph{extremal points} induced by $Y$ on $e$ as the tuple $\mathcal{E}_{\Delta}(Y,e)=(s,t) \in [0,1]^2$ with 
    $s = \min(\{l_e\}\cup\{b_1,\ldots,b_{n-1}\})$ and $t=\max(\{r_e\}\cup\{a_1,\ldots,a_{n-1}\})$. Refer to Figure~\ref{fig:canonicalpoints} We call $s$ the first and $t$ the second extremal point. We explicitly allow that $t < s$. For this special case refer to Figure~\ref{fig:canonicalpoints2}
\end{definition}

Now we are ready to define the finite candidate set induced by $S$. For this, we need the definition of generating triples.

\begin{definition}[Generating triples]\label{def:triples}
Let $S$ be a $\Delta$-good simplification of a polygonal curve $P$. We define the set of \emph{generating triples} $T_S$ as a set of triples $(e,Y_1,Y_2)$, where $e$ is any edge of $S$, and $Y_1$ and $Y_2$ are generating subcurves of $S$ (not necessarily distinct). We include the triple $(e,Y_1,Y_2)$ in the set $T_S$ if and only if there are points $p\in e$, $p_1\in Y_1$ and $p_2\in Y_2$ such that $||p-p_1||\leq8\Delta$ and $||p-p_2||\leq8\Delta$.
\end{definition}

\begin{definition}[Candidate set]\label{def:candidates}
Let $\Delta>0$ be a given value and let $S$ be a $\Delta$-good simplification of a polygonal curve $P$. Let $T_S$ be the set of generating triples of $S$. We define the \emph{candidate set} induced by $S$ with respect to $\Delta$ as the set of line segments
\[B = \{ e[s_1,t_2] \mid \exists~ (e,S_1,S_2) \in T_S, \text{ s.t. } \mathcal{E}_{8\Delta}(S_i,e)=(s_i,t_i) \text{ for } i \in \{1,2\} \}\]
Clearly, the set $B$ can be computed in $O(|T_S|)$ time and space, if the set $T_S$ is given.
\end{definition}

\begin{figure}[t]
    \centering
    \includegraphics[width=0.8\textwidth]{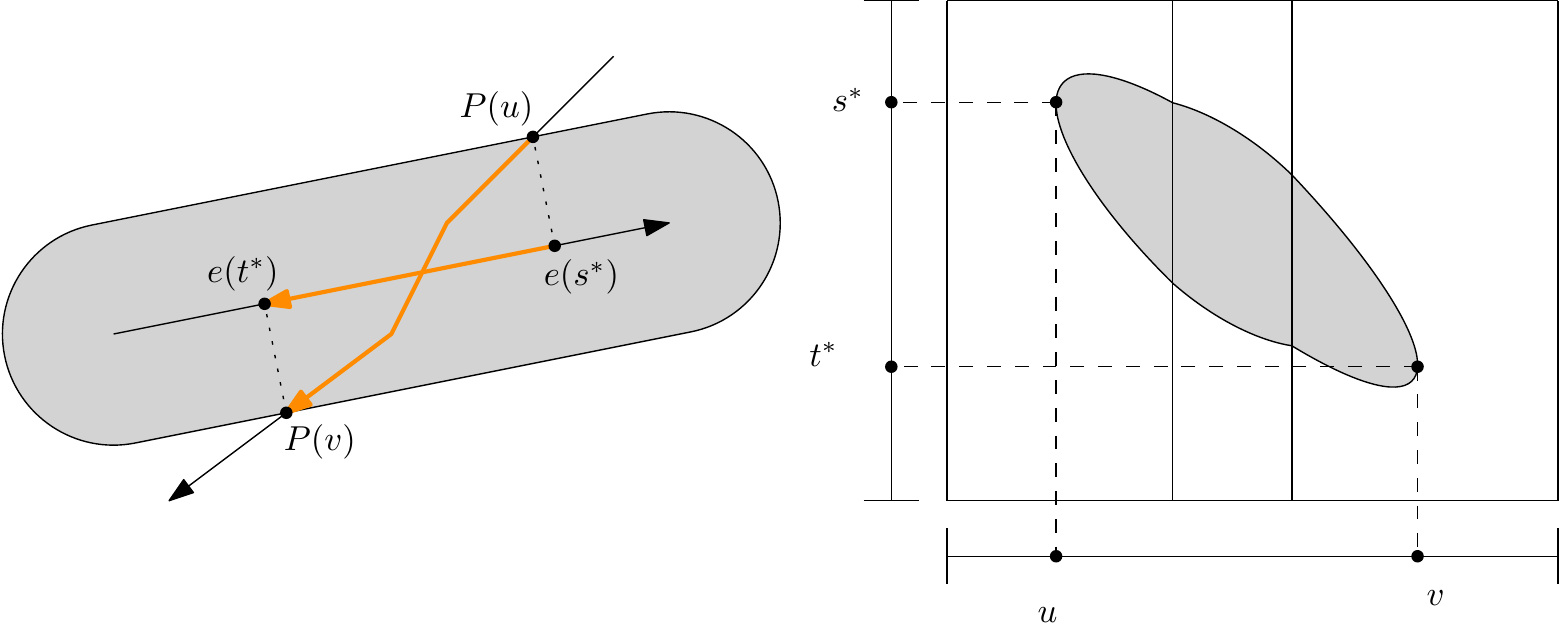}
    \caption{Illustration of $\Delta$-extremal points of a polygoncal curve $P$ and an edge $e$, where the first extremal point lies above the second. The two induced subcurves that permit a $\Delta$-feasible partial traversal covering all points on $e[s^*,t^*]$ are shown in orange. Note that $e[s^*,t^*]$ does not point in the same direction as $e$.}
    \label{fig:canonicalpoints2}
\end{figure}

\subsection{Analysis and proofs}\label{sec:proofssec3}

We now want to prove Theorems~\ref{thm:candidate_space} and \ref{thm:alg_candidates}. We will use the following observation on the Fr\'echet distance of a curve and its simplifications.

\begin{observation}\label{obs:dfsimp}
    Let $P$ be a polygonal curve, and let $S$ be a $\Delta$-good simplification of $P$, defined by the index set $I=(i_1,\ldots,i_k)$. Then $d_F(P,S)\leq3\Delta$. Moreover, there is a traversal $(f_S,g_S)$ with $0\leq t_1\leq\ldots\leq t_k\leq1$, such that $P(f_S(t_j))=S(g_S(t_j))=p_{i_j}$, with associated distance at most $3\Delta$. This can be seen by concatenating the traversals induced by conditions $(ii)$ and $(iii)$ on the respective subcurves.
\end{observation}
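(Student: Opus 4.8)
The plan is to flesh out the one-line hint in the statement: decompose $P$ along the simplification vertices, traverse each resulting piece against the corresponding piece of $S$ using conditions $(ii)$ and $(iii)$ of Definition~\ref{def:goodsimp}, and concatenate the traversals.

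Concretely, I would first subdivide the parameter interval of $P$ at $t_{i_1}<\ldots<t_{i_k}$, obtaining the prefix $P[t_1,t_{i_1}]$, the $k-1$ middle pieces $P[t_{i_j},t_{i_{j+1}}]$ for $1\leq j<k$, and the suffix $P[t_{i_k},t_n]$; correspondingly I would decompose $S$ into its edges $\overline{p_{i_j}\,p_{i_{j+1}}}$ and pair the prefix of $P$ with the degenerate edge $\overline{p_{i_1}\,p_{i_1}}$ (the first vertex of $S$) and the suffix with $\overline{p_{i_k}\,p_{i_k}}$ (the last vertex of $S$). By condition $(ii)$, each middle pair $\bigl(P[t_{i_j},t_{i_{j+1}}],\overline{p_{i_j}\,p_{i_{j+1}}}\bigr)$ admits a traversal of associated distance at most $3\Delta$; by condition $(iii)$, so do the two degenerate pairs, where one may take the $S$-coordinate to be constant at $p_{i_1}$ resp.\ $p_{i_k}$, since the Fréchet distance between a curve and a point equals the largest distance from that point to the curve.

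Next I would glue these $k+1$ traversals along a subdivision $0=\tau_0\leq\tau_1\leq\ldots\leq\tau_{k+1}=1$ of $[0,1]$, placing the traversal of the $j$-th piece on $[\tau_j,\tau_{j+1}]$ after an affine reparametrization. The one thing to verify is consistency at each junction $\tau_j$ with $1\leq j\leq k$: the $P$-coordinate reaches $t_{i_j}$, i.e.\ the point $p_{i_j}$, precisely when the traversal to the left of $\tau_j$ ends and the one to the right begins, and at that instant the $S$-coordinate is also at $p_{i_j}$, because $p_{i_j}$ is the common endpoint of consecutive edges of $S$ (or, for $j\in\{1,k\}$, the constant value used on the degenerate piece). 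Hence $(f_S,g_S)$ is continuous and non-decreasing and surjective in each coordinate, i.e.\ a genuine traversal, and its associated distance is the maximum of the per-piece distances, so at most $3\Delta$; this gives $d_F(P,S)\leq 3\Delta$. Taking $t_j:=\tau_j$ for $1\leq j\leq k$ then yields $0\leq t_1\leq\ldots\leq t_k\leq1$ with $P(f_S(t_j))=S(g_S(t_j))=p_{i_j}$, as claimed.

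I do not expect a real obstacle here, as the argument is a routine gluing of traversals. The only points requiring a little care are the treatment of the degenerate prefix and suffix pieces and the edge case $k=1$, where $S$ is a single point and only the two degenerate pieces occur, together with the standard fact that concatenating non-decreasing, surjective, continuous functions along a subdivision of $[0,1]$ again produces such a function. Note that conditions $(i)$ and $(iv)$ of Definition~\ref{def:goodsimp} play no role in this observation.
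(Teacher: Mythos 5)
Your proposal is correct and follows exactly the route the paper intends: the paper's entire justification is the one-line remark about concatenating the traversals induced by conditions $(ii)$ and $(iii)$, and you have simply fleshed out that gluing argument, including the degenerate prefix/suffix pieces and the consistency check at the junctions. Nothing is missing, and your choice of the junction parameters as the $t_j$ matching $p_{i_j}$ on both curves is precisely what the ``moreover'' part of the observation asserts.
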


The following Lemma motivates the use of the simplification $S$. It shows that for any covering of $P$ there exists a suitable structured covering of $S$. Moreover, we can transfer a structured cover of $S$ back to $P$.

\begin{lemma}\label{lem:covertransfer}
    Let $P$ be polygonal curve, and let $S$ be a $\Delta$-good simplification of $P$. Let $P'$ be a polygonal curve, with $d_F(P,P')\leq\Delta'$. Assume there exists a set $C\subset\XX^d_2$ of cardinality $k$, such that $\Psi_\Delta(P,C) = [0,1]$. Then 
    \begin{itemize}
        \item[(i)] $\Psi_{\Delta+\Delta'}(P',C) = [0,1]$ and
        \item[(ii)] $\Psi'_{4\Delta}(S,C) = [0,1].$
    \end{itemize} 
\end{lemma}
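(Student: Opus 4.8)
The plan is to handle the two statements separately, since (i) is essentially a robustness statement for $\Delta$-coverage under Fréchet perturbation of the covered curve, while (ii) requires turning an arbitrary covering into a \emph{structured} one (few edges per contributing subcurve) at the cost of enlarging $\Delta$ and passing to the simplification $S$.

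For part (i), I would argue pointwise. Fix $s\in[0,1]$. Since $\Psi_\Delta(P,C)=[0,1]$, there is some $q\in C$ and parameters $t\le t'$ with $s\in[t,t']$ and $d_F(P[t,t'],q)\le\Delta$. The goal is to find parameters $u\le u'$ with $s\in[u,u']$ and $d_F(P'[u,u'],q)\le\Delta+\Delta'$. The natural choice is to use a traversal $(\alpha,\beta)$ realizing $d_F(P,P')\le\Delta'$ to ``transport'' the interval $[t,t']$ to an interval $[u,u']$ on $P'$: let $u$ (resp.\ $u'$) be a preimage-type parameter of $t$ (resp.\ $t'$) under this traversal, chosen so that $s$'s image stays inside $[u,u']$. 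Then one composes the two traversals — the one between $P[t,t']$ and $q$ and the restriction of the $(P,P')$-traversal to $[t,t']$ — and applies the triangle inequality for the Fréchet distance along the matched points to get $d_F(P'[u,u'],q)\le\Delta+\Delta'$. The only mild technical care needed is that restricting a global traversal of $P,P'$ to a subinterval of $P$ yields a valid traversal of the corresponding subcurves, and that $s$ can indeed be kept covered; this is routine but is the place where one must be careful about the direction/monotonicity of the reparametrizations and about degenerate cases where $[u,u']$ collapses.

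For part (ii), the plan is to combine three ingredients: (1) Observation~\ref{obs:dfsimp}, which gives $d_F(P,S)\le 3\Delta$ together with a traversal that matches vertices of $S$ to specific points on $P$; (2) part (i) applied with $P'=S$ and $\Delta'=3\Delta$, giving $\Psi_{4\Delta}(S,C)=[0,1]$ (here $\Delta+\Delta'=\Delta+3\Delta=4\Delta$, which is exactly the constant in the statement); and (3) an argument upgrading ordinary $4\Delta$-coverage of $S$ to \emph{structured} $4\Delta$-coverage, i.e.\ showing that the contributing subcurves of $S$ can be taken to start in one edge, end in a consecutive-or-nearby edge, and span at most $4$ edges, matching the index set $J$ in Definition~\ref{def:structuredcoverage}. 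For ingredient (3) the key observation is Lemma~\ref{lem:three:edges}: since each $q\in C$ is an edge and each contributing subcurve $\pi=S[t,t']$ of $S$ satisfies $d_F(q,\pi)\le 4\Delta$ — wait, Lemma~\ref{lem:three:edges} is stated for distance $\le\Delta$ against a $\Delta$-good simplification, so one must either invoke it with the scaled parameter or re-derive that a subcurve of $S$ within Fréchet distance $O(\Delta)$ of an edge, when truncated to its container, has $O(1)$ edges. Concretely: given $\pi=S[t,t']$ with $d_F(q,\pi)\le 4\Delta$, replace $\pi$ by the subcurve of $S$ whose endpoints are the grid-cell (edge) boundaries containing $t$ and $t'$ and argue via the same three-internal-vertices-force-$d_F(\overline{s_1s_3},P[s_1,s_3])$-too-small contradiction that it spans at most $3$ edges; then $t\in[t_i,t_{i+1}]$, $t'\in[t_{j-1},t_j]$ with $j-i\le 4$, which is exactly membership in $J$, so $s\in\Psi^{(i,j)}_{4\Delta}(S,q)\subseteq\Psi'_{4\Delta}(S,C)$.

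The main obstacle I expect is ingredient (3) — the passage from unstructured to structured coverage — and in particular getting the constants and the $\le 4$-edge bound to line up. The subtlety is that truncating a contributing subcurve $\pi$ of $S$ to the cell boundaries on either side only \emph{adds} at most one partial edge on each side, so a $3$-edge container from Lemma~\ref{lem:three:edges}-type reasoning corresponds to $t\in[t_i,t_{i+1}]$ and $t'\in[t_{j-1},t_j]$ with $j-i\le 4$, which is precisely the constraint $1\le j-i\le 4$ defining $J$; one must double-check the boundary case where $t$ and $t'$ lie in the same or adjacent cells so that $j-i\ge 1$ is also respected, and the degenerate case $t=t'$. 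A secondary obstacle is making sure the Fréchet-distance version of Lemma~\ref{lem:three:edges} applies at scale $4\Delta$ rather than $\Delta$: since $S$ is only assumed $\Delta$-good (property (iv) uses threshold $2\Delta$), one needs $d_F(q,\pi)\le 4\Delta$ to still force, via the triangle inequality, a $2\Delta$-violation of property (iv) — and indeed $4\Delta+4\Delta$ is far too large, so the correct move is \emph{not} to reuse Lemma~\ref{lem:three:edges} verbatim but to note that the points $q_1,q_2,q_3\in q$ matched to three internal vertices $s_1,s_2,s_3$ of the container still give $d_F(\overline{s_1s_3},\overline{q_1q_3})\le 4\Delta$ and $d_F(\pi[s_1,s_3],\overline{q_1q_3})\le 4\Delta$; that yields only $d_F(\overline{s_1s_3},S[s_1,s_3])\le 8\Delta$, which does \emph{not} contradict property (iv). Hence the honest fix is to not truncate to the outermost container but to observe that a $4\Delta$-feasible traversal of an \emph{edge} $q$ and a subcurve of $S$ cannot ``see'' more than a bounded number of vertices of $S$ because consecutive edges of $S$ have length $\ge\Delta/3$ (property (i)) and property (iv) caps the simplification error — i.e.\ one shows directly that any subcurve of $S$ of $\ge 4$ edges has Fréchet distance $>4\Delta$ to every single edge, using that such a subcurve contains two disjoint length-$\ge\Delta/3$ pieces that property (iv) keeps ``far from collinear.'' I would isolate this as a small standalone claim before assembling the final argument; its proof is the technical heart of part (ii).
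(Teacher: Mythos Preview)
Your plan for part~(i) is essentially the paper's argument: transport the covering subcurve $P[t,t']$ to $P'[s,s']$ along a traversal realizing $d_F(P,P')\le\Delta'$ and conclude by the triangle inequality. That is fine.

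The gap is in part~(ii). You correctly diagnose that reusing Lemma~\ref{lem:three:edges} verbatim at scale $4\Delta$ fails: the triangle-inequality chain only yields $d_F(\overline{s_1 s_3},P[s_1,s_3])\le O(\Delta)$ with a constant far above the $2\Delta$ threshold of property~(iv), so no contradiction. But your proposed repair---prove directly that any subcurve of $S$ with $\ge 4$ edges has Fr\'echet distance $>4\Delta$ to every line segment, using properties~(i) and~(iv)---does not go through either, and in fact the claim is false in general. Properties~(ii) and~(iv) together only say that $P$ is wiggly between $s_1$ and $s_3$; they do not prevent the simplification $S$ itself from being nearly collinear over many consecutive short edges (each of length just above $\Delta/3$). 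So a $4$-edge piece of $S$ can be within $4\Delta$ of a segment, and there is no hope of upgrading an \emph{arbitrary} unstructured $4\Delta$-covering subcurve of $S$ to a structured one.

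The paper sidesteps this entirely by never passing through ``unstructured $4\Delta$-coverage of $S$'' as an intermediate object. Instead it uses the \emph{specific} vertex-respecting traversal of Observation~\ref{obs:dfsimp}: transporting $[t,t']$ along this traversal yields $[s,s']$ with $S[s,s']\subseteq c_S(P[t,t'])$, because vertices of $S$ are matched to themselves. Now Lemma~\ref{lem:three:edges} applies to the \emph{original} subcurve $P[t,t']$ at the \emph{original} scale $\Delta$ (where we do have $d_F(q,P[t,t'])\le\Delta$), giving that $c_S(P[t,t'])$---and hence $S[s,s']$---has at most three edges. The $4\Delta$ bound on $d_F(q,S[s,s'])$ comes separately from the triangle inequality. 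So the structural bound and the distance bound are obtained from different sources, and you should not try to derive the former from the latter.
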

\begin{proof}
    We start by proving $(i)$.
    Let $(f,g)$ be a traversal of $P$ and $P'$, with associated cost at most $\Delta'$. Let $\mu_P(x) = \{y\in[0,1]\mid\exists t\in[0,1]:f(t)=x,\,\, g(t)=y\}$, that is all the (parametrized) points along $P'$, that get matched to $P(x)$ during some traversal with associated distance at most $\Delta'$. Note that $\mu_P([0,1]) = [0,1]$, and more importantly for $[a,b]\subset[c,d]$, it holds that $\mu_P([a,b]) \subset \mu_P([c,d])$, as $f$ and $g$ are monotone. 

    We claim that
    \begin{eqnarray*}
    [0,1]= \mu_P([0,1]) = \mu_P\left( \Psi_{\Delta}(P,C) \right) \subseteq \Psi_{\Delta+\Delta'}(P',C)  
    \end{eqnarray*}
    This would imply the set inclusion $[0,1] \subseteq \Psi_{\Delta+\Delta'}(P',C)$, which then also implies equality, since by definition $[0,1] \supseteq \Psi_{\Delta+\Delta'}(P',C)$.
    
    We argue as follows. Observe that by triangle inequality it holds for any $t,t' \in [0,1]$ and any $Q \in C$ with $d_F(P[t,t'],Q) \leq \Delta$, and for any 
    $s \in \mu_P(t)$ and $s'\in \mu_P(t')$ that 
    \begin{eqnarray*}
    d_F(P'[s,s'], Q) \leq d_F(P'[s,s'], P[t,t']) + d_F(P[t,t'],Q) \leq \Delta+\Delta' 
    \end{eqnarray*}

    Therefore we can write
    \begin{eqnarray*}
    \mu_P\left( \Psi_{\Delta}(P,C) \right) &=& \bigcup_{Q\in C} \bigcup_{0 \leq t \leq t'\leq 1} \{ x \in \mu_P([t,t']) \mid d_F(P[t,t'],Q) \leq \Delta\}\\
    &\subseteq&  \bigcup_{Q\in C} \bigcup_{0 \leq s \leq s'\leq 1} \{ x \in [s,s'] \mid d_F(P'[s,s'],Q) \leq \Delta + \Delta'\}\\
    &=& \Psi_{\Delta+\Delta'}(S,C) 
    \end{eqnarray*}
    Indeed, the second step follows from the above observation since $\mu_P([t,t'])=[s,s']$ for some $s \in \mu_P(t)$ and $s' \in \mu_P(t')$ with $s \leq s'$ since $f$ and $g$  are monotone.
    
    Now for $(ii)$ notice, that when $P'=S$ is a $\Delta$-good simplification of $P$, and $\Delta'=3\Delta$, $S[s,s']$ is contained in $c_S(P[t,t'])$ for $\mu_P([t,t'])=[s,s']$. This follows from the traversal given in Observation \ref{obs:dfsimp} together with the definition of $\mu_P$. Thus, because $c_S(P[t,t'])$ consists of at most three edges by Lemma \ref{lem:three:edges}, it holds that 
    \[\bigcup_{Q\in C} \bigcup_{0 \leq s \leq s'\leq 1} \{ x \in [s,s'] \mid d_F(S[s,s'],Q) \leq 4\Delta\} \subset \Psi'_{4\Delta}(S,C),\]
    implying the claim.
\end{proof}

The following lemma shows that we can restrict the search for a covering to the subedges of the simplification.

\begin{lemma}\label{thm:approx}
Let $S$ be a  $\Delta$-good simplification of some curve $P$. Assume there exists a set $C \subset \XX^d_{2}$ of size $k$, such that $\Psi_{\Delta}(P,C)= [0,1] $. Then there exists a set $\ConcreteCandidates{S} \subset \CandidatesArg{E(S)}$ of cardinality at most $3k$ such that $\Psi'_{8\Delta}(S,\ConcreteCandidates{S})= [0,1]$.
\end{lemma}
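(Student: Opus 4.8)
The plan is to combine the two parts of Lemma~\ref{lem:covertransfer} with a second application of the simplification machinery, using the fact that a $\Delta$-good simplification $S$ of $P$ has a $\Delta$-good simplification of itself, namely $S$ (or a trivial sub-simplification), so that the containers behave well. Concretely, by Lemma~\ref{lem:covertransfer}$(ii)$ we already know that $\Psi'_{4\Delta}(S,C)=[0,1]$. The task is to replace the arbitrary line segments in $C$ by subedges of $S$ while only losing a constant factor in the Fréchet parameter and a factor $3$ in cardinality.

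First I would fix a $Q\in C$ and a subcurve $S[s,s']$ consisting of at most three edges with $d_F(S[s,s'],Q)\le 4\Delta$; such subcurves are exactly the ones contributing to $\Psi'_{4\Delta}(S,C)$ by Definition~\ref{def:structuredcoverage} and the index constraint $J$. The idea is to project $Q$ onto $S$: since $S[s,s']$ is within Fréchet distance $4\Delta$ of the segment $Q$, there is a point of $S$ within distance $4\Delta$ of an endpoint of $Q$, and more usefully, the natural ``shadow'' of $Q$ on $S[s,s']$ — take the subedge of the (at most three) edges of $S[s,s']$ obtained by walking along the traversal — is itself a bounded-complexity curve close to $Q$. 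More carefully, I would argue that for each such $Q$ there is a \emph{single edge} $e$ of $S$ and a subedge $e[a,b]\subseteq e$ with $d_F(e[a,b],Q)\le O(\Delta)$ that already covers, together with at most two sibling subedges on the neighbouring edges, the same portion of $S$ that $Q$ covered; this is where the factor $3$ in the cardinality bound comes from, matching the ``at most three edges'' of Lemma~\ref{lem:three:edges}. Each of these subedges is an element of $\CandidatesArg{E(S)}$ by the definition of the candidate space, so collecting them over all $Q\in C$ yields the set $\ConcreteCandidates{S}$ of size at most $3k$.

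Having built $\ConcreteCandidates{S}$, I would verify $\Psi'_{8\Delta}(S,\ConcreteCandidates{S})=[0,1]$. Take any point $x\in[0,1]$ on $S$. By $\Psi'_{4\Delta}(S,C)=[0,1]$ there is a $Q\in C$, an index pair $(i,j)\in J$, and parameters $t\le t'$ with $t_i\le t\le t_{i+1}$, $t_{j-1}\le t'\le t_j$, $x\in[t,t']$ and $d_F(S[t,t'],Q)\le 4\Delta$. The subcurve $S[t,t']$ has at most four edges by the constraint $j-i\le 4$; I want to show that one of the (at most three, in the relevant part) replacement subedges $e[a,b]$ associated to $Q$ covers $x$ under structured $8\Delta$-coverage. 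Here the key estimate is the triangle inequality: $d_F(S[t,t'],e[a,b])\le d_F(S[t,t'],Q)+d_F(Q,e[a,b])\le 4\Delta + 4\Delta = 8\Delta$, and then one checks that the index constraints defining $\Psi^{(i',j')}_{8\Delta}$ are met — i.e.\ $x$ lies in a subcurve of $S$ of at most four edges starting and ending in the right cells, which holds because $e[a,b]$ is a \emph{subedge of a single edge} of $S$ so its container is trivially short, and the relevant portion of $S[t,t']$ around $x$ that is matched to $e[a,b]$ has bounded edge-complexity. I would split into cases according to which of the up-to-three sibling subedges the point $x$ falls under in the traversal of $S[t,t']$ versus $Q$.

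The main obstacle I anticipate is making precise the ``projection of $Q$ onto $S$'' step: namely showing that a line segment $Q$ that is $4\Delta$-close to a $\le 3$-edge subcurve $S[s,s']$ can be replaced by \emph{subedges of individual edges of $S$} (rather than a $\le 3$-edge polygonal subcurve of $S$) while keeping the Fréchet distance $O(\Delta)$ and without blowing up the count beyond a factor $3$. One has to be careful that the three sibling subedges, taken together, re-cover exactly the parameter range on $S$ that $Q$ covered, and that each is a legitimate element of $\CandidatesArg{E(S)}$; the interplay between the traversal of $S[s,s']$ with $Q$ and the vertex parameters of $S$ is exactly the kind of bookkeeping that Lemma~\ref{lem:three:edges} and the container definition were set up to handle, so I would lean on those. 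A secondary subtlety is tracking the constants: going from $4\Delta$ (from Lemma~\ref{lem:covertransfer}$(ii)$) through one more triangle inequality to $8\Delta$ needs the replacement subedges to be within $4\Delta$ of $Q$, which must be extracted carefully from the $4\Delta$-traversal of $S[s,s']$ and $Q$ by reading off where vertices of $S$ get matched.
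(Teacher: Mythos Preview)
Your overall plan is the same as the paper's: apply Lemma~\ref{lem:covertransfer}$(ii)$ to get $\Psi'_{4\Delta}(S,C)=[0,1]$, then for each $Q\in C$ pick one witness subcurve $\pi$ of $S$ with at most three edges and $d_F(\pi,Q)\le 4\Delta$, and use the (at most three) constituent subedges of $\pi$ as the replacement elements of $\CandidatesArg{E(S)}$. That is exactly what the paper does, so the strategy is sound and the factor~$3$ arises for the right reason.

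The gap is in your triangle-inequality step. You write
\[
d_F(S[t,t'],e[a,b])\le d_F(S[t,t'],Q)+d_F(Q,e[a,b])\le 4\Delta+4\Delta,
\]
but the second summand is not bounded by $4\Delta$: a single subedge $e[a,b]$ of $\pi$ is close only to a \emph{piece} of $Q$, not to all of $Q$. The correct argument first fixes the decomposition $Q=Q_1\oplus Q_2\oplus Q_3$ induced by the traversal of $\pi$ and $Q$ (so $d_F(Q_i,e_i)\le 4\Delta$ for the three subedges $e_1,e_2,e_3$ of $\pi$), and then, for any other $\pi'=S[t,t']$ with $d_F(\pi',Q)\le 4\Delta$, uses the traversal of $\pi'$ and $Q$ to split $\pi'=\pi'_1\oplus\pi'_2\oplus\pi'_3$ with $d_F(\pi'_i,Q_i)\le 4\Delta$. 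Now the triangle inequality applies piecewise, giving $d_F(\pi'_i,e_i)\le 8\Delta$, and since each $\pi'_i$ is a subcurve of the $\le 3$-edge curve $\pi'$, it automatically satisfies the index constraint for structured coverage. Your sentence about ``splitting into cases according to which sibling subedge $x$ falls under'' gestures at this, but the split has to be done on $Q$ \emph{once} (from the fixed witness $\pi$) and then transported to every other $\pi'$; there is no need to invoke Lemma~\ref{lem:three:edges} or containers again at this stage.
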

\begin{proof}
    We start by applying Lemma \ref{lem:covertransfer} $(ii)$ to $P$ and its $\Delta$-good simplification. Thus $\Psi'_{4\Delta}(S,C) = [0,1]$.
    We show that for each center curve in $Q\in C$ there exist 3 subcurves of edges of $S$ that cover all the parts of $S$ that were covered by $Q$. An illustration of the proof is given in Figure~\ref{fig:cov_approx}.
    Let $Q\in C$ with $\Psi'_{4\Delta}(S,\{Q\})\neq\emptyset$. We fix a subcurve $\pi$ of $S$ such that $d_F(\pi,Q)\leq 4\Delta$, that consists of at most $3$ edges $e_1$, $e_2$, $e_3\in\CandidatesArg{E(S)}$. $\pi$ exists by Lemma \ref{lem:three:edges}.
    The curve $Q$ can be split into $3$ subcurves  $Q_1$, $Q_2$, $Q_3$ such that $Q=Q_1\oplus Q_2\oplus Q_3$ with $d_F(Q_1,e_1)\leq 4\Delta$, $d_F(Q_2,e_2)\leq 4\Delta$ and $d_F(Q_3,e_3)\leq 4\Delta$.
    
    \begin{figure}
        \centering
        \includegraphics[width=0.65\textwidth]{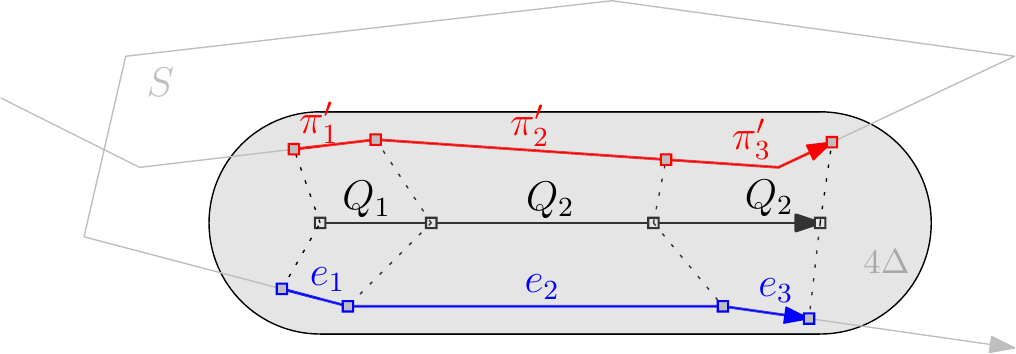}
        \caption{Illustration to the proof of Lemma~\ref{thm:approx}. The example illustrates that the $8\Delta$-coverage of the edges $e_1$, $e_2$ and $e_3$ is a superset of the $4\Delta$-coverage of $Q=Q_1\oplus Q_2\oplus Q_3$. }
        \label{fig:cov_approx}
    \end{figure}
    Now consider an arbitrary subcurve $\pi'$ of $S$ such that $d_F(\pi',Q)\leq 4\Delta$. The curve $\pi'$ can be split into $3$ subcurves  $\pi'_1$, $\pi'_2$, $\pi'_3$ such that $\pi'=\pi'_1\oplus\pi'_2\oplus\pi'_3$ with $d_F(\pi'_1,Q_1)\leq 4\Delta$, $d_F(\pi'_2,Q_2)\leq 4\Delta$ and $d_F(\pi'_3,Q_3)\leq 4\Delta$. By triangle inequality we get
    \[d_F(\pi'_1,e_1)\leq d_F(\pi'_1,Q_1)+d_F(Q_1,e_1)\leq 8\Delta. \]
    In the same way we obtain $d_F(\pi'_2,e_2)\leq 8\Delta $ and $d_F(\pi'_3,e_3)\leq 8\Delta$. It follows that the entire curve $\pi'$ is covered by $e_1$, $e_2$ and $e_3$. By applying this argument to every subcurve $\pi'$ of $S$ with $d_F(\pi',Q)\leq 4\Delta$ with $\pi'$ consising of at most three edges,  we get
    \[\Psi'_{4\Delta}(S,\{Q\})\subseteq \Psi'_{8\Delta}(S,\{e_1,e_2,e_3\}).\]
    Applying this to every $Q \in C$ and using the fact that $\Psi'_{8\Delta}(S,C) = \bigcup_{Q \in C} \Psi'_{8\Delta} (S,\{Q\})$, the lemma is implied by constructing the set $\ConcreteCandidates{S}$ out of the edges $e_1,e_2,e_3$ for each $Q \in C$.
\end{proof}

Using the above two lemmas, we can prove Theorem~\ref{thm:candidate_space}, which was the main theorem in Section~\ref{sec:structuredCoverage}.

\begin{proof}[Proof of Theorem~\ref{thm:candidate_space}]
     Lemma~\ref{thm:approx}  implies that there exists a set $C$ of subedges of edges of $S$ of size $3k$ which is a structured $8\Delta$-covering of $S$.
     By Observation \ref{obs:unstructure_cover},
    $C$ is also an $8\Delta$-covering of $S$. 
    Now, Observation~\ref{obs:dfsimp}  implies that $d_F(P,S)\leq3\Delta$ and thus we can apply Lemma~\ref{lem:covertransfer}~$(i)$ with $P'=S$ and $\Delta'=3\Delta$ to conclude that $C$ is an $11\Delta$-covering of $P$.     
\end{proof}

In the remainder of the section, we want to prove Theorem~\ref{thm:alg_candidates} of Section~\ref{sec:alg_candidates}. In particular, we want to prove that our construction of the candidate set (Definition~\ref{def:candidates}) satisfies the needs of this theorem. 
The idea is as follows. We now look at any structured $8\Delta$-covering of the simplification $S$ consisting of only subedges of $S$. We want to deform each such subedge to one of our candidates. 
Lemma~\ref{lem:pathcreator} below shows that we can continuously deform any subedge  to some edge from our candidate set while retaining the coverage on a subcurve $Y$ of $S$. In particular, any deformation that is monotone has this property. 
However, this deformation is specific to a single subcurve covered by this subedge. Thus, while retaining coverage on one subcurve, we may lose coverage on another subcurve in the same cluster. Lemma~\ref{lem:settransform} will show how to deal with all subcurves at once while increasing the number of clusters by a factor of at most $4$.

\begin{restatable}{lemma}{pathcreator}
\label{lem:pathcreator}
    Let $\mathcal{E}_{\Delta}(Y,e)=(s^*,t^*)$ be the extremal points for some $Y$, $e$ and $\Delta$. Let the $Y$ consist if $m$ edges.
    Let further $0\leq s \leq t \leq 1$ be given. Then for all values $s'$ between $s$ and $s^*$, and $t'$ between $t$ and $t^*$ we have that
    $$\Psi^{(1,m)}_\Delta(Y,e[s,t])\subset \Psi^{(1,m)}_\Delta(Y,e[s',t']).$$
\end{restatable}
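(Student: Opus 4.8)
The plan is to argue that moving an endpoint of the subedge $e[s,t]$ monotonically towards the corresponding extremal point can only enlarge the set of points of $e$ that are covered by a $\Delta$-feasible $(1,m)$-partial traversal with $Y$. First I would unpack what $\Psi^{(1,m)}_\Delta(Y,e[s,t])$ means in the free space diagram of $Y$ and the single edge $e$: since $e$ is one edge, this free space is a single horizontal row of $m-1$ cells, and a $\Delta$-feasible $(1,m)$-partial traversal corresponds to a bimonotone path in the $\Delta$-free space whose $y$-extent is $[s,t]$ (after the natural reparametrization identifying the edge $e[s,t]$ with the interval $[s,t]$ on the $e$-axis) and whose $x$-extent is all of $[0,1]$ on the $Y$-axis. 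A point $\tau \in [s,t]$ is covered iff there is such a path passing through height $\tau$; equivalently, iff the horizontal line at height $\tau$ meets the $\Delta$-free space and both the portion below and the portion above $\tau$ admit bimonotone connections reaching $x=0$ and $x=1$ respectively.

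The key step is a monotonicity observation about the extremal points. Recall $s^* = \min(\{l_e\}\cup\{b_1,\dots,b_{m-1}\})$ and $t^* = \max(\{r_e\}\cup\{a_1,\dots,a_{m-1}\})$, where $[a_i,b_i]$ are the vertical free space intervals at the cell boundaries, $l_e$ is the $y$-coordinate of the leftmost free space point, and $r_e$ the $y$-coordinate of the rightmost. I would show: (1) any $\Delta$-feasible $(1,m)$-partial traversal must have its lower endpoint at height $\geq s^*$ — because to reach the left boundary $x=0$ it must pass through the leftmost free space point (height $\geq l_e$ if we start at the bottom there, but more carefully, the start point of the path on $x=0$ has $y \geq l_e$) and to cross each vertical cell boundary $x = x_i$ monotonically from below it must enter the free space interval $[a_i,b_i]$, so the minimum height along the path is at least $\min_i b_i$ once we account for needing to pass above each $a_i\dots$ — and symmetrically the upper endpoint is at height $\leq t^*$; and (2) conversely, $s^*$ and $t^*$ are actually achieved, i.e.\ $e[s^*,t^*]$ (read in the appropriate direction) admits a $\Delta$-feasible $(1,m)$-partial traversal covering $[s^*,t^*]$. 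Given these, for $s'$ between $s$ and $s^*$ and $t'$ between $t$ and $t^*$, I would take a $\Delta$-feasible traversal $\pi$ witnessing coverage of some $\tau \in \Psi^{(1,m)}_\Delta(Y,e[s,t])$, with $y$-extent $[\sigma,\theta] \subseteq [s,t]$ containing $\tau$, and "extend" it: splice on a monotone piece that drops down from height $\sigma$ to height $s'$ (staying on or near $x=0$, using that the free space near $x=0$ extends down to $l_e \leq s^* \leq s'$ when $s' \leq \sigma$, or simply truncate when $s'$ lies between $\sigma$ and $s$) and similarly extend the top to height $t'$. The resulting path is still bimonotone, still inside the $\Delta$-free space, still spans $x \in [0,1]$, and now has $y$-extent containing $[\min(\sigma,s'),\max(\theta,t')] \ni \tau$, hence witnesses $\tau \in \Psi^{(1,m)}_\Delta(Y,e[s',t'])$.

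The main obstacle I anticipate is the case analysis around the direction of the subedge and the possibility $t^* < s^*$ (Figure~\ref{fig:canonicalpoints2}): when the first extremal point lies above the second, the "subedge" $e[s^*,t^*]$ is traversed opposite to $e$, and the statement "$s'$ between $s$ and $s^*$" has to be read without assuming $s \leq s^*$. I would handle this uniformly by always working with the unordered interval spanned by the path's endpoints on the $e$-axis and phrasing coverage in terms of that interval, so that "moving $s$ towards $s^*$" just means the interval's lower endpoint moves monotonically, regardless of orientation; the extension/truncation argument above is symmetric under this reading. A secondary technical point is making precise that the free space of $Y$ with a single edge $e$ is a single row with convex cell-intersections (Alt--Godau), so that the leftmost/rightmost points and the vertical intervals $[a_i,b_i]$ are well defined and the bimonotone reachability reduces to the interval conditions I used; this is standard but should be stated carefully since the whole argument is essentially a reachability argument in this degenerate free space.
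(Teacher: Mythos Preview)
Your proposal rests on a misreading of what $\Psi^{(1,m)}_\Delta(Y,e[s,t])$ is. By Definition~\ref{def:structuredcoverage} (with $S=Y$ and $q=e[s,t]$), this is a set of parameters on $Y$, not on $e$: a point $\tau$ belongs to it iff there exist $u$ in the first edge of $Y$ and $v$ in the last edge with $u\le\tau\le v$ and $d_F(Y[u,v],e[s,t])\le\Delta$. In the free space of $Y$ and $e$ this is witnessed by a bimonotone path from $(u,s)$ to $(v,t)$: the $y$-extent is \emph{exactly} $[s,t]$ (all of $e[s,t]$ must be matched), while the $x$-extent $[u,v]$ is what varies. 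You have these roles swapped---you put $\tau\in[s,t]$, let the $y$-extent be a variable $[\sigma,\theta]\subseteq[s,t]$, and take the $x$-extent to be all of $[0,1]$. None of these are correct, and the ``extend/truncate near $x=0$'' construction that follows does not address the actual task, which is to turn a path with $y$-endpoints $s,t$ into one with $y$-endpoints $s',t'$ whose $x$-extent \emph{contains} the old $[u,v]$.

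Your key monotonicity claim (``any $\Delta$-feasible $(1,m)$-partial traversal must have its lower endpoint at height $\geq s^*$'') is also false. Since $s^*=\min(\{l_e\}\cup\{b_1,\dots,b_{m-1}\})$ and a monotone path starting at height $s$ must cross each vertical interval $[a_i,b_i]$, the constraint is $s\le\min_i b_i$, not $s\ge s^*$; and the path need not reach $x=0$ at all. The genuine dichotomy is whether $s$ lies above or below $s^*$ (and likewise for $t$ versus $t^*$), and each regime needs a different construction: when $s'\le s^*$ one can walk horizontally at height $s'$ across every cell boundary (because $s'\le b_i$ for all $i$), whereas when $s'\ge s^*$ one prefixes a straight segment inside the convex first cell down to the leftmost point at height $s'$. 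The paper carries this out as an explicit case analysis (seven subcases, Figure~\ref{fig:pathcreator}), including the orientation-reversing situation $s'>t'$ which is handled by mirroring to $e[1,0]$. Your sketch does not make these distinctions, and the parts it does spell out point in the wrong direction.
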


\begin{proof}

    Let $s'$ and $t'$ be fixed but arbitrary values between $s$ and $s^*$, and $t$ and $t^*$.Let $(t_1,\ldots,t_{m+1})$ be the vertex parameters of $Y$.
    Let $0\leq u,v\leq 1$ be two arbitrary values such that $u<t_2$ and $u<v$ and $t_{m}<v$, such that $d_F(Y[u,v],e[s,t])\leq \Delta$.
    
    By Definition \ref{def:structuredcoverage}, it is enough to show the existence of two values $0\leq u',v'\leq 1$ such that $u'<t_2$ and $u'<v'$ and $t_{m}<v'$ and $d_F(Y[u',v'],e[s',t'])\leq \Delta$ together with $[u,v]\subset[u',v']$.
    
    Denote by $\pi:[0,1]\rightarrow[0,1]^2$ a monotone path from $(u,s)$ to $(v,t)$ in the $\Delta$-free space of $Y$ and $e$. $\pi$ is induced by the traversal of $Y[u,v]$ and $e[s,t]$. Denote by $\mathcal{D}^i(Y,e)$ the intersection $\dfree{}{}(Y,e)\cap C_{i}$ where $C_{i}$ denotes the cell in the $\Delta$-free space of $Y$ and $e$ corresponding to the $i$th edge of $Y$ and $e$.
    Let $[a_i,b_i]$ be the $i$ vertical free space interval. Note that these are nonempty, as there is a $(1,m)$-partial traversal of $Y$ and $e$.
    
    Note, that $s^*$ is the first $\Delta$-extremal point of $Y$ on $e$, which is by definition the $y$-coordinate of the lowest leftmost point in $\mathcal{D}^1(Y,e)$ from which traversals into the last cell are possible. Call this lowest leftmost point $l$. Similarly define $r$ as the highest rightmost point, that is the point realizing $t^*$.
    
    Now let $s'$ and $t'$ be given, as some value between $s$ and $s^*$, and $t$ and $t^*$ respectively. Denote the leftmost point with $y$-coordinate $s'$ by $l'=(u',s')$ and similarly the rightmost point with $y$-coordinate $t'$ by $r'=(v',t')$ defining $u'$ and $v'$.
    
    Note that by convexity of $\mathcal{D}^1(Y,e)$ and $\mathcal{D}^{n-1}(Y,e)$, $l'$ lies to the left of $(u,s)$, and $r'$ lies to the right of $(v,t)$, as $l$ and $r$ do so too, thus $[u,v]\subset[u',v']$.
    
    Now, we need to construct a monotone path $\pi'$ from $l'$ to $r'$ which proves the claim. Refer to Figure \ref{fig:pathcreator} for illustrations of four of the following seven cases.
    
    \begin{compactitem}
    \item \textbf{Case 1:} $s\geq s^*$ and $t\leq t^*$. But then $l'$ lies to the left and below $(u,s)$, and $r'$ above and to the right of $(v,t)$. Thus $\overline{l'\,(u,s)}\oplus \pi \oplus \overline{(v,t)\,r'}$ is the sought after path.
    \item \textbf{Case 2:} $s\leq s^*$ and $t\leq t^*$. Hence $s'\leq s^*$, and thus $s'\leq \min({b_1,\ldots,b_{n-1}})$. 
        \begin{compactitem}
            \item  \textbf{Case 2a:} $s'\leq t$. As $\pi$ starts below $l'$ (at $(u,s)$), we walk rightwards from $l'$, until we intersect $\pi$, and follow $\pi$ until we reach the last cell at point $p$, that is $\dfree[]{1}{n-1}(Y,e)$. Note that the intersection exists, as $s\leq s'\leq t$, and $\pi$ being monotone. As all points must lie below $t$, and by extension $r'$, $p$ lies below $r'$. Thus we end $\pi'$ with a straight line from $p$ to $r'$.
            \item \textbf{Case 2b:} $s'>t$. In this case, walking rightwards from $l'$ we will never intersect $\pi$, but we will enter the last cell at point $p$, call this path $\pi'$. Now $r'$ may lie below $p$. This however is no problem, as so far $M\circ\pi'$ is also a $(1,n)$-partial traversal of $Y$ and $e[1,0]$, for $M:[0,1]^2\rightarrow[0,1]^2;(x,y) \mapsto (x,1-y)$ which mirrors the $y$-coordinate. Thus we append a straight line from $p$ to $r'$ resulting in the soughtafter path $\pi'\oplus \overline{p\,r'}$ for $e$ or $(M\circ\pi')\oplus\overline{M(p)\,M(r')}$ for $e[1,0]$. Since $(e[1,0])[1-s',1-t'] = e[s',t']$ the claim follows.
        \end{compactitem}
   \item \textbf{Case 3:} $s\geq s^*$ and $t\geq t^*$. This is symmetric to Case 2, by starting a leftwards walk from $r'$ until we intersect $\pi$, and then connecting it to $l'$.
   \item \textbf{Case 4:} $s\leq s^*$ and $t\geq t^*$. Hence $s'\leq s^*$, and thus $s'\leq \min({b_1,\ldots,b_{n-1}})$, and similarly $t'\geq \max({a_1,\ldots,a_{n-1}})$.
   \begin{compactitem}
    \item \textbf{Case 4a:} $s'\leq t'$. We walk rightwards from $l'$ until we intersect $\pi$ at $p$ and similarly walk leftwards from $r'$ until we intersect $\pi$ at $q$. as $p$ lies below $q$, the path is given by $\overline{l'\,p}\oplus\pi[p,q]\oplus\overline{p\,r'}$.
    \item \textbf{Case 4b:} $s'\geq t'$. This case is symmetric to Case 2a. We walk rightwards from $l'$ until we reach the last cell, and then connect $r'$ to this path, resulting in a monotone path in $\dfree[]{}{}(Y,e[1,0])$. The rightwards walk may intersect $\pi$, but this is no problem, as all $a_i$ lie below $s'$, and all $b_i$ lie above $s'$ by assumption. 
   \end{compactitem}
    \end{compactitem}
    
    Thus the claim follows.
\end{proof}

\begin{corollary}\label{cor:pathcreator}

 Let $\mathcal{E}_{\Delta}(Y,e)=(s^*,t^*)$ be the extremal points for some $Y$, $e$ and $\Delta$. For any $0\leq s \leq t \leq 1$ it holds that
    $$\Psi^{(1,m)}_\Delta(Y,e[s,t])\subset \Psi^{(1,m)}_\Delta(Y,e[s^*,t^*]).$$
\end{corollary}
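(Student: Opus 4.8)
The plan is to derive Corollary~\ref{cor:pathcreator} directly from Lemma~\ref{lem:pathcreator} by a suitable choice of parameters. The key observation is that Lemma~\ref{lem:pathcreator} states that, starting from any subedge $e[s,t]$, we may move the first endpoint parameter monotonically toward $s^*$ and the second endpoint parameter monotonically toward $t^*$ without ever losing any $\Delta$-coverage on $Y$; that is, for all $s'$ between $s$ and $s^*$ and all $t'$ between $t$ and $t^*$ we have $\Psi^{(1,m)}_\Delta(Y,e[s,t])\subset \Psi^{(1,m)}_\Delta(Y,e[s',t'])$. The corollary is simply the endpoint of this deformation.

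Concretely, I would fix an arbitrary $0 \le s \le t \le 1$ and invoke Lemma~\ref{lem:pathcreator} with this pair $(s,t)$. The lemma's conclusion holds for \emph{every} choice of $s'$ lying between $s$ and $s^*$ and every $t'$ lying between $t$ and $t^*$; in particular it holds for the specific choice $s' = s^*$ and $t' = t^*$, since $s^*$ trivially lies between $s$ and $s^*$, and $t^*$ trivially lies between $t$ and $t^*$. Substituting these values into the inclusion of Lemma~\ref{lem:pathcreator} yields exactly $\Psi^{(1,m)}_\Delta(Y,e[s,t])\subset \Psi^{(1,m)}_\Delta(Y,e[s^*,t^*])$, which is the claim. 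One small point worth checking is that the notation $e[s^*,t^*]$ is well-defined even when $t^* < s^*$: the paper explicitly allows reversed subedges (see the discussion of $\Delta$-extremal points and Figure~\ref{fig:canonicalpoints2}), and Lemma~\ref{lem:pathcreator}'s proof already handles this via the mirroring map $M$, so no additional argument is needed here.

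There is essentially no obstacle: the corollary is a pure specialization of the lemma, and the only thing to verify is that the degenerate interval cases (where $s = s^*$, or $t = t^*$, or the interval between $s$ and $s^*$ is a single point) are covered by the phrase ``for all values $s'$ between $s$ and $s^*$'', which they are. Hence the proof is a single line: apply Lemma~\ref{lem:pathcreator} with $s' = s^*$ and $t' = t^*$.
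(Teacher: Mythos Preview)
Your proposal is correct and matches the paper's treatment: the paper states the corollary immediately after Lemma~\ref{lem:pathcreator} without a separate proof, precisely because it is the specialization $s'=s^*$, $t'=t^*$ you describe. Your remarks on the reversed case $t^*<s^*$ and the degenerate intervals are accurate and already covered by the lemma.
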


\begin{definition}[inclusion-minimal generating subcurve]
Let $\pi=S[s,t]$ be a subcurve of some polygonal curve $S$, such that $\pi$ consists of at most three edges. Let $(t_1,\ldots,t_m)$ be the vertex-parameter of $S$, and $v_i=S(t_i)$ the vertices of $S$. Let $S(s)$ lie on edge $e_i=\overline{v_i\,v_{i+1}}$ and $S(t)$ on $e_j=\overline{v_j\,v_{j+1}}$. Define the inclusion-minimal generating subcurve containing $\pi$ as the subcurve $S[t_i,t_{j+1}]$.
\end{definition}

Note that for every $\pi$ consisting of at most three edges, the inclusion-minimal generating subcurve containing $\pi$ is indeed a generating subcurve of $S$, as $\pi$ has at most three edges.
Note further, that these inclusion-minimal generating subcurves do not correspond to inclusion-minimal subcurves as defined by de Berg et al.\ in $\cite{de2013fast}$.

\begin{figure}
    \centering
    \includegraphics[width=\textwidth]{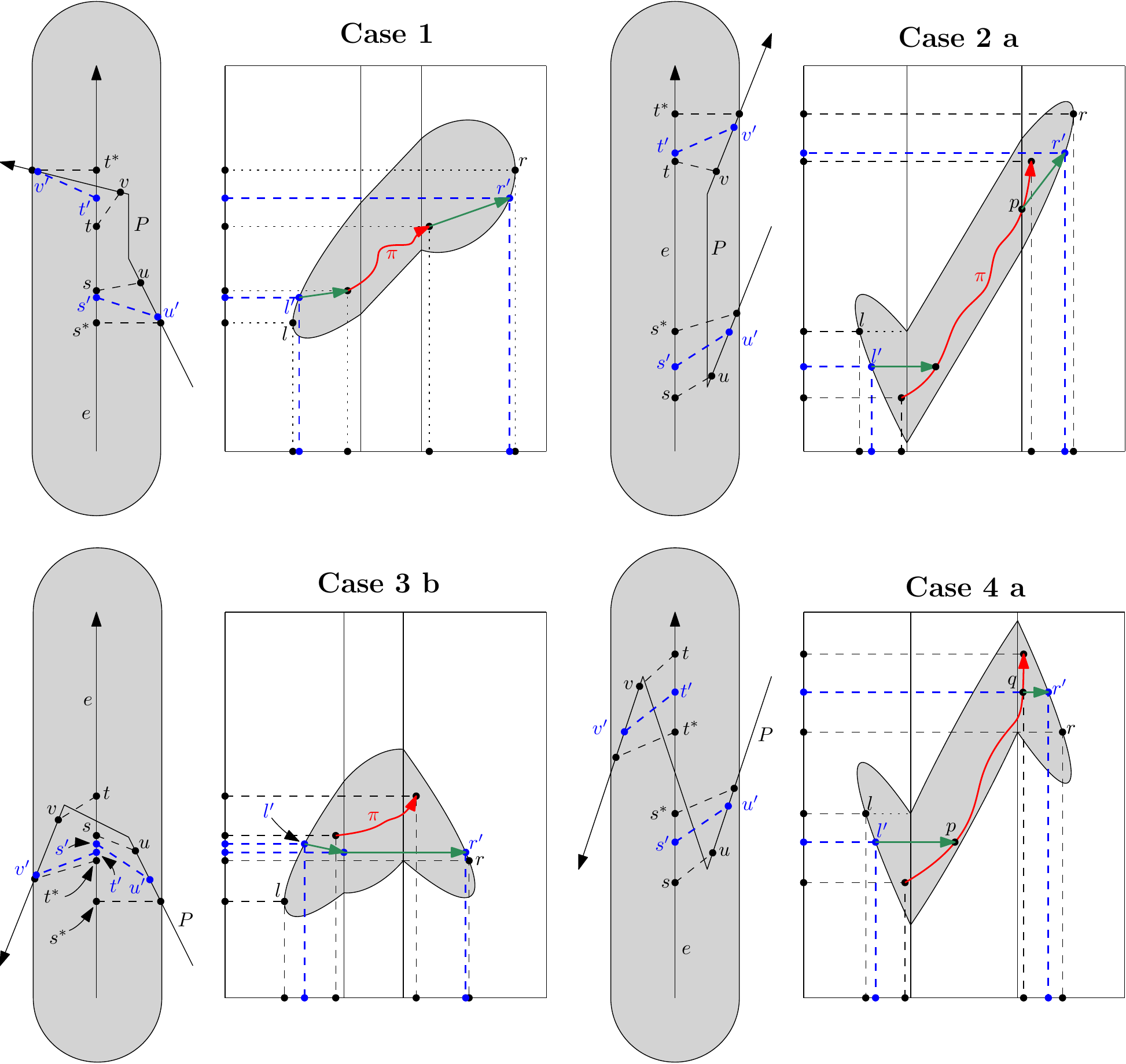}
    \caption{Illustration to the case analysis in the proof of Lemma \ref{lem:pathcreator}.  The figure shows a subset of crucial cases.   \textbf{Case 1:} $s\geq s^*$ and $t\leq t^*$. \textbf{Case 2a:} $s\leq s^*$, $t\leq t^*$ and $s'\leq t$. \textbf{Case 3b:} $s\geq s^*$, $t\geq t^*$ and $s\geq t'$. \textbf{Case 4:} $s\leq s^*$, $t\geq t^*$ and $s'\geq t'$.}
    \label{fig:pathcreator}
\end{figure}

\begin{restatable}{lemma}{canonicalsettransform}
\label{lem:settransform}
    Let $\Delta>0$ be a given value and let $S$ be a $\Delta$-good simplification of some polygonal curve $P$. Let $B$ be the corresponding candidate set (Definition~\ref{def:candidates}).   
    Let $E$ be the set of edges of $S$. 
    For any edge of the candidate space, $x \in \CandidatesArg{E}$, there is a set $C' \subseteq B$ with $|C'| \leq 4$, such that 
    $\Psi'_{8\Delta}(S,\{x\})\subseteq \Psi'_{8\Delta}(S,C')$.
    In other words, we can replace $x$ by a specific subset of at most four edges from the candidate set and still retain the structured coverage on~$S$. 
\end{restatable}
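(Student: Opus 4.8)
The plan is to fix an arbitrary edge $x=e[s,t]\in\CandidatesArg{E}$ of the candidate space, where $e$ is an edge of $S$ and $s,t\in[0,1]$, and to describe a set $C'\subseteq B$ of at most four edges whose structured $8\Delta$-coverage on $S$ dominates that of $x$. The point $p\in\Psi'_{8\Delta}(S,\{x\})$ that we need to keep covered comes, by Definition~\ref{def:structuredcoverage}, from some subcurve $\pi'=S[u,v]$ of at most three edges with $d_F(\pi',x)\le 8\Delta$ and $p\in[u,v]$. The key observation is that such a $\pi'$ is contained in its inclusion-minimal generating subcurve $Y$, which is a generating subcurve of $S$; moreover, since $d_F(\pi',e[s,t])\le 8\Delta$ there are witness points $p\in e$, $p_1,p_2\in Y$ at distance $\le 8\Delta$, so the triple $(e,Y,Y)$ lies in $T_S$ and hence the candidate $e[s_1^*,t_2^*]$ is in $B$, where $(s_i^*,t_i^*)=\mathcal{E}_{8\Delta}(Y,e)$. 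The difficulty is that $Y$ depends on $\pi'$, so a single candidate will not do for all $\pi'$ simultaneously; this is where the factor four enters.

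The main steps, in order, are as follows. First I would enumerate, for the fixed edge $x=e[s,t]$, the relevant generating subcurves: since $\pi'$ consists of at most three edges and starts on some edge $e_i$ of $S$ and ends on some edge $e_j$ with $j-i\le 2$, the left endpoint of the inclusion-minimal generating subcurve $Y$ can be one of a small, bounded number of possibilities relative to where $e$ is positioned — but actually the cleaner route is: group all such $\pi'$ according to which generating subcurve is their inclusion-minimal container, and argue that only a constant number of ``maximal'' generating subcurves $Y^{(1)},\ldots,Y^{(r)}$ with $r\le 4$ need to be considered, because every $\pi'$ with $d_F(\pi',x)\le 8\Delta$ is a subcurve of one of them. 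Concretely, I expect the argument to be that $\pi'$ lies within Fréchet distance $8\Delta$ of a single edge, so $\pi'$ cannot stretch too far along $S$; combined with Lemma~\ref{lem:three:edges}-type reasoning (via the $\Delta$-goodness of $S$) this pins the container to one of at most four generating subcurves. For each such $Y^{(m)}$, apply Corollary~\ref{cor:pathcreator}: the subedge $e[s^{(m)},t^{(m)}]$ with $(s^{(m)},t^{(m)})=\mathcal{E}_{8\Delta}(Y^{(m)},e)$ satisfies $\Psi^{(1,|Y^{(m)}|)}_{8\Delta}(Y^{(m)},e[s,t])\subseteq\Psi^{(1,|Y^{(m)}|)}_{8\Delta}(Y^{(m)},e[s^{(m)},t^{(m)}])$, provided the first extremal point of $e[s,t]$ deformed toward $(s^{(m)},t^{(m)})$ stays monotone — which is exactly the content of Lemma~\ref{lem:pathcreator}. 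Then I would set $C'=\{e[s^{(m)},t^{(m)}] : 1\le m\le r\}\subseteq B$ and verify that the union of their structured coverages contains the structured coverage of $x$, by tracking any covered point $p$ back to its witnessing $\pi'$, locating the $Y^{(m)}$ containing $\pi'$, and invoking the inclusion above together with the observation that coverage on a subcurve of $S$ restricted to three edges is exactly what $\Psi'_{8\Delta}$ measures.

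The main obstacle I anticipate is the bookkeeping that bounds the number of needed generating subcurves by four and the verification that the candidate set $B$ (built only from triples $(e,Y,Y')$ with the $8\Delta$-proximity condition) actually contains $e[s^{(m)},t^{(m)}]$ for each relevant $Y^{(m)}$ — i.e., confirming $(e,Y^{(m)},Y^{(m)})\in T_S$. The proximity condition follows because $d_F(\pi',e[s,t])\le 8\Delta$ forces a point of $Y^{(m)}\supseteq\pi'$ within $8\Delta$ of a point of $e$, so $T_S$ membership is immediate. The subtler part is that the extremal points $\mathcal{E}_{8\Delta}(Y^{(m)},e)$ used to define $B$ in Definition~\ref{def:candidates} are taken with respect to the \emph{whole} edge $e$, and I need to reconcile this with Corollary~\ref{cor:pathcreator}, which is stated for a fixed $e$ — these match directly. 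A final care point is the orientation issue flagged in Figure~\ref{fig:canonicalpoints2}: when $t^{(m)}<s^{(m)}$ the candidate edge runs opposite to $e$, but Lemma~\ref{lem:pathcreator}'s case analysis already covers this via the mirroring map $M$, so no additional work is required beyond citing it.
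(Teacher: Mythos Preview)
There is a genuine gap. Your central claim---that at most four ``maximal'' generating subcurves $Y^{(1)},\ldots,Y^{(r)}$ with $r\le 4$ suffice, because every $\pi'$ with $d_F(\pi',x)\le 8\Delta$ is a subcurve of one of them---is false in general. The subcurves $\pi'$ can lie anywhere along $S$: if $x$ resembles many disjoint pieces of $S$ (think of $S$ as a long zigzag and $x$ as one zig), there are $\Theta(n)$ distinct inclusion-minimal generating subcurves, not four. Lemma~\ref{lem:three:edges} bounds the number of edges of a \emph{single} container, not the number of distinct containers across~$S$; your appeal to ``Lemma~\ref{lem:three:edges}-type reasoning'' does not give what you need.

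The paper's proof does not attempt to bound the number of generating subcurves. Instead it lets $C_\Pi$ be the (possibly large) set of all inclusion-minimal generating subcurves arising from such $\pi'$, and partitions $C_\Pi$ into four groups according to the sign pattern of $(s_{Y,e}-s,\,t_{Y,e}-t)$, where $(s_{Y,e},t_{Y,e})=\mathcal{E}_{8\Delta}(Y,e)$. Within each group one takes a suitable $\min$ or $\max$ of the first (resp.\ second) extremal points over the whole group to produce a single pair $(\alpha_k,\beta_k)$; Lemma~\ref{lem:pathcreator} (the full monotone-deformation statement, not just Corollary~\ref{cor:pathcreator}) then guarantees that $e[\alpha_k,\beta_k]$ dominates the coverage of $x$ on every $\pi'$ whose container lies in that group. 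This is precisely why the candidate set $B$ is defined via \emph{triples} $(e,Y_1,Y_2)$ with possibly $Y_1\neq Y_2$: the $\alpha_k$ and $\beta_k$ for a given group are in general realized by different generating subcurves, so your proposed candidates of the form $(e,Y^{(m)},Y^{(m)})$ would not even be the right ones.
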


\begin{proof}
    Let $e$ be the edge of $S$ that contains $x$ and, in particular, let $0\leq s \leq t \leq 1$, such that $x=e[s,t]$.
    Let $\Pi = \{\pi=S[s_\pi,t_\pi] \mid d_F(\pi,e[s,t])\leq8\Delta,\text{$\pi$ consists of at most $3$ edges}\}$ be the possibly infinitely big set of subcurves of $S$, that define $\Psi'_{8\Delta}(S,e[s,t])$.
    More precisely $\bigcup_{\pi\in\Pi}[s_\pi,t_\pi]=\Psi'_{8\Delta}(S,e[s,t])$.
    Let $C_\Pi$ be the set of inclusion-minimal generating subcurves of $S$ containing some $\pi\in\Pi$. Note that $|C_\Pi|=O(n^3)$. Partition $C_\Pi$ into $\mathcal{G}=\{G_{<\geq},G_{\geq\geq},G_{<<},G_{\geq<}\}$, where $G_{<\geq}=\{Y\in C_\Pi\mid s_{Y,e}<s \;;\; t\geq t_{Y,e}\}$, where $s_{Y,e}$ and $t_{Y,e}$ are the $(8\Delta)$-extremal points of $Y$ on $e$ and the other elements of $\mathcal{G}$ are defined similarly.
    
    We will handle each set of containers in the partition separately, resulting in  four subedges of $e$ forming the set $C'$, which we will denote by $e_k=e[\alpha_k,\alpha_k]$ for $1 \leq k\leq 4$.
    
    We begin with $G_{<\geq}$. By definition, for every subcurve $\pi=S[s_\pi,t_\pi]\in\Pi$, such that $Y'=c_S(\pi)$ ended up in $G_{<\geq}$, the left $8\Delta$-extremal point of $Y'$ on $e$ lies below $s$. Define $\alpha_1=\max(\{s_{Y,e}\mid Y \in G_{<\geq}\})$. Similarly all right $8\Delta$-extremal points lie above $t$. Define $\beta_1=\min(\{t_{Y,e}\mid Y \in G_{<\geq}\})$. Then for all these subcurves $\alpha_1$ lies between $s$ and the first $8\Delta$-extremal point $s_{Y,e}$, where $Y$ is its container. Similarly $\beta_1$ lies between $t$ and the second $8\Delta$-extremal point $t_{Y,e}$. Thus we can apply Lemma \ref{lem:pathcreator} to every $S[s_\pi,t_\pi]$ together with $\alpha_1$ and $\beta_1$, resulting in values $0\leq s_\pi'\leq s_\pi$ and $t_\pi\leq t'_\pi\leq 1$ such that $d_F(S[s'_\pi,t'_\pi],e[\alpha_1,\beta_1])\leq8\Delta$. Hence 
    \[\bigcup_{\pi\in \Pi \text{ s.t. } c_S(\pi)\in G_{<\geq}}[s_\pi,t_\pi]\subset\bigcup_{\pi\in \Pi \text{ s.t. } c_S(\pi)\in G_{<\geq}}[s'_\pi,t'_\pi]\subset\Psi'_{8\Delta}(S,\{e[\alpha_1,\beta_1]\}).\]
    
    Next handle $G_{<<}$. Define $\alpha_2=\min(\{s_{Y,e}\mid Y \in G_{<<}\})$ and $\beta_2=\min(\{t_{Y,e}\mid Y \in G_{<<}\})$. This definition is close to $\alpha_1$ and $\alpha_2$, except this time all first $8\Delta$-extremal points lie above $s$. Similarly it follows from Lemma \ref{lem:pathcreator} that
    \[\bigcup_{\pi\in \Pi \text{ s.t. } c_S(\pi)\in G_{<<}}[s_\pi,t_\pi]\subset\Psi'_{8\Delta}(S,\{e[\alpha_2,\beta_2]\}).\]
    
    For $G_{\geq\geq}$ define $\alpha_3=\max(\{s_{Y,e}\mid Y \in G_{\geq\geq}\})$ $\beta_3=\max(\{t_{Y,e}\mid Y \in G_{\geq\geq}\})$. and for $G_{\geq<}$ define $\alpha_4 = \min(\{s_{Y,e}\mid Y \in G_{\geq<}\})$ and $\beta_4=\max(\{t_{Y,e}\mid Y \in G_{\geq<}\})$. All together we have
    \begin{align*}
        \Psi'_{8\Delta}(S,e[s,t]) & = \bigcup_{\pi\in\Pi}[s_\pi,t_\pi]= \bigcup_{G\in \mathcal{G}}\left(\bigcup_{\pi\in\Pi\text{ s.t. }c_S(\pi)\in G}[s_\pi,t_\pi]\right)\\
        &\subset \bigcup_{1\leq k\leq 4}\Psi'_{8\Delta}(S,\{e[\alpha_k,\beta_k]\}) = \Psi'_{8\Delta}(S,\{e[\alpha_1,\beta_1],\ldots,e[\alpha_4,\beta_4]\}).
    \end{align*}
    Thus the claim follows.
\end{proof}

We are now ready to prove the main theorem of Section~\ref{sec:structure}. In particular, we want to prove that our candidate set from Definition~\ref{def:candidates} satisfies the properties stated in Theorem~\ref{thm:alg_candidates}.
To this end, we will modify the proof of Theorem~\ref{thm:candidate_space}.

\begin{proof}[Proof of Theorem~\ref{thm:alg_candidates}]
    Let $S$ be a $\Delta$-good simplification of $P$. Lemma~\ref{thm:approx}  implies that there exists a set $C_S$ of subedges of edges of $S$ of cardinality at most $3k$ which is a structured $8\Delta$-covering of $S$.
    Now, by Lemma \ref{lem:settransform}, we can replace each element of $C_S$ with four elements of $B$ and retain the $8\Delta$-coverage on $S$. Thus, there exists a set of candidates 
    $C_B \subseteq B$ of size at most $4|C_S| = 12k$ which constitutes a structured $8\Delta$-covering on $S$.
    At this point, we can simply proceed as in the proof of Theorem~\ref{thm:candidate_space}.    
    By Observation \ref{obs:unstructure_cover},
    $C_B$  is also an $8\Delta$-covering of $S$. 
    Since Observation~\ref{obs:dfsimp}  implies that $d_F(P,S)\leq3\Delta$, we can apply Lemma~\ref{lem:covertransfer}~$(i)$ with $P'=S$ and $\Delta'=3\Delta$ to conclude that $C_B$ is an $11\Delta$-covering of $P$.     
    
    As for the time and space for generating the set $B$, we argue as follows. 
    Note that there are at $O(n)$ generating subcurves on $S$ and at most $n$ edges. Hence there are $O(n^3)$ triples of the form $(e,S_1,S_2)$. For each of these  triples we can check in $O(1)$ time, whether they should be added to the set of generating triples $T_S$. Each element in $T_S$ generates one candidate and computing the $8\Delta$-extremal points for curves of constant complexity can be done in constant time. Thus, we can compute all elements of $B$ in $O(n^3)$ time. As each candidate has constant complexity, the space requirement for storing $B$ also is in $O(n^3)$.
\end{proof}

\section{The main algorithm}
\label{sec:sample} 

We describe the main algorithm in Section~\ref{sec:descr:alg} with pseudocode specified in Algorithm~\ref{alg:main}  and Algorithm~\ref{alg:isfeasible}. Specifications of the missing subroutines are given in Table~\ref{tab:specs}.   
Several building blocks of the algorithm are discussed in Sections \ref{sec:alg_candidates} (Computing candidates),
\ref{sec:alg_coverage} (Computing the structured coverage and testing feasibility), and \ref{sec:simplifications:alg} (Computing simplifications),. 

\subsection{Description of the main algorithm}\label{sec:descr:alg}

\begin{algorithm}[p]
\caption{Main algorithm}\label{alg:main}
\begin{algorithmic}[1]
\Procedure{ApproxCover}{$P\in\XX^d_n$, $\Delta \in \RR$ }
     \State $S \gets$ \textsc{SimplifyCurve}$(P,\Delta)$
     \State $B \gets$ \textsc{GenerateCandidates}$(S,\Delta)$
     \State $k \gets 1$
     \State $\gamma \gets 110d+412$ \Comment{bound on the VC-dimension}
     \Repeat
        \State $k \gets 2k$
        \Comment{increase target size for solution}
        \State $r \gets 2k$  , $\Delta' \gets $ $\alpha \Delta$, $k' \gets  \lceil 16k\gamma\log(16k\gamma)\rceil$, 
         $i_{\max} \gets 5k\log_2(\frac{|B|}{k}) $ 
        \State $C \gets $ \textsc{kApproxCover}$(S,B,r,\Delta',k',i_{\max})$
        \Comment search solution with this size
     \Until{$C \neq \emptyset$}
     \Comment{..until we find a solution}
     \State \Return $C$
\EndProcedure
\end{algorithmic}

\begin{algorithmic}[1]
\Procedure{kApproxCover}{$S \in \XX^d_n$, $B \subset \XX^d_2$, $r, \Delta' \in \RR$, $k'$, $i_{\max} \in \NN$ }
     \State Let $\Dist_1$ be the uniform distribution over $B$ with weight function $w_1: B \rightarrow \{1\}$
     \State $i \gets 1$
     \Repeat
        \State $C \gets $ sample $k'$ elements from $\Dist_i$
        \State $t \gets $ \textsc{PointNotCovered}$( C,S,\Delta')$
        \If{$t = -1$} \Return $C$ 
         \Comment{if all points covered, return solution found}
        \EndIf
        \State $F \gets \emptyset$
                \Comment{otherwise, compute feasible set of $t$}
        \For{\textbf{each} $Q \in B$}
            \If{\textsc{IsFeasible}$(Q,S,t, \Delta')$} add $Q$ to $F$ \EndIf
        \EndFor
        \If{$\Pr[\Dist_{i}]{F} \leq \frac{1}{r}$}
            \State $\Dist_{i+1} \gets$ \textsc{WeightUpdate}$(\Dist_i, F)$
            \Comment{increase the probability of $F$}
            \State $i \gets i+1$
        \EndIf
    \Until{$i > i_{\max}$}
    \State \Return $\emptyset$
    \Comment{no solution found for this target size}
\EndProcedure
\end{algorithmic}
\end{algorithm}

\begin{algorithm}[p]
\caption{Subroutine \textsc{IsFeasible} which is called by the main algorithm}\label{alg:isfeasible}
\begin{algorithmic}[1]
\Procedure{IsFeasible}{$Q \in \XX^d_2, S \in \XX^d_n, t \in \Param{n}, \Delta' \in \RR$}
    \State $(t',i') \gets t$ \Comment{locate edge of $t$ on $S$}
    \State $J = \{  1 \leq i \leq j \leq n  \mid 1 \leq j-i \leq 4; i \geq i'-3; j \leq i'+4 \}$
    \Comment{find generating subcurves}    
    \For{$(i,j) \in J$} \Comment{check if $Q$ covers $t$ on $S$}
               \If{$t \in \Psi^{i,j}_{\Delta'}(S,Q)$} \Return true
               \EndIf
            \EndFor
            \State \Return false
\EndProcedure
\end{algorithmic}
\end{algorithm}

\addtocounter{linenumber}{-9}

\begin{table}[p]
\caption{Specification of additional subroutines used in the main algorithm}\label{tab:specs}
\begin{tabular}{l p{0.32\textwidth} p{0.35\textwidth} l}
    \hline
    Procedure & Input & Output  \\
    \hline\hline
   \textsc{SimplifyCurve}& $P \in \XX^d_{n}$, $\Delta \geq 0$  & $\Delta$-good simplification of $P$ (Def.~\ref{def:goodsimp}) \\\hline
   \textsc{GenerateCandidates} & $S \in \XX^d_{n}$, $\Delta \geq 0$ &  candidate set (Def.~\ref{def:candidates}) \\\hline
   \textsc{PointNotCovered} & $C \subset \XX^d_2$, $S \in \XX^d_{n}$, $\Delta \geq 0$ & either $t \in \Param{n} \setminus \Psi'_{\Delta}(S,C) $ or $-1$ if this set is empty (Lemma~\ref{lem:findpointstructured}) \\\hline
   \textsc{WeightUpdate} & distribution $\Dist$ given by weight function  $w: B \rightarrow \RR$, $F \subset B$ &  $\Dist'$ with $w': B \rightarrow \RR$ where weight is doubled for all elements of $F$  \\\hline
\end{tabular}
\end{table}

\label{sec:descr}

The algorithm receives as input a polygonal curve $P$ in $\RR^d$ and a parameter $\Delta \geq 0$. The goal is to compute an small set of edges $C$, such that all points on $P$ are covered by the $\Delta'$-coverage of $C$ on $P$ for some $\Delta'\in O(\Delta)$. 
The algorithm \textsc{ApproxCover} (see Algorithm~\ref{alg:main}), when called with input $P$ and $\Delta$, first computes a \mbox{$\Delta$-good} simplification $S$ of $P$ using the algorithm of Section~\ref{sec:simplifications:alg} and generates a finite subset $B$ of the candidate space $\CandidatesArg{E(S)} \subset \XX^d_2$ defined on the edges of this simplification. For this, we use the construction of the candidate set presented in Section~\ref{sec:alg_candidates}.
The algorithm then performs an exponential search with the variable $k$ that controls the target size of the solution. Starting with a constant $k$, the algorithm tries to find a solution of size approximately $k$ and if this fails, the algorithm  doubles $k$ and continues. For finding a solution with fixed target size, the algorithm \textsc{kApproxCover} is used (see Algorithm~\ref{alg:main}). This algorithm is called with the simplification $S$, the candidate set $B$ and set of parameters $r,\Delta',k',$ and $i_{\max}$.
The algorithm \textsc{kApproxCover} uses a variant of the multiplicative weight update method with a maximum number of (proper) iterations bounded by $i_{\max}$. In the $i$th iteration, we take a sample from a discrete probability distribution $\Dist_i$ that is defined on $B$ via a weight function $w_i: B \rightarrow \RR$, where the probability of an element $e \in B$ is defined as $w_i(e)/\sum_{e \in B} w_i(e)$. For the initial distribution $\Dist_1$, all weights are set to $1$, which corresponds to the uniform distribution over $B$. 
During the course of the algorithm, we will repeatedly update this distribution thereby generating distributions $\Dist_1, \Dist_2, \dots$ (up to $\Dist_{i_{\max}}$, unless the algorithm finds a solution in an earlier iteration). The update step performed by a call to subroutine \textsc{UpdateWeight} proceeds by doubling the weight of the subset $F$ of $B$. Note that this can be easily done in $O(|B|)$ time and space if we store the cumulative probability distribution explicitly in an array.

With this basic mechanism in place, the algorithm \textsc{kApproxCover} now proceeds as follows. In each iteration, the algorithm computes a set $C \subset B$ by taking $k'$ independent draws from the current distribution $\Dist_i$. Then, the algorithm checks, if $C$ is a solution to our problem by a call to the subroutine \textsc{PointNotCovered}. The subroutine should either return that all points on $S$ are in the $\Delta'$-coverage of the solution $C$, or return a point $t$ on $S$ that is not covered in this way. This can be done by computing the structured coverage $\Psi'_{\Delta'}(S,C)$ explicitly (see Lemma~\ref{lem:findpointstructured} in Section~\ref{sec:alg_coverage}). 
In the former case, the algorithm returns the solution and terminates. In the latter case, we compute the subset $F$ of candidates $B$ that would cover $t$ with respect to the subcurves that contain $t$ and which have at most $3$ edges. To compute the set $F$, we simply iterate over all elements of $B$ and check if $t$ is covered by calling the subroutine \textsc{IsFeasible} (see Algorithm \ref{alg:isfeasible} and Lemma~\ref{lem:tdist} in  Section~\ref{sec:alg_coverage}). (For technical reasons, we parametrize the curve $P$ via the edge space of the set of edges of $P$, so that we can locate the edge that contains $t$ in constant time.)
It is important that $F$ is not a multiset, so repeated additions of an element will not increase its weight.

At this point we would like to perform the weight update step which we described above with respect to the set $F$, however, we only do this if the weight of the set $F$ is  small.
If the total weight of the set $F$ is larger than a $\frac{1}{r}$-fraction of the total weight of $B$, then we simply skip the update step and continue by taking another sample from the current distribution.

\subsection{Algorithms for testing the coverage}\label{sec:alg_coverage}

We will now discuss how to compute the $\Delta$-coverage of a fixed solution and how to test for given $i,j,t$ and $\Delta$, whether $t$ is in $\Psi^{(i,j)}_\Delta$. These  algorithms are used as building blocks in our main algorithm.

For technical reasons, we need to be able to deduce the index of the edge of the curve $P$ that contains a point $P(t)$ in constant time from the parameter $t \in [0,1]$. To this end, we introduce the following more structured edge-parametrization.

\begin{definition}
Let $P:[0,1]\rightarrow \RR^d$. Let $(t_1,\ldots,t_n)$ be the vertex-parameters of $P$. Define the \emph{edge-parametrization} $\eta_P:\Param{n}\rightarrow [0,1]$ via $\eta(i,t)=(1-t)t_i + tt_{i+1}$. This induces a function $P\circ\eta_P:\Param{n}\rightarrow [0,1]$.
\end{definition}

The cell $C_{i,j}$ in the $\Delta$-free space of $P$ and $Q$ corresponding to the $i$th edge of $P$ and $j$th edge of $Q$ is then defined as $C_{i,j} = [\eta_P(i,0),\eta_P(i,1)]\times[\eta_Q(j,0),\eta_Q(j,1)]$.

\begin{lemma}\label{lem:tdist}
    Let $P$ be a polygonal curve in $\RR^d$, and let $(t_1,\ldots,t_n)$ be the vertex-parameters of $P$. Let $Q$ be a point in $\CandidatesArg{E}$ for some edge set $E$ and let $\Delta \geq 0$ be a real value. 
    Assume that $P$ is given as a pointer to an array storing the sequence of vertices.    
    Given any integer values $1 \leq i < j \leq n$, real value $t \in [0,1]$, there exists an algorithm that decides if $t\in \Psi^{(i,j)}_\Delta(P,Q)$, in  $O(|j-i|)$ time.
\end{lemma}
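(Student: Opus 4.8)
The plan is to decide membership $t\in\Psi^{(i,j)}_\Delta(P,Q)$ by constructing the relevant slab of the free space diagram between edge $i$ and edge $j-1$ of $P$ and the single edge $Q$, and checking whether there is a monotone path through it that passes through the column corresponding to the point $P(t)$. First I would use the edge-parametrization $\eta_P$ and the array of vertices of $P$ to locate in $O(1)$ time the index $a$ of the edge containing $P(t)$, and to extract the subcurve $P[t_i,t_j]$, which has at most $|j-i|+1$ vertices, hence at most $|j-i|$ edges. Since $Q$ is a single edge (a point of $\CandidatesArg{E}$ is associated with a subedge), the relevant portion of the $\Delta$-free space $\dfree{}{}(P[t_i,t_j],Q)$ is a single row of at most $|j-i|$ cells, and by Alt--Godau each cell's free space is an ellipse-cell intersection of constant complexity, computable in $O(1)$ time per cell.

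Next I would recall the precise condition from Definition~\ref{def:structuredcoverage}: $t\in\Psi^{(i,j)}_\Delta(P,Q)$ iff there exist parameters $\tau\le\eta_P^{-1}(t)\le\tau'$ with $t_i\le\tau\le t_{i+1}$ and $t_{j-1}\le\tau'\le t_j$, and a $\Delta$-feasible $(i,j)$-partial traversal of $P$ and $Q$ that starts at $x$-coordinate $\tau$, ends at $x$-coordinate $\tau'$, and whose $x$-range therefore includes the $x$-coordinate of $P(t)$. The standard Alt--Godau dynamic program over the row of cells computes, for each cell boundary, the set of reachable points; I would adapt it to a \emph{partial} traversal by (a) allowing the start point to be any free-space point on the vertical boundaries inside the first cell (the cell of edge $i$) rather than only $(0,0)$, and (b) allowing the end point to be any free-space point inside the last cell (the cell of edge $j-1$). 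Running this left-to-right reachability sweep through the at most $|j-i|$ cells takes $O(|j-i|)$ time and yields, on the right boundary of each cell, the maximal interval of reachable points; symmetrically one could sweep right-to-left, but a single left-to-right sweep that records reachability into the middle column $x=$ (the $x$-coordinate of $P(t)$) and then continues to the last cell suffices. The point $t$ is covered iff the reachable region contains some point with $x$-coordinate equal to that of $P(t)$ that can still be extended monotonically to a valid endpoint inside the last cell; this is exactly what the forward sweep, restricted to pass through that column, tests.

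Concretely, the key steps in order are: (1) locate the edge of $P$ containing $P(t)$ and the cell column via $\eta_P$ in $O(1)$; (2) for each of the $O(|j-i|)$ cells between edges $i$ and $j-1$, compute the constant-complexity description of $\dfree{}{}\cap C_{\cdot,1}$ and the free-space intervals on its vertical and horizontal boundaries in $O(1)$ per cell; (3) run the Alt--Godau monotone-reachability recurrence across this row, seeded with the whole free-space portion of the left vertical boundary of cell $i$ (encoding that the partial traversal may start anywhere with $t_i\le\tau\le t_{i+1}$); (4) force the sweep to pass through the vertical line $x = $ (the parameter of $P(t)$): the answer is ``true'' iff the reachable set on that vertical line is nonempty \emph{and} the reachability recurrence continued from there still reaches a point inside cell $j-1$ (encoding $t_{j-1}\le\tau'\le t_j$); (5) return the corresponding boolean. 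The main obstacle I anticipate is bookkeeping the boundary conditions of the partial traversal correctly---in particular handling the degenerate cases where $P(t)$ lies exactly on a vertex of $P$ (so the column coincides with a grid line and lies in two cells), where $i$ and $j$ are such that the start and end cells coincide or are adjacent, and where the free-space interval on the seeding boundary is empty; but each of these is a constant-size case analysis layered on top of the standard $O(n)$-time Fréchet decision procedure, so the $O(|j-i|)$ bound follows.
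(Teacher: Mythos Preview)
Your proposal is essentially correct and follows the same high-level approach as the paper: reduce to a single row of the free space diagram of $P$ and the edge $Q$ and run the Alt--Godau reachability sweep over the $O(|j-i|)$ cells.

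Two remarks are in order. First, there is a terminology slip that, if taken literally, would make the algorithm wrong: you seed the sweep ``with the whole free-space portion of the left vertical boundary of cell~$i$''. With $P$ on the $x$-axis and $Q$ on the $y$-axis, a $\Delta$-feasible $(i,j)$-partial traversal starts at a point $(\tau,0)$ on the \emph{bottom horizontal} boundary of cell~$i$ and ends at a point $(\tau',1)$ on the \emph{top horizontal} boundary of cell~$j-1$; seeding from the left vertical boundary $\{t_i\}\times[c,d]$ tests a different condition. The constraint you actually write out ($t_i\le\tau\le t_{i+1}$) is the right one, so this appears to be a wording error rather than a conceptual one, but the seeding boundary should be corrected.

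Second, the paper encodes the covering constraint $\tau\le t\le\tau'$ more simply than you do. Instead of forcing the monotone path to cross the column $x=t$ and then checking extendability past it, the paper just restricts the admissible start points on the bottom boundary to $[t_i,\min(t,t_{i+1})]\times\{0\}$ and the admissible end points on the top boundary to $[\max(t,t_{j-1}),t_j]\times\{1\}$, and runs a single left-to-right Alt--Godau sweep with these boundary conditions. This is equivalent to your ``pass through column $t$'' test (any bi-monotone path from $a\le t$ to $b\ge t$ necessarily crosses $x=t$, and conversely), but it avoids splitting the sweep at an interior column and also makes it unnecessary to locate the edge of $P$ containing $P(t)$ at all.
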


\begin{proof}
Note that the $\Delta$-free space diagram of $P$ and the edge $Q$ consists of a single row of free space cells and each free space cell can be computed locally from the two corresponding edges. 
We need to check if there exists a point $(b,1)$ on the top boundary with $b \in [\max(t,t_{j-1}),t_j]$ that is reachable by a bi-monotone path in the $\Delta$-free space which starts in a point $(a,0)$ with $a \in [t_i,\min(t,t_{i+1})]$ on the bottom boundary of the free space diagram. Using the technique by Alt and Godau~\cite{AltG95} we can process the subset of the free space diagram that corresponds to $P[t_i,t_j]$ from left to right to check if there exists such a path. If there exists such a path, then we return ``yes'', otherwise we return ``no''.
\end{proof}

\begin{lemma}\label{lem:findpointstructured}
    Given a polygonal curve $P \in \XX_n^d$, a set $C \subset \XX^d_2$ with $|C|=k$ and a real value $\Delta\geq 0$, there exists an algorithm that computes the structured $\Delta$-coverage  $\Psi'_{\Delta}(P,C)$
    in $O(n k \log( k))$ time and $O(n k)$ space.
\end{lemma}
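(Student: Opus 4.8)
The plan is to compute, for each center curve $Q \in C$ separately, the structured $\Delta$-coverage $\Psi'_\Delta(P,\{Q\})$ as an explicit union of $O(n)$ subintervals of $[0,1]$, and then to merge the resulting $O(nk)$ intervals into the final answer by sorting. Fix $Q \in C$, which is a single edge, so the $\Delta$-free space diagram $\dfree{}{}(P,Q)$ consists of a single row of $n-1$ cells, computable in $O(n)$ total time by Alt and Godau~\cite{AltG95}. For each index $i$, we precompute the leftmost point on the bottom boundary of cell $C_{i,1}$ from which a bi-monotone path reaches the top-right corner of each later cell; equivalently, we perform one left-to-right sweep maintaining, for every vertical grid line $t_j$, the smallest bottom-boundary parameter $a$ in $[t_i, t_{i+1}]$ (over all valid $i$) that can reach the line $x = t_j$ at height $1$. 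This is exactly the standard reachability propagation in a single free-space row, and it takes $O(n)$ time for one $Q$.

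Next, for each pair $(i,j) \in J$ (recall $1 \le j - i \le 4$, so there are $O(n)$ such pairs), we determine $\Psi^{(i,j)}_\Delta(P,\{Q\})$. By Definition~\ref{def:structuredcoverage}, a parameter $s$ lies in this set iff there are $t \le s \le t'$ with $t \in [t_i,t_{i+1}]$, $t' \in [t_{j-1},t_j]$ and a bi-monotone path in the free space from $(t,0)$-reachable start to $(t',1)$. Using the reachability information already computed, the set of valid start parameters $t$ that can reach height $1$ somewhere in $[t_{j-1},t_j]$ forms a single interval $[a^*, t_{i+1}]$ (by monotonicity of reachability), and the set of valid endpoints $t'$ forms an interval $[t_{j-1}, b^*]$; hence $\Psi^{(i,j)}_\Delta(P,\{Q\}) = [a^*, b^*]$ is a single interval computable in $O(|j-i|) = O(1)$ time per pair, consistent with Lemma~\ref{lem:tdist}. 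Taking the union over the $O(n)$ pairs gives $\Psi'_\Delta(P,\{Q\})$ as $O(n)$ intervals in $O(n)$ time.

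Finally, we collect the $O(nk)$ intervals over all $Q \in C$, sort their endpoints in $O(nk \log(nk)) = O(nk \log k)$ time (using $\log(nk) = O(\log n + \log k)$; the $\log n$ is absorbed by the $\widetilde{O}$ convention, or one sorts the $O(n)$ intervals per curve and then does a $k$-way merge to get exactly $O(nk\log k)$), and sweep to output the union. Storing all intervals and the free-space rows takes $O(nk)$ space, and each row is discarded after its intervals are extracted, so the $O(n)$ working space per curve does not accumulate. The main obstacle — and the only place real care is needed — is the claim that for fixed $(i,j)$ the coverage $\Psi^{(i,j)}_\Delta(P,\{Q\})$ is a single interval with endpoints readable off the one-pass reachability data; this follows because in a single free-space row the ``reachable from a start in $[t_i,t_{i+1}]$'' predicate on the top boundary is an interval (left-to-right monotonicity of reachability inside a row), and intersecting with $[t_{j-1},t_j]$ and pulling back through monotonicity again preserves the interval structure. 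Everything else is bookkeeping and a sort.
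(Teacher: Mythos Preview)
Your approach is essentially the same as the paper's: for each $(q,i,j)$ with $q\in C$ and $(i,j)\in J$ you produce a single interval, then merge the $O(nk)$ intervals. Two points are worth noting.

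First, the left-to-right reachability sweep is unnecessary. Because $j-i\le 4$, each $\Psi^{(i,j)}_\Delta(P,\{Q\})$ depends on at most four free-space cells, and the paper simply reads off the interval directly: it is $[a_{i,0},\,b_{j,1}]$ (leftmost point of the bottom interval in cell $i$, rightmost point of the top interval in the last cell) whenever the relevant horizontal and vertical free-space intervals are nonempty and satisfy the obvious monotonicity condition, and empty otherwise. No global reachability propagation is needed; each triple is handled in $O(1)$ from the local free-space intervals.

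Second, your sorting justification has a small gap. The statement asks for $O(nk\log k)$, not $\widetilde O(nk\log k)$, so you cannot hide a $\log n$. Your $k$-way merge idea is fine, but ``sort the $O(n)$ intervals per curve'' costs $O(n\log n)$ as written; you need the (true, but unstated) observation that for a fixed $Q$ the intervals come out essentially pre-sorted by left endpoint, since the left endpoint of the $(i,j)$-interval lies in edge $i$. The paper avoids this by observing that any point on the $i'$-th edge of $P$ is covered by only $O(|C|)=O(k)$ of the intervals (at most a constant number of $(i,j)$ pairs per $q$), so one can process edge by edge, sorting only $O(k)$ interval endpoints per edge, for $O(nk\log k)$ total.
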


\begin{proof}
Let $J = \{  1 \leq i \leq j \leq n  \mid 0 \leq j-i \leq 3 \}$. Fix an element $q\in C$ and an element $(i,j)\in J$. We have to compute \[R(q,i,j)=\bigcup_{a,b \in [0,1]} \{ s \in [t,t'] \mid  t=\eta_P(a,i)), t'=\eta_P(b,j), d_F(\curveP[t,t'], q) \leq \Delta,  t \leq t' \}.\] 
Consider the free space intervals $I^{h}_{i,0}=[a_{i,0},b_{i,0}]$,  $I^h_{j,1}=[a_{j,1},b_{j,1}]$,  $I^{v}_{i,1}=[c_{i,1},d_{i,1}]$ and $I^{v}_{j-1,1}=[c_{j-1,1},d_{j-1,1}]$ of the $\Delta$-free space $\mathcal{D}_{\Delta}(P,q)$. If any of the Intervals is empty or $c_{i,1}\leq d_{j-1,1}$, then we have $R(q,i,j)=\emptyset$, since there is no bi-monotone path in the free space from $I^{h}_{i,0}$ to  $I^h_{j,1}$. Otherwise, we have $R(q,i,j)=[a_{i,0},b_{j,1}]$. Each free space interval corresponds to the intersection of a line with a ball and can be computed in constant time. So in total, also $R(q,i,j)$ is an interval that can be computed in $O(1)$ time. All $R(q,i,j)$ can therefore be computed in $O(|C||J|)$ time and need $O(|C||J|)$ space.

Given  $R(q,i,j)$ for all $q\in C$ and $(i,j)\in J$, we can then compute the structured $\Delta$-coverage  $\Psi'_{\Delta}(P,C)$ with a standard scan algorithm over the computed intervals in $O(|C||J|\log(|C|))$ time. For the derivation of this bound on the running time, note that the number of overlapping intervals at any point of the scan is bounded by $O(|C|)$, since any point on the $i'$-th edge of $P$ can only be covered by an interval $R(q,i,j)$ with $i\leq i'\leq j$. The theorem statement follows by $|C|=k$ and $|J|=O(n)$.
\end{proof}

\subsection{Simplification algorithm}\label{sec:simplifications:alg}

In this section we describe an algorithm to construct a $\Delta$-good simplification $S$ for a given polygonal curve $P$.
Our simplification algorithm utilizes a data structure that is built on the input curve and which allows to query the Fr\'echet distance of a subcurve to an edge (up to some small approximation factor). For this we use the following result by Driemel and Har-Peled~\cite{jaywalking}.

\begin{theorem}[Theorem 5.9 in \cite{jaywalking}]\label{thm:DriemelSFDapprox}
    Given a polygonal curve $Z$ with $n$ vertices in $\RR^d$, one can build a data structure, in $O(\chi^2n\log^2 n)$ time, that uses $O(n\chi^2)$ space, such that for a query edge $\overline{p\,q}$, and any two points $u$ and $v$ on the curve, one can $(1+\eps)$-approximate the distance $d_F(Z[u,v],\overline{p\,q})$ in $O(\eps^{-2}\log n \log\log n)$ time, and $\chi=\eps^{-d}\log(\eps^{-1})$.
\end{theorem}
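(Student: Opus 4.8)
This statement is quoted verbatim from Driemel and Har-Peled~\cite{jaywalking} (their Theorem~5.9), so no genuinely new argument is required; what follows is how I would reconstruct the proof. The plan is to put a balanced hierarchical structure over the $n-1$ edges of $Z$, store at each node a query-independent $\eps$-coreset that summarizes how that block of edges interacts with an arbitrary query edge in the free space, and then answer a query by merging the $O(\log n)$ canonical summaries covering $Z[u,v]$ into an approximate free-space instance of constant complexity on which reachability --- and hence the distance --- can be read off directly.

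\textbf{The decision problem.} First I would fix $\delta>0$ and a query edge $\overline{p\,q}$ and understand when $d_F(Z[u,v],\overline{p\,q})\le\delta$. Since $\overline{p\,q}$ is a single edge, the $\delta$-free space of $Z$ and $\overline{p\,q}$ is a single horizontal strip of cells, one per edge of $Z$, and the free space inside each cell is convex (an ellipse intersected with the cell) by Alt--Godau~\cite{AltG95}. Crucially, for a contiguous block of edges $e_i,\dots,e_j$ of $Z$, the set of segment-parameters reachable on the right boundary of the block is an interval obtained from the interval of segment-parameters available on the left boundary by a \emph{non-decreasing transfer map} $\phi_{[i,j]}$ (larger input interval $\Rightarrow$ larger output interval), and these maps compose, $\phi_{[i,k]}=\phi_{[j,k]}\circ\phi_{[i,j]}$. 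Taking also into account that the reparametrizations are surjective (the path must start at segment-parameter $0$ over $Z(u)$ and reach segment-parameter $1$ over $Z(v)$), a balanced binary tree over the edges of $Z$ storing $\phi$ at every node would answer the decision problem for any $Z[u,v]$ in $O(\log n)$ time --- except that $\phi$ depends on $\delta$ and on $\overline{p\,q}$, which are not known at preprocessing time.

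\textbf{The data structure.} To remove this dependence I would, for each canonical node (a dyadic block of edges of $Z$), precompute an $\eps$-coreset of size $O(\chi)$: morally a constant-complexity description of the block at the $O(\log\eps^{-1})$ relevant dyadic scales together with an $\eps$-net of $O(\eps^{-d})$ representative points per scale. Storing, per node, the table indexed by how a query edge can enter and exit the block --- its range of segment-parameters and its direction, each at resolution $\eps$ --- from which $\phi_{\text{block}}$ can be reconstructed up to a $(1+\eps)$ factor in $\delta$ for \emph{any} query, costs $O(\chi^2)$ per node, hence $O(n\chi^2)$ space. Building these bottom-up by merging children into parents, with an extra $\log n$ for sorting the free-space events inside a node and another $\log n$ for the number of levels, gives $O(\chi^2 n\log^2 n)$ preprocessing. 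A query $(u,v,\overline{p\,q})$ then locates $u,v$ in the tree (a predecessor search, $O(\loglog n)$), writes $Z[u,v]$ as a concatenation of $O(\log n)$ canonical blocks plus two truncated end-cells, retrieves their $O(\chi^2)$ summaries, and assembles a $(1+\eps)$-approximate free space of resolution $O(\eps^{-1})\times O(\eps^{-1})$; since this merged instance has only $O(\eps^{-2})$ cells one can compute on it directly --- with no outer search over $\Delta$ --- the infimum scale admitting a bi-monotone path, which is the desired $(1+\eps)$-approximation of $d_F(Z[u,v],\overline{p\,q})$. This accounts for the $O(\eps^{-2}\log n\loglog n)$ query time, with the $\log n$ coming from the number of blocks and the $\loglog n$ from locating the endpoints.

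\textbf{Main difficulty.} The crux is the correctness of the per-node coreset: one must show that a single \emph{query-independent} coreset of size $O(\chi)$ simultaneously $(1+\eps)$-approximates that block's transfer map $\phi$ for every possible query edge $\overline{p\,q}$ and every scale $\delta$, and --- more delicately --- that composing $O(\log n)$ such approximate transfer maps along $Z[u,v]$ changes the reachable set by at most a $(1+\eps)$ factor in $\delta$ rather than $(1+\eps)^{\log n}$. This forces the approximation to be one-sided/monotone and the composition to be numerically stable, and it is where I expect the real work to lie. A second, more routine, source of friction is getting the reachability bookkeeping across block boundaries exactly right: propagating the reachable interval of segment-parameters, splicing in the two truncated end-cells that correspond to $u$ and $v$ lying in edge interiors, and handling the degenerate cases where the traversal stalls at $p$ or at $q$.
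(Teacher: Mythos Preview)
You correctly identify that this theorem is quoted from~\cite{jaywalking} and that the present paper offers no proof of it whatsoever: it is invoked purely as a black box to support the \textsc{SimplifyCurve} routine (Algorithm~\ref{alg:simplify}), and the paper's only use of it is the query call in line~\ref{line:whileloop}. There is therefore nothing in the paper to compare your reconstruction against.

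Your sketch of how the data structure might be built is plausible in broad strokes (hierarchical decomposition, per-node summaries, composition along $O(\log n)$ canonical blocks), and you are right that the delicate point is making the per-block approximation compose without blowing up to $(1+\eps)^{\log n}$. But since the paper neither states nor needs any of this internal machinery, the appropriate ``proof'' here is exactly your first sentence: cite~\cite{jaywalking} and move on. The remainder of your write-up, while a reasonable outline, is speculative about the contents of a different paper and would be out of place in this one.
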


\begin{algorithm}
\caption{Curve simplification}
\label{alg:simplify}
\begin{algorithmic}[1]
\Procedure{SimplifyCurve}{Polygonal curve $P$ in $\RR^d$,$\Delta>0$}
    \State Build data structure $\mathcal{D}$ of Theorem \ref{thm:DriemelSFDapprox} on $P$ with $\eps=1/3$.
    \State Let $\hat{S}$ be an empty stack. 
    \State $\hat{S}$.\textsc{push}($1$)
    \For{$2\leq i\leq n$}\label{line:forloop}
        \State $j \gets \hat{S}$.\textsc{next\_to\_top}() 
        \While{$j$ is defined and $\mathcal{D}$.\textsc{query}$(P[t_j,t_i],\overline{p_j\,p_i}) \leq (8/3)\Delta$}\label{line:whileloop} 
            \State $\hat{S}$.\textsc{pop}()\label{line:removei}
            \State $j\gets \hat{S}$.\textsc{next\_to\_top}()
        \EndWhile
        \State $j \gets \hat{S}$.\textsc{top}()
        \If{$\|P(t_j)-P(t_i)\|\geq\Delta/3$}
            \State $\hat{S}$.\textsc{push}($i$)\label{line:filtershort}
        \EndIf
    \EndFor
    \State \Return the simplification $S$ defined by the indices in $\hat{S}$
\EndProcedure
\end{algorithmic}
\end{algorithm}
        
\begin{restatable}{theorem}{algsimpl}
\label{thm:runt:simpl}
    Let $P$ be a polygonal curve in $\RR^d$. Let $(t_1,\ldots,t_n)$ be the vertex-parameters of $P$ and $p_i=P(t_i)$ its vertices. Let $\Delta>0$ be given. There exists an algorithm that outputs an index set defining a $\Delta$-good simplification of $P$. Furthermore it does so in $O(n\log^2n)$ time and $O(n)$ space.
\end{restatable}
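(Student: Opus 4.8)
The plan is to establish correctness and running time separately for Algorithm~\ref{alg:simplify}. For correctness, I need to show the four properties of Definition~\ref{def:goodsimp} hold for the curve $S$ defined by the index set on the stack $\hat S$ at termination. Property~$(i)$ is immediate from the filtering on line~\ref{line:filtershort}: a vertex $i$ is only pushed if its distance to the current top exceeds $\Delta/3$, and since the approximation factor $\eps=1/3$ only ever causes us to \emph{keep} more vertices (the while-loop on line~\ref{line:whileloop} pops only when the $(1+\eps)$-approximate distance is at most $(8/3)\Delta$, hence the true distance is at most $(1+1/3)(8/3)\Delta$, wait—actually I must be careful about the direction of the approximation). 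Let me reconsider: the data structure returns a value that is within a factor $(1+\eps)=4/3$ of the true Fréchet distance; so if the query returns $\le (8/3)\Delta$ then the true distance is $\le (8/3)\Delta$ as well only if the returned value is an \emph{upper} bound — but the $(1+\eps)$-approximation means returned value $\in [d_F, (1+\eps)d_F]$, so a returned value $\le(8/3)\Delta$ gives $d_F \le (8/3)\Delta$, while $d_F \le 2\Delta$ guarantees the returned value is $\le (8/3)\Delta$. This is exactly why the constants $(8/3)$ and the $3\Delta$/$2\Delta$ thresholds in Definition~\ref{def:goodsimp} are chosen as they are.

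\textbf{Correctness of properties $(ii)$--$(iv)$.} For property~$(ii)$, I would argue that when a vertex $i$ is pushed on top of a vertex $j$ (after the while-loop finishes), the immediately preceding iteration of the while-loop either never executed (so $j$ was already top and the edge $\overline{p_j p_i}$ was the last queried, giving $d_F(P[t_j,t_i],\overline{p_jp_i}) \le (8/3)\Delta < 3\Delta$) or it did execute and failed its test for the pair $(j,i)$, again bounding the true distance by the failed threshold via the approximation guarantee. The subtlety is the vertices that get \emph{filtered out} by line~\ref{line:filtershort}: if $i$ is not pushed because $\|p_j - p_i\| < \Delta/3$, I need to argue that skipping $i$ doesn't break property~$(ii)$ for the eventual edge of $S$ through $j$; here the triangle inequality on Fréchet distance plus the short-hop bound absorbs the error, which is where a $\Delta/3$ slack gets spent. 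Property~$(iii)$ (the prefix/suffix conditions) needs the same kind of argument applied to the boundary: the first pushed index is $1$ itself and the last is $n$ unless $n$ is filtered out, in which case $n$ is within $\Delta/3$ of the last stack element. Property~$(iv)$—maximality—is the key invariant of the stack: whenever the while-loop pops a vertex, it is precisely because a two-edge shortcut had approximate distance $\le (8/3)\Delta$ hence true distance $\le (8/3)\Delta$; I must show the \emph{contrapositive} holds at termination, i.e. for any three consecutive kept vertices $p_{i_j}, p_{i_{j+1}}, p_{i_{j+2}}$, the shortcut $\overline{p_{i_j} p_{i_{j+2}}}$ has true Fréchet distance $> 2\Delta$ to $P[t_{i_j}, t_{i_{j+2}}]$. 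This follows because if it were $\le 2\Delta$, the query at the time index $i_{j+2}$ was processed would have returned $\le (8/3)\Delta$ and popped $i_{j+1}$ — \emph{unless} $i_{j+1}$ was pushed only after $i_{j+2}$'s loop, which cannot happen since indices are processed in increasing order. The interaction with filtered vertices needs one more line: a filtered vertex sitting "between" $i_{j}$ and $i_{j+1}$ is within $\Delta/3$ of $p_{i_j}$, so it doesn't create an obstruction.

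\textbf{Running time and space.} The data structure of Theorem~\ref{thm:DriemelSFDapprox} with $\eps=1/3$ is built in $O(n\log^2 n)$ time and $O(n)$ space (since $\chi = \eps^{-d}\log(\eps^{-1}) = O(1)$ for constant $d$). Each of the $n-1$ iterations of the for-loop on line~\ref{line:forloop} performs some number of pops, each preceded by one query costing $O(\eps^{-2}\log n \log\log n) = O(\log n\log\log n)$ time, plus a constant number of stack operations. Since each index is pushed at most once and popped at most once over the entire run, the total number of queries is $O(n)$, giving $O(n\log n\log\log n)$ for all queries combined, which is dominated by the $O(n\log^2 n)$ construction time. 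The stack uses $O(n)$ space. So the total is $O(n\log^2 n)$ time and $O(n)$ space, as claimed.

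\textbf{Expected main obstacle.} The routine parts are the running-time amortization and the VC-dimension-free counting. The genuinely delicate part is property~$(iv)$ combined with the presence of filtered-out short vertices: one has to verify that the maximality certificate is not destroyed by vertices that the algorithm silently drops on line~\ref{line:filtershort}, and that the chain of triangle inequalities involving the $\Delta/3$ slack, the $3\Delta$ simplification error, and the $(1+1/3)$ approximation factor all compose to land inside the asserted constants without slack to spare. Getting these constants to line up — in particular ensuring that a true distance $\le 2\Delta$ really does force a pop given the $4/3$-approximate oracle and the $(8/3)\Delta$ threshold, and simultaneously that a surviving shortcut edge has true distance $\le 3\Delta$ — is the crux, and I would do this bookkeeping carefully with an explicit invariant maintained across for-loop iterations.
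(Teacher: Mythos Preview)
Your overall plan and the running-time argument match the paper's proof, and your treatment of properties~$(i)$, $(iii)$ and~$(iv)$ is correct. The gap is in your argument for property~$(ii)$. You assert that when $i$ is pushed on top of $j$, the while-loop delivers a Fr\'echet bound on the pair $(j,i)$; but the loop in line~\ref{line:whileloop} queries the pair $(\hat{S}.\textsc{next\_to\_top}(),\, i)$, not $(\hat{S}.\textsc{top}(),\, i)$. If the loop never executes, the single query that was evaluated concerned the element \emph{below} $j$ and gives no upper bound on $d_F(P[t_j,t_i],\overline{p_j p_i})$. When the loop does execute, it is the last \emph{successful} query (the one that caused the final pop and thereby promoted $j$ to top) that bounds the pair $(j,i)$, not the failed exit test---a failed test only certifies the query exceeded the threshold, which is a lower bound. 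So the case that actually needs work is precisely the one you flag as ``the subtlety'': several consecutive indices get filtered on line~\ref{line:filtershort} and then $i$ is pushed with no pop in this iteration; there you have no query certificate for $(j,i)$ at all.

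The paper closes this gap not by a one-line triangle-inequality patch but by maintaining an explicit geometric invariant across iterations of the for-loop: at the start of iteration~$i$, with $s$ the current top of $\hat{S}$, the entire subcurve $P[t_s,t_{i-1}]$ lies inside $\disk[3\Delta]{p_s}$. The invariant is preserved in both branches: if a pop occurred, the successful query gives $d_F(P[t_s,t_i],\overline{p_s p_i})\le (8/3)\Delta$, and if $i$ is then filtered the segment $\overline{p_s p_i}$ has length $<\Delta/3$, so $P[t_s,t_i]\subset\disk[3\Delta]{p_s}$; if no pop occurred, the filtered $p_i$ is already within $\Delta/3$ of $p_s$ and the new edge stays in the ball by convexity. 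From this invariant one then derives $d_F(P[t_s,t_i],\overline{p_s p_i})\le 3\Delta$ via the concatenation $P[t_s,t_i]=P[t_s,t_{i-1}]\oplus\overline{p_{i-1}p_i}$ matched against $p_s\oplus\overline{p_s p_i}$, using $\|p_s-p_{i-1}\|<\Delta/3$. Your proposal identifies the right obstacle but does not state this invariant, and without it the argument for~$(ii)$ does not go through.
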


\begin{proof}
    Consider Algorithm \ref{alg:simplify}.
    We want to show, that the simplification $S$ of $P$ definied by the index set $\hat{S}$ is $\Delta$-good. For this we have to show, that $S$ fulfils properties $(i)-(iv)$  of Definition \ref{def:goodsimp}.
    
    Denote by $s$ the last item of $\hat{S}$, which is updated whenever $\hat{S}$ changes.
    
    Note that property $(i)$ follows immediately, as otherwise, the index $i_{j+1}$ would not have been added to $I$ in line \ref{line:filtershort}.
    
    For property $(ii)$ we will show the following invariance. 
    Whenever we start some generic iteration of the loop in line \ref{line:forloop}, where we try to add $i$ to the index set, then $P[t_s,t_{i-1}]\subset\disk[3\Delta]{p_s}$. At the start of the first iteration, $\hat{S}=(1)$ and $i=2$. As $P[t_1,t_{i-1}] = P[t_1,t_1]=p_1$, the invariance holds.
    
    Now assume we are at the start of some iteration, where we try to add $i$ to $\hat{S}$, and assume the invariance holds.
    After exiting the loop in line \ref{line:whileloop}, we either updated $\hat{S}$, or we did not. In any case, we will assume, that $\|p_s-p_i\|<\Delta/3$, because otherwise, we add $i$ to $\hat{S}$ in this iteration. But then at the start of the next iteration, in which we consider adding $i+1$, we then have that $s=i$, for which the invariance trivially holds, as $P[t_s,t_{(i+1)-1}] = P[t_s,t_s] = p_s$.
    
    Assume for now, that we did not update $\hat{S}$, that is we never entered line \ref{line:removei}. Then $P[t_s,t_{i}]\subset\disk[3\Delta]{p_s}$, as $P[t_s,t_{i-1}]$, $p_{i-1}$ and $p_i$ lie inside $\disk[3\Delta]{p_s}$.
    
    Otherwise, $d_F(P[t_s,t_{i}],\overline{p_s\,p_{i}})\leq \textsc{query}(P[t_s,t_{i}],\overline{p_s\,p_{i}})\leq 8\Delta/3$, implying together with our assumption $\|p_s-p_i\|<\Delta/3$, that $P[t_s,t_{i}]\subset\disk[3\Delta]{p_s}$, hence the invariance holds in every iteration.
    
    This invariance implies, that whenever we start iteration $i$ of the loop, we have that 
    \begin{align*}
        d_F(P[t_s,t_i],\overline{p_s\,p_i})&\leq \max\left(d_F(P[t_s,t_{i-1}],\overline{p_s,p_s}) , d_F(P[t_{i-1},t_{i}],\overline{p_s,p_{i}}) \right)\\&\leq \max\left(3\Delta , d_F(P[t_{i-1},t_{i}],\overline{p_s,p_{i}}) \right)\\
        &\leq \max\left(3\Delta ,\Delta/3\right)=3\Delta,
    \end{align*}
    where the second inequality follows form the invariance, as $\overline{p_s\,p_s} =p_s$, and the third from the fact, that $P[t_{i-1},t_i]$ is an edge from $p_{i-1}$ to $p_{i}$, and we know, that $\|p_s-p_{i-1}\|\leq\Delta/3$.
    Thus at the start of each iteration $i$ we have that 
    \[ 
    d_F(P[t_s,t_i],\overline{p_s\,p_i})\leq3\Delta\label{eq:ineq}\tag{*}
    \]
    holds for $i$ and $s$. As we continuously remove the last item from $\hat{S}$, we only do so, if \eqref{eq:ineq} holds for $\hat{S}$.next\_to\_top() and $i$. As we remove $s$, $\hat{S}$.next\_to\_top() takes the place of $s$, thus at every iteration of the loop in line \ref{line:whileloop}, as well as when we exit the loop, \eqref{eq:ineq} holds. Thus every time we possibly add any index in line \ref{line:filtershort}, \eqref{eq:ineq} holds, and thus property $(ii)$ is always maintained.
    
    Property $(iii)$ follows directly for $i_1=1$. As $i_k$ may be less than $n$, the property follows from the invariance, as $\overline{p_{i_k}\,p_{i_k}}=p_{i_k}$. And thus $P[t_{i_k},t_n]\subset\disk[3\Delta]{p_{i_k}}$.
    
    For property $(iv)$ observe that, if $d_F(P[t_{i_j},t_{i_{j+2}}],\overline{p_{i_j}\,p_{i_{j+2}}}) < 2\Delta$ for some $i_j$ and $i_{j+2}$ in the resulting index set $I$, the algorithm would have removed $i_{j+1}$ from $I$ in line \ref{line:whileloop}, as when we add $i_{j+2}$, $\hat{S}.\mathrm{next\_to\_top}()=i_{j}$, and hence
    \[\textsc{query}(P[t_{i_j},t_{i_{j+2}}],\overline{p_{i_j}\,p_{i_{j+2}}}) \leq(4/3)d_F(P[t_{i_j},t_{i_{j+2}}],\overline{p_{i_j}\,p_{i_{j+2}}}) < 8/3\Delta.\]
    
    The space is dominated by the space for storing the data structure. For analysing the running time, note that each vertex of $P$ is inserted to and removed from the index set at most once. Therefore, the total running time is bounded by the preprocessing time and the at most $O(n)$ queries to the data structure. Thus the iterations of the loop in line \ref{line:whileloop} are bounded by $O(n)$ overall. Thus the claim follows from Theorem \ref{thm:DriemelSFDapprox}.
\end{proof}

\section{Analysis of the main algorithm}\label{sec:analysis:main}

The algorithm described in Section~\ref{sec:descr:alg} is based on the set cover algorithm by Brönniman and Goodrich~\cite{bronnimann1995almost}. In Section~\ref{sec:overview:ana:main}, we first give an overview of the main ideas of the analysis. A crucial step  is the analysis of the VC-dimension of the dual set system. In our case, this is a set system that is formed by the sets $F$ computed in the main algorithm. We will perform this analysis in Sections~\ref{sec:struc:feasible} and \ref{sec:vcdim}. With proper bounds on the VC-dimension in place, we then proceed with the analysis of the main algorithm in Section~\ref{sec:analysis:general}. In Section~\ref{sec:analysis:cpacked} we give improved bounds which are  tailored to $c$-packed polygonal curves.

\subsection{Overview of the proof of the main theorem}\label{sec:overview:ana:main}

For the formal analysis of the set system that is formed by the sets $F$ computed in Algorithm~\ref{alg:main}, we introduce the notion of feasible sets.

\begin{definition}[Feasible set]
Let $S: \Param{n} \rightarrow \RR^d$ be a polygonal curve and  let $B \subset \XX^d_2$ be a candidate set of edges and let $\Delta \geq 0$ be a real value. For any point $t\in \Param{n}$, we define the \emph{feasible set} of $t$ as the set of elements $Q\in B$ that admit an $(i,j)$-partial traversal with $S$ that fully covers $Q$ and that covers $t$ on $S$, with the additional condition that $j-i \leq 3$. We denote the feasible set of $t$ with $F_{\Delta}(t)$. 
\end{definition}

Note that for any fixed $S$ and $\Delta$ the set of feasible sets
$ \{F_{\Delta} (t) \mid t \in \Param{n} \} $
is exactly the set system determined by the subroutine \textsc{IsFeasible} described in Algorithm~\ref{alg:isfeasible}.

We claim that any feasible set can be split into sets corresponding to the edges of the simplification, where each set consists of a constant union of rectangles in the candidate space restricted to the respective edge. Figure~\ref{fig:freespace:rect}  illustrates one of those rectangles. The following lemma provides the formal statement.

\begin{restatable}{lemma}{rectangles}
\label{lem:rect}
Let $P$ be a polygonal curve in $\RR^d$ and let $e\in\XX^d_2$  be an edge. Let $(t_1,\ldots,t_n)$ be the vertex-parameters of $P$.
For any  integer values $1 \leq i < j \leq \min(i+3, n)$ and real value $t \in [0,1]$ with  $t_i \leq t \leq t_j$,
either there exist $\alpha_1,\alpha_2,\beta_1,\beta_2$ such that 
\[ R:= \{(\alpha,\beta) \in [0,1]^2 \mid t\in\Psi^{i,j}_\Delta\left( P,e[\alpha,\beta] \right)\} = [\alpha_1,\alpha_2] \times [\beta_1,\beta_2], \]
or the set $R$ is empty.
Moreover, each $\alpha_v$ (respectively $\beta_v$) for $v\in\{1,2\}$ can be written as $\alpha_v=c_v+\sqrt{d_v}$ (respectively $\beta_v=e_v+\sqrt{f_v}$), where the parameters $c_v$ and $d_v$ (respectively $e_v$ and $f_v$) can be computed by an algorithm that takes $(i,j),t$ and $e$ as input and needs $O(d)$ simple operations.
\end{restatable}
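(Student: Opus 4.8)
The plan is to analyze the set $R$ directly in the free-space diagram of $P[t_i,t_j]$ and the edge $e$, viewing the parameters $\alpha,\beta$ of the subedge $e[\alpha,\beta]$ as determining which horizontal strip of a fixed free-space diagram we are allowed to use. First I would fix $(i,j)$ and $t$ as in the statement, and recall that $t\in\Psi^{i,j}_\Delta(P,e[\alpha,\beta])$ means precisely that there exist parameters $a\in[t_i,\min(t,t_{i+1})]$ and $b\in[\max(t,t_{j-1}),t_j]$ and a bi-monotone path from $(a,\alpha)$ to $(b,\beta)$ inside the $\Delta$-free space $\dfree{}{}(P,e)$ — here I use that $e[\alpha,\beta]$ is a subedge, so the free space of $P[t_i,t_j]$ and $e[\alpha,\beta]$ is exactly the restriction of $\dfree{}{}(P[t_i,t_j],e)$ to the horizontal slab $[\,\cdot\,]\times[\alpha,\beta]$ (after rescaling), and bi-monotonicity of the traversal is unaffected by whether we read $e$ forwards or on the subinterval.

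The key structural step is to show the feasibility region is a product of two intervals. For the $\alpha$-coordinate: since $e$ is a single edge, $\dfree{}{}(P[t_i,t_j],e)$ is a single row of at most three cells, and within each cell it is the intersection of an ellipse (in fact, a slab/strip between two parabola-like curves, but convex) with the cell. Reachability of the required top-right region from a start point $(a,\alpha)$ on the line $\{y=\alpha\}$ is governed by the usual Alt--Godau reachability propagation; the crucial monotonicity observation is that \emph{decreasing} $\alpha$ (moving the entry line downward) can only enlarge the set of reachable points in later cells, as long as $\alpha$ stays below the relevant ceiling values $b_1,\dots,b_{n-1}$ of the vertical free-space intervals. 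Symmetrically, the constraint that we can \emph{reach} a valid exit point $(b,\beta)$ with $\beta$ large is monotone in $\beta$ (increasing $\beta$ helps, up to the floor values $a_1,\dots$). This is exactly the monotonicity already exploited in Lemma~\ref{lem:pathcreator} and Corollary~\ref{cor:pathcreator}; I would reuse that machinery. Combining: the set of feasible $\alpha$ is an interval $[\alpha_1,\alpha_2]$ (lower end from the free-space lower boundary in the first cell, upper end from the min of the vertical-interval ceilings and the local ellipse), the set of feasible $\beta$ is an interval $[\beta_1,\beta_2]$, and because the "entry" condition and the "exit" condition decouple once both hold (the middle of the path can be filled in, exactly as in the case analysis of Lemma~\ref{lem:pathcreator}), $R$ is the product $[\alpha_1,\alpha_2]\times[\beta_1,\beta_2]$, or empty if any of these intervals is empty or the path cannot bridge entry to exit.

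For the explicit form of the endpoints: each $\alpha_v$ and $\beta_v$ is either the $y$-coordinate of an endpoint of a vertical free-space interval $[a_m,b_m]$ on one of the at most three cell boundaries, or the $y$-coordinate of the leftmost/rightmost point of the free space in a cell, or (for the extreme-but-bridgeable cases) the $y$-coordinate where a monotone path first enters the last cell. In every case this is obtained by intersecting a line with a Euclidean ball, i.e. solving a single quadratic, so the value has the closed form $c+\sqrt{d}$ with $c,d$ computed from the coordinates of the at most four vertices of $P$ involved ($p_i,\dots,p_j$) and the two endpoints of $e$, using $O(d)$ simple operations for the inner products and the discriminant. I would state this by pointing to the cell computations in the proof of Lemma~\ref{lem:findpointstructured} and Lemma~\ref{lem:tdist}, where exactly these constant-complexity intersections are performed.

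The main obstacle I anticipate is the case analysis needed to confirm that $R$ really is a \emph{full} rectangle rather than some staircase-shaped convex region: one must check that feasibility of $\alpha$ (entry) and feasibility of $\beta$ (exit) are genuinely independent, including the degenerate sub-cases where the first extremal point lies above the second (the situation of Figure~\ref{fig:canonicalpoints2}, where one must allow traversing $e$ in reverse). Handling the reversal cleanly — so that the rectangle description still holds with the same endpoint formulas — is the part that requires care, but it is precisely the seven-case split already carried out in Lemma~\ref{lem:pathcreator}, so I would organize the argument to invoke that lemma rather than redo it.
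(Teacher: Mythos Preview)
Your overall strategy matches the paper's: work in the single-row free-space diagram of $P[t_i,t_j]$ and $e$, identify the interval endpoints $\alpha_1,\alpha_2,\beta_1,\beta_2$ explicitly from free-space data (the vertical-interval endpoints $l_v,u_v$, the interval $[l_t,u_t]$ on the line $x=t$, and the lowest/highest points of the first/last cells), and then verify both inclusions. The $O(d)$ operation count is also argued the same way, via line--ball intersections.

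Where you diverge is in outsourcing the ``decoupling'' step to Lemma~\ref{lem:pathcreator}, and this does not quite work. The \emph{statement} of that lemma only gives monotonicity of coverage as the endpoints move toward the extremal points $(s^*,t^*)$; from $(\alpha,\beta)\in R$ and $(\alpha',\beta')\in R$ you can reach $(s^*,\beta),(\alpha,t^*),(s^*,\beta')\in R$, but not $(\alpha,\beta')$ when $\beta,\beta'$ lie on opposite sides of $t^*$. So you get star-shapedness of $R$ toward $(s^*,t^*)$, not a product. More importantly, the extremal points of Lemma~\ref{lem:pathcreator} are defined for the whole subcurve $Y$ and know nothing about the fixed point $t$, whereas here the rectangle corners genuinely depend on $t$: the paper has $\alpha_2=\min(u_{i+1},\dots,u_{j-1},u_t)$ and $\beta_1=\max(l_{i+1},\dots,l_{j-1},l_t)$, and $\alpha_1,\beta_2$ carry a case split on whether the extreme point of the first/last cell lies to the left/right of $t$. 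So even the proof technique of Lemma~\ref{lem:pathcreator} would need non-trivial adaptation, not just citation.

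The paper bypasses all of this with a direct construction that is shorter than the seven-case split you propose to reuse: for any $(\alpha,\beta)$ in the claimed rectangle (say $\alpha\le\beta$), it writes down an explicit piecewise-linear bi-monotone path with bend heights $\gamma_1=\min(u_{i+1},u_{i+2},\beta)$, $\gamma_2=\min(u_{i+2},\beta)$, in a three-line case split on $j-i\in\{1,2,3\}$. The reverse inclusion is then immediate from the definitions of the corners. Replacing your appeal to Lemma~\ref{lem:pathcreator} with this direct path construction closes the gap.
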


\begin{figure}[t]
    \centering
    \includegraphics[width=1\textwidth]{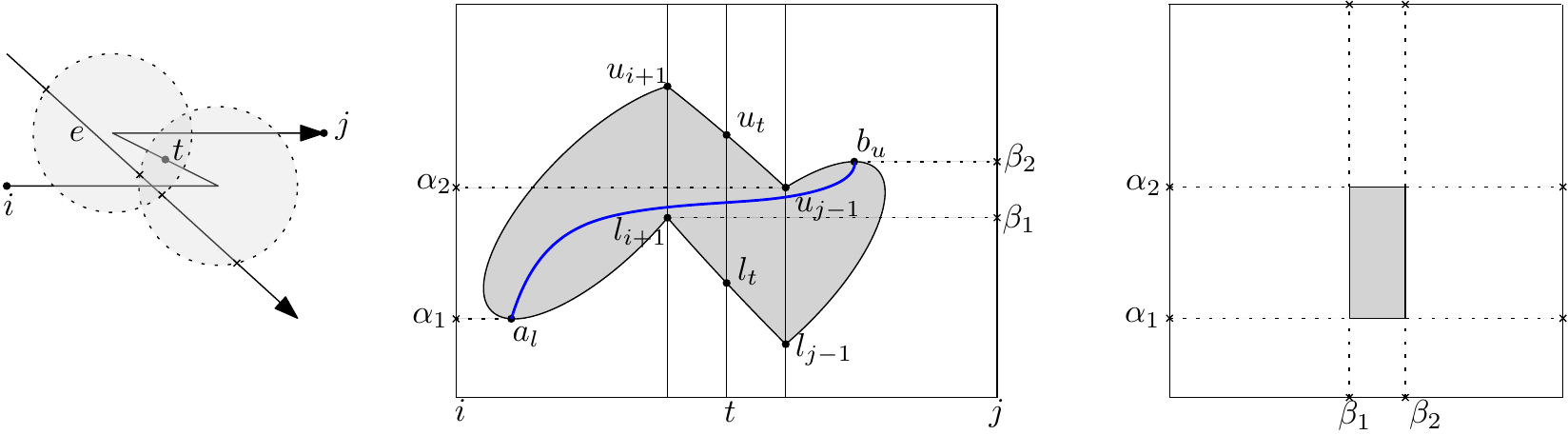}
    \caption{Example for the construction of the rectangle $R=[\alpha_1,\alpha_2]\times[\beta_1,\beta_2]$ for fixed $P,i,j,t,\Delta$ and $e$. The left image shows the curves $P[t_i,t_j]$ and $e$ with two circles of radius $\Delta$ around $P(t_{i+1})$ and $P(t_{j-1})$. The middle image shows the corresponding $\Delta$-free space diagram with a $(i,j)$-partial traversal from $a_l$ to $b_u$ and the right image shows the rectangle $R$ in the parameter space  $[0,1]^2$ of $e$.}
    \label{fig:freespace:rect}
\end{figure}

To prove a VC-dimension bound of $O(d)$, we combine the above lemma with the following general theorem which can be attributed to Goldberg and Jerrum~\cite{GJ95}. We use the variant by Anthony and Bartlett~\cite{AB99}, which is stated as follows.

\begin{restatable}{theorem}{anthonybartlett}[Theorem 8.4 \cite{AB99}]\label{thm:vcsimpl}
Suppose $h$ is a function from $\RR^a\times \RR^b$ to $\{0,1\}$ and let $H=\{x\rightarrow h(\alpha,x)\;|\;\alpha\in\RR^a\}$
be the class determined by $h$. Suppose that $h$ can be computed by an algorithm that takes as input the pair $(\alpha,x)\in \RR^a\times\RR^b$ and returns $h(\alpha,x)$ after no more than $t$ simple operations. Then, the VC-dimension of $H$ is $\leq 4a(t+2)$.
\end{restatable}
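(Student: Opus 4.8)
The plan is to bound the VC-dimension of $H$ by bounding its growth function and invoking a Milnor--Thom/Warren-type estimate on the number of sign patterns of low-degree polynomials, following the strategy of Goldberg and Jerrum~\cite{GJ95}. Suppose a set $\{x_1,\dots,x_m\}\subseteq\RR^b$ is shattered by $H$. Then the map $\Phi:\RR^a\rightarrow\{0,1\}^m$ given by $\alpha\mapsto(h(\alpha,x_1),\dots,h(\alpha,x_m))$ must be surjective, so it suffices to show that the image of $\Phi$ has fewer than $2^m$ elements whenever $m>4a(t+2)$; this forces the VC-dimension of $H$ to be at most $4a(t+2)$.

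To bound the number of ``dichotomy'' vectors in the image of $\Phi$, I would run the algorithm symbolically on each pair $(\alpha,x_i)$, treating $\alpha$ as a vector of indeterminates and $x_i$ as fixed. The algorithm performs at most $t$ simple operations; each arithmetic operation produces a new quantity, and each comparison tests the sign of a quantity already computed. Introducing one fresh real variable $y_{i,1},\dots,y_{i,t}$ for the result of each operation in the $i$-th computation, every arithmetic step becomes a polynomial equation of degree at most $2$ in $(\alpha,y_{i,\cdot})$ (for instance $y_{i,k}=y_{i,j}+y_{i,l}$, or $y_{i,k}y_{i,l}=y_{i,j}$ in the case of a division), and every comparison becomes a sign condition on a degree-one form in these variables. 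Crucially, once the signs of the (at most $t$) comparison quantities of the $i$-th computation are fixed, the output bit $h(\alpha,x_i)$ is a fixed Boolean function of those signs.

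Consequently, over all $m$ computations there are at most $mt$ polynomial sign conditions, each of degree at most $2$, whose common sign pattern determines the entire dichotomy vector: on each cell of the arrangement they induce, all $m$ executions follow the same branching, so all $m$ output bits are constant. It remains to count the realizable sign patterns. Here one applies the standard bound on the number of consistent sign conditions of a family of bounded-degree polynomials (Warren's theorem / the Milnor--Thom bound; see~\cite{AB99}, Ch.\ 8), while carefully accounting for the fact that the $mt$ auxiliary variables are functionally determined by the equality constraints, so that the effective parameter dimension stays $a$ rather than $a+mt$. This careful count yields that the number of distinct sign patterns, and hence the number of dichotomies, is at most $2^{4a(t+2)}$, so that $m\le 4a(t+2)$, as claimed.

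The main obstacle is precisely this last counting step. A naive symbolic evaluation would make each intermediate quantity a rational function of $\alpha$ of degree up to $2^{t}$, and plugging that degree directly into a sign-pattern bound only gives a VC-dimension bound of order $a(t+\log(at))$, with a spurious logarithmic factor. Removing this factor requires the linearization described above combined with a sign-counting argument that does not pay for the extra $mt$ coordinates; this is the technical heart of the proof and is carried out in detail in~\cite{AB99}.
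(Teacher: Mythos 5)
First, note that the paper itself does not prove this statement at all: it is quoted verbatim as Theorem 8.4 of Anthony and Bartlett~\cite{AB99} (going back to Goldberg and Jerrum~\cite{GJ95}) and used as a black box. So your attempt cannot be measured against an internal proof; it has to stand on its own, and as written it does not. The entire quantitative content of the theorem --- that the number of realizable dichotomy vectors is at most $2^{4a(t+2)}$ --- is exactly the step you do not carry out. You assert that a Warren/Milnor--Thom style count can be applied to your linearized system ``without paying for the extra $mt$ coordinates'' and that this ``careful count'' is done in~\cite{AB99}, but that is not accurate: the proof in that reference takes a different route. It does not introduce auxiliary variables at all; it lets every tested quantity be a rational function of $\alpha$ of degree at most $2^t$, notes that over the whole computation tree at most $2^t$ distinct sign tests can occur, and plugs $k\le 2^t$ polynomials of degree $l\le 2^t$ into the bound for Boolean combinations of polynomial sign conditions (their Theorem~8.3, of the form $2a\log_2(12kl)$, which rests on Warren's theorem). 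The exponential degree is absorbed by the logarithm, giving $2a\log_2(12\cdot 4^t)\le 4a(t+2)$ with no spurious $\log(at)$ term --- so your motivating claim that the ``naive'' symbolic evaluation only yields $O(a(t+\log(at)))$ is mistaken, and the reference does not contain the argument you are deferring to.

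Your linearization idea is not unsalvageable --- one can bound sign patterns of the quadratic system restricted to the $a$-dimensional set cut out by the equality constraints using bounds on sign conditions on a variety (Basu--Pollack--Roy type results) --- but that machinery is precisely the missing heart of your proof, it is nontrivial, and it would not obviously reproduce the constant $4a(t+2)$. A second, smaller gap: you run the algorithm ``symbolically'' as if the sequence of $t$ operations executed on $(\alpha,x_i)$ were fixed, but in a model with comparisons the identity of the $k$-th operation depends on the outcomes of earlier comparisons, hence on $\alpha$. One must either work over the computation tree (up to $2^t$ branches per point, which is what the cited proof does) or argue explicitly that the adaptive execution is faithfully encoded by your auxiliary-variable system; as stated, ``at most $mt$ polynomial sign conditions whose common sign pattern determines the entire dichotomy vector'' skips this issue.
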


Our bound on the VC-dimension can be found in Lemma~\ref{lem:vcdimfeas} with a more detailed analysis of the constant factors in Lemma~\ref{lem:vcdim:const}. Using these bounds on the VC-dimension, we then obtain the following main result. The proof can be found in Section~\ref{sec:analysis:general}.

\begin{restatable}{theorem}{mainalg}\label{thm:runtime:main:canon}
Given a polygonal curve $P\in \XX^d_n$ and  $\Delta\in\RR_+$, there exists an algorithm that computes an 
$(\alpha, \beta)$-approximate solution to the $\Delta$-coverage problem on $P$ with $\alpha=11$ and $\beta = O(\log k^*)$, 
where $k^{*}$ is the minimum size of a solution to the $\Delta$-coverage problem on $P$.
The algorithm needs in expectation  $\widetilde{O}(({k^*})^2 n +k^* n^3)$ time and $ \widetilde{O}(({k^*}) n +n^3)$ space.
\end{restatable}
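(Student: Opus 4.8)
The plan is to instantiate the Br\"onniman--Goodrich multiplicative-weights framework on the finite candidate set $B$ produced by Theorem~\ref{thm:alg_candidates}, using the feasible sets $F_\Delta(t)$ as the ``dual'' objects whose small VC-dimension drives the sample size. First I would set the stage: by Theorem~\ref{thm:runt:simpl} we compute a $\Delta$-good simplification $S$ of $P$ in $\widetilde O(n)$ time, and by Theorem~\ref{thm:alg_candidates} we compute in $O(n^3)$ time and space a candidate set $B\subset\CandidatesArg{E(S)}$ with $|B|=O(n^3)$ that contains a structured $8\Delta$-covering $C_B$ of $S$ of size at most $12k^*$, and such that every structured $8\Delta$-covering of $S$ is also an $11\Delta$-covering of $P$ (via Observation~\ref{obs:unstructure_cover} and Lemma~\ref{lem:covertransfer}(i) with $\Delta'=3\Delta$). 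Hence it suffices to find a small structured $8\Delta$-covering of $S$ inside $B$; this is exactly what \textsc{kApproxCover} searches for, with $\Delta'=\alpha\Delta=11\Delta$ passed down but the structured test run at parameter $8\Delta$ internally (I would double-check the exact constant threading here against Algorithm~\ref{alg:main}).

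Next I would argue correctness of the exponential search. When the guess $k$ satisfies $k\ge 12k^*$, the standard Br\"onniman--Goodrich analysis applies: the set system $(B,\{F_\Delta(t)\})$ has VC-dimension $\gamma=O(d)$ by Lemma~\ref{lem:vcdimfeas} (using Lemma~\ref{lem:rect} together with Theorem~\ref{thm:vcsimpl}), so a sample of $k'=O(k\gamma\log(k\gamma))$ elements drawn from $\Dist_i$ is, with constant probability, an $\tfrac{1}{2r}$-net with $r=2k$; whenever the weight test $\Pr_{\Dist_i}[F]\le \tfrac1r$ passes, the uncovered point $t$ lies in a light feasible set, the net property is contradicted, so in fact the sampled $C$ must be a structured cover --- except with the failure probability of the net, in which case we resample. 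The key ``light set'' invariant: because $B$ contains a structured cover $C_B$ of size $\le 12k^* \le k$, every feasible set $F_\Delta(t)$ contains at least one element of $C_B$, so doubling weights of light $F$'s multiplies $w(B)$ by at most $(1+\tfrac1r)$ per update while multiplying $w(C_B)$ by at least $2^{i/k}$ after $i$ updates (some element of $C_B$ is hit at least $i/|C_B|\ge i/k$ times); this forces the number of proper updates to be $O(k\log(|B|/k))$, matching $i_{\max}$. If no solution is found within $i_{\max}$ iterations we declare failure, double $k$, and retry; the first successful round has $k=O(k^*)$, and the returned set, being a structured $8\Delta$-covering of $S$ of size $k'=O(k^*\log k^*)$, is an $11\Delta$-covering of $P$ by the above, giving $\alpha=11$ and $\beta=O(\log k^*)$.

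For the running time I would charge, per call to \textsc{kApproxCover} with guess $k$: each iteration samples $k'=\widetilde O(k)$ elements; calls \textsc{PointNotCovered}, which by Lemma~\ref{lem:findpointstructured} costs $\widetilde O(nk')=\widetilde O(nk)$; and, if a point $t$ is uncovered, builds $F$ by iterating over all $|B|=O(n^3)$ candidates, each tested in $O(1)$ time via \textsc{IsFeasible} (Lemma~\ref{lem:tdist}, since $|j-i|\le 4$), for $O(n^3)$ per iteration; the weight update costs $O(|B|)=O(n^3)$. The number of iterations is the number of proper updates, $O(k\log(|B|/k))=\widetilde O(k)$, amplified by an $O(1)$ expected factor for net-failure resampling, so one call costs $\widetilde O(k)\cdot(\widetilde O(nk)+O(n^3))=\widetilde O(k^2 n + k n^3)$ in expectation. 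Summing over the $O(\log k^*)$ geometrically increasing guesses $k=1,2,4,\dots,O(k^*)$ is dominated by the last term, giving expected time $\widetilde O((k^*)^2 n + k^* n^3)$. The space is dominated by storing $B$ and the coverage computation, $\widetilde O(k^* n + n^3)$; preprocessing (simplification, candidate generation) fits within these bounds by Theorems~\ref{thm:runt:simpl} and~\ref{thm:alg_candidates}.

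The main obstacle I expect is making the net/weight-update bookkeeping rigorous with the \emph{approximate} structured coverage at the stated constants --- in particular, verifying that the feasible set the algorithm actually computes (at parameter $8\Delta$, restricted to subcurves of $S$ of at most three edges, via the window $J$ in \textsc{IsFeasible}) coincides with $\{F_\Delta(t)\}$ for the $\Delta$ used in the VC-dimension lemma, and that ``$C_B$ hits every feasible set'' survives the restriction to three-edge subcurves. Closely related is controlling the net failure probability so the expected iteration count only inflates by $O(1)$; this is routine given a constant-probability $\eps$-net guarantee but should be stated carefully. Everything else is assembly of already-established lemmas.
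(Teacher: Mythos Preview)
Your proposal is correct and follows essentially the same approach as the paper: the paper simply factors the argument through a general lemma (Lemma~\ref{lem:runtime:main:basic}) that carries out the Br\"onnimann--Goodrich analysis for an arbitrary candidate set $B$ containing a small structured $\alpha\Delta$-covering, and then instantiates it with the $O(n^3)$-size candidate set of Theorem~\ref{thm:alg_candidates} and the simplification of Theorem~\ref{thm:runt:simpl}. Your self-flagged concern about the parameter threading is harmless (the VC-dimension bound in Lemma~\ref{lem:vcdimfeas} holds for any fixed radius, so $8\Delta$ works), and your slightly inverted phrasing of the net argument (``the net property is contradicted, so $C$ must be a cover'') should be reworded to the standard direction: if $C$ is a $\tfrac{1}{r}$-net and $t$ is uncovered, then $F_\Delta(t)$ is necessarily light, so the weight test passes and a proper update occurs.
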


The proof of  Theorem~\ref{thm:runtime:main:canon} uses the well-known $\frac{1}{r}$-net theorem by Haussler and Welzl~\cite{haussler1987eps} (see Theorem~\ref{thm:epsnet}, below), which provides a bound on the probability that our sample chosen in line 5 is a $\frac{1}{r}$-net of the weighted set system, based on the VC-dimension of this set system. We use this to bound the expected number of iterations of the main loop in \textsc{kApproxCover} within our analysis of the multiplicative weights update algorithm.

Theorem~\ref{thm:epsnet} has been extended and improved, in particular by Li, Long, and Srinivasan~\cite{LI2001516}, see also the survey by Mustafa and Varadarajan~\cite{mustafa2017epsilonapproximations,toth2017handbook}. However, it seems that Theorem~\ref{thm:epsnet} gives the best bound for our purposes.

\begin{theorem}[\cite{haussler1987eps}]\label{thm:epsnet}
For any $(X,\RSpace)$ of finite VC-dimension $\delta$, finite $B \subseteq X$ and $0<\frac{1}{r}$, $\alpha<1$, if $N$ is a subset of $B$ obtained by at least
\[   \max\left(\frac{4}{3} \log\left(\frac{2}{\alpha}\right),8\delta r\log\left(8\delta r\right)\right) \]
random independent draws, then $N$ is an $\frac{1}{r}$-net of $B$ for $\RSpace$ with probability at least $1-\alpha$.
\end{theorem}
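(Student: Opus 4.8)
The plan is to reproduce the classical \emph{double-sampling} (symmetrization) argument of Haussler and Welzl. Write $\eps=1/r$, let $w$ be the weight function underlying the notion of $\eps$-net, and let $\mu$ be the probability measure on $B$ with $\mu(\{x\})=w(x)/w(B)$; the ``random independent draws'' are $m$ i.i.d.\ samples from $\mu$, where $m$ denotes the stated lower bound, and for $R\in\RSpace$ we abbreviate $\mu(R):=w(R\cap B)/w(B)$, so that $R$ has weight at least $\tfrac1r w(B)$ exactly when $\mu(R)\ge\eps$. Since $B$ is finite, the trace $\RSpace|_B=\{R\cap B\mid R\in\RSpace\}$ is finite, which removes measurability concerns and lets us pick canonical witnesses below. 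Let $N=(X_1,\dots,X_m)$ be the sample and let $E_1$ be the bad event that some $R\in\RSpace$ with $\mu(R)\ge\eps$ satisfies $R\cap N=\emptyset$; the goal is $\Pr{E_1}\le\alpha$.

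First I would introduce a \emph{ghost sample} $N'=(X'_1,\dots,X'_m)$ of $m$ further i.i.d.\ draws from $\mu$, independent of $N$, and define $E_2$ as the event that some $R\in\RSpace$ with $\mu(R)\ge\eps$ satisfies $R\cap N=\emptyset$ and $|R\cap N'|\ge\eps m/2$. The first key step is the reduction $\Pr{E_2}\ge\tfrac12\Pr{E_1}$. To see it, condition on $N$; on $E_1$ fix a witness $R_0=R_0(N)$ with $\mu(R_0)\ge\eps$ and $R_0\cap N=\emptyset$ (a canonical choice exists as $\RSpace|_B$ is finite). Since $R_0$ is a function of $N$ alone and $N'$ is independent, $|R_0\cap N'|$ is binomial with mean $m\,\mu(R_0)\ge\eps m$, so a Chernoff (or Chebyshev) tail estimate gives $\Pr{\,|R_0\cap N'|\ge\eps m/2 \mid N\,}\ge\tfrac12$ once $\eps m$ exceeds a small absolute constant; this mild requirement is exactly what the first argument of the maximum guarantees. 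Averaging over $N$ gives the reduction, so it suffices to prove $\Pr{E_2}\le\alpha/2$.

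The second key step estimates $\Pr{E_2}$ by symmetrization. Condition on the unordered multiset $M$ of the $2m$ sampled points; by exchangeability of i.i.d.\ sequences, given $M$ the partition into the first half $N$ and the second half $N'$ is a uniformly random balanced split of $M$ into two blocks of $m$ points. For a fixed $R$, the event witnessed by $R$ in $E_2$ forces the $\ell:=|R\cap M|\ge\eps m/2$ sampled points lying in $R$ to fall entirely in the $N'$-block, and a uniformly random balanced split does this with probability $\binom{m}{\ell}/\binom{2m}{\ell}=\prod_{i=0}^{\ell-1}\tfrac{m-i}{2m-i}\le 2^{-\ell}\le 2^{-\eps m/2}$. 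The number of distinct subsets of $M$ of the form $R\cap M$ is at most the shatter function $\Pi_\RSpace(2m)\le\sum_{i=0}^{\delta}\binom{2m}{i}\le(2m)^\delta$ (for $2m\ge\delta$) by the Sauer--Shelah lemma, where $\delta$ is the VC-dimension. A union bound over these traces gives $\Pr{E_2\mid M}\le(2m)^\delta\,2^{-\eps m/2}$, and as the bound is uniform in $M$ we conclude $\Pr{E_1}\le 2\,(2m)^\delta\,2^{-\eps m/2}$.

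It remains to verify that the stated value of $m$ forces this quantity below $\alpha$, i.e.\ that $\tfrac{\eps m}{2}\ge\delta\log_2(2m)+\log_2(2/\alpha)$. This is a routine monotonicity calculation: substituting $m=\Theta(\delta r\log(\delta r))$ yields $\eps m=\Theta(\delta\log(\delta r))$, which (with the right constants) also upper-bounds $\delta\log_2(2m)$, so the term $8\delta r\log(8\delta r)$ dominates the shatter-function contribution, while the term $\tfrac{4}{3}\log(2/\alpha)$ — together with the small lower bound on $\eps m$ used in the ghost-sample step — absorbs the $\log(1/\alpha)$ contribution; obtaining exactly the two terms written in the statement is the careful constant optimization carried out in~\cite{haussler1987eps}. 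The main obstacle is precisely this constant bookkeeping, together with the Chernoff estimate of the second paragraph; conceptually the argument is entirely the standard two-sample trick and uses no property of $(X,\RSpace)$ beyond finiteness of its VC-dimension, via Sauer--Shelah.
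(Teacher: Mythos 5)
The paper does not prove this statement itself---it imports it verbatim from Haussler and Welzl \cite{haussler1987eps}, so there is no internal proof to compare against. Your proposal is exactly the classical double-sampling/symmetrization argument of that cited source (ghost sample, reduction via a Chebyshev/Chernoff bound, exchangeable balanced split giving the $2^{-\eps m/2}$ factor, Sauer--Shelah union bound over traces), hence essentially the same approach as the original proof; the only part left open is the constant bookkeeping needed to recover the exact thresholds in the statement, which you explicitly defer to \cite{haussler1987eps}.
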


\subsection{On the structure of feasible sets}\label{sec:struc:feasible}

To analyse the VC-dimension of the feasible sets, we first show Lemma~\ref{lem:rect}.

\begin{proof}[Proof of Lemma~\ref{lem:rect}]
    Let $R=\{(\alpha,\beta) \in [0,1]^2 \mid t\in\Psi^{i,j}_{\Delta}\left( P,e[\alpha,\beta] \right) \}$.
     We first show that either $R$ corresponds to a rectangle $[\alpha_1,\alpha_2]\times[\beta_1,\beta_2]$ in the parameter space $[0,1]^2$ of $e$ or $R=\emptyset$. In Figure~\ref{fig:freespace:rect}, we give an example for the construction of the rectangle $R$. Let $i<v<j$. The intersection of the $\Delta$-free space $\dfree[\Delta]{}{}(P,e)$ with the edge $t_v\times[0,1]$ is either a free space interval of the form $t_v\times[l_v,u_v]$ or is empty. Also the intersection with $t\times[0,1]$ has the form $t\times[l_t,u_t]$ or is empty. If either of the intersections is empty for $t$ or some $i< v< j$ then  $R$ is empty, since no point on $e$ is within distance $\Delta$ of $P(t)$ respectively $P(t_v)$. Otherwise for all $i< v < j$ the parameters $l_v$,$u_v$, as well as $l_t$ and $u_t$ are well defined. In the case that $j-i=3$  and $l_{i+1}>u_{i+2}$ we have $R$ is empty since there is no bi-monotone path in the free space that first passes $t_{i+1}\times[l_{i+1},u_{i+1}]$ and then $t_{i+2}\times[l_{i+2},u_{i+2}]$. In the following we therefore assume $l_{i+1}\leq u_{i+2}$ for $j-i=3$. Let further $a=(a_i,a_l)$ be the lowest point in the cell of the $\Delta$-free space corresponding to the edge $P[t_i,t_{i+1}]$ and $b=(b_j,b_u)$ be the highest point in the cell of the $\Delta$-free space corresponding to the edge $P[t_{j-1},t_j]$.  We define the following parameters
    \begin{align*}
        \alpha_1 &=\begin{cases}l_t &\text{for}\  a_i\geq t\\a_l & \text{else} \end{cases}\\
        \alpha_2 &=\min(u_{i+1},\dots,u_{j-1},u_t) &\\
        \beta_1 &=\max(l_{i+1},\dots,l_{j-1},l_t) &\\
        \beta_2 &=\begin{cases}u_t &\text{for}\ b_j\leq t\\b_u & \text{else} \end{cases}
    \end{align*}
    and show that $R$ corresponds to a rectangle $[\alpha_1,\alpha_2]\times[\beta_1,\beta_2]$ in the parameter space $[0,1]^2$ of $e$.
    To do so we first show that for each $\alpha\in[\alpha_1,\alpha_2]$ and $\beta\in[\beta_1,\beta_2]$ with $\alpha\leq \beta$ there is a $t_\alpha\in[t_i,t_{i+1}]$ and a $t_\beta\in[t_{j-1},t_j]$ such that \[d_F(P[t_\alpha,t_\beta],e)\leq \Delta.\]
    The case $\beta< \alpha$ is analogous. So let $\alpha\in[\alpha_1,\alpha_2]$, $\beta\in[\beta_1,\beta_2]$ with $\alpha\leq\beta$. By the definition of $\alpha_1,\alpha_2$ it directly follows that there is a $t_\alpha\in[t_i,\min(t_{i+1},t)]$ such that $(t_\alpha,\alpha)$ is in the free space $\dfree[\Delta]{}{}(P,e)$. By the definition of $\beta_1,\beta_2$ it also follows that there is a $t_\beta\in[\max(t_{j-1},t),t_j]$ such that $(t_\beta,\beta)$ is in the free space $\dfree[\Delta]{}{}(P,e)$. We split the analysis on how to construct a monotone increasing path in the free space from $(t_\alpha,\alpha)$ to $(t_\beta,\beta)$ in three cases depending on $j-i$.
    
    Case $j-i=1$: Since the freespace is convex in each cell and $\alpha\leq\beta$, the path  $(t_\alpha,\alpha)\oplus(t_\beta,\beta)$ is a bi-monotone path in $\dfree[\Delta]{}{}(P,e)$.

    Case $j-i=2$: Let $\gamma=\min(u_{i+1},\beta)$. Since $\beta\geq \beta_1\geq l_{i+1}$ and therefore $u_{i+1}\geq \gamma \geq l_{i+1}$, we have $(t_{i+1},\gamma)\in \dfree[\Delta]{}{}(P,e)$. By the convexity of each cell in the free space and the fact that $\alpha\leq \gamma\leq \beta$, we get that \[(t_\alpha,\alpha)\oplus(t_{i+1},\gamma)\oplus(t_\beta,\beta)\] is a monotone increasing path in $\dfree[\Delta]{}{}(P,e)$.
    
    Case $j-i=3$: Let $\gamma_1=\min(u_{i+1},u_{i+2},\beta)$ and $\gamma_2=\min(u_{i+2},\beta)$. By $l_{i+1}\leq \beta_{i+1}\leq \beta$ and $l_{i+1}\leq u_{i+2}$, we get $(t_{i+1},\gamma_1)\in\dfree[\Delta]{}{}(P,e)$. Further by $l_{i+2}\leq \beta_1\leq \beta$, we get $(t_{i+2},\gamma_2)\in\dfree[\Delta]{}{}(P,e)$. By the convexity of each cell in the free space diagram and the fact $\alpha\leq \gamma_1\leq\gamma_2\leq \beta$, we get similar to the last case that
    \[(t_\alpha,\alpha)\oplus(t_{i+1},\gamma_1)\oplus(t_{i+2},\gamma_2)\oplus(t_\beta,\beta)\] is a bi-monotone  path in $\dfree[\Delta]{}{}(P,e)$.
    
    It remains to show that no other $(\alpha,\beta)\in[0,1]^2$ with $\alpha\leq\beta$ and $e[\alpha,\beta]\in R$ exist. So we show that for $\alpha\notin[\alpha_1,\alpha_2]$ there is no $t_\alpha \in [t_i,t_{i+1}]$, $t_\beta \in [t_{j-1},t_j]$, $\beta\geq \alpha$ such that \[d_F(P[t_\alpha,t_\beta],e[\alpha,\beta])\leq \Delta.\]
    
    Case $\alpha<\alpha_1$: By the definition of $\alpha_1$ there exists no  $t_\alpha\in[t_i,t_{i+1}]$ such that $\|P(t_\alpha)-e(\alpha)\|\leq \Delta$. 
    
    Case $\alpha>\alpha_2$: Assume there exists a $i<v<j$ such that $\alpha_2=u_v$. Then by the definition of $u_v$ there exists no $\gamma\in e[\alpha,1]$ such that $\|P(t_v)-e(\gamma)\|\leq \Delta'$. Otherwise we have $\alpha_2=u_t$. Then by the definition of $u_t$, there exists no $\gamma\in e[\alpha,1]$ such that $\|P(t_v)-e(\gamma)\|\leq \Delta$. The combination of the two cases directly implies the claim.
    
    Now we analyse the running time to compute the parameters $c_1,c_2,d_1,d_2,e_1,e_2,f_1,f_2$ that define $\alpha_1,\alpha_2,\beta_1,\beta_2$. Each of the parameters $\alpha_1,\alpha_2,\beta_1,\beta_2$ is equal to one parameter of the set  $K=\{u_{i+1},\dots,u_{j-1},l_{i+1},\dots,l_{j-1},a_l,b_u,l_t,u_t\}$ with $|K|\leq 8$. Each of the intervals $[l_v,u_v]$ as well as $[l_t,u_t]$ correspond to the intersections of a line with a ball in $\RR^d$. The points $a=(a_i,a_l)$ and $b=(b_i,b_j)$ are defined by the intersections of a line with the union of a cylinder and two balls in $\RR^d$. Each extreme point $\kappa\in K$ of such an intersection can be written as $\kappa=\kappa_1+\sqrt{\kappa_2}$ with parameters $\kappa_1$ and $\kappa_2$ that can be computed with $O(d)$ simple operations (see Lemma 17 and 18 in \cite{driemel2019vc}). For each parameter $\gamma\in \{\alpha_1,\alpha_2,\beta_1,\beta_2\}$, we can find $\kappa^*\in K$ with $\kappa^*=\gamma$ by comparisons of $O(1)$ elements of $K$ with each other. For each two element of $\kappa,\kappa'\in K$ this can be done with $O(1)$ simple operations, given the parameters $\kappa_1,\kappa_2,\kappa_1',\kappa_2'$ with $\kappa=\kappa_1+\sqrt{\kappa_2}$ and $\kappa'=\kappa_1'+\sqrt{\kappa_2'}$. So in total we need $O(d)$ simple operations to compute the parameters $c_1,c_2,d_1,d_2,e_1,e_2,f_1,f_2$ that define $\alpha_1,\alpha_2,\beta_1,\beta_2$.
\end{proof}

\subsection{Analysis of the VC-dimension}\label{sec:vcdim}
We analyse the VC-dimension of the set system of feasible sets $ \{F_{\Delta} (t) \mid t \in \Param{n} \} $.
This VC-dimension can be used to bound the probability that in any iteration of the algorithm $F$ is larger than a $\frac{1}{r}$-fraction of the total weight of $B$ and the update step would get skipped. The following general theorem can be attributed to Goldberg and Jerrum~\cite{GJ95}. We use the variant by Anthony and Bartlett stated in Theorem~\ref{thm:vcsimpl}.

\begin{lemma}\label{lem:vcdimfeas}\label{cor:vccand}
Let $S:\Param{n} \rightarrow \RR^d$ be a polygonal curve and let $\Delta \in \RR_{+}$.
Consider the set system $ \{F_{\Delta} (t) \mid t \in \Param{n} \} $ with ground set $\XX^d_2$.
The VC-dimension of this set system is in $O(d)$.
\end{lemma}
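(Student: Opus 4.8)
The plan is to reduce the VC-dimension bound for the set system $\{F_\Delta(t)\mid t\in\Param{n}\}$ to an application of Theorem~\ref{thm:vcsimpl} (Goldberg--Jerrum / Anthony--Bartlett), using the rectangle characterization from Lemma~\ref{lem:rect} as the workhorse. First I would fix the point of view: the ground set is $\XX^d_2$, i.e.\ an edge $Q$ of the candidate space is a pair of points in $\RR^d$, so it is described by $2d$ real coordinates; equivalently, since every candidate lives on some edge $e$ of $S$, a candidate restricted to $e$ is described by the pair $(\alpha,\beta)\in[0,1]^2$ of endpoint parameters on $e$. The set $F_\Delta(t)$ is, by definition, the set of candidates $Q\in B$ that admit a $(i,j)$-partial traversal with $S$ (for some $i\le j\le i+3$) that fully covers $Q$ and covers $t$ on $S$. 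The key point from Lemma~\ref{lem:rect} is that for each fixed triple $(i,j,t)$ and each fixed edge $e$ of $S$, the set of parameters $(\alpha,\beta)$ for which $t\in\Psi^{i,j}_\Delta(S,e[\alpha,\beta])$ is an axis-parallel rectangle $[\alpha_1,\alpha_2]\times[\beta_1,\beta_2]$ (or empty), and moreover the four defining numbers are of the form $c+\sqrt d$ where $c,d$ are computable from $(i,j),t,e$ with $O(d)$ simple operations.

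The main step is to set up the membership test for $F_\Delta(t)$ as a function $h(\,\cdot\,,\,\cdot\,)$ of the required form. Here $t\in\Param{n}$ plays the role of the "parameter" vector $\alpha$ in Theorem~\ref{thm:vcsimpl} (it is a pair $(t',i')$, so a constant number of reals), and a candidate $Q$ — given by its $\le 2d$ real coordinates, from which one reads off the containing edge index and the endpoint parameters — plays the role of the "input" $x$. I would describe the algorithm computing $h(t,Q)$ as follows: from $t=(t',i')$ determine the relevant index pairs $(i,j)$ with $i\ge i'-3$, $j\le i'+4$, $1\le j-i\le 3$ — a constant number of them, independent of $n$; from $Q$ determine its edge $e$ of $S$ and its endpoint parameters $(\alpha,\beta)$; for each relevant $(i,j)$ compute, via Lemma~\ref{lem:rect}, the parameters $c_v+\sqrt{d_v}$, $e_v+\sqrt{f_v}$ defining the rectangle $R_{i,j}$ (each in $O(d)$ simple operations, including one square root each), and test whether $(\alpha,\beta)\in R_{i,j}$ by a constant number of comparisons of numbers of the form $c+\sqrt d$ (each such comparison is $O(1)$ simple operations plus one square root); finally output $1$ iff $(\alpha,\beta)$ lies in at least one of the rectangles. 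Since there are $O(1)$ relevant pairs $(i,j)$, the whole computation uses $O(d)$ simple operations and $O(1)$ square roots. Invoking the square-root-elimination remark of the paper (or simply noting square roots can be folded into the simple-operations count at constant overhead), this fits the hypothesis of Theorem~\ref{thm:vcsimpl} with $a=O(1)$ and $t=O(d)$, and $H=\{x\mapsto h(t,x)\mid t\}$ is exactly (a superset of) the family $\{F_\Delta(t)\}$ viewed as indicator functions on the ground set. The theorem then gives VC-dimension $\le 4a(t+2)=O(d)$.

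One subtlety I would address carefully is the matching between "$Q\in F_\Delta(t)$" and "$h(t,Q)=1$": Lemma~\ref{lem:rect} characterizes $\{(\alpha,\beta):t\in\Psi^{i,j}_\Delta(S,e[\alpha,\beta])\}$ for candidates that are \emph{subedges} of a fixed edge $e$, whereas a general element of $B$ is a subedge of some particular edge of $S$; so I would phrase $h$ so that it first reads off which edge of $S$ contains $Q$ (this is $O(1)$ from the coordinates/edge-index of $Q$, by the structured edge-parametrization) and then applies the rectangle test for that edge, returning $0$ if no valid $(i,j)$ with a nonempty rectangle exists. I also need the feasibility condition "the partial traversal fully covers $Q$": this is exactly what $\Psi^{i,j}_\Delta(S,e[\alpha,\beta])$ already encodes in the structured-coverage definition (the traversal of $S[t,t']$ with the whole curve $e[\alpha,\beta]$), so no extra machinery is needed. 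The remaining step is purely bookkeeping: checking that the number of reals describing $t$ and the number of operations in the computation are independent of $n$ (which they are, because the index window around $i'$ and the number of free-space intervals involved are all constant), so that $a$ and $t$ in Theorem~\ref{thm:vcsimpl} are indeed $O(1)$ and $O(d)$. I expect the only real obstacle to be the careful accounting for square roots and for the constant-size case analysis hidden in Lemma~\ref{lem:rect}; everything else is a direct application of the cited theorem. (The sharper constant $110d+412$ used in the algorithm would come from a more painstaking version of this count, which I would defer to the companion Lemma~\ref{lem:vcdim:const}.)
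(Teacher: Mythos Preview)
Your approach is essentially the paper's: reduce to Theorem~\ref{thm:vcsimpl} via the rectangle characterization of Lemma~\ref{lem:rect}, and count $O(d)$ simple operations for the membership test over the $O(1)$ relevant index pairs $(i,j)$.

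There is one wrinkle worth flagging. You parameterize a candidate $Q$ as a subedge $e[\alpha,\beta]$ of some edge $e$ of $S$, then apply Lemma~\ref{lem:rect} to that edge $e$ and test whether $(\alpha,\beta)$ lies in the resulting rectangle. This is the viewpoint of Section~\ref{sec:implicit:weights}, and it is fine for candidates in $B\subset\CandidatesArg{E(S)}$; but the present lemma is stated for ground set $\XX^d_2$, where a generic $Q$ need not lie on any edge of $S$. Moreover, ``reading off the containing edge index'' from the $2d$ coordinates of $Q$ is not a constant-time step in the Goldberg--Jerrum model, so it would contaminate the operation count you feed into Theorem~\ref{thm:vcsimpl}. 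The paper sidesteps this cleanly: in Lemma~\ref{lem:rect} the edge $e$ is allowed to be \emph{any} element of $\XX^d_2$, so one takes $e=Q$ itself; the rectangle $R$ then sits in the parameter space of $Q$, and since $Q=Q[0,1]$ one simply tests whether $(0,1)\in R$. No edge of $S$ containing $Q$ is ever needed. With this adjustment your argument matches the paper's proof exactly.
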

\begin{proof}
Define a function
$h: \Param{n} \times \XX^d_2 \rightarrow \{0,1\}$ with
$h(t,Q)=1$  if and only if a call to \textsc{IsFeasible}$(Q,S,t,\Delta)$ returns true.
We analyse the VC-dimension of the class of functions determined by $h$:
\[H =\{x \rightarrow h(t, x) \;|\; t \in \Param{n}\}\]
As a consequence, we obtain the same bounds  on the VC-dimension of the corresponding set system $\mathcal{R}$ with ground set $\XX^d_2$ where a set $r_t \in \mathcal{R}$ is defined by a $t \in \Param{n}$ with \[ r_{t} = \{ Q \in \XX^d_2 \mid h(t,Q)=1\} \] 

In order to show the lemma, we first argue that for any given $t \in \Param{n}$ and $Q \in \XX^2_d$ the expression $h(t,Q)$ can be evaluated with $O(d)$ simple operations.

Let $(t',i') = t$ and recall the index set $J = \{ (i,j) \mid i'-3 \leq i \leq i' \leq j \leq i+3 \}$ as in the procedure \textsc{IsFeasible}. Note that $|J|=9$ and that $J$ can be determined by $O(1)$ simple operations from $i'$.
Note that \textsc{IsFeasible} returns true if and only if $t\in\Psi^{i,j}_{\Delta}(S,Q)$ for some $(i,j) \in J$.
So, for fixed $(i,j)$, consider the set 
 \[ R= \{(\alpha,\beta) \in [0,1]^2 \mid t\in\Psi^{i,j}_{\Delta}\left( S,Q[\alpha,\beta] \right)\} \] Lemma~\ref{lem:rect} implies that $R$ is either empty or can be written as a rectangle $[\alpha_1,\alpha_2]\times [\beta_1,\beta_2]$. Note that $t\in\Psi^{i,j}_\Delta(S,Q)$ if and only if $R$ is non-empty and $(0,1) \in R$. By Lemma~\ref{lem:rect}, this test can be performed using $O(d)$ simple operations.
 Thus, we can apply Theorem~\ref{thm:vcsimpl} and conclude that the VC-dimension of  $H$ is in $O(d)$.
\end{proof}

\subsection{Detailed analysis of the VC-dimension}\label{sec:feas}

In Lemma~\ref{lem:vcdimfeas}, it was shown that the VC-dimension of the set induced by the function \textsc{IsFeasible} is in $O(d)$. In this section, we explicitly derive constants $a$ and $b$ such that this VC-dimension is at most $ad+b$. This is necessary to obtain exact bounds on the sample size to be used in the main algorithm. To do so, we give a more detailed pseudocode of the procedure \textsc{IsFeasible} in Algorithm~\ref{alg:feasible} and analyse the simple operations that are need for each step of the algorithm.
In the algorithm, we refer to specific free space intervals by using the following notation. Let $C_{i}$ be the cell in the $\Delta$-free space of a curve $S$ and an edge $e$ corresponding to the $i$th edge of $S$ and $e$. We denote with $I^{h}_{i,0}=[a_{i,0},b_{i,0}]$ the horizontal free space interval that bounds $C_{i}$ from below and with $I^{h}_{i,1}=[a_{i,1},b_{i,1}]$ the horizontal free space interval that bounds $C_{i}$ from above. We further denote with $I^{v}_{i,0}=[c_{i,0},d_{i,0}]$ the vertical free space interval that bounds $C_{i}$ from the left and with $I^{v}_{i,1}=[c_{i,1},d_{i,1}]$ the vertical free space interval that bounds $C_{i}$ from the right.

The following lemmata give us bounds on the number of iterations for specific operations of the algorithm. Figure~\ref{fig:freespace:feasset} shows an example of the relevant free space intervals that are used by the algorithm for checking if $t$ is contained in the structured coverage of some $(i,j)\in \{ (i,j) \mid i'-3 \leq i \leq i' \leq j \leq i+3       \}$.

\begin{algorithm}[h]
\caption{Alternative definition of feasibility test}\label{alg:feasible}
\begin{algorithmic}[1]
\Procedure{IsFeasible}{$e \in \XX^d_2, S \in \XX^d_2, t \in \Param{n}, \Delta \in \RR$}
    \State Denote with $I^{h}_{i,0}, I^h_{j,1}, I^{v}_{i,1}, I^{v}_{j-1,1}$ free space intervals of $\mathcal{D}_{\Delta}(e,S)$ \label{line:comp:int}
    \State $(t',i') \gets t$
    \State $J = \{ (i,j) \mid i'-3 \leq i \leq i' \leq j \leq i+3       \}$
    \For{$(i,j) \in J$}
        \If{$I^{h}_{i,0} \neq \emptyset$ and  $I^h_{j,1} \neq \emptyset$} \label{line:checkempty1}
        \If{$a_{i,0} \leq t$ and $t \leq b_{j,1}$} \label{line:check:t}
                \If{$i=j$} \label{line:check:ij}
                    \State \Return true 
                \EndIf
                \If{$I^{v}_{i,1} \neq \emptyset$} \label{line:checkempty2}
                    \If{$j=i+1$} \label{line:check:ji+1}
                    \State \Return true
                    \EndIf
                    \If{$I^{v}_{j-1,1}\neq \emptyset$} \label{line:checkempty3}
                    \If{$c_{i,1} \leq d_{j-1,1}$  } \label{line:check:cd}
                    \State \Return true
                    \EndIf
                    \EndIf
                \EndIf
            
        \EndIf
        \EndIf
    \EndFor
    \State \Return false
\EndProcedure
\end{algorithmic}
\end{algorithm}

\begin{lemma}\label{lem:abc}
Let $a,b,c\in\RR$ with $b\geq 0$. The truth value of $a+\sqrt{b}\leq c$ can be computed by an algorithm that takes $a,b$ and $c$ as input and needs at most $4$ simple operations.
\end{lemma}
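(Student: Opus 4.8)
The plan is to do a straightforward case analysis on the sign of $c-a$. The statement to decide is $a+\sqrt{b}\le c$, equivalently $\sqrt{b}\le c-a$, where $b\ge 0$ is guaranteed so $\sqrt{b}$ is a well-defined nonnegative real. First I would compute $c-a$ (one subtraction). If $c-a<0$, then since $\sqrt{b}\ge 0$ the inequality $\sqrt{b}\le c-a$ is false, so we return false; this path costs one subtraction plus one comparison. If $c-a\ge 0$, then both sides of $\sqrt{b}\le c-a$ are nonnegative, so squaring is an equivalence: $\sqrt{b}\le c-a \iff b\le (c-a)^2$. Thus we compute $(c-a)^2$ by multiplying $c-a$ by itself (one multiplication) and then compare $b\le (c-a)^2$ (one comparison), returning the result.

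Counting simple operations along the worst-case branch: one subtraction to form $c-a$, one comparison to test $c-a\ge 0$, one multiplication to form $(c-a)^2$, and one comparison to test $b\le(c-a)^2$ — four simple operations in total. The other branch uses strictly fewer operations. Note that we never invoke the square-root operation, which is the whole point: we avoid evaluating $\sqrt{b}$ and instead reduce to arithmetic and comparisons on the inputs $a,b,c$.

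The only subtlety — and the one step that needs a sentence of justification rather than mechanical counting — is the equivalence of $\sqrt{b}\le c-a$ with $b\le (c-a)^2$ \emph{only after} we have established $c-a\ge 0$; squaring is monotone on nonnegative reals, and $\sqrt{b}\ge 0$, so the direction "$\sqrt{b}\le c-a \Rightarrow b\le (c-a)^2$" holds, while conversely $b\le (c-a)^2$ with $c-a\ge 0$ gives $\sqrt{b}\le\sqrt{(c-a)^2}=|c-a|=c-a$. Handling the sign test up front is what makes this correct; if one squared blindly one would wrongly accept cases with $c-a<0$. I do not anticipate any real obstacle here beyond being careful that the comparison operations are counted (each returns $0$ or $1$ per the computational model) and that the bound of $4$ is met on every branch, which it is.
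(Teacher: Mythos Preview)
Your proposal is correct and follows essentially the same approach as the paper: test the sign of $c-a$ (equivalently $a>c$), return false in the negative case, and otherwise square to compare $b\le(c-a)^2$, for a total of one subtraction, one multiplication, and two comparisons. The only cosmetic difference is the order in which the first subtraction and comparison are performed.
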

\begin{proof}
    To check if $a+\sqrt{b}\leq c$ we can first check if $a>c$ with one simple operations. If that is the case we can return false. If it is not the case, the statement is equivalent to $b\leq (c-a)^2$. We need one subtraction and one multiplication to calculate $(c-a)^2$. Together with the comparison we get a total of $4$ simple operations.
\end{proof}

\begin{lemma}\label{lem:abcd}
Let $a,b,c,d\in\RR$ with $b,d\geq 0$. The truth value of $a+\sqrt{b}\leq c+\sqrt{d}$ can be computed by an algorithm that takes $a,b,c$ and $d$ as input and needs at most $11$ simple operations.
\end{lemma}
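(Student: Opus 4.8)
The plan is to reduce the comparison $a+\sqrt{b}\leq c+\sqrt{d}$ to a chain of sign tests and at most one application of the (already established) single-square-root comparison from Lemma~\ref{lem:abc}, carefully counting simple operations along the way. First I would rewrite the inequality as $\sqrt{b}\leq (c-a)+\sqrt{d}$; computing $c-a$ costs one subtraction. Now the right-hand side has the shape $e+\sqrt{d}$ with $e=c-a$, and we want to test $\sqrt{b}\le e+\sqrt{d}$, i.e.\ whether $\sqrt{b}-\sqrt{d}\leq e$.

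The key case split is on the sign of $e$. If $e\geq 0$ (one comparison), then since $\sqrt{d}\geq 0$ the inequality $\sqrt{b}\leq e+\sqrt{d}$ is equivalent, after squaring both nonnegative sides, to $b\leq e^2+2e\sqrt{d}+d$, i.e.\ to $\sqrt{d}\cdot(2e) \geq b-e^2-d$. Writing this as $(b-e^2-d) \leq 0\cdot + \sqrt{(2e)^2 d}$ — more precisely, since $2e\ge 0$ we have $2e\sqrt d = \sqrt{4e^2 d}$, so the test becomes $(b-e^2-d)+\sqrt{0}\leq \sqrt{4e^2d}$, which after moving terms is of the form ``$a'+\sqrt{b'}\le c'+\sqrt{d'}$'' with $b'=0$; but more efficiently it is just $(b-e^2-d)\le \sqrt{4e^2d}$, a single-square-root comparison handled by Lemma~\ref{lem:abc} up to a sign check on the left constant. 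If instead $e<0$, then $\sqrt{b}\leq e+\sqrt{d}$ forces $\sqrt{d}\geq \sqrt{b}-e > \sqrt b \geq 0$, so we may square to get $d \ge b - 2e\sqrt b + e^2$, i.e.\ $2e\sqrt{b}\ge b+e^2-d$; since $2e<0$ this is $\sqrt{b}\le \frac{b+e^2-d}{2e}$ (the division by the nonzero $2e$ flips the inequality), again a single-square-root comparison of the form handled by Lemma~\ref{lem:abc}. In either branch one first precomputes $e=c-a$, $e^2$, and the relevant linear combination of $b,d,e^2$, then invokes (an inlined version of) Lemma~\ref{lem:abc}.

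I would then tally the operations in the worst branch: one subtraction for $e$, one comparison for the sign of $e$, one multiplication for $e^2$, a couple of additions/subtractions to form the constant $b-e^2-d$ (or $b+e^2-d$) and the argument $4e^2d$ (or the quotient), plus the at most $4$ simple operations from Lemma~\ref{lem:abc}, and show the total stays within $11$. The mildly delicate point is arranging the algebra so that the quantity appearing under the square root is provably nonnegative in the branch where we use it — which is exactly why the sign split on $e$ is made first — and so that no branch exceeds the budget; this bookkeeping, rather than any conceptual difficulty, is the main obstacle. A cleaner alternative, which I would mention, is to square twice: from $a+\sqrt b\le c+\sqrt d$ first isolate one root and square once to reach an inequality of the form $P\le Q\sqrt{d}$ with polynomials $P,Q$ in $a,b,c,d$, handle the sign of $Q$, then square again to a square-root-free polynomial inequality; counting the operations of this fixed straight-line computation (with the requisite sign tests) also yields the bound $11$, and this is the version I would ultimately write out since it avoids reusing Lemma~\ref{lem:abc} as a black box and makes the operation count transparent.
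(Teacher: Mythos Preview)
Your proposal is correct, and your ``cleaner alternative'' at the end (square twice with the necessary sign tests) is exactly what the paper does. The paper presents an explicit procedure: first branch on the sign of $c-a$, then on the sign of $d-b$ (these two tests handle the trivially true/false cases), then compute $z=(c-a)^2-b-d$ and decide via the tests $z\gtrless 0$ and $z^2\gtrless 4bd$. The operation tally is $1+1+4+5=11$.

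Your primary route is a genuine, if minor, variant: you split only on the sign of $e=c-a$ and then reduce to a \emph{single}-root comparison. For $e\ge 0$ you land on $A\le \sqrt{4e^2d}$ with $A=b-e^2-d$ (note this is not literally Lemma~\ref{lem:abc}'s shape $a'+\sqrt{b'}\le c'$, but the obvious dual with $3$ operations); for $e<0$ you land on $\sqrt{b}\le (b+e^2-d)/(2e)$, which \emph{is} Lemma~\ref{lem:abc}. Your counts check out (worst branch $\le 11$), so this works. The main practical difference is that the paper's version is symmetric in the two roots, avoids the division you introduce when $e<0$, and makes the $11$-operation count immediate from one fixed straight-line program; your version buys a conceptually pleasant reuse of Lemma~\ref{lem:abc} at the cost of slightly more bookkeeping to verify the budget branch by branch. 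Since you already identified the double-squaring version as the one you would ``ultimately write out,'' you would end up with the paper's proof.
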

\begin{proof}
    Consider Algorithm~\ref{alg:CheckIneq}. It is easy to check that the procedure \textsc{CheckInequality} ouputs the truth value of $a+\sqrt{b}\leq c+\sqrt{d}$. We show that the algorithm can be implemented such that it needs at most $11$ simple operations. The checks in line \ref{simp:check1}, \ref{simp:check2} and \ref{simp:check3} need one simple operation each. The computation of $z=(c-a)^2-b-d$ in line \ref{check:z1} or \ref{check:z2} can be done by 3 subtractions and one multiplication. The check for $z\geq0$ ( respectively $>$) in line \ref{check:z0:leq:1} ( respectively \ref{check:z0:leq:2}) is one simple operation. The check $z^2\leq4bd$ (respectively $<$) in the same line needs 3 multiplications and one comparison. Counting the number of simple operations together, we see that the algorithm needs at most $11$ simple operations.
\end{proof}

\begin{algorithm}[h]
\caption{Check for $a+\sqrt{b}\leq c+\sqrt{d}$}\label{alg:CheckIneq}
\begin{algorithmic}[1]
\Procedure{CheckIneqaulity}{$a,b,c,d\in \RR$}
    \If{$c\geq a$} \label{simp:check1}
        \If{$d\geq b$} \label{simp:check2}
            \State \Return true
        \EndIf
        \State Let $z=(c-a)^2-b-d$ \label{check:z1}
        \If{$z\geq 0$ \textbf{or} $z^2\leq 4bd$} \label{check:z0:leq:2}
            \State \Return true
        \Else
            \State \Return false
        \EndIf
    \Else
        \If{$d<b$}\label{simp:check3}
            \State \Return false
        \EndIf
        \State Let $z=(c-a)^2-b-d$  \label{check:z2}
        \If{$z> 0$ \textbf{or} $z^2< 4bd$} \label{check:z0:leq:1}
            \State \Return false
        \Else
            \State \Return true
        \EndIf
    \EndIf
\EndProcedure
\end{algorithmic}
\end{algorithm}

\begin{figure}[h]
    \centering
    \includegraphics{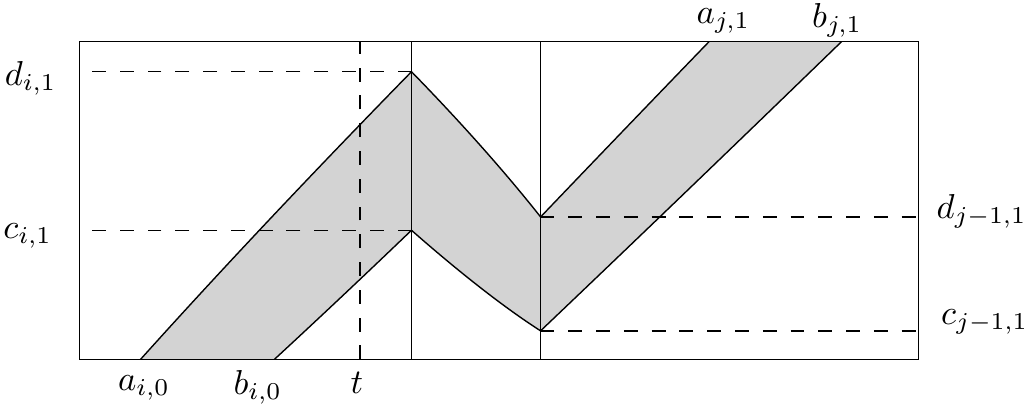}
    \caption{Example of the free space intervals $I^{h}_{i,0}=[a_{i,0},b_{i,0}]$,  $I^h_{j,1}=[a_{j,1},b_{j,1}]$,  $I^{v}_{i,1}=[c_{i,1},d_{i,1}]$ and $I^{v}_{j-1,1}=[c_{j-1,1},d_{j-1,1}]$ for some $t=(t',i')$ and some $(i,j)\in \{ (i,j) \mid i'-3 \leq i \leq i' \leq j \leq i+3       \}$}
    \label{fig:freespace:feasset}
\end{figure}

\begin{lemma}\label{lem:intersec:circ:line}
Let $p,q,r\in \RR^d$ and let $\Delta\in\RR_+$. The intersection of the line through $p$ and $q$ and the ball with radius $\Delta$ around $r$ is either 
$\{p+ t(q-p)\mid t\in [a-\sqrt{b},a+\sqrt{b}]\subset\RR\}$ for some $a,b\in\RR$ or is empty. There exists an algorithm  which needs at most $10d+5$ simple operations that takes $p,q,r$ and $\Delta$ as input and outputs the values $a$ and $b$ or decides that the  intersection is empty.
\end{lemma}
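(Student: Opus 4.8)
The plan is to parametrize the line through $p$ and $q$ as $x(t) = p + t(q-p)$ for $t \in \RR$, and to expand the condition $\|x(t) - r\| \leq \Delta$ into a quadratic inequality in $t$. Writing $u = q-p$ and $v = p-r$, the squared distance is $\|x(t)-r\|^2 = \|v\|^2 + 2t\langle u,v\rangle + t^2\|u\|^2$, so the condition becomes $\|u\|^2 t^2 + 2\langle u,v\rangle t + (\|v\|^2 - \Delta^2) \leq 0$. When $p=q$ the quadratic degenerates; I would treat $u=0$ as the case where the ``line'' is a point, and either return a trivial interval or declare emptiness according to whether $\|v\| \le \Delta$ (note $\|u\|=0$ can be checked from the already-computed coordinates of $u$). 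Otherwise, since the leading coefficient $\|u\|^2 > 0$, the solution set is either empty or a closed interval $[t_-,t_+]$ whose endpoints are the roots of the quadratic. Dividing by $\|u\|^2$ and completing the square yields $t_\pm = a \pm \sqrt{b}$ with $a = -\langle u,v\rangle/\|u\|^2$ and $b = \langle u,v\rangle^2/\|u\|^4 - (\|v\|^2-\Delta^2)/\|u\|^2$; the intersection is nonempty exactly when $b \geq 0$, which I would report as the emptiness test. This establishes the claimed form of the output.

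It remains to count simple operations, where the square-root extraction itself is not counted (only the arithmetic/comparison operations). First compute the vectors $u = q-p$ and $v = p-r$: this is $2d$ subtractions. Compute the three inner products $\|u\|^2 = \langle u,u\rangle$, $\langle u,v\rangle$, and $\|v\|^2 = \langle v,v\rangle$: each inner product in $\RR^d$ costs $d$ multiplications and $d-1$ additions, i.e.\ $2d-1$ operations, so $3(2d-1) = 6d-3$ operations for the three of them. One can also save a little by reusing partial sums, but the crude count suffices. Then: $\Delta^2$ is one multiplication; $\|v\|^2 - \Delta^2$ is one subtraction; $a = -\langle u,v\rangle/\|u\|^2$ needs one division (the negation can be folded in, or counts as one more operation); $\langle u,v\rangle^2$ is one multiplication; $b$ is obtained as $a^2 - (\|v\|^2-\Delta^2)/\|u\|^2$, costing one squaring of $a$, one division, and one subtraction; and finally the emptiness check $b \geq 0$ is one comparison. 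Adding the constant-count operations gives at most a constant on the order of $8$–$10$ beyond the $2d + (6d-3) = 8d-3$ from the vector arithmetic; a careful (possibly slightly wasteful) implementation stays within $10d + 5$. I would also note the degenerate branch $u = 0$ is dominated by this bound, since checking $\|u\|^2 = 0$ and $\|v\|^2 \leq \Delta^2$ costs only $O(d)$ additional operations.

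The only real subtlety — and the part I would be most careful about — is the bookkeeping of the operation count so that it provably fits under $10d+5$ rather than merely $O(d)$, since this constant feeds into the VC-dimension constant in the main algorithm via Theorem~\ref{thm:vcsimpl}. Concretely, one must decide whether a negation, a squaring, or reading a coordinate counts as a simple operation, and organize the computation (e.g.\ reusing $c-a$-type intermediate values, computing $b$ without recomputing $\|u\|^2$) so that no step is double-counted. This is routine but requires writing out an explicit arithmetic straight-line program and tallying its length; I would present it as a short enumerated list of steps with a running total, mirroring the style of Lemmas~\ref{lem:abc} and~\ref{lem:abcd}. The geometric content (quadratic in one variable, interval of solutions, discriminant as emptiness test) is immediate and poses no obstacle.
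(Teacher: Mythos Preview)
Your proposal is correct and follows essentially the same approach as the paper: parametrize the line, expand $\|p+t(q-p)-r\|^2\le\Delta^2$ into a quadratic in $t$, read off $a$ and $b$ from the normalized coefficients, test the discriminant for emptiness, and tally the arithmetic operations to stay within $10d+5$. Your treatment is in fact slightly more careful than the paper's (you handle the degenerate case $q=p$ and your formulas for the coefficients are cleaner), but the method is identical.
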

\begin{proof}
    The algorithm needs to solve the quadratic equation
    \[\|p+t(q-p)-r\|^2\leq \Delta^2\]
     which is equivalent to $t^2+\alpha t+\beta=0$ where \[\alpha=\frac{\sum_{i=1}^d 2(p_i-q_i)r_i}{\sum_{i=1}^d (q_i-p_i)} \text{ and } \beta=\frac{-\Delta^2+\sum_{i=1}^d (p_i^2+r_i^2)}{\sum_{i=1}^d (q_i-p_i)}.\]
     The sum $\sum_{i=1}^d (q_i-p_i)$ can be computed using $2d-1$ simple operations. For the sum $\sum_{i=1}^d 2(p_i-q_i)r_i$ needs $4d-1$ simple operations and the term $-\Delta^2+\sum_{i=1}^d (p_i^2+r_i^2)$ needs $4d+1$ operations. So with the two extra divisions, we need a total of $10d+1$ operations to compute  $\alpha$ and $\beta$. The solutions of the quadratic equation are
     \[t_{1,2}=-\frac{\alpha}{2}\pm \sqrt{\left(\frac{\alpha}{2}\right)^2-\beta}.\]
     So we need an extra operation to calculate $a=-\frac{\alpha}{2}$ and an extra two operations to calculate $b=a^2-\beta$. Also, we have to check if $b<0$ to decide if the solution is feasible or the intersection is empty. This sums up to a total of $10d+5$
\end{proof}

With the help of the above lemmata, we can get the following result.
\begin{lemma}\label{lem:vcdim:const}
The algorithm \textsc{IsFeasible} (Algorithm~\ref{alg:feasible}) can be implemented such that it needs at most $110d+412$ simple operations.
\end{lemma}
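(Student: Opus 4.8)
The plan is to establish the bound by a careful line-by-line accounting of the simple operations performed by Algorithm~\ref{alg:feasible}, using Lemmas~\ref{lem:abc}, \ref{lem:abcd}, and~\ref{lem:intersec:circ:line} as elementary building blocks. Since the point $t=(t',i')$ is handed to the procedure already in edge-parametrized form, the index set $J$ can be produced from $i'$ with $O(1)$ additions and comparisons, $|J|$ is an absolute constant, and the loop over $(i,j)\in J$ is entered a constant number of times; hence the total operation count is a finite sum that I would bound term by term.

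The term linear in $d$ comes entirely from line~\ref{line:comp:int}. Over all iterations the algorithm only ever inspects free-space intervals of the four shapes $I^{h}_{i,0}$, $I^{h}_{j,1}$, $I^{v}_{i,1}$, $I^{v}_{j-1,1}$, and only for indices lying in a window of constant width around $i'$; I would compute each distinct such interval once and reuse it in later iterations. Each of these intervals is the intersection of a line — an edge of $S$ for the horizontal intervals, the edge $e$ for the vertical ones — with a $\Delta$-ball centered at a vertex of $e$ (respectively of $S$), so Lemma~\ref{lem:intersec:circ:line} computes it, together with the flag indicating whether it is empty, in at most $10d+5$ simple operations. Enumerating the distinct intervals that can be touched and multiplying by $10d+5$ yields the $110d$ part of the bound, and the accompanying additive constants are absorbed into the final $412$.

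It then remains to charge the control flow of the loop. The emptiness tests in lines~\ref{line:checkempty1}, \ref{line:checkempty2} and~\ref{line:checkempty3} and the integer comparisons ``$i=j$'' and ``$j=i+1$'' in lines~\ref{line:check:ij} and~\ref{line:check:ji+1} cost one simple operation each, reading off the emptiness flag set when the interval was computed. The two tests in line~\ref{line:check:t} have the form ``$c\le a+\sqrt{b}$'' and ``$a-\sqrt{b}\le c$''; by the same reasoning as in Lemma~\ref{lem:abc} each costs at most $4$ simple operations. The test in line~\ref{line:check:cd} compares a left endpoint against a right endpoint, i.e.\ it has the form ``$a-\sqrt{b}\le c+\sqrt{d}$'', a comparison of the type handled in Lemma~\ref{lem:abcd}, costing at most $11$ simple operations. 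Multiplying the per-iteration cost by $|J|$, adding the $O(1)$ cost of forming $J$ from $i'$, and adding the additive constants left over from the interval computations, one obtains a constant that is at most $412$; summing the two contributions proves the claim.

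The only real difficulty is bookkeeping: one must pin down exactly which free-space intervals are referenced across all $(i,j)\in J$ so that shared computations are charged once rather than once per iteration, and one must translate each square-root comparison into its precise operation count, keeping in mind that left endpoints of free-space intervals appear as $a-\sqrt{b}$ and right endpoints as $a+\sqrt{b}$ so that Lemmas~\ref{lem:abc} and~\ref{lem:abcd} apply with exactly the stated constants. No individual step is conceptually hard, but the specific constant $412$ emerges only once every branch of the procedure has been charged.
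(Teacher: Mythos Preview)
Your proposal is correct and follows essentially the same route as the paper's proof: count the eleven distinct free-space intervals once each via Lemma~\ref{lem:intersec:circ:line} to obtain the $110d$ term (the paper also charges twelve extra clipping operations per interval, giving $10d+17$ each and hence the additive $187$), then bound the per-iteration loop cost at $25$ using Lemmas~\ref{lem:abc} and~\ref{lem:abcd} and multiply by $|J|=9$ to get $225$, for a total of $110d+412$. The only thing the paper makes more explicit than you do is the exact interval count and the final arithmetic.
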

\begin{proof}
    In total the algorithm needs to compute at most $6$ horizontal and $5$ vertical free space intervals. Each of them corresponds to the intersection of an edge with a ball. By Lemma~\ref{lem:intersec:circ:line}, the parameters $a$ and $b$ of the intersection of the line containing this edge  with the ball can be computed with an algorithm that needs at most $10d+5$ simple operations. To get an explicit representations of the borders of the free space interval, we have to additionally check if $a+\sqrt{b}\leq 1$, $a-\sqrt{b}\geq 0$ and $a-\sqrt{b}> 1$. By Lemma~\ref{lem:abc}, each of these checks needs at most $4$ simple operations. So we need at most $10d+17$ operations to compute the parameters $c,d,e,f$ that describe a freespace interval $[c-\sqrt{d},e+\sqrt{f}]$ implicitly. In total, we therefore need $110d+187$ simple operations for all $11$ free space intervals. After computing the intervals, the 4 checks in line \ref{line:checkempty1}, \ref{line:checkempty2} and \ref{line:checkempty3} only need one simple operation each. The checks $i=j$ in line \ref{line:check:ij} and $j=i+1$ in line \ref{line:check:ji+1} are also only one simple operation each. By Lemma~\ref{lem:abc}, the two checks in line \ref{line:check:t} need $4$ simple operations each. The remaining check in line \ref{line:check:cd} needs $11$ simple operations, as shown in Lemma~\ref{lem:abcd}. So each iteration of the for loop (line 5-15) needs at most $25$ simple operations. Since $J$ has at most $9$ elements. The whole for loop needs at most $225$ simple operations. By adding the simple operations, that we need to compute the free space intervals, we get that the whole algorithm needs at most $110d+412$ simple operations.
\end{proof}

\subsection{Results for general polygonal curves}\label{sec:analysis:general}

In this section we prove Theorem~\ref{thm:runtime:main:canon} which we stated in Section~\ref{sec:overview:ana:main}. 
Instead of proving Theorem~\ref{thm:runtime:main:canon} directly, we prove the following more general lemma first.
We include this general version, since we will use it in Section~\ref{sec:implicit:weights} again to bound the running time of a modified variant of the algorithm that uses implicit weight updates.

\begin{restatable}{lemma}{mainalgor}
\label{lem:runtime:main:basic}
Consider the main algorithm \textsc{ApproxCover} with input $P\in \XX^d_n$ and $\Delta\in\RR_+$ 
and let $k^{*}$ be the minimum size of a solution to the $\Delta$-coverage problem on $P$.
Assume the generated set of candidates $B$ contains a structured $\alpha\Delta$-covering $H\subset B$ of the generated $\Delta$-good simplification $S$ with $|H|\leq \beta k$ for some constants $\alpha,\beta\in\RR$.
The algorithm outputs an 
$(\alpha', \beta')$-approximate solution with $\alpha'=\alpha+3$ and $\beta' = O(\log( k^*))$.
In addition to the time and space needed to generate the set of candidates $B$ and to generate the simplification $S$, the algorithm needs in expectation  $\widetilde{O}(({k^*})^2 n +|B|k^*)$ time and $ \widetilde{O}(({k^*}) n +|B|)$ space..
\end{restatable}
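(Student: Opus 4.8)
The plan is to run the Br\"onniman--Goodrich potential argument for set cover of bounded VC-dimension (cf.~\cite{bronnimann1995almost}), adapted to the fact that the net-finder used in \textsc{kApproxCover} is a random sample rather than a deterministic net construction. First, correctness of any returned set: whenever \textsc{kApproxCover} returns $C$, this is because \textsc{PointNotCovered}$(C,S,\Delta')$ returned $-1$, i.e.\ $\Psi'_{\Delta'}(S,C)=[0,1]$ with $\Delta'=\alpha\Delta$; by Observation~\ref{obs:unstructure_cover} $C$ is then an $\alpha\Delta$-covering of $S$, and since $d_F(P,S)\le 3\Delta$ (Observation~\ref{obs:dfsimp}), Lemma~\ref{lem:covertransfer}~$(i)$ applied with the roles of $P$ and $S$ swapped and $\Delta'=3\Delta$ shows $C$ is an $(\alpha+3)\Delta$-covering of $P$. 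This gives $\alpha'=\alpha+3$ for every output, so it only remains to bound $|C|$ in terms of $k^{*}$ and to bound the running time.

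The implicit set system is $\mathcal R=\{F_{\Delta'}(t)\mid t\in\Param{n}\}$ with ground set $B$: a subset of $B$ is a structured $\Delta'$-covering of $S$ exactly when it is a hitting set of $\mathcal R$, and by the hypothesis $|H|\le\beta k^{*}$ some hitting set has size $k^{\circ}\le\beta k^{*}$. Consider a call to \textsc{kApproxCover} at a search value with $k\ge\beta k^{*}$, so $k\ge k^{\circ}$ and $r=2k\ge 2k^{\circ}$. In each \emph{proper} iteration the returned $t$ is uncovered by the current sample and, by the guard, $w(F_{\Delta'}(t))\le\tfrac1r w(B)$; since every hitting set meets $F_{\Delta'}(t)$, doubling the (non-multiset) set $F_{\Delta'}(t)$ doubles the weight of some fixed element of the size-$k^{\circ}$ hitting set. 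Starting from unit weights, $m$ proper updates force $w(B)\ge k^{\circ}2^{m/k^{\circ}}$ by AM--GM, while the guard forces $w(B)\le|B|(1+\tfrac1r)^{m}\le|B|e^{m/(2k^{\circ})}$; these are inconsistent once $m>c\,k^{\circ}\log_{2}(|B|/k^{\circ})$ for an absolute constant $c<5$ (here $\log_{2}e<2$ is used), so at most $m^{*}:=c\,k\log_{2}(|B|/k)<i_{\max}=5k\log_{2}(|B|/k)$ proper updates ever occur. In particular the loop cannot reach $i>i_{\max}$, and once the proper updates have saturated, every sampled $\tfrac1r$-net must be a hitting set.

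The sample size $k'=\lceil 16k\gamma\log(16k\gamma)\rceil$ is exactly $8\delta r\log(8\delta r)$ for $\delta=\gamma$ and $r=2k$; since $\gamma=110d+412$ upper bounds the VC-dimension of $\mathcal R$ (Lemma~\ref{lem:vcdim:const}), Theorem~\ref{thm:epsnet} makes each drawn sample a $\tfrac1r$-net of $(B,\mathcal R)$ with probability at least a constant $1-\alpha_{0}>0$. An iteration that does not increment $i$ leaves the distribution unchanged, so the iterations between consecutive proper updates are stochastically dominated by a geometric random variable with success probability $1-\alpha_{0}$; hence \textsc{kApproxCover} performs $O(i_{\max})=O(k\log(|B|/k))$ iterations in expectation, terminates almost surely, and for $k\ge\beta k^{*}$ returns a solution (it cannot reach $i>i_{\max}$, and a $\tfrac1r$-net is eventually sampled, which after saturation is a hitting set). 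Therefore the exponential search halts at some $K=O(\beta k^{*})=O(k^{*})$, outputting $C$ with $|C|\le k'=O(K\gamma\log(K\gamma))=O(k^{*}\log k^{*})$; against the optimum $k^{*}$ this yields $\beta'=O(\log k^{*})$.

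For the running time, each iteration of \textsc{kApproxCover} costs $\widetilde{O}(k)$ to draw the $k'=\widetilde{O}(k)$ samples from a cumulative-weight array, $\widetilde{O}(nk)$ time and $\widetilde{O}(nk)$ space for \textsc{PointNotCovered} by Lemma~\ref{lem:findpointstructured} (using $|C|\le k'$), and $O(|B|)$ for the $|B|$ constant-cost calls to \textsc{IsFeasible} (Lemma~\ref{lem:vcdim:const}), evaluating $\Pr[\Dist_i]{F}$, and one \textsc{WeightUpdate}. Multiplying by the expected $\widetilde{O}(k)$ iterations gives $\widetilde{O}(nk^{2}+|B|k)$ time and $\widetilde{O}(nk+|B|)$ space for the call at search value $k$, and summing this geometric series over $k=1,2,4,\dots,K$ with $K=O(k^{*})$ yields the claimed $\widetilde{O}(n(k^{*})^{2}+|B|k^{*})$ time and $\widetilde{O}(nk^{*}+|B|)$ space, beyond the cost of producing $S$ and $B$. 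The main obstacle is twofold: making the potential argument quantitative enough that the pseudocode's precise settings $r=2k$, $k'$ and $i_{\max}$ are provably sufficient---this is exactly where $\log_{2}e<2$ and the explicit VC bound enter---and arguing that the randomized net-finder neither corrupts the monotone potential argument (wasted iterations leave the weights untouched) nor inflates the expected iteration count (wasted iterations are geometrically rare). The step translating $k^{*}$, the optimum on $P$, into the hitting-set optimum $k^{\circ}$ for the structured problem on $S$ is precisely what the hypothesis on $H$ supplies; everything else is routine bookkeeping.
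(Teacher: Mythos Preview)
Your proposal is correct and follows essentially the same approach as the paper: both run the Br\"onniman--Goodrich potential argument on the hitting set $H$, use the $\varepsilon$-net theorem (Theorem~\ref{thm:epsnet}) with the explicit VC bound $\gamma$ to show each sample is a $\tfrac1r$-net with constant probability, and sum the per-iteration costs geometrically over the exponential search. Your exposition is in places slightly more careful than the paper's (e.g.\ you explicitly argue that improper iterations leave the weights untouched and that saturation forces every subsequent net to be a hitting set), but the structure and all key steps coincide.
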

\begin{proof}

We analyse the expected running time and space of the algorithm.

To do so, we bound the running time of the main loop. In iteration $j$ we have $k=2^j$. Let  $j^*\in\NN$ be the iteration such that $2^{j^*-1}<12 k^*\leq 2^{j^*}$. We show that the algorithm terminates at latest in iteration $j^*$.
Let $t$ be the uncovered point chosen in any iteration of the \textsc{kApproxCover} algorithm during iteration $j^*$ of the \textsc{ApproxCover} algorithm. Since $H$ is a structured $\alpha\Delta$-covering of $S$, there has to be a element $h\in H$ such that $t\in\Psi_{\alpha\Delta}^{i,j}(S,h)$. If \textsc{WeightUpdate} is called in this iteration, then the weight of $h$ gets doubled. Let $z_{h,i}$ be the number of times the weight of $h$ has been doubled after $i$ calls of \textsc{WeightUpdate}. It is $k=2^{j^*}$. So after $i$ calls, we have \[w(H)=\sum_{h\in H} 2^{z_{h,i}}\geq 12 k^* 2^{\frac{i}{12 k^*}}\geq \frac{k}{2} \cdot 2^{\frac{i}{k}}.\] 
We always update the weight of a set $F\subset B$ with $w(F)\leq \frac{1}{r}w(B)=\frac{1}{2k}w(B)$. So after iteration $i$, we have \[w(B)\leq |B|(1+\frac{1}{2k})^i\leq |B|e^{\frac{i}{2k}}.\] Since $H$ is contained in $B$ it holds that $w(H)\leq w(B)$. Therefore it also holds 
\[\frac{k}{2}2^{\frac{i}{k}}\leq w(H)\leq w(B)\leq |B|e^{\frac{i}{2k}}\leq |B|e^{\frac{3i}{4k}}.\]
This directly implies $i< 5k \log_2(\frac{|B|}{k})$. So the algorithm has to terminate at latest in iteration $j^*$ before reaching $i=i_{max}$. The calculation of the bound above is analogous to the calculation of a similar bound in \cite{bronnimann1995almost}. It is contained here for the sake of completeness. To bound the running time in each iteration of the main loop, we claim the following.

\begin{claim}\label{clm:runtime:basic}
 A call to \textsc{kApproxCover} with inputs $S,B,r,\Delta',k'$ and $i_{max}$ has an expected running time of $O(i_{max}(k'\log(|B|)+mk'\log(k')+|B|))$ and needs $O(mk'+|B|)$ space.
\end{claim}

We postpone the proof of the claim until the end of the proof of the theorem and first finish this proof based on the statement of the claim.

In iteration $j$, we have $i_{max}=5k\log_2(\frac{|B|}{k})=5\cdot 2^j\log_2(\frac{|B|}{2^j})$. By Claim~\ref{clm:runtime:basic} the running time of \textsc{kApproxCover} in iteration $j$ can therefore be written as $2^j\cdot \text{Poly}(2^j,d,|B|,|E(S)|)$, where $\text{Poly}(2^j,d,|B|,|E(S)|)$ is a polynomial in $2^j,d,|B|$ and $|E(S)|$. Since the loop terminates at latest in iteration $j^*$, the expected running time of the loop can be bounded by
\[\sum_{j=1}^{j^*} 2^j\cdot \text{Poly}(2^j,d,|B|,|E(S)|)\leq 2^{j^*+1}\cdot \text{Poly}(2^{j^*},d,|B|,|E(S)|).\]
Because $2^{j^*}<24 k^*$, this running time is in $O( k^* \cdot \text{Poly}(k^*,d,|B|,|E(S)|))$. Inserting the polynomial, we get \[O( k^*\log(|B|/k^*)(k^*d\log(k^*d)\log(|B|)+|E(S)| k^*d\log( k^*d)^2\log(\log( k^*d))+|B|))\] 
which is $\widetilde{O}(n{k^*}^2d+|B| k^*)$ since $|E(S)| \in O(n)$. The required space for any execution \textsc{kApproxCover} is also dominated by the required space in the last execution, which is by Claim~\ref{clm:runtime:basic} in $\widetilde{O}(n {k^*}d+|B|)$.

Since the algorithm terminates in iteration $j^*$ and $2^{j^*}<24 k^*$, the size of the output $C\subset B$ of the algorithm is in $O(k^* \log(k^*))$. To be output by the algorithm, the set $C$ has to be an structured $\alpha\Delta$-covering of $S$. This is the case, because otherwise the subroutine \textsc{PointNotCovered} would have found a point that is not in the structured $\alpha\Delta$-covering of $S$ and the algorithm would not have terminated. By Observation~\ref{obs:unstructure_cover}, the output $C$ is also a $\alpha\Delta$-covering of $S$. Since $S$ is a $\Delta$-good simplification of $P$, we have $d_F(S,P)\leq 3\Delta$. So by Lemma \ref{lem:covertransfer}, $C$ is a $(\alpha+3)\Delta$-covering of $P$.

\end{proof}

\begin{proof}[Proof of Claim~\ref{clm:runtime:basic}]
By Lemma~\ref{cor:vccand}  the VC-dimension of $(B,\RSpace)$ is in $O(d)$. In Section~\ref{sec:feas}  we made this result more precise and show in Lemma~\ref{lem:vcdim:const} that the VC-dimension of $(B,\RSpace)$ is at most $\gamma=110d+412$. So, we get from Theorem~\ref{thm:epsnet} that for $k'=\lceil 8\gamma rd\log(8\gamma rd)\rceil$ and $i<i_{max}$ any draw of a set $C$ from $\Dist_i$  with $|C|=k'$ has a probability of at least $\frac{1}{2}$ to be a $\frac{1}{r}$-net of $B$. By the definition of the $\frac{1}{r}$-net and $\Dist_i$, we have for each $\frac{1}{r}$-net $N$ of $B$ and each $t\in [0,1]$ which is not covered by $N$ that $\Pr[\Dist_i]{F} \leq \frac{1}{r}$. Therefore in any iteration of the main loop the counter $i$ increases with a probability of at least $\frac{1}{2}$. So the main loop has in expectation $O(i_{max})$ iterations. 
As described in Section~\ref{sec:descr}, the cumulative probability distribution $\Dist_i$ is stored in an array that needs $O(|B|)$ space. Therefore $k'$ elements can be sampled from $\Dist$ in a total time of $O(k'\log(|B|))$ by a binary search on the array for a given random number. By Lemma~\ref{lem:findpointstructured} the subroutine \textsc{PointNotCovered} takes $O(mk'\log(k'))$ time and $O(mk')$ space. A call to the function \textsc{IsFeasible}  takes constant time and space, since $|J| \in O(1)$, and since by Lemma~\ref{lem:tdist} each check of whether $t$ is in $\Psi^{i,j}_{\Delta'}(S,e)$  can be done in $O(|j-i|)=O(1)$ time and space, if the necessary pointers are provided. Therefore, computing the set $F$ during one iteration of the main loop takes time in $O(|B|)$. 

While adding elements to $F$ we simultaneously keep track of the weight of $F$ and $B$, so that the evaluation of $\Pr[\Dist_{i}]{F} \leq \frac{1}{r}$ takes constant time. We also simultaneously construct an array of the updated probability distribution $\Dist_{i+1}$ for the case that $\Pr[\Dist_{i}]{F} \leq \frac{1}{r}$ and the subroutine \textsc{WeightUpdate} is called. This update can then be done by just switching the pointers of the two arrays in $O(1)$ time and an additional $O(|B|)$ space.

\end{proof}

\begin{proof}[Proof of Theorem~\ref{thm:runtime:main:canon}]
By Theorem~\ref{thm:runt:simpl}, there exists an algorithm that computes a $\Delta$-good simplification $S$ of $P$ in $O(n\log^2(n))$ time and $O(n)$ space.  Given this simplification, there exists by Theorem~\ref{thm:alg_candidates} an algorithm that constructs in $O(n^3)$ time and space a set of candidates $B \subset\CandidatesArg{E(S)}\subset \XX^d_2$ with $|B| \in O(n^3)$, such that $B$ contains a structured $8\Delta$-covering $C_B$ of $S$ of size at most $12k$.  If we use these two algorithms to generate $S$ and $B$, then the statement of Theorem~\ref{thm:runtime:main:canon} follows immediately from Lemma~\ref{lem:runtime:main:basic}.
\end{proof}

\subsection{Results for c-packed polygonal curves}\label{sec:analysis:cpacked}

If the input curve is a $c$-packed polygonal curve, we can obtain even better results that depend on $c$. The first two lemmas and proofs of this section are reminiscent of Lemma 4.2 and Lemma 4.3 in \cite{driemelHW12}. The main difference is the definition of the simplifications used. The third lemma uses the second lemma to show that the number of generating triples is in $O(n c^2)$ for $c$-packed curves improving upon the naive bound of $O(n^3)$ in the worst case. Lastly we talk about how to compute this set in an output-sensitive manner.

\begin{restatable}{lemma}{simpacked}
\label{lem:simpacked}

    Let $X$ be a curve in $\RR^d$. Let $S$ be a simplification of $X$, with $d_F(X,S)\leq\Delta$. Then $\|X\cap \disk[r+\Delta]{p}\|\geq\|S\cap\disk[r]{p}\|$ for any ball $\disk[r]{p}$.
\end{restatable}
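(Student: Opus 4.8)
The plan is to use the traversal $(\alpha,\beta)$ realizing $d_F(X,S)\le\Delta$ to transport the portion of $S$ inside $\disk[r]{p}$ onto a portion of $X$ that lies inside the slightly larger ball $\disk[r+\Delta]{p}$, and then argue that this transport does not decrease arclength. First I would observe that since $d_F(X,S)\le\Delta$, there is a reparametrization matching every point $S(\beta(t))$ to a point $X(\alpha(t))$ with $\|X(\alpha(t))-S(\beta(t))\|\le\Delta$. Consider the set $\Sigma=\{s\in[0,1]\mid S(s)\in\disk[r]{p}\}$ of parameters of $S$ whose image lies in the small ball; by definition $\|S\cap\disk[r]{p}\|$ is the arclength of $S$ restricted to $\Sigma$. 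Using monotonicity of $\alpha$ and $\beta$, I would pull $\Sigma$ back along $\beta$ and push forward along $\alpha$ to obtain a corresponding subset $\Sigma'\subseteq[0,1]$ of parameters of $X$; for every $s'\in\Sigma'$ there is a matched point $S(s)\in\disk[r]{p}$ with $\|X(s')-S(s)\|\le\Delta$, so $X(s')\in\disk[r+\Delta]{p}$ by the triangle inequality. Hence the arclength of $X$ restricted to $\Sigma'$ is at most $\|X\cap\disk[r+\Delta]{p}\|$, and it suffices to show $\|X|_{\Sigma'}\|\ge\|S|_{\Sigma}\|$.

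For the length comparison, I would use the standard fact that a monotone reparametrization cannot increase arclength in the following sense: if $(\alpha,\beta)$ is a traversal and $I\subseteq[0,1]$ is an interval of the common parameter, then the arclength of $X$ traced over $\alpha(I)$ is at least the arclength of $S$ traced over $\beta(I)$ whenever the portion of $S$ over $\beta(I)$ is contained in a convex region and $X$ over $\alpha(I)$ stays outside\ldots — but more robustly, I would instead decompose $\Sigma$ into maximal intervals and handle each separately, arguing directly on polygonal pieces. The cleanest route is: $S$ is polygonal, so $S|_\Sigma$ decomposes into finitely many subsegments $\overline{u_1 v_1},\dots,\overline{u_\ell v_\ell}$ with all $u_i,v_i\in\disk[r]{p}$. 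For each such subsegment, the matched subcurve of $X$ has Fréchet distance at most $\Delta$ to it, hence has arclength at least $\|u_i-v_i\|$ (a curve within Fréchet distance anything to a segment is at least as long as the segment, since its endpoints are within that distance and arclength is at least the distance between matched endpoints under any traversal — more carefully, the projection onto the line through $u_i,v_i$ is $1$-Lipschitz and surjective onto $[u_i,v_i]$ up to the endpoint slack, which I would make precise). Summing over $i$ gives $\|X|_{\Sigma'}\|\ge\sum_i\|u_i-v_i\|=\|S|_\Sigma\|=\|S\cap\disk[r]{p}\|$, completing the argument.

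The main obstacle I anticipate is making the overlap/monotonicity bookkeeping rigorous: the sets $\alpha(\beta^{-1}(\Sigma))$ need to be genuine disjoint intervals so that arclengths add without double-counting, which requires care because $\alpha$ and $\beta$ are only weakly monotone and may be constant on intervals (allowing a single point of $S$ to match a whole subsegment of $X$, or vice versa). I would resolve this by passing to a canonical traversal and noting that constant stretches only help the inequality (they contribute extra $X$-length or zero $S$-length), and by choosing the decomposition of $\Sigma$ into the maximal intervals on which $S$ stays within $\disk[r]{p}$, whose $\alpha$-images are automatically ordered and have disjoint interiors. A secondary subtlety is the endpoint slack in the claim "a curve $\Delta$-close to a segment is at least as long as the segment": this is actually true without slack because any traversal matches the start of $X|$ to $u_i$ and the end to $v_i$, and the $1$-Lipschitz projection onto the supporting line of $\overline{u_i v_i}$ maps the $X$-piece to a connected set containing both $u_i$ and $v_i$, hence of length $\ge\|u_i-v_i\|$; I would state this as a short self-contained sublemma.
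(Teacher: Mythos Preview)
Your overall strategy---decompose $S\cap\disk[r]{p}$ into subsegments of edges of $S$ and transfer each via the Fr\'echet traversal to a piece of $X$ inside $\disk[r+\Delta]{p}$---matches the paper's, but your length comparison step contains a genuine error. The sublemma you propose, that a curve within Fr\'echet distance $\Delta$ of a segment $\overline{u_i v_i}$ has arclength at least $\|u_i-v_i\|$, is simply false: the segment from $(\Delta,0)$ to $(1-\Delta,0)$ has Fr\'echet distance $\Delta$ to the segment from $(0,0)$ to $(1,0)$ but length $1-2\Delta$. Your projection argument does not rescue this: the start of the matched $X$-piece is only within $\Delta$ of $u_i$, so its orthogonal projection onto the supporting line can land strictly inside the interval $[u_i,v_i]$, and hence the projected image need not contain either $u_i$ or $v_i$. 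With $k$ subsegments you can lose up to $2k\Delta$ in total, and that deficit does not cancel.

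What your argument never uses is the one structural hypothesis that makes the lemma work: $S$ is a \emph{simplification} of $X$, so each edge $u=\overline{p_i\,p_j}$ of $S$ has its endpoints \emph{equal to} vertices of $X$. The paper exploits this as follows. For each edge $u$ of $S$, set $v=u\cap\disk[r]{p}$. The subcurve $X[t_i,t_j]$ between the shared vertices lies in the $\Delta$-capsule around $u$. Now erect the two hyperplanes perpendicular to $u$ through the endpoints of $v$; since $X[t_i,t_j]$ starts \emph{exactly} at $p_i$ and ends \emph{exactly} at $p_j$, which lie on opposite sides of (or on) the slab, the curve must cross both hyperplanes, so its length inside the slab is at least $\|v\|$. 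That portion sits inside the $\Delta$-capsule around $v$, which in turn is contained in $\disk[r+\Delta]{p}$. Summing over the edges of $S$ gives the claim. In short: work with the full edge $u$ of $S$ and the $X$-subcurve sharing its exact endpoints, not with the clipped chord $\overline{u_i v_i}$ and its Fr\'echet-matched $X$-piece; the projection/hyperplane argument then goes through with zero endpoint slack precisely because the endpoints are shared.
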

\begin{proof}
    Consider any segment $u$ of the simplification $S$ that intersects $\disk[r]{p}$, defined by vertices $p_i$ and $p_j$ of $P$, with $P(t_i)=p_i$ and $P(t_j) = p_j$, and let $v=u\cap \disk[r]{p}$. Observe, that $P[t_i,t_j]$ lies inside a capsule of radius $\Delta$ around $u$. Now erect two hyperplanes passing through the endpoints of $v$. $P[t_i,t_j]$ must intersect both, hence the length of $P[t_i,t_j]$ inside a capsule of radius $\Delta$ around $v$ is at least $||v||$. As this capsule lies completely inside $\disk[r+\Delta]{p}$, the claim follows.
\end{proof}

\begin{restatable}{lemma}{driemeleins}
\label{lem:driemel1}
    Let $X$ be a given $c$-packed curve, and $S$ a $\Delta$-good simplification of $P$. Then $S$ is $54c$-packed.
\end{restatable}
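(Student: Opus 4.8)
The plan is to follow the template of Lemma~4.3 in~\cite{driemelHW12}, bounding $\|S\cap\disk[r]{p}\|$ separately for ``large'' and ``small'' radii $r$, using Lemma~\ref{lem:simpacked} together with the fact that $d_F(X,S)\le 3\Delta$ (Observation~\ref{obs:dfsimp}, with $X=P$) and the lower bound $\Delta/3$ on the edge lengths of $S$ (Definition~\ref{def:goodsimp}~(i)). The single consequence of Lemma~\ref{lem:simpacked} that I would record up front is this: since $S$ is a simplification of the $c$-packed curve $X$ and $d_F(X,S)\le 3\Delta$, for every center $p$ and every radius $\rho>0$ we have $\|S\cap\disk[\rho]{p}\|\le\|X\cap\disk[\rho+3\Delta]{p}\|\le c(\rho+3\Delta)$, the last inequality being $c$-packedness of $X$.

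Fix a ball $\disk[r]{p}$. If $r\ge\Delta$, the displayed bound with $\rho=r$ already gives $\|S\cap\disk[r]{p}\|\le c(r+3\Delta)\le 4cr$, which is well within $54cr$. If $r<\Delta$, I would instead count $N$, the number of edges of $S$ that meet $\disk[r]{p}$; since each such edge contributes a chord of length at most $2r$, this yields $\|S\cap\disk[r]{p}\|\le 2rN$. To bound $N$, observe that an edge of $S$ meeting $\disk[r]{p}$ has length at least $\Delta/3$ by Definition~\ref{def:goodsimp}~(i), so starting from a point of that edge inside $\disk[r]{p}$ the edge extends by at least $\Delta/6$ in (at least) one of its two directions; parametrizing the edge by arclength from that intersection point shows this gives a sub-segment of length $\Delta/6$ lying entirely inside $\disk[r+\Delta/6]{p}$. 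Distinct edges of $S$ are interior-disjoint, so $N\cdot\frac{\Delta}{6}\le\|S\cap\disk[r+\Delta/6]{p}\|\le c\bigl((r+\Delta/6)+3\Delta\bigr)$ by the displayed bound, and with $r<\Delta$ this gives $N\le 25c$, hence $\|S\cap\disk[r]{p}\|\le 50cr\le 54cr$. Combining the two cases shows that $S$ is $54c$-packed.

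The routine ingredients are the chord estimate $\|\text{segment}\cap\disk[r]{p}\|\le 2r$ and the arithmetic with the constants (where choosing the threshold between the two cases as $r=\Delta$ makes the bookkeeping clean). The only step that needs a little care is the sub-segment argument for small $r$: one must check that the length-$\Delta/6$ piece really lies in the slightly enlarged ball and that summing over all edges meeting $\disk[r]{p}$ does not over-count, which is immediate once the edge is parametrized by arclength from the intersection point. I do not expect a genuine obstacle here; the one thing to keep track of is the interplay of the two scales $\Delta/3$ (minimum edge length) and $3\Delta$ (simplification error), whose ratio $9$ is exactly what inflates the de Berg--Cook--Gudmundsson-style constant from $6c$ to $54c$.
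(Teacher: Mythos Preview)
Your proof is correct and follows essentially the same template as the paper's: split on the size of $r$, invoke Lemma~\ref{lem:simpacked} directly for large radii, and for small radii count the edges of $S$ meeting $\disk[r]{p}$ via the minimum-edge-length bound $\Delta/3$ before applying Lemma~\ref{lem:simpacked} once more. The only cosmetic differences are that the paper argues by contradiction and splits at $r=\mu:=d_F(X,S)$ rather than at $r=\Delta$; your direct argument with the fixed threshold $\Delta$ and the conservative $\Delta/6$ sub-segment length is arguably cleaner and avoids any implicit reliance on $\mu\ge\Delta/3$.
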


\begin{proof}
    Let $\mu=d_F(X,S)$ and observe that $\mu\leq3\Delta$. Assume for the sake of contradiction, that $\|S\cap b(p,r)\|>54cr$ for some $b(p,r)$. If $r\geq\mu$ set $r'=2r$. Then by Lemma \ref{lem:simpacked} together with our assumption
    
    $$||X\cap b(p,r')||\geq ||X\cap b(p,r+\mu)||\geq ||S\cap b(p,r)||>54cr>6cr=3cr'.$$
    This contradicts the fact that $X$ is $c$-packed.
    If $r<\mu$ let $U$ denote the segments of $S$ intersecting $b(p,r)$ and let $k=|U|$. $k>54cr/2r=27c$.
    
    $$||S\cap b(p,2\mu)||\geq ||S\cap b(p,r+\mu)||\geq ||U\cap b(p,r+\mu)||\geq k\Delta/3 \geq k\mu/9=3c\mu$$
    
    since every segment of $B$ has minimal length $\Delta/3$. Hence by Lemma \ref{lem:simpacked}
    
    $$||X\cap b(p,3\mu)||\geq ||S\cap b(p,2\mu)||> 3c\mu,$$
    again contradicting the fact, that $X$ is $c$-packed.
\end{proof}

\begin{lemma}\label{lem:key-c-packed}
    Let $S$ be a $\Delta$-good simplification of some $c$-packed curve $P$. Then $$|\mathcal{T}|=O\left(nc^2\right).$$ 
\end{lemma}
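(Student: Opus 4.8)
The plan is to bound the number of generating triples $(e, Y_1, Y_2) \in \mathcal{T}_S$ by charging each such triple to a pair of generating subcurves that come ``close'' to the edge $e$, and then using the $c$-packedness of $S$ (via Lemma~\ref{lem:driemel1}, which gives that $S$ is $54c$-packed) together with the fact that generating subcurves have length $\Omega(\Delta)$ to control, for each fixed edge $e$, how many generating subcurves can come within distance $8\Delta$ of $e$. Since there are $O(n)$ generating subcurves and $O(n)$ edges on $S$ in total, the key quantity to bound is: for a fixed edge $e$ of $S$, the number of generating subcurves $Y$ such that there are points $p \in e$ and $p' \in Y$ with $\|p - p'\| \leq 8\Delta$. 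Call such a $Y$ \emph{near} $e$. If I can show that for each fixed edge $e$ the number of generating subcurves near $e$ is $O(c)$, then the number of triples containing $e$ is $O(c^2)$, and summing over the $O(n)$ edges gives $|\mathcal{T}_S| = O(nc^2)$, as desired.

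First I would fix an edge $e = \overline{p_i\,p_{i+1}}$ of $S$. By Definition~\ref{def:goodsimp}(i), every edge of $S$ has length at least $\Delta/3$; hence every generating subcurve (being a subcurve of $S$ consisting of between one and four edges that starts and ends at vertices of $S$) has arclength at least $\Delta/3$. Moreover a generating subcurve has arclength at most $4\|S\|$ trivially, but more usefully, a generating subcurve near $e$ must contain a point within distance $8\Delta$ of (say) $p_i$. Now consider the ball $b_{r}(p_i)$ with $r = 8\Delta + \|e\| + 8\Delta$; more precisely, any generating subcurve $Y$ near $e$ has a point within distance $8\Delta$ of some point of $e$, and since $e$ itself is contained in $b_{\|e\|/2}(m)$ around its midpoint $m$, the subcurve $Y$ intersects the ball $b_{\rho}(m)$ with $\rho = 8\Delta + \|e\|/2$. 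I would then split into two regimes depending on whether $\|e\|$ is large or small relative to $\Delta$. When $\|e\| = O(\Delta)$: every generating subcurve near $e$ has a point in $b_{O(\Delta)}(m)$ and has length $\geq \Delta/3$, so (restricting each such $Y$ to a subsegment of length exactly $\Delta/3$ that still lies inside an $O(\Delta)$-ball around $m$, which is possible since $Y$ is connected, or at worst inside $b_{O(\Delta)}(m)$ after enlarging the radius by $\Delta/3$) these restricted pieces have total length $\geq (\#\text{near subcurves}) \cdot \Delta/3$ and lie inside $b_{O(\Delta)}(m)$, whose intersection with $S$ has length $\leq 54c \cdot O(\Delta) = O(c\Delta)$ by $54c$-packedness; hence the number of near subcurves is $O(c)$. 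When $\|e\|$ is large: $e$ itself has length $\|e\|$, and a generating subcurve near $e$ has a point within $8\Delta$ of $e$, so it intersects the capsule (Minkowski sum of $e$ with a ball of radius $8\Delta$); I would cover this capsule by $O(\|e\|/\Delta)$ balls of radius $O(\Delta)$, and in each such ball the above length argument shows only $O(c)$ of the near subcurves can contribute their $\Delta/3$-length piece — but this naively gives $O(c\|e\|/\Delta)$, which is too weak. The fix is to charge via the edges of $S$ covered by $e$'s ``shadow'': since $S$ is $54c$-packed, the \emph{total} length of $S$ within distance $8\Delta$ of $e$ is $O(c(\|e\| + \Delta))$, and each near generating subcurve — having a vertex endpoint and length $\geq \Delta/3$ — contributes at least $\Delta/3$ to this total length, except that subcurves may overlap heavily. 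To handle overlap, I would observe that each vertex of $S$ is the left endpoint of at most four generating subcurves, so it suffices to bound the number of \emph{vertices} of $S$ near $e$, i.e., within distance $8\Delta + (\text{diam of a generating subcurve})$ of $e$. Using again $54c$-packedness and the minimum edge length $\Delta/3$, the number of vertices of $S$ within an $O(\Delta)$-ball is $O(c)$, and within the capsule around $e$ it is $O(c \cdot (1 + \|e\|/\Delta))$; but then summing $\|e\|/\Delta$ over all edges $e$ of $S$ gives $\|S\|/\Delta$, which is not bounded by $O(n)$ in general.

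The main obstacle, then, is exactly this: a naive per-edge count produces a spurious factor of $\|S\|/\Delta$ (the relative arclength), which we must avoid since the whole point is a strongly-polynomial bound. The right way around this is a \emph{global} double-counting argument rather than a per-edge one. I would count pairs $(Y_1, Y_2, e)$ by instead counting, over all pairs of generating subcurves $(Y_1, Y_2)$, the number of edges $e$ of $S$ that are simultaneously near both $Y_1$ and $Y_2$; and crucially, an edge near $Y_1$ and near $Y_2$ witnesses that $Y_1$ and $Y_2$ are within distance $16\Delta + \|e\|$ of each other. For a fixed $Y_1$: the number of generating subcurves $Y_2$ within distance $16\Delta$ of $Y_1$ (ignoring for a moment the $\|e\|$ term) is $O(c)$ by the same packing-plus-min-length argument applied to the $O(1)$-complexity capsule around $Y_1$, since $Y_1$ has constant complexity and diameter $O(\|Y_1\|)$, and the capsule has length $O(\|Y_1\| + \Delta)$ measured against $S$, giving $O(c)$ subcurves $Y_2$ after dividing by $\Delta/3$. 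Then for each such pair $(Y_1, Y_2)$, the edges $e$ near both are edges intersecting the intersection of the two capsules, a set of diameter $O(\|Y_1\| + \|Y_2\| + \Delta) = O(\Delta)$ (since generating subcurves have $O(1)$ edges each of length... wait, edges can be long) — and here I would again exploit that edges near a fixed $Y_1$, being edges of a $54c$-packed curve $S$ whose endpoints lie within $O(\|Y_1\| + \Delta)$ of $Y_1$, number only $O(c)$. Assembling: $O(n)$ choices of $Y_1$, times $O(c)$ choices of $Y_2$, times $O(c)$ choices of $e$, yields $|\mathcal{T}_S| = O(nc^2)$. I expect the delicate point to be making precise the claim ``the number of edges/subcurves of a $c'$-packed curve that come within distance $O(\Delta)$ of a fixed constant-complexity curve of length $O(\Delta)$ is $O(c')$'' — specifically, correctly handling the case where some edges of $S$ are much longer than $\Delta$ (so they are near $Y_1$ but contribute only a short piece to any bounded ball), which is resolved by noting that such a long edge can be near only a bounded region of $Y_1$ and still, by $c'$-packedness applied at the right scale, only $O(c')$ such long edges can pass near $Y_1$.
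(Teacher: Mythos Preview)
Your proposal has a genuine gap that mirrors exactly the obstacle you correctly identified in your first attempt. You recognize that a per-edge count fails because a long edge $e$ can be near $O(c\|e\|/\Delta)$ generating subcurves, and summing $\|e\|/\Delta$ over all edges gives the relative arclength $\|S\|/\Delta$ rather than $O(n)$. But your ``global double-counting'' fix has the same defect: in step~2 you fix $Y_1$ and claim only $O(c)$ subcurves $Y_2$ can be near it, yet $Y_1$ is itself a generating subcurve of up to four edges of $S$, and those edges can be arbitrarily long. The capsule around $Y_1$ of radius $O(\Delta)$ sits inside a ball of radius $O(\|Y_1\|+\Delta)$, so $54c$-packedness only gives $\|S\cap\text{capsule}\| = O(c(\|Y_1\|+\Delta))$, and dividing by the minimum subcurve length $\Delta/3$ yields $O(c\|Y_1\|/\Delta)$ neighbors, not $O(c)$. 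Summing $\|Y_1\|/\Delta$ over all $O(n)$ generating subcurves again produces $O(\|S\|/\Delta)$. Your final sentence gestures at ``the right scale'' but never specifies it; for a long $Y_1$ there simply is no single ball at any scale that simultaneously contains only $O(c)$ neighbors and gives each neighbor length comparable to the ball's radius.

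The missing idea is the paper's charging scheme: associate each triple $(e,Y_1,Y_2)$ not to $e$, nor to an arbitrary $Y_1$, but to the \emph{shortest} of the three. If $a$ is this shortest element and has length $\|a\|$, then (i) $a$ fits in a ball of radius $\|a\|$, (ii) every other element of any triple associated to $a$ has length $\geq \|a\|$ and comes within $16\Delta$ of $a$, so it contributes at least $\|a\|$ of arclength inside a ball of radius $2\|a\|+16\Delta$ around $a$. Now $54c$-packedness bounds the total arclength in that ball by $54c(2\|a\|+16\Delta)$, and since $\|a\|\geq \Delta/3$ this is $O(c\|a\|)$; dividing by the contribution $\|a\|$ (not by $\Delta/3$) gives $O(c)$ participating containers, hence $O(c^2)$ triples per container $a$, hence $O(nc^2)$ overall. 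The crucial point is that charging to the shortest element aligns the ball radius with the minimum piece length, so the ratio is scale-free.
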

\begin{proof}
    Consider the set of generating triples $\mathcal{T}$ described in Definition~\ref{def:triples}. Call the length $\|\cdot\|$ of a curve the sum of lengths of its edges. Note that for any container $a$ that is part of some triple in $\mathcal{T}$, $||a||\geq\Delta/3$ by Definition \ref{def:goodsimp}.
    
    Now we associate every triple $(e,c_1,c_2)\in\mathcal{T}$ to the shortest of $e,c_1$ or $c_2$. As $e$ can also be interpreted as a $1$-container, we will call every element of a generating triple a container in this proof.
    
    Consider some container $a$ and all its associated triples. We want to bound the number $k$ of containers that can be part of associated triples by $O(c)$.
    
    Find some enclosing ball $\disk[||a||]{p}$ of $a$. Since for every container $b$ that can participate in an associated triple of $a$, only if there are two points $p_1\in a$ and $p_2\in c$ with $||p_1-p_2||\leq16\Delta$, $\disk[||a||+16\Delta]{p}$ contains at least one point of all $k$ associated other containers. Since $a$ is shorter than any other associated containers, $\disk[2||a||+16\Delta]{p}$ contains a section of all other containers, of length at least $||a||$. Furthermore, note that any edge of $S$ is in at most $6$ containers. Thus $k$ is bounded by 
    
    \begin{align*}
        k&\leq \frac{6||S\cap\disk[2||a||+16\Delta]{p}||}{||a||}\leq\frac{324c(2||a||+16\Delta)}{||a||}\\
        &= \frac{648c||a||}{||a||} + \frac{5184c\Delta}{||a||}\leq648c + \frac{15552c\Delta}{\Delta}\leq 16200c = O(c),
    \end{align*}
    where the second inequality follows, as $S$ is $54c$-packed by  Lemma~\ref{lem:driemel1}.
    Hence there are $O(c^2)$ triples associated to $a$, as any pair of participating containers could form a triple with $a$.
    Summing over all $O(n)$ containers implies the claim.
\end{proof}

Note that the above proof does not imply that a single container takes part in only $O(c^2)$ triples. Indeed, there may be a very long container that takes part in many triples, but all triples are associated to the shorter ones.

\begin{lemma}\label{lem:cpackedcandidates}
Let $S$ be a $\Delta$-good simplification of some $c$-packed curve $P$ in $\RR^2$. Then the set of generating triples $\mathcal{T}$ can be computed in $O(nc^2 \log(n))$ time and space.
\end{lemma}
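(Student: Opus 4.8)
The plan is to make the counting in the proof of Lemma~\ref{lem:key-c-packed} algorithmic. As in that proof, call each edge of $S$ and each generating subcurve of $S$ a \emph{container}, and charge every generating triple $(e,Y_1,Y_2)\in\mathcal{T}$ to the container among $e,Y_1,Y_2$ of smallest length (ties broken by a fixed global order). The proof shows that the only containers that can occur together with a container $a$ in a triple charged to $a$ are those $b$ with $\|b\|\geq\|a\|$ that lie within distance $16\Delta$ of $a$, and that there are only $O(c)$ of these; denote this set $\mathrm{Long}(a)$, and note $\sum_a|\mathrm{Long}(a)|=O(nc)$ by the same argument. Given $\mathrm{Long}(a)$ for all $a$, the whole of $\mathcal{T}$ can then be listed in $O(\sum_a|\mathrm{Long}(a)|^2)=O(nc^2)$ time: for each $a$ iterate over the $O(c^2)$ ordered triples obtained by picking one edge and two generating subcurves from $\mathrm{Long}(a)\cup\{a\}$ with $a$ among the three; for such a triple $(e,Y_1,Y_2)$ decide in $O(1)$ time whether it lies in $\mathcal{T}$ — this amounts to computing, for $i\in\{1,2\}$, the $O(1)$ subintervals of $e$ of points within $8\Delta$ of $Y_i$ (one per edge of $Y_i$) and testing whether the two interval unions intersect — and, using the global order, emit it only when $a$ is its charged representative, so every triple is output exactly once.

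It remains to compute the $\mathrm{Long}(a)$ within budget, and the plan is to do this incrementally. Sort the $O(n)$ containers by non-increasing length (ties by the global order) and maintain a dynamic planar range-reporting structure $D$ over the edges of the already-processed containers, each edge tagged with the container it was inserted for; since each container has $O(1)$ edges there are $O(n)$ insertions in total. To process $a$: query $D$ with each of the $O(1)$ edges of $a$ for all stored edges within distance $16\Delta$, collect the tags, and then insert the edges of $a$. Such a query is a capsule (stadium) range query around a segment, which decomposes into two disk ranges and one rectangle range, each supported in the plane with $O(\log n)$ update time and $O(\log n)$ time per reported item. By construction $D$ holds only containers of length $\geq\|a\|$; since every container of $S$ has length $\geq\Delta/3$ (Definition~\ref{def:goodsimp}~(i)), every edge of $S$ lies in $O(1)$ generating subcurves, and $S$ is $54c$-packed (Lemma~\ref{lem:driemel1}), the ball argument of Lemma~\ref{lem:key-c-packed} (enclose $a$ in a ball of radius $\|a\|$ and walk a length $\|a\|$ along any reported container) shows each query returns only $O(c)$ containers, i.e. exactly $\mathrm{Long}(a)$ up to the $O(1)$ per-container slack. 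Adding up: $O(n\log n)$ for the sort, $O(n\log n)$ for the inserts, $O((n+\sum_a|\mathrm{Long}(a)|)\log n)=O(nc\log n)$ for the queries, and $O(nc^2)$ for the enumeration, giving $O(nc^2\log n)$ time; the space is dominated by the output, hence $O(nc^2)$, which is within the claimed bound.

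The step I expect to be the main obstacle is making these proximity queries output-sensitive in the presence of arbitrarily long edges. A uniform grid of cell size $\Theta(\Delta)$ is not affordable, since one long edge of $S$ meets unboundedly many cells, and without the charge-to-shortest / process-by-decreasing-length idea a single long edge can lie within $16\Delta$ of $\Omega(n)$ other edges, so a direct all-close-pairs computation would be hopeless. Two facts make the approach go through: first, a container $a$ is only ever queried against containers that are at least as long, so the answer provably has size $O(c)$; and second, the query radius $16\Delta$ is at most a constant times the length of any edge of $S$ (all of which are $\geq\Delta/3$), so the query region has bounded aspect ratio relative to the features it can contain and the packing bound of Lemma~\ref{lem:driemel1} applies — with a marginally more careful estimate one replaces the radius by $\Theta(\|a\|)$ when $\|a\|\gg\Delta$. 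The remaining details, namely the constant-time membership test for a candidate triple and the tie-breaking that makes each triple emitted once, are routine.
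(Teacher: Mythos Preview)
Your proof is essentially correct and shares the same skeleton as the paper's: both rely on the charging argument of Lemma~\ref{lem:key-c-packed} to bound close-pair counts by $O(nc)$, then enumerate the $O(nc^2)$ candidate triples and test each in $O(1)$. The difference is in the proximity primitive. The paper computes \emph{all} close edge--edge pairs in one batch, by building (output-sensitively) the arrangement of the edges of $S$ together with the boundaries of their $8\Delta$-neighbourhoods; since $S$ is $O(c)$-packed this arrangement has $O(nc)$ vertices and can be built in $O(nc\log n)$ time by a standard sweep. Each pair is then associated to its shorter edge, and triples are formed and lifted from edge-triples to container-triples.

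Your route instead processes containers by decreasing length and uses a dynamic segment-reporting structure. This is conceptually fine but buys nothing and is technically shakier: your claim of $O(\log n)$ update and $O(\log n)$-per-report for capsule queries over arbitrary segments is not a standard result (dynamic segment-intersection reporting generally costs more than a single log). More importantly, the incremental-by-length idea is unnecessary: the total number of close edge pairs is already $O(nc)$ \emph{without} restricting to longer partners, so the batched computation the paper uses is both simpler and on firmer ground. Your worry that ``a single long edge can lie within $16\Delta$ of $\Omega(n)$ other edges'' is true but harmless---each such pair is charged to the short partner, and the global count is still $O(nc)$. Everything else in your argument (the $O(1)$ membership test, the tie-breaking, the lift from edges to generating subcurves) matches the paper.
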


\begin{proof}
By Lemma~\ref{lem:key-c-packed} the number of generating triples is bounded by $O(nc^2)$. We compute the set of triples as follows. We first find all close edge-edge pairs. Since $S$ is $O(c)$-packed these are only $O(nc)$ many. This can be done with an output-sensitive intersection algorithm in $O(nc\log(nc))$ time, computing the arrangement of edges, and boundaries of $\Delta$ neighbourhoods of edges. Now similarly to the proof of Lemma \ref{lem:key-c-packed}, associate every pair to the shorter of the two edges, resulting in $O(n)$ sets of pairs with $O(c)$ members at most. From this set, we form $O(nc^2)$ triples $T'$ of edges. For every triple $t$ in $T'$, we add every permutation of $t$ to $T'$. Note that for every triple $(e,c_1,c_2)$ in $T_S$ there are edges $e_1\in c_1$ and $e_2\in c_2$, such that $(e,e_1,e_2)$ is in $T'$. This follows from the fact that there is a shortest edge among $e,e_1$ and $e_2$, and hence the triple would have been added to $T'$. As all containers in any triple of $T_S$ consists of at most $3$ edges, for any $(e,e_1,e_2)\in T'$ form all possible triples $(e,c_1,c_2)$ such that $e_1\in c_1$ and $e_2\in c_2$, with $c_1$ and $c_2$ containing at most three edges. This results in a superset of $T_S$ of size $O(nc^2)$ computed in $O(nc\log (n) + nc^2)$ time. Note that duplicates can be removed in $O(nc^2\log(n))$ time. Similarly, the condition for a generating triple can be checked in $O(1)$ time, thus removing all triples, that are not generating triples, takes $O(nc^2)$ time.
\end{proof}

\begin{observation}
    In $\RR^d$ we can compute the set of generating triples $\mathcal{T}$ in $O(n^2 + nc^2)$ time, by first na\"ively finding all close edge pairs ( $O(n^2) time$ and $O(nc)$ space) and for each edge marking all close subcurves via this approach. Afterwards we form all triples, which takes $O(|\mathcal{T}|)=O(nc^2)$ time and space.
\end{observation}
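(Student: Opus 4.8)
The plan is to closely follow the algorithm behind Lemma~\ref{lem:cpackedcandidates}, with the sole change that its one planar ingredient — the output-sensitive segment-intersection step, which is only available in $\RR^2$ — is replaced by a brute-force scan over all pairs of edges of $S$. This scan works in any fixed dimension $d$ and costs $O(n^2)$ time, and everything downstream of it is dimension-free.

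Concretely, I would first compute the set of \emph{close edge pairs}: iterate over all $\binom{n-1}{2}$ pairs $(e,f)$ of edges of $S$ and decide, with $O(d)=O(1)$ simple operations each, whether there exist $p\in e$ and $q\in f$ with $\|p-q\|\le 16\Delta$, retaining only the pairs that pass. This takes $O(n^2)$ time. Associating each retained pair to its shorter edge and invoking the packing estimate from the proof of Lemma~\ref{lem:key-c-packed} — which uses that every edge of $S$ has length at least $\Delta/3$ (Definition~\ref{def:goodsimp}(i)) and that $S$ is $54c$-packed (Lemma~\ref{lem:driemel1}) — shows that each edge is the longer partner in only $O(c)$ close pairs, so there are $O(nc)$ close pairs in total and $O(nc)$ space suffices to store them, organised as lists $P_a$ where $P_a$ contains the $O(c)$ edges close to $a$ that are no shorter than $a$.

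From here I would reuse the construction of Lemma~\ref{lem:cpackedcandidates}: for each edge $a$ and each ordered pair $(a,v_1),(a,v_2)\in P_a$, emit the edge triple $(a,v_1,v_2)$, then close the emitted set under permutations and remove duplicates to obtain a set $T'$ of $O(nc^2)$ edge triples; finally, for every $(e,e_1,e_2)\in T'$ enumerate the $O(1)$ generating subcurves $Y_1\ni e_1$ and $Y_2\ni e_2$, test for each resulting triple $(e,Y_1,Y_2)$ the defining condition of Definition~\ref{def:triples} (whether $e$ has a point within $8\Delta$ of $Y_1$ and within $8\Delta$ of $Y_2$, which is $O(1)$ since the points of $e$ within $8\Delta$ of a generating subcurve form a union of $O(1)$ intervals, one per edge), and output exactly the triples that pass. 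Correctness — that the output equals $T_S=\mathcal{T}$ — is inherited from Lemma~\ref{lem:cpackedcandidates}: no triple is output unless it satisfies Definition~\ref{def:triples}, and conversely if $(e,Y_1,Y_2)\in T_S$ then a common witness point on $e$ lies within $8\Delta$ of some edge $e_1\in Y_1$ and some edge $e_2\in Y_2$, making $(e,e_1)$, $(e,e_2)$ and $(e_1,e_2)$ close pairs (the last by the triangle inequality and the $16\Delta$ threshold), so whichever of $e,e_1,e_2$ is shortest witnesses that $(e,e_1,e_2)\in T'$ and hence $(e,Y_1,Y_2)$ is tested. All of this runs in $O(nc^2)$ time and $O(nc^2)$ space, for an overall bound of $O(n^2+nc^2)$ time and $O(nc^2)$ space.

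The only subtle point is that one must \emph{not} try to enumerate triples by building, for every edge $e$, the list of all generating subcurves close to $e$ and then pairing them up: a single long edge can be close to $\Theta(n)$ subcurves, which would blow up this approach to $\Theta(n^3)$ in the worst case. Routing the enumeration through the shortest edge of each triple — equivalently, only pairing up the $O(c)$-size ``longer-partner'' lists $P_a$ — is exactly what caps the number of triples, and hence the work, at $O(nc^2)$; the supporting fact, that each edge has only $O(c)$ longer close partners, is precisely the packing estimate already established in the proof of Lemma~\ref{lem:key-c-packed}. Since this is the only non-routine ingredient and it is borrowed wholesale, the remainder is bookkeeping.
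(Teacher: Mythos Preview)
Your argument is correct and is precisely the paper's approach: take the algorithm of Lemma~\ref{lem:cpackedcandidates} and swap its one planar ingredient (the output-sensitive intersection step) for a brute-force $O(n^2)$ scan over all edge pairs, then reuse the shortest-edge routing together with the packing bound from Lemma~\ref{lem:key-c-packed} to cap the triple enumeration at $O(nc^2)$. One small wording slip: where you write ``each edge is the longer partner in only $O(c)$ close pairs'' you mean the \emph{shorter} partner --- the packing argument bounds, for each edge $a$, the number of close edges that are \emph{no shorter} than $a$, i.e.\ $|P_a|=O(c)$, which is exactly what you go on to use.
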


From Lemma~\ref{lem:cpackedcandidates}, Lemma \ref{lem:runtime:main:basic}, and Theorem~\ref{thm:alg_candidates} the following result is immediate, since the subset of candidates that results from a fixed triple can be computed in $O(1)$.

\begin{restatable}{theorem}{cpackedresult}
\label{thm:cpackedresult}
    Let $P\in\XX^d_n$ be $c$-packed and $\Delta\in\RR_+$. Let $k^*$ be the minimum size of a solution to the $\Delta$-coverage problem on $P$. There exists an algorithm that outputs an $(11,O(\log(k^*)))$-approximate solution. The algorithm needs
    \begin{enumerate}
        \item $\Tilde{O}((k^*)^2n + n c^2 k^*)$ expected time and $\Tilde{O}(k^*n + nc^2)$ space in $\RR^2$,
        \item $\Tilde{O}((k^*)^2n + n c^2 k^* + n^2)$ expected time and $\Tilde{O}(k^*n + nc^2)$ space in $\RR^d$.
    \end{enumerate}
\end{restatable}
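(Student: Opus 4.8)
The plan is to derive Theorem~\ref{thm:cpackedresult} as a direct consequence of the analysis already in place, simply specialising the bound on the candidate set size to $c$-packed curves. First I would run \textsc{ApproxCover} exactly as before: it builds a $\Delta$-good simplification $S$ via the algorithm of Theorem~\ref{thm:runt:simpl} (costing $O(n\log^2 n)$ time and $O(n)$ space) and then the candidate set $B$ from the generating triples $T_S$. By Theorem~\ref{thm:alg_candidates}, $B$ contains a structured $8\Delta$-covering of $S$ of size at most $12k^*$, so the hypothesis of Lemma~\ref{lem:runtime:main:basic} holds with $\alpha=8$ and $\beta=12$. That lemma immediately yields that the output is an $(11,O(\log k^*))$-approximate solution---a guarantee that does not depend on $c$-packedness---and that, on top of the cost of generating $S$ and $B$, the algorithm runs in expected $\widetilde{O}((k^*)^2 n + |B| k^*)$ time and $\widetilde{O}(k^* n + |B|)$ space.

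It remains to substitute the $c$-packed bounds. Lemma~\ref{lem:key-c-packed} (which in turn rests on the packedness-preservation property of Lemma~\ref{lem:driemel1}) gives $|T_S| = O(nc^2)$, and since a fixed triple contributes exactly one candidate to $B$ by Definition~\ref{def:candidates}, we get $|B| = O(nc^2)$. For the generation cost, in $\RR^2$ Lemma~\ref{lem:cpackedcandidates} computes $T_S$ in $O(nc^2\log n)$ time and space, and constructing $B$ from $T_S$ adds only $O(|T_S|)$; in $\RR^d$ the observation following Lemma~\ref{lem:cpackedcandidates} computes $T_S$ in $O(n^2 + nc^2)$ time and $O(nc^2)$ space.

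Plugging $|B| = O(nc^2)$ into Lemma~\ref{lem:runtime:main:basic} and adding the generation costs then gives, in $\RR^2$, expected time $\widetilde{O}((k^*)^2 n + nc^2 k^*)$---the $O(n\log^2 n)$ and $O(nc^2\log n)$ generation terms being absorbed---and space $\widetilde{O}(k^* n + nc^2)$; in $\RR^d$ the extra $O(n^2)$ generation term survives, giving time $\widetilde{O}((k^*)^2 n + nc^2 k^* + n^2)$ and the same space bound. Since the simplification step and the candidate construction are the only parts whose cost differs from the general case (Theorem~\ref{thm:runtime:main:canon}), no further work is needed.

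As for obstacles: there is essentially no new difficulty here---the real content lives in Lemma~\ref{lem:key-c-packed} and in the $c$-packedness of the simplification (Lemma~\ref{lem:driemel1}), both already proved. The only point requiring care is bookkeeping: making sure the generation-time terms ($O(nc^2\log n)$ in the plane, $O(n^2+nc^2)$ in $\RR^d$) are correctly identified as dominated or not by the $\widetilde{O}((k^*)^2 n + nc^2 k^*)$ term coming from Lemma~\ref{lem:runtime:main:basic}, and that the logarithmic factors are swept into the $\widetilde{O}(\cdot)$ notation consistently.
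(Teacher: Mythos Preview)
Your proposal is correct and follows essentially the same approach as the paper, which simply states that the result is immediate from Lemma~\ref{lem:cpackedcandidates}, Lemma~\ref{lem:runtime:main:basic}, and Theorem~\ref{thm:alg_candidates} together with the fact that each triple yields its candidate in $O(1)$ time. Your write-up is in fact more detailed than the paper's one-line justification, spelling out the bookkeeping for the generation costs and the absorption of logarithmic terms explicitly.
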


We expect that the running time of the algorithm in $\RR^d$ can be improved via an explicit construction of the arrangement of suitable linear approximations of $\Delta$-neighbourhoods of edges of the simplification $S$ of $P$. As $P$ is $c$-packed, it is conceivable that this can be done such that the arrangement complexity can be bounded by $O(\poly(c)\cdot n)$. However, constructing the arrangement in an output-sensitive manner requires more care and we leave this for future work.

\section{Algorithm variant with implicit weight update}\label{sec:implicit:weights}

In this section we describe a variant of Algorithm~\ref{alg:main} that will lead to better running times and space requirements in terms of $n$, the number of vertices of the input curve, at the expense of introducing a (polylogarithmic) dependency on the arclength of the input curve.

Instead of defining a function  \textsc{GenerateCandidates} to compute the set $B$ explicitly, we will maintain the discrete probability distribution $\Dist_i$ of Algorithm~\ref{alg:main} on an implicitly defined approximate candidate set. To this end, we introduce the following definition of a candidate set which we implicitly generate from the edges of the simplification $E(S)$.

\begin{definition}[$\eps$-approximate candidate set] \label{def:eps:candidates}
Let $E$ be a set of edges in $\RR^d$ and let $\eps > 0$ be a parameter. We can approximate the candidate space induced by $Q$ by subsampling the edges as follows. Define $G_{\delta}:=\{i\cdot \delta \mid i\in \ZZ\}$. For each edge $e_i \in E$ consider the set 
\[ X_i = \left( [0,1] \cap G_{\eps/\lambda_i} \right) \cup \{ 1 \} \]
where $\lambda_i$ is the length of the edge $e_i$. The $\eps$-approximate candidate set induced by $E$ is the set 
\[ \CandidateSet{\eps,E} = \bigcup_{i=1}^{m} \{ (x,y,i) \mid x \in X_i, y \in X_i \}\]
Assuming $E$ fixed, observe that for any edge $p \in \CandidatesArg{E}$, there exists an edge $p' \in Z_{\eps,E}$, such that $d_F(e,e')\leq \eps$. 
\end{definition}

In the following, we use the fact that the feasible set restricted to this candidate set has a nice structure that can be stored implicitly and can be computed fast. In particular, Lemma~\ref{lem:rect} states that the feasible set, when restricted to the subedges of an edge $e \in E$, can be written as the union of constantly many rectangles.
This structure is illustrated in Figure~\ref{fig:feasset}. The set system of feasible sets $ \{F_{\Delta} (t) \mid t \in \Param{n} \} $ was defined in Section~\ref{sec:overview:ana:main}.

\begin{figure}[t]
    \centering
    \includegraphics[width=0.9\textwidth]{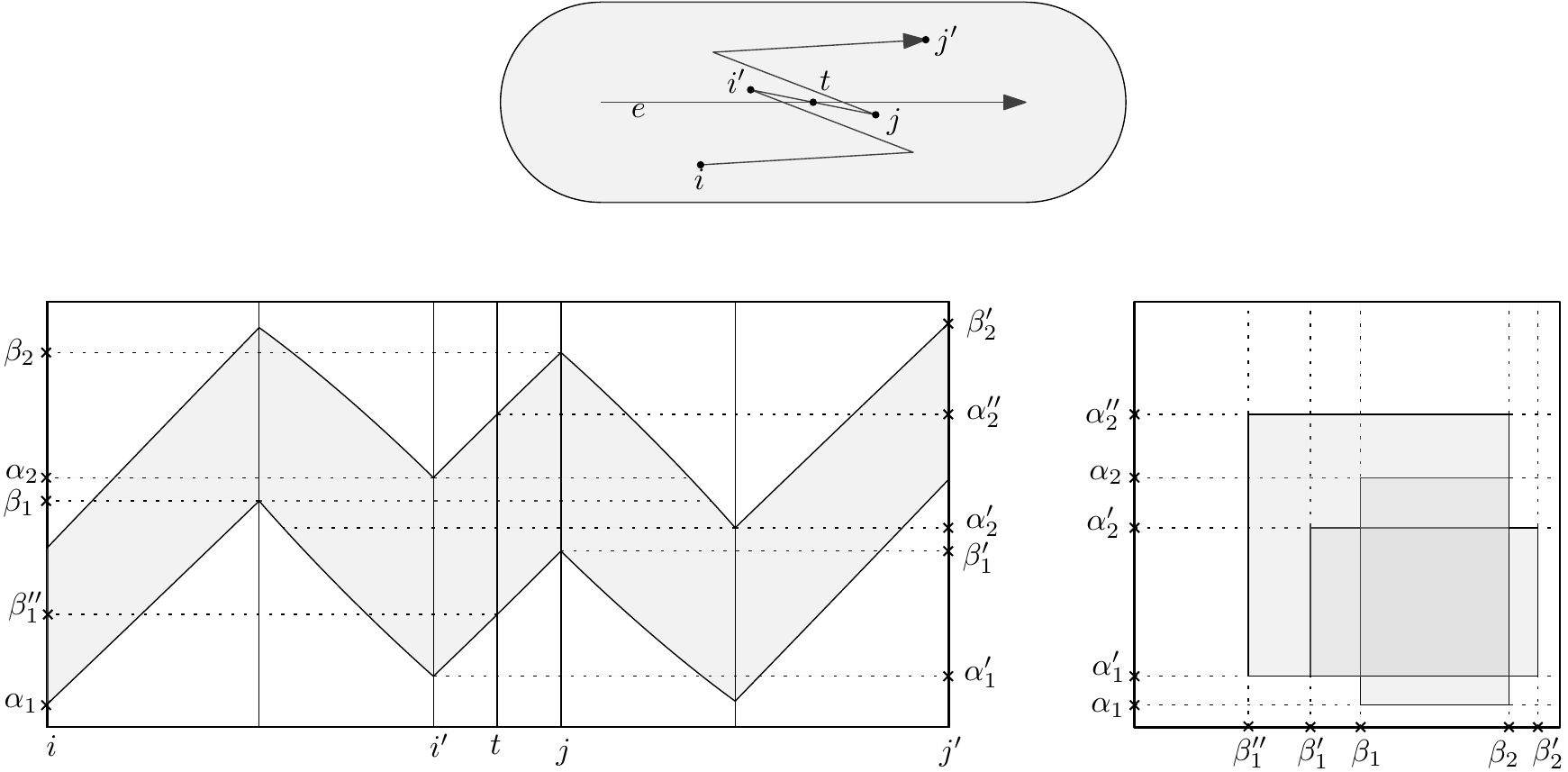}
    \caption{Structure of a feasible set. The image shows an example of the structure of the feasible set $F_\Delta(t)$ restricted to the subedges of some edge $e\in E$ for some curve $S$. The top image displays the edge $e$ and the subcurve $S[t_i,t_{j'}]$ that includes all subcurves of $S$ with complexity $4$ that contain $t$. The left image shows the $\Delta$-freespace diagram of $e$ and $S[t_i,t_{j'}]$ with notations of the relevant parameters for computing the corresponding rectangles in the parameter space of $e$ based on Lemma~\ref{lem:rect}. The right image shows the feasible set as a union of these rectangles. The rectangle $[\alpha_1,\alpha_2]\times[\beta_1,\beta_2]$ corresponds to the tuple $(i,j)$, the rectangle $[\alpha'_1,\alpha'_2]\times[\beta'_1,\beta'_2]$ to $(i',j')$ and $[\alpha'_1,\alpha''_2]\times[\beta''_1,\beta_2]$ to $(i',j)$. All other tuples correspond to rectangles that are contained in these three rectangles.}
    \label{fig:feasset}
\end{figure}

\subsection{Data structure for sampling}\label{sec:ds}

We describe a simple static data structure to store the discrete probability distribution $\Dist_i$ over the finite set $Z_{\Delta,E(S)}$, which we use in our adaptation of Algorithm~\ref{alg:main}. The data structure takes as input a polygonal curve $S \in \XX^{d}_n$, and a sequence of values $t_1,\dots,t_i \in [0,1]$, which are used for the weight update.  The data structure should support the following operations.
\begin{enumerate}[(1)]
    \item Sample and return an (explicit) element of $Z_{\Delta,E(S)}$ according to $\Dist_i$.
    \item Given a query point $t$, determine if $\Pr[\Dist_{i}]{F_{\Delta}(t)}$ is at most $\frac{1}{r}$. 
\end{enumerate}

\subparagraph{The data structure}
For each edge $e$ of $E(S)$, we will store an arrangement $\mathcal{A}_e$ of the horizontal and vertical lines that delineate the boundaries of the rectangles that form the sets $F_{\Delta}(t_1),\dots F_{\Delta}(t_i)$ when restricted to the parameter space of subedges of $e$. For each cell $C$ of this arrangement, we store the following information (refer to Figure~\ref{fig:arrangement:feas} for an illustration): 
\begin{enumerate}[(i)]
\item $g_C$, the number of grid points in the cell $|C \cap Z_{\Delta,e} |$
\item $s_C$, the number of feasible sets $F_{\Delta}(t_1),\dots F_{\Delta}(t_i)$ that contain $C$
\end{enumerate}

The weight function $w:Z_{\Delta,E(S)} \rightarrow \RR_{+}$ that defines the distribution $\Dist_i$ can then be evaluated on the cell $C$ as 
\[ w(C) = 2^{s_C} g_C\]

In particular, this weight function defines the probability distribution $\Dist_i$ in the following sense. The probability of a grid point $Q \in Z_{\Delta,E(S)} \cap C$ is given by \[\Pr{Q} = 2^{s_C}/w(Z_{\Delta,E(S)}) \]

For computing the set of cells, for each edge $e \in E(S)$, we build the arrangement $\mathcal{A}_e$ by first collecting the coordinates of the horizontal and vertical lines that delineate the feasible sets and then sorting them by $x$ (resp. $y$)-coordinate. Since the arrangement is a (non-uniform) grid, the information for all cells can be stored in an array with appropriate indexing.
Initially, for $i=0$, we only have one cell $C$ for each edge $e_j \in E(S)$, which is the unit square, and we set $s_C=0$, and $g_C=|X_j|$ (see Definition \ref{def:eps:candidates}).
For $i>0$, we compute the number of feasible sets  $s_C$ using dynamic programming, by scanning over the arrangement of cells in a column by column fashion. 
Computing the number of grid points $g_C$ can be done by scanning over the arrangement in a similar way. Here, we compute the number of gridpoints in the interval between two horizontal or vertical lines by using a binary search on the set of gridpoints $X_j$. Clearly, computing the arrangement $\mathcal{A}_e$ and the values $s_C$ and $g_C$ for each cell can be done in $O(i^2\log(|X_j|))$ time and space per edge $e_j \in E(S)$.

Let  $\mathcal{M}=\{C_1,\dots,C_{m}\}$  denote the union of the set of cells over all arrangements of the edges of $E(S)$, using an appropriate indexing (i.e., lexicographical ordering in the horizontal, and vertical direction, and in the index of the edge $e$ of $E(S)$). 
In addition to the values $s_C$ and $g_C$, we store the cumulative function $f: \mathcal{M} \rightarrow \RR_{+}$, which is simply defined as $f(C_j) = \sum_{i=1}^{j} w(C_i)$ and can be computed by scanning over all cells in the order of their index. For consistency, we define $f(C_0)=0$.
Note that the total weight $w( Z_{\Delta,E(S)})$ is now stored in $f(C_m)$. The function $f$ can be computed in $O(m)$ time and space and this also bounds the total space used by the data structure.

\begin{figure}[t]
    \centering
    \includegraphics[width=1\textwidth]{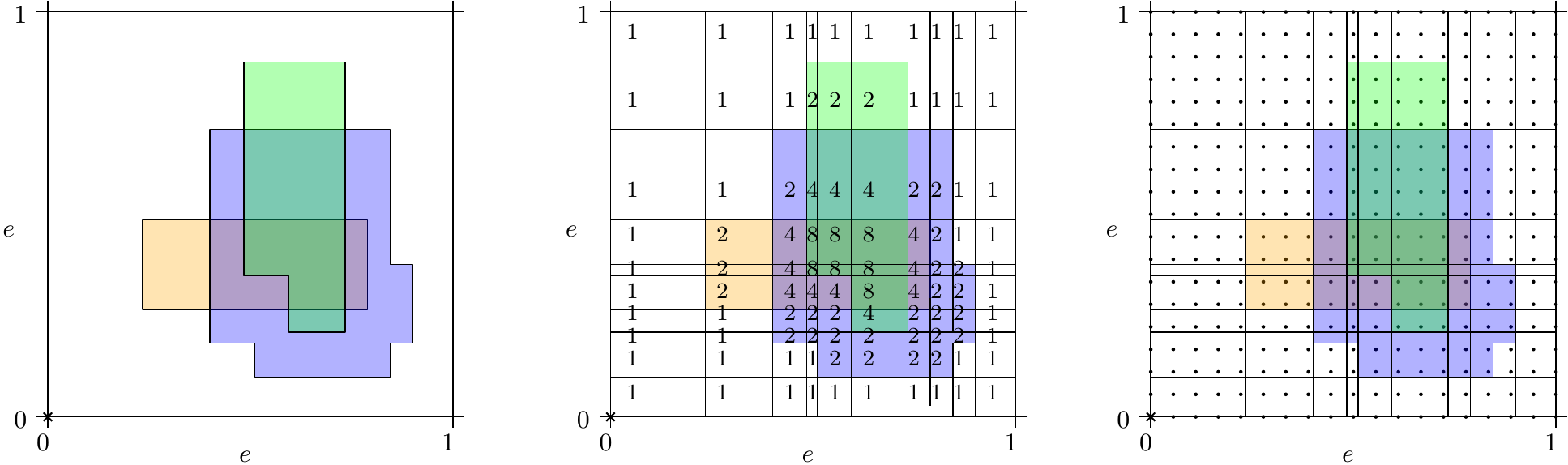}
    \caption{Left: Example of an arrangement of lines stored in the data structure for one edge $e\in S$. The left image shows the arrangement of  the feasible sets that were used for the updates. Center: The actual arrangement of lines that is stored in the data structure. The number shown in each cell is the corresponding weight multiplicator resulting from doubling the weights in each update where the cell is included in the feasible set. Right: The grid of candidates in $Z_{\Delta, E(S)}$ that are subcurves of $e$. The probability to draw a candidate is proportional to the weight multiplicator of the cell that contains it.}
    \label{fig:arrangement:feas}
\end{figure}

\subparagraph{Sampling from the distribution}
Using the cumulative function $f$ defined on the cells of the arrangements and the additional information for each cell, we can sample from $\Dist_i$ as follows. Draw a sample $x$ uniformly at random from the interval $[0,w(Z_{\Delta,E(S)})]$. Perform a binary search on the function values of the cumulative function for $x$, let cell $C_j$ be the result of the binary search. Let $j'=\lceil (x-f(C_{j-1}))/2^{s_{C_j}} \rceil$ and return the $j'$-th grid point (according to a lexicographical ordering) that lies in the cell $C_j$. Clearly, this can be done in time in $O(\log m + \log (\lambda/\Delta))$, where $\lambda$ denotes the length of the longest edge of $S$.

\subparagraph{Evaluating the weight of a feasible set}
With the data structure as described above, we can evaluate  $\Pr[\Dist_{i}]{F_{\Delta}(t)}$ as follows.
For each edge $e\in E(S)$, we find the set of cells intersected fully or partially by the feasible set $F_{\Delta}(t)$ by scanning over all cells associated with the edge $e$. If a cell is intersected only partially, we can determine the number of grid points that lie in the intersection by using a constant number of binary searches, since  $F_{\Delta}(t)$ is the constant union of a set of rectangles when restricted to the parameter space of the edge $e$. From this, we can compute the weight of the feasible set by summing over all intersected cells. Dividing this weight by the total weight $w( Z_{\Delta,E(S)})$ yields the probability $\Pr[\Dist_{i}]{F_{\Delta}(t)}$. Clearly, the total time for evaluating the weight of one feasible set is in $O(m \log (\lambda/\Delta))$. (Better running times are  possible by storing the cumulative function in a more structured way, but this does not affect our total running time.)

\vspace{\baselineskip}
We conclude the section with a theorem summarizing what we have derived.

\begin{theorem}\label{thm:datastr:runtime}
    Given a polygonal curve $S \in \XX^{d}_n$, and a sequence of values $t_1,\dots,t_i \in [0,1]$, we can build a data structure that supports the following operations:
\begin{compactenum}[(1)]
    \item Sample and return an explicit element of $Z_{\Delta,E(S)}$ according to $\Dist_i$.
    \item Given a query point $t$, determine if $\Pr[\Dist_{i}]{F_{\Delta}(t)}$ is at most $\frac{1}{r}$.
\end{compactenum}
    Let $m=n \cdot i^2$, and let $\lambda$ denote the length of the longest edge of $S$.
    The query time for (1) is in $O(\log (m) + \log (\lambda/\Delta))$. The query time for  (2) is  in $O(m \log (\lambda/\Delta))$.
    The data structure can be built in $O(m \log (\lambda/\Delta))$ time and uses space in $O(m)$.
\end{theorem}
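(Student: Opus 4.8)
The plan is to verify that the data structure described in Section~\ref{sec:ds} achieves the stated bounds; since the theorem mostly summarizes that construction, the proof reduces to tallying the cost of each building block and to checking that the sampling routine reproduces the distribution $\Dist_i$ exactly. Throughout, write $e_1,\dots,e_{n-1}$ for the edges of $S$, let $\lambda_j$ denote the length of $e_j$, and recall from Definition~\ref{def:eps:candidates} (with $\eps=\Delta$) that the grid $X_j$ on $e_j$ satisfies $|X_j|=O(\lambda_j/\Delta+1)=O(\lambda/\Delta)$, so a binary search on a sorted copy of $X_j$ costs $O(\log(\lambda/\Delta))$ time. For each edge $e_j$ and each input value $t_\ell$ with $1\le\ell\le i$, Lemma~\ref{lem:rect} shows that for every pair $(a,b)$ in the $O(1)$-size index set used by \textsc{IsFeasible}, the set $\{(\alpha,\beta)\in[0,1]^2\mid t_\ell\in\Psi^{a,b}_{\Delta}(S,e_j[\alpha,\beta])\}$ is an axis-parallel rectangle (or empty) with sides of the form $c+\sqrt{d}$ computable using $O(d)$ simple operations. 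Hence the feasible set $F_{\Delta}(t_\ell)$, restricted to the parameter space of subedges of $e_j$, is a union of $O(1)$ such rectangles, and collecting their bounding lines over all $\ell\le i$ yields $O(i)$ horizontal and $O(i)$ vertical lines; sorting them produces the non-uniform grid arrangement $\mathcal{A}_{e_j}$ with $O(i^2)$ cells in $O(i^2)$ time (after an $O(i\log i)$ sort), stored in an array with lexicographic indexing.

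For each cell $C$ of $\mathcal{A}_{e_j}$ I would then compute $s_C$, the number of feasible sets among $F_{\Delta}(t_1),\dots,F_{\Delta}(t_i)$ that contain $C$, and $g_C=|C\cap Z_{\Delta,e_j}|$. Since every such feasible set restricted to $e_j$ is a union of $O(1)$ rectangles whose sides lie on lines of $\mathcal{A}_{e_j}$, the predicate ``$C\subseteq F_{\Delta}(t_\ell)$'' is well defined cell by cell, and all the values $s_C$ can be obtained together by a column-by-column sweep (a standard dynamic program over the grid) in $O(i^2)$ time, as sketched in Section~\ref{sec:ds}. Each $g_C$ is a product $|X_j\cap I_x|\cdot|X_j\cap I_y|$ over the two defining intervals of $C$; taking these intervals half-open (say closed on the left and bottom) so that grid points on shared cell boundaries are counted exactly once, each factor is found by one binary search in $O(\log(\lambda/\Delta))$ time. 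Per edge this costs $O(i^2\log(\lambda/\Delta))$, hence $O(m\log(\lambda/\Delta))$ time and $O(m)$ space over all $O(n)$ edges, where $m=ni^2$. Finally, enumerating all cells $C_1,\dots,C_m$ over all arrangements in a global lexicographic order, I set $w(C_j)=2^{s_{C_j}}g_{C_j}$ and compute the prefix sums $f(C_j)=\sum_{\ell\le j}w(C_\ell)$ in one linear scan (with $f(C_0)=0$), so that $f(C_m)=w(Z_{\Delta,E(S)})$; this yields the build bounds $O(m\log(\lambda/\Delta))$ time and $O(m)$ space.

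For operation~(1) I draw $x$ uniformly from $[0,f(C_m)]$, binary-search the prefix array for the unique cell $C_j$ with $f(C_{j-1})<x\le f(C_j)$ in $O(\log m)$ time, and return the $\lceil(x-f(C_{j-1}))/2^{s_{C_j}}\rceil$-th point of $C_j\cap Z_{\Delta,e_j}$ in lexicographic order, located by $O(1)$ binary searches on the sorted $X_j$ in $O(\log(\lambda/\Delta))$ time, giving the claimed $O(\log m+\log(\lambda/\Delta))$ bound. Correctness holds because a fixed candidate $Q$ lying in cell $C$ is returned with probability $\frac{w(C)}{f(C_m)}\cdot\frac1{g_C}=\frac{2^{s_C}}{w(Z_{\Delta,E(S)})}$, which equals $\Pr[\Dist_i]{Q}$ since the weight of $Q$ is doubled exactly once for each feasible set $F_{\Delta}(t_\ell)$ containing it, i.e.\ exactly $s_C$ times. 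For operation~(2), for each edge $e_j$ I scan its $O(i^2)$ cells; intersecting a cell with $F_{\Delta}(t)$ restricted to $e_j$ --- a union of $O(1)$ rectangles by Lemma~\ref{lem:rect} --- yields a constant union of sub-rectangles of that cell, whose grid-point count I obtain with $O(1)$ binary searches on $X_j$, and multiplying by $2^{s_C}$ and summing over all cells and edges yields $w(F_{\Delta}(t))$; dividing by $f(C_m)$ and comparing with $\frac1r$ answers the query, all in $O(m\log(\lambda/\Delta))$ time.

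Apart from these tallies the argument is routine, and I expect the only point that needs genuine care to be the verification that the sampling procedure realizes $\Dist_i$ exactly. This rests on two observations: first, that the half-open boundary convention makes the per-cell counts $g_C$ a genuine partition of $Z_{\Delta,E(S)}$, so no candidate is over-sampled, under-sampled, or double-counted; and second, that every feasible-set rectangle produced by Lemma~\ref{lem:rect} has its sides on lines of $\mathcal{A}_{e_j}$, which is precisely what makes $s_C$ well defined and the column sweep computing the $s_C$ correct. These two points, rather than the complexity estimates, are where the real content lies.
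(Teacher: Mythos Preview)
Your proposal is correct and follows essentially the same approach as the paper: the theorem is stated as a summary of the construction in Section~\ref{sec:ds}, and you faithfully reproduce that construction (per-edge grid arrangement from the $O(i)$ rectangle boundaries, column sweep for $s_C$, binary searches for $g_C$, prefix sums $f$, and the sampling and weight-evaluation routines), arriving at the same bounds. Your explicit treatment of the half-open boundary convention and the verification that the sampling realizes $\Dist_i$ exactly are welcome clarifications that the paper leaves implicit, but they do not constitute a different route.
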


\subsection{Result for implicit weight update}

By using the data structure of Section~\ref{sec:ds} for maintaining the discrete probability distribution on the implicit candidate set $Z_{\Delta,E(S)}$, we obtain the following result.

\begin{restatable}{theorem}{implicitweightresult}
\label{lem:runtime:main:improved}
Let $P\in \XX^d_n$ and $\Delta\in\RR_+$.
Let $k^{*}$ be the minimum size of a solution to the $\Delta$-coverage problem on $P$. Let further $\length(P)$ be the arc length of the curve $P$.
There exists an algorithm that outputs a 
$(12, O( \log  (k^{*}))$-approximate solution. The algorithm needs in expectation  $O(n{k^*}^3(\log^4(\frac{\length(P)}{\Delta })+\log^3(\frac{n}{ k^*}))+n\log^2(n))$ time and $O(n{k^*}\log(\frac{n \length(P)}{\Delta k^*}))$ space. 
\end{restatable}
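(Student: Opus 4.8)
The plan is to reuse the general machinery of Lemma~\ref{lem:runtime:main:basic} essentially verbatim, substituting the implicit candidate set $Z_{\Delta,E(S)}$ for the explicit set $B$ and the data structure of Theorem~\ref{thm:datastr:runtime} for the explicit array-based weight updates and feasibility checks. First I would run \textsc{SimplifyCurve} (Theorem~\ref{thm:runt:simpl}) to obtain a $\Delta$-good simplification $S$ in $O(n\log^2 n)$ time and $O(n)$ space. I then set $\eps = c\Delta$ for a suitable small constant $c$ and use the $\eps$-approximate candidate set $Z_{\eps,E(S)}$ from Definition~\ref{def:eps:candidates}; the key geometric fact (stated in that definition) is that for every edge $p\in\CandidatesArg{E(S)}$ there is $p'\in Z_{\eps,E(S)}$ with $d_F(p,p')\le\eps$. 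Feeding this into Theorem~\ref{thm:alg_candidates}: there is a structured $8\Delta$-covering $C_B\subseteq B$ of $S$ of size $\le 12k^*$, and replacing each of its $\le 12k^*$ edges by an $\eps$-close element of $Z_{\eps,E(S)}$ yields, by the triangle inequality for the Fréchet distance and one more application of Lemma~\ref{lem:covertransfer}-style reasoning, a structured $(8\Delta+O(\eps))$-covering of $S$ of the same size inside $Z_{\eps,E(S)}$; choosing $\eps$ so that $8\Delta+O(\eps)\le 9\Delta$ gives a structured $9\Delta$-covering $H\subseteq Z_{\eps,E(S)}$ with $|H|\le 12k^*$, so the hypothesis of Lemma~\ref{lem:runtime:main:basic} holds with $\alpha=9$, $\beta=12$, and $\alpha'=\alpha+3=12$, $\beta'=O(\log k^*)$.

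Next I would verify that the VC-dimension bound still applies. The set system of feasible sets $\{F_\Delta(t)\mid t\in\Param{n}\}$ restricted to the ground set $Z_{\Delta,E(S)}$ is just a sub-system of the one analyzed in Lemma~\ref{lem:vcdimfeas}, so its VC-dimension is still $O(d)$, bounded by $\gamma = 110d+412$ via Lemma~\ref{lem:vcdim:const}. Hence the $\frac1r$-net theorem (Theorem~\ref{thm:epsnet}) applies exactly as in the proof of Claim~\ref{clm:runtime:basic}: with $r=2k$ and $k'=\lceil 16k\gamma\log(16k\gamma)\rceil$, a fresh sample of $k'$ elements is a $\frac1r$-net with probability $\ge\frac12$, so the counter $i$ advances with probability $\ge\frac12$ per iteration, the main loop of \textsc{kApproxCover} has $O(i_{\max})$ iterations in expectation, and the termination argument of Lemma~\ref{lem:runtime:main:basic} (showing the outer exponential search halts at iteration $j^*$ with $2^{j^*}<24k^*$) goes through unchanged, since it only used that $H\subseteq B$ and $|H|=O(k^*)$.

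It remains to re-derive the running time with the data-structure costs of Theorem~\ref{thm:datastr:runtime} in place of the explicit $|B|$ and $\widetilde O(nk')$ costs. In iteration $j$ of \textsc{ApproxCover} with $k=2^j$, we have $i_{\max}=5k\log_2(|Z|/k)$; the $i$-th weight-update iteration rebuilds (or maintains) the data structure with parameter $m = n\cdot i^2 = \widetilde O(nk^2)$, so: sampling $k'=\widetilde O(k)$ elements costs $k'\cdot O(\log m + \log(\lambda/\Delta)) = \widetilde O(k\log(\lambda/\Delta))$; the call to \textsc{PointNotCovered} on the sampled $C$ (which has $|C|=k'=\widetilde O(k)$) costs $O(nk'\log k')=\widetilde O(nk)$ by Lemma~\ref{lem:findpointstructured}; evaluating $\Pr[\Dist_i]{F_\Delta(t)}$ costs $O(m\log(\lambda/\Delta))=\widetilde O(nk^2\log(\lambda/\Delta))$; and the \textsc{WeightUpdate} rebuild costs $O(m\log(\lambda/\Delta))=\widetilde O(nk^2\log(\lambda/\Delta))$. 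Multiplying the dominant per-iteration cost $\widetilde O(nk^2\log(\lambda/\Delta))$ by $i_{\max}=\widetilde O(k)$ gives $\widetilde O(nk^3\log(\lambda/\Delta))$ per outer iteration, and summing the geometric series over $j\le j^*$ (using $2^{j^*}<24k^*$) gives $\widetilde O(n(k^*)^3\log(\lambda/\Delta))$ overall, plus the $O(n\log^2 n)$ simplification cost; tracking the polylog factors carefully (the $\log^4$ in $\lambda/\Delta$ comes from $\gamma$'s $\log$ inside $k'$, the two $\log(\lambda/\Delta)$ factors from build and evaluate, and the sampling $\log$, together with the $\log(n/k^*)$ from $i_{\max}$) yields the claimed $O(nk^3(\log^4(\lambda/\Delta)+\log^3(n/k^*))+n\log^2 n)$ time. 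The space is dominated by one instance of the data structure, $O(m) = O(n(k^*)^2)$, plus the array storing the $\widetilde O(n k^*)$ candidates of the current solution and the $X_i$ gridpoint sets, whose sizes are bounded by $O(\log(\lambda_i/\eps))$ each, giving $O(nk^*\log(n\lambda(P)/(\Delta k^*)))$.

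The main obstacle I anticipate is \textbf{not} the asymptotics but the bookkeeping of the polylogarithmic exponents: one must be careful that the data structure is rebuilt with $i$ equal to the \emph{number of weight updates so far}, not the number of loop iterations, so $m=n i^2$ with $i\le i_{\max}=\widetilde O(k)$; and one must confirm that \textsc{PointNotCovered} still works with the implicit candidate set — it does, because it only ever sees the \emph{explicit} sampled solution $C$, which the data structure returns in explicit form. A secondary subtlety is the approximation-factor accounting: the extra $+\eps$ slack from sub-sampling the candidates must be absorbed into the rounding of $8\Delta$ up to $9\Delta$ before invoking Lemma~\ref{lem:runtime:main:basic}, which is what turns the final covering factor from $11$ into $12$.
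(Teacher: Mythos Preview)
Your proposal is correct and follows essentially the same route as the paper: run \textsc{SimplifyCurve}, argue that the implicit candidate set $Z_{\Delta,E(S)}$ contains a structured $9\Delta$-covering of $S$ of size $O(k^*)$ by rounding an $8\Delta$-covering of subedges via the triangle inequality, then invoke Lemma~\ref{lem:runtime:main:basic} for the approximation guarantee and re-derive the running time using Theorem~\ref{thm:datastr:runtime} in place of the explicit array operations. One small simplification the paper makes that you could adopt: it appeals directly to Theorem~\ref{thm:candidate_space} (which already gives a structured $8\Delta$-covering of size $\le 3k^*$ by subedges of $S$) rather than to Theorem~\ref{thm:alg_candidates}; there is no need to pass through the finite candidate set $B$ at all, since the implicit variant never constructs it. Your space accounting also has a slip: you record the data-structure space as $O(m)=O(n(k^*)^2)$ and then conclude the total is $O(nk^*\log(\cdot))$, which does not follow; but the paper's own space bookkeeping is equally terse here, so your argument matches it at the level of detail given.
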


\begin{proof}
     By Theorem~\ref{thm:candidate_space} and the use of triangle inequality, there exists a $9\Delta$-covering $C\subset Z_{\Delta,E(S)}$ of $S$ with $|C|\leq 3k^*$. Therefore, we get by Lemma~\ref{lem:runtime:main:basic} that the basic algorithm \textsc{ApproxCover} outputs an 
    $(12, O( \log  (k^{*}))$-approximate solution. Since we just modify the algorithm with a data structure for faster maintenance of the weighted probability distribution, the quality of the solution stays the same.
    
    It remains to bound the expected running time and space. The subroutine \textsc{SimplifyCurve}  stays unchanged and needs $O(n\log^2(n))$ time and $O(n)$ space, as shown in Theorem~\ref{thm:runt:simpl}. Let  $j^*\in\NN$ be the iteration such that $2^{j^*-1}<3k^*\leq 2^{j^*}$. Analogous to the proof of Lemma~\ref{lem:runtime:main:basic}, we see that the algorithm still terminates in iteration $j^*$ and the running time for each call of the improved version of \textsc{kApproxCover} is dominated by its last call. The analysis for the running time of the improved version of \textsc{kApproxCover} has to be changed only for the operations that use the data structure now. It still holds that the expected number of iterations of the main loop is in $O(i_{max})$. It also still holds that by Lemma~\ref{lem:findpointstructured} the subroutine \textsc{PointNotCovered} takes $O(nk'\log(k'))$ time and $O(nk')$ space. Since the simplification $S$ computed by \textsc{SimplifyCurve} only uses vertices of $P$ as its vertices, we have that each edge of $S$ has at most length $\length(P)$. By this argument and Theorem~\ref{thm:datastr:runtime}, the time and space needed by the data structure in any iteration of the main loop is bounded by $O(n i_{max}^2\log(\length(P)/\Delta)+k'(\log(\length(P)/\Delta)+\log(n i_{max})))$ and $O(ni_{max}^2)$. In iteration $j^*$, the value $i_{max}$ is in $O(k^*\log_2(\frac{|Z_{\Delta,E(S)}|}{k^*}))$. For each edge, there are at most $\length(P)/\Delta+2$ start points and $\length(P)/\Delta+2$ end points that together define a candidate in $Z_{\Delta,E(S)}$. Therefore, the value $|Z_{\Delta,E(S)}|$ is in  $O(n\frac{\length(P)^2}{\Delta^2})$. So $i_{max}$ is in $O(k^*\log_2(\frac{n\length(P)}{\Delta k^*}))$. Since $k'$ is in $O( k^*\log(k^*))$ in iteration $j^*$, we get in total an expected running time of $O(n{k^*}^3(\log^4(\frac{\length(P)}{\Delta })+\log^3(\frac{n}{ k^*}))+n\log^2(n))$ and a space requirement of $O(n{k^*}\log(\frac{n\length(P)}{\Delta k^*}))$. 
\end{proof}

\section{Conclusions}\label{sec:conclusions}
With the algorithm variants presented in this paper, we can find bicriteria-approximate solutions to the $\Delta$-coverage problem on a polygonal curve $P$. The new algorithms  improve upon previously known algorithms for the $\Delta$-coverage problem both in terms of known running time and space requirement bounds \cite{akitaya2021subtrajectory}, as well as approximation factors.  To the best of our knowledge, our candidate generation leads to the first  the first strongly polynomial algorithm for subtrajectory clustering under the continuous Fr\'echet distance that does not depend on the relative arclength $\lambda/\Delta$ of the input curve or the spread of the coordinates. The running time is at most cubic in $n$, the number of vertices of the input curve (Theorem~\ref{thm:runtime:main:canon}). In practice, we expect this to be lower as testified by our analysis for $c$-packed curves (Theorem \ref{thm:cpackedresult}). The work of Gudmundsson et. al. \cite{gudmundsson2020} suggest that in practice most curves are $c$-packed for a $c$ that is considerably smaller than the complexity of the curve. However, it remains to be seen if this also holds for the typically long curves which appear as input in the subtrajectory clustering setting.
We also present a variant of the algorithm with implicit weight updates which achieves a linear dependency on $n$ (Theorem~\ref{lem:runtime:main:improved}) and this holds in general, without any $c$-packedness assumption on the input.

There are several avenues for future research. We mention some of them here.
An interesting question that remains open for now is whether the implicit weight update can be performed directly on the  candidate set (Definition~\ref{def:candidates}). For this, we need to develop a dynamic  data structure that can maintain the  distribution on this candidate set to perform updates with rectangles and to sample from it.
Another future research direction is to improve the dependency of the approximation factor on the parameter that controls the complexity of the input curves. Currently, the dependency is linear, and we did not try to improve it, since our focus was on clustering with line segments.
Another interesting question is, how the low complexity center curves obtained by our algorithm can be best connected to center curves of higher complexity or even a geometric graph while retaining the $\Delta$-covering.

The $\Delta$-coverage problem is an approach to subtrajectory clustering based on minimizing the number of centers needed to fully cover the given trajectory. A closely related problem that directly comes to mind in this context is the maximum coverage problem that consists of maximizing the fraction of the trajectory that can be covered by a fixed number of center curves. We can directly apply our candidate generation (Definition $19$) to this problem and obtain a tricriteria approximation as follows.  We use the generic greedy algorithm for maximizing submodular functions with a cardinality constraint \cite{Nemhauser78} on the candidate set (Definition $19$) to approximate the coverage of a simplification. 
The generic greedy algorithm  selects in each step the candidate curve that covers the largest amount of uncovered fraction of the input trajectory until the fixed number of centers is reached. 
It would be interesting to find faster approximation algorithms for the problem with similar or better approximation factors.

\bibliographystyle{plainurl}
\bibliography{bibliography}

\begin{thebibliography}{10}

\bibitem{agarwal2018}
Pankaj~K. Agarwal, Kyle Fox, Kamesh Munagala, Abhinandan Nath, Jiangwei Pan,
  and Erin Taylor.
\newblock Subtrajectory clustering: Models and algorithms.
\newblock In {\em Proceedings of the 37th ACM SIGMOD-SIGACT-SIGAI Symposium on
  Principles of Database Systems}, SIGMOD/PODS ’18, page 75–87, New York,
  NY, USA, 2018. Association for Computing Machinery.
\newblock \href {https://doi.org/10.1145/3196959.3196972}
  {\path{doi:10.1145/3196959.3196972}}.

\bibitem{akitaya2021subtrajectory}
Hugo~A. Akitaya, Frederik Brüning, Erin Chambers, and Anne Driemel.
\newblock Subtrajectory clustering: Finding set covers for set systems of
  subcurves.
\newblock {\em CoRR}, abs/2103.06040, 2021.
\newblock \href {http://arxiv.org/abs/2103.06040} {\path{arXiv:2103.06040}}.

\bibitem{AltG95}
Helmut Alt and Michael Godau.
\newblock Computing the {F}r{\'{e}}chet distance between two polygonal curves.
\newblock {\em Int. J. Comput. Geom. Appl.}, 5:75--91, 1995.
\newblock \href {https://doi.org/10.1142/S0218195995000064}
  {\path{doi:10.1142/S0218195995000064}}.

\bibitem{AB99}
Martin Anthony and Peter~L. Bartlett.
\newblock {\em Neural Network Learning: Theoretical Foundations}.
\newblock Cambridge University Press, 1999.

\bibitem{arora2012multiplicative}
Sanjeev Arora, Elad Hazan, and Satyen Kale.
\newblock The multiplicative weights update method: a meta-algorithm and
  applications.
\newblock {\em Theory of Computing}, 8(1):121--164, 2012.

\bibitem{bronnimann1995almost}
Herv{\'e} Br{\"o}nnimann and Michael~T Goodrich.
\newblock Almost optimal set covers in finite {VC}-dimension.
\newblock {\em Discrete \& Computational Geometry}, 14(4):463--479, 1995.

\bibitem{buchinGroup2017}
Kevin Buchin, Maike Buchin, David Duran, Brittany~Terese Fasy, Roel Jacobs,
  Vera Sacristan, Rodrigo~I. Silveira, Frank Staals, and Carola Wenk.
\newblock Clustering trajectories for map construction.
\newblock In {\em Proceedings of the 25th ACM SIGSPATIAL International
  Conference on Advances in Geographic Information Systems}, SIGSPATIAL '17,
  New York, NY, USA, 2017. Association for Computing Machinery.
\newblock \href {https://doi.org/10.1145/3139958.3139964}
  {\path{doi:10.1145/3139958.3139964}}.

\bibitem{BuchinBGHSSSSSW20}
Kevin Buchin, Maike Buchin, Joachim Gudmundsson, Jorren Hendriks,
  Erfan~Hosseini Sereshgi, Vera Sacrist{\'{a}}n, Rodrigo~I. Silveira, Jorrick
  Sleijster, Frank Staals, and Carola Wenk.
\newblock Improved map construction using subtrajectory clustering.
\newblock In {\em LocalRec'20: Proceedings of the 4th {ACM} {SIGSPATIAL}
  Workshop on Location-Based Recommendations, Geosocial Networks, and
  Geoadvertising, LocalRec@SIGSPATIAL 2020, November 3, 2020, Seattle, WA,
  {USA}}, pages 5:1--5:4, 2020.
\newblock \href {https://doi.org/10.1145/3423334.3431451}
  {\path{doi:10.1145/3423334.3431451}}.

\bibitem{BuchinBGLL11}
Kevin Buchin, Maike Buchin, Joachim Gudmundsson, Maarten L{\"{o}}ffler, and Jun
  Luo.
\newblock Detecting commuting patterns by clustering subtrajectories.
\newblock {\em Int. J. Comput. Geom. Appl.}, 21(3):253--282, 2011.
\newblock \href {https://doi.org/10.1142/S0218195911003652}
  {\path{doi:10.1142/S0218195911003652}}.

\bibitem{buchinGroup20}
Maike Buchin, Bernhard Kilgus, and Andrea Kölzsch.
\newblock Group diagrams for representing trajectories.
\newblock {\em International Journal of Geographical Information Science},
  34(12):2401--2433, 2020.
\newblock \href {https://doi.org/10.1080/13658816.2019.1684498}
  {\path{doi:10.1080/13658816.2019.1684498}}.

\bibitem{BuchinW20}
Maike Buchin and Carola Wenk.
\newblock Inferring movement patterns from geometric similarity.
\newblock {\em J. Spatial Inf. Sci.}, 21(1):63--69, 2020.
\newblock \href {https://doi.org/10.5311/JOSIS.2020.21.724}
  {\path{doi:10.5311/JOSIS.2020.21.724}}.

\bibitem{de2013fast}
Mark De~Berg, Atlas~F Cook~IV, and Joachim Gudmundsson.
\newblock Fast {F}r{\'e}chet queries.
\newblock {\em Computational Geometry}, 46(6):747--755, 2013.

\bibitem{jaywalking}
Anne Driemel and Sariel Har{-}Peled.
\newblock Jaywalking your dog: Computing the {Fr{\'{e}}chet} distance with
  shortcuts.
\newblock {\em {SIAM} J. Comput.}, 42(5):1830--1866, 2013.
\newblock \href {https://doi.org/10.1137/120865112}
  {\path{doi:10.1137/120865112}}.

\bibitem{driemelHW12}
Anne Driemel, Sariel Har{-}Peled, and Carola Wenk.
\newblock Approximating the {F}r{\'{e}}chet distance for realistic curves in
  near linear time.
\newblock {\em Discret. Comput. Geom.}, 48(1):94--127, 2012.
\newblock \href {https://doi.org/10.1007/s00454-012-9402-z}
  {\path{doi:10.1007/s00454-012-9402-z}}.

\bibitem{driemel2019vc}
Anne Driemel, Jeff~M Phillips, and Ioannis Psarros.
\newblock The {VC} dimension of metric balls under {Fr{\'e}chet} and
  {Hausdorff} distances.
\newblock In {\em 35th International Symposium on Computational Geometry (SoCG
  2019)}. Schloss Dagstuhl-Leibniz-Zentrum fuer Informatik, 2019.

\bibitem{duchowski2002breadth}
Andrew~T Duchowski.
\newblock A breadth-first survey of eye-tracking applications.
\newblock {\em Behavior Research Methods, Instruments, \& Computers},
  34(4):455--470, 2002.

\bibitem{GJ95}
Paul~W. Goldberg and Mark~R. Jerrum.
\newblock Bounding the {Vapnik-Chervonenkis} dimension of concept classes
  parameterized by real numbers.
\newblock {\em Machine Learning}, 18:131--148, 1995.

\bibitem{gudmundsson2020}
Joachim Gudmundsson, Yuan Sha, and Sampson Wong.
\newblock Approximating the packedness of polygonal curves.
\newblock {\em CoRR}, abs/2009.07789, 2020.
\newblock URL: \url{https://arxiv.org/abs/2009.07789}, \href
  {http://arxiv.org/abs/2009.07789} {\path{arXiv:2009.07789}}.

\bibitem{GudmundssonV15}
Joachim Gudmundsson and Nacho Valladares.
\newblock A {GPU} approach to subtrajectory clustering using the fr{\'{e}}chet
  distance.
\newblock {\em {IEEE} Trans. Parallel Distributed Syst.}, 26(4):924--937, 2015.
\newblock \href {https://doi.org/10.1109/TPDS.2014.2317713}
  {\path{doi:10.1109/TPDS.2014.2317713}}.

\bibitem{abs-2110-15554}
Joachim Gudmundsson and Sampson Wong.
\newblock Cubic upper and lower bounds for subtrajectory clustering under the
  continuous fr{\'{e}}chet distance.
\newblock {\em CoRR}, abs/2110.15554, 2021.
\newblock URL: \url{https://arxiv.org/abs/2110.15554}, \href
  {http://arxiv.org/abs/2110.15554} {\path{arXiv:2110.15554}}.

\bibitem{haussler1987eps}
David Haussler and Emo Welzl.
\newblock Epsilon-nets and simplex range queries.
\newblock {\em Discrete \& Computational Geometry}, 2(2):127--151, 1987.

\bibitem{holmqvist2011eye}
Kenneth Holmqvist, Marcus Nystr{\"o}m, Richard Andersson, Richard Dewhurst,
  Halszka Jarodzka, and Joost Van~de Weijer.
\newblock {\em Eye tracking: A comprehensive guide to methods and measures}.
\newblock OUP Oxford, 2011.

\bibitem{ionescu2013human3}
Catalin Ionescu, Dragos Papava, Vlad Olaru, and Cristian Sminchisescu.
\newblock Human3. 6m: Large scale datasets and predictive methods for 3d human
  sensing in natural environments.
\newblock {\em IEEE transactions on pattern analysis and machine intelligence},
  36(7):1325--1339, 2013.

\bibitem{LeeHW07}
Jae{-}Gil Lee, Jiawei Han, and Kyu{-}Young Whang.
\newblock Trajectory clustering: a partition-and-group framework.
\newblock In {\em Proceedings of the {ACM} {SIGMOD} International Conference on
  Management of Data, Beijing, China, June 12-14, 2007}, pages 593--604, 2007.
\newblock \href {https://doi.org/10.1145/1247480.1247546}
  {\path{doi:10.1145/1247480.1247546}}.

\bibitem{LI2001516}
Yi~Li, Philip~M. Long, and Aravind Srinivasan.
\newblock Improved bounds on the sample complexity of learning.
\newblock {\em Journal of Computer and System Sciences}, 62(3):516--527, 2001.
\newblock URL:
  \url{https://www.sciencedirect.com/science/article/pii/S0022000000917410},
  \href {https://doi.org/https://doi.org/10.1006/jcss.2000.1741}
  {\path{doi:https://doi.org/10.1006/jcss.2000.1741}}.

\bibitem{mustafa2017epsilonapproximations}
Nabil~H. Mustafa and Kasturi~R. Varadarajan.
\newblock Epsilon-approximations and epsilon-nets, 2017.
\newblock \href {http://arxiv.org/abs/1702.03676} {\path{arXiv:1702.03676}}.

\bibitem{Nemhauser78}
G.~L. Nemhauser, L.~A. Wolsey, and M.~L. Fisher.
\newblock An analysis of approximations for maximizing submodular set
  functions---i.
\newblock {\em Mathematical Programming}, 14(1):265--294, 1978.
\newblock \href {https://doi.org/10.1007/BF01588971}
  {\path{doi:10.1007/BF01588971}}.

\bibitem{Qiao2017RealtimeHG}
Sen Qiao, Y.~Wang, and J.~Li.
\newblock Real-time human gesture grading based on openpose.
\newblock {\em 2017 10th International Congress on Image and Signal Processing,
  BioMedical Engineering and Informatics (CISP-BMEI)}, pages 1--6, 2017.

\bibitem{acmsurvey20}
Roniel S.~De Sousa, Azzedine Boukerche, and Antonio A.~F. Loureiro.
\newblock Vehicle trajectory similarity: Models, methods, and applications.
\newblock {\em ACM Comput. Surv.}, 53(5), September 2020.
\newblock \href {https://doi.org/10.1145/3406096} {\path{doi:10.1145/3406096}}.

\bibitem{toth2017handbook}
Csaba~D Toth, Joseph O'Rourke, and Jacob~E Goodman.
\newblock {\em Handbook of discrete and computational geometry}.
\newblock CRC press, 2017.

\bibitem{wang2021survey}
Sheng Wang, Zhifeng Bao, J~Shane Culpepper, and Gao Cong.
\newblock A survey on trajectory data management, analytics, and learning.
\newblock {\em ACM Computing Surveys (CSUR)}, 54(2):1--36, 2021.

\bibitem{yuan2017review}
Guan Yuan, Penghui Sun, Jie Zhao, Daxing Li, and Canwei Wang.
\newblock A review of moving object trajectory clustering algorithms.
\newblock {\em Artificial Intelligence Review}, 47(1):123--144, 2017.

\end{thebibliography}
\appendix

\newpage

\section{On approximating free space intervals and extremal points}\label{sec:squareroots}

In this section we want to show, that we can execute the above described algorithm in a computational model that does not permit square roots. These are used whenever we compute points in the $\Delta$-free space, as in the $2$-norm setting we implicitly solve quadratic equations. That is we compute the intersection of either disks or cylinders in $\RR^d$ with line segments, which are the underlying events for extremal points or free-space intervals. To this end we show, that we can approximate these intersections sufficiently well in $O(\mathrm{poly}(d))$ time.

\begin{observation}\label{obs:halfspaceintersection}
    Given a halfspace $H$ and an edge defined by points $p$ and $q$ in $\RR^d$, we can compute the intersection of these two sets in $O(d)$ simple operations.
\end{observation}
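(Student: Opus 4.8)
The plan is to reduce the statement to solving a single linear inequality in one real variable. Write the halfspace as $H=\{x\in\RR^d \mid \langle a,x\rangle \le b\}$ for some normal vector $a\in\RR^d$ and offset $b\in\RR$, and parametrize the edge by $e(t)=p+t(q-p)$ for $t\in[0,1]$. Then $e(t)\in H$ holds if and only if $\langle a,p\rangle + t\,\langle a,q-p\rangle \le b$, which is a linear inequality in the scalar $t$, so the intersection $H\cap\overline{p\,q}$ is a (possibly empty, possibly full) subsegment of $\overline{p\,q}$ whose endpoints are determined by the solution interval of this inequality.

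First I would compute the coefficient $c:=\langle a,q-p\rangle$ and the constant $b':=b-\langle a,p\rangle$. Forming $q-p$ costs $d$ subtractions, each inner product costs $d$ multiplications and $d-1$ additions, and forming $b'$ costs one further subtraction; hence this step uses $O(d)$ simple operations and, crucially, no square roots. The feasible set of $t\in\RR$ is then $\{t\mid t\le b'/c\}$ if $c>0$, $\{t\mid t\ge b'/c\}$ if $c<0$, all of $\RR$ if $c=0$ and $b'\ge 0$, and the empty set if $c=0$ and $b'<0$; the single division $b'/c$ and the sign test on $c$ are $O(1)$ simple operations. Intersecting this half-line (respectively line or empty set) with $[0,1]$ and reporting the resulting subsegment as a pair of endpoints $e(t_1),e(t_2)$ with $0\le t_1\le t_2\le 1$, or a flag that the intersection is empty, requires only a constant number of comparisons together with $\min$ and $\max$ operations. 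Summing the costs gives $O(d)$ simple operations in total, using only additions, subtractions, multiplications, comparisons, and one division.

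There is no genuine obstacle here; the only points requiring mild care are the degenerate case where the edge is parallel to the bounding hyperplane of $H$, which is handled by the $c=0$ branch above, and being explicit that the output is an honest subsegment of $\overline{p\,q}$ rather than an abstract point set. Since no quadratic equation arises, the computation stays within exact rational arithmetic, which is precisely why this observation can serve as the base case for the square-root-free reimplementations of the free-space-interval and extremal-point computations in the remainder of the appendix.
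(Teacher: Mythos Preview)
Your proof is correct and is exactly the natural argument one would expect here. The paper itself does not give a proof of this observation at all; it is stated as self-evident and used as a building block in Lemma~\ref{lem:approxinterval}, so your write-up simply fills in what the paper left implicit.
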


\begin{figure}[b]
    \centering
    \includegraphics[width=0.6\textwidth]{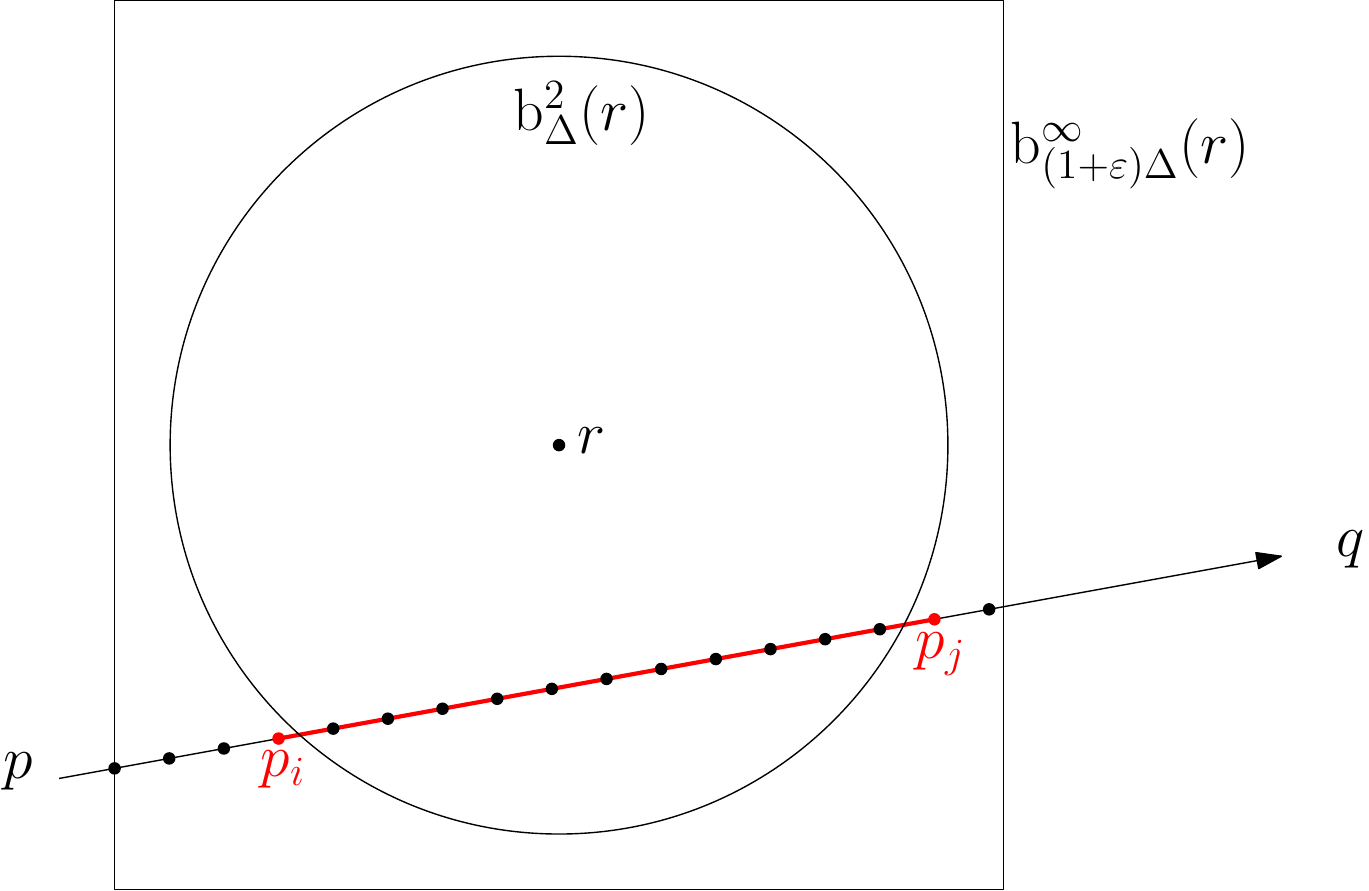}
    \caption{Illustration to Lemma \ref{lem:approxinterval}. $\overline{p_i\,p_j}$ is the computed interval approximating the intersection of the $2$-disk with $\overline{p\,q}$.}
    \label{fig:approxinterval}
\end{figure}

\begin{lemma}[approximating interval]\label{lem:approxinterval}
    Given an edge defined by points $p$ and $q$, and a point $r$ in $\RR^d$, together with values $\Delta,\eps>0$. We can compute points $a,b$ along $\overline{p\,q}$, such that
    \[\{z\in\overline{p\,q}\mid\|z-r\|_m\leq\Delta\}\subset\overline{a\,b}\subset\{z\in\overline{p\,q}\mid\|z-r\|_m\leq(1+\eps)\Delta\}\]
    with $O(d^2+d\log(1/\eps))$ simple operations.
\end{lemma}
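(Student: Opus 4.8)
The plan is to reduce the statement to a one‑dimensional root‑finding problem for a convex quadratic and to eliminate the square root appearing in its solution by a rescaled binary search; I describe the argument for the Euclidean norm, which is the setting in which the square‑root operation occurs in our algorithm. First I would translate so that $r$ is the origin and write $\gamma(s)=p+s(q-p)$ for the edge, $s\in[0,1]$, so that $\phi(s):=\|\gamma(s)\|^2=As^2+2Bs+C$ with $A=\|q-p\|^2$, $B=\langle p,\,q-p\rangle$, $C=\|p\|^2$; these coefficients are computed with $O(d)$ simple operations and no square root, and if $A=0$ the edge is a single point and the claim is immediate, so assume $A>0$. Since $\phi$ is a convex parabola with vertex at $s_0:=-B/A$ and minimum value $h^2:=\phi(s_0)=C-B^2/A\ge 0$ — the squared distance of $r$ to the supporting line of the edge — the target set $\{z\in\overline{p\,q}\mid\|z-r\|\le\Delta\}$ equals $\gamma\big([s^-,s^+]\cap[0,1]\big)$, where $[s^-,s^+]=\{s\mid\phi(s)\le\Delta^2\}$; if $h^2>\Delta^2$ or if $[s^-,s^+]\cap[0,1]=\emptyset$ the target set is empty and I would return an empty interval.

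The key observation is that the $(1+\eps)$‑factor of slack on the radius translates into a usable tolerance on the chord half‑length: from $\phi(s_0\pm\psi)=h^2+A\psi^2$ for all $\psi\ge0$, it suffices to output $a=\gamma(s_0-\psi)$ and $b=\gamma(s_0+\psi)$, clipped to $s\in[0,1]$, for any $\psi$ with
\[
A\psi^2\ \in\ \big[\,\Delta^2-h^2\,,\ (1+\eps)^2\Delta^2-h^2\,\big].
\]
The lower bound gives $\psi\ge\rho:=\sqrt{\Delta^2-h^2}/\sqrt{A}$, hence $[s^-,s^+]\subseteq[s_0-\psi,s_0+\psi]$ and the first inclusion of the lemma holds; the upper bound gives $[s_0-\psi,s_0+\psi]\subseteq\{s\mid\phi(s)\le(1+\eps)^2\Delta^2\}$, so every point of the output interval is within distance $(1+\eps)\Delta$ of $r$, which is the second inclusion. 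A short case check on the position of $s_0$ relative to $[0,1]$ (edge inside the ball, chord overhanging one or both endpoints) confirms that the clipping preserves both inclusions.

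To produce such a $\psi$ without square roots and without any dependence on the magnitudes of the coordinates, I would pass to dimensionless variables: set $\zeta:=\psi\sqrt{A}/\Delta$ and $\eta:=h^2/\Delta^2\in[0,1]$, noting that $\eta$ is exactly computable and so are the quantities $A$ and $h^2$ sitting under square roots. The requirement on $\psi$ becomes $\zeta^2\in[\,1-\eta,\ (1+\eps)^2-\eta\,]$, so $\zeta\in[0,\,1+\eps]$ and the admissible window for $\zeta^2$ has width $2\eps+\eps^2$; a binary search for $\zeta$ on $[0,\,1+\eps]$ — at each step comparing the square of the current midpoint against $1-\eta$ — returns a feasible $\zeta$ after $O(\log(1/\eps))$ iterations, each of $O(1)$ simple operations and no square root. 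From $\zeta$ one recovers $\psi=\zeta\Delta/\sqrt{A}$ and the parameters $s_0\pm\psi$; the only surviving square root, $\sqrt{A}$, is one global scalar factor, which I would approximate to the needed relative accuracy by the same binary‑search idea (a constant‑factor estimate followed by refinement), after which clipping the two parameters to $[0,1]$ and forming the points $a,b\in\RR^d$ costs $O(d)$.

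Summing up, the geometric work costs $O(d)$ and the searches contribute $O(\log(1/\eps))$; if, for robustness, each search comparison is performed by re‑evaluating the $d$‑dimensional quantity $\|\gamma(s)-r\|^2$ at the midpoint rather than a scalar quadratic, this becomes $O(d\log(1/\eps))$, and a careful accounting of the auxiliary square‑root approximations and of the arithmetic needed to keep every intermediate value square‑root‑free absorbs the remaining $O(d^2)$; alternatively one could skip the one‑dimensional reduction and, via Observation~\ref{obs:halfspaceintersection}, clip $\overline{p\,q}$ against a polytope sandwiched between $\disk[\Delta]{r}$ and $\disk[(1+\eps)\Delta]{r}$ in the direction of the edge. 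I expect the main obstacle to be exactly this confluence of requirements on the square‑root replacement: it must be simultaneously square‑root‑free, \emph{safe} (so that in every degenerate configuration — clipping to $[0,1]$, a line tangent to the sphere, a vanishing chord — the output genuinely sandwiches between the $\Delta$‑ and $(1+\eps)\Delta$‑sublevel sets), and \emph{magnitude‑oblivious}, the last of which is precisely what forces the rescaling to the variables $\zeta$ and $\eta$.
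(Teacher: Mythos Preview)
Your argument is correct in outline but takes a genuinely different route from the paper. The paper never forms the quadratic $\phi$ or locates its vertex. Instead it first clips $\overline{p\,q}$ against the axis-aligned box $\mdisk[(1+\eps)\Delta]{\infty}{r}$---an intersection with $2d$ halfspaces, costing $O(d^2)$ operations via Observation~\ref{obs:halfspaceintersection}. The resulting subsegment $\overline{\hat a\,\hat b}$ has Euclidean length $O(\sqrt{d}\,\Delta)$, so an implicit grid of spacing $\eps\Delta$ along it contains only $O(\sqrt{d}/\eps)$ points; a binary search over this grid, comparing the \emph{squared} distance $\|p_i-r\|^2$ to $\Delta^2$ at each step, locates the two outermost grid points just outside the $\Delta$-ball in $O(d\log(1/\eps))$ further operations. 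The $\ell_\infty$-clip is precisely what makes the search magnitude-oblivious: it bounds the physical length of the search interval by a multiple of $\Delta$, independently of $\|q-p\|$.

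Your algebraic route via the dimensionless variable $\zeta$ achieves the same magnitude-independence by other means and is in principle tighter (carried out cleanly it gives $O(d+\log(d/\eps))$, not $O(d^2+d\log(1/\eps))$). The one step you should make explicit is the initial bracket for $\sqrt{A}$: the phrase ``constant-factor estimate followed by refinement'' only works once you name the estimate---for instance $M:=\|q-p\|_\infty$ gives $M\le\sqrt{A}\le dM$, and a binary search on $[M,dM]$ to relative accuracy $\Theta(\eps)$ then costs $O(\log(d/\eps))$ comparisons. You must also run both searches with a constant-factor smaller tolerance so that the compounded error still lands $A\tilde\psi^2$ inside the required window $[\Delta^2-h^2,\,(1+\eps)^2\Delta^2-h^2]$. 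By contrast, your sentence about ``absorbing the remaining $O(d^2)$'' is not supported by anything in your construction---no $O(d^2)$ term arises there---and your closing alternative, ``clip against a polytope sandwiched between the two balls,'' is essentially the paper's approach with the $\ell_\infty$-cube as the polytope.
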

\begin{proof}
    Refer to Figure \ref{fig:approxinterval}.
    For this proof, let $\|\cdot\| = \|\cdot\|_m$ unless otherwise stated.
    We begin by finding the set of points $\{z\in\overline{p\,q}\mid\|z-r\|_\infty\leq\Delta\}=\overline{\hat{a}\,\hat{b}}$ defined by the intersection of $\mdisk[(1+\eps)\Delta]{\infty}{r}$, with the edge $\overline{p\,q}$. In the case, that the intersection is empty, we are done, as $\|\cdot\|\leq\|\cdot\|_\infty$. Finding these intersections points is the same as finding the intersection of $O(d)$ halfspaces with the edge. Hence it takes $O(d^2)$ simple operations by Observation \ref{obs:halfspaceintersection}. We now need to refine the subedge $\overline{\hat{a}\,\hat{b}}$. For this we (implicitly) place points along $\overline{\hat{a}\,\hat{b}}$ with distance $\eps\Delta$. These are $O(1/\eps)$ many points, as $\|\hat{a} - \hat{b}\|\leq2\sqrt{2}(1+\eps)\Delta$. Call these $p_i$. We then find $i$ and $j$, such that $\|z-p_i\|>\Delta>\|z-p_{i+1}\|$ and $\|z-p_j\|>\Delta>\|z-p_{j-1}\|$. Note that these points exist, as the first and last point lie outside $m$-disk. This takes $O(d\log(1/\eps))$ simple operations, via binary search. Note that $\{z\in\overline{p\,q}\mid\|z-r\|_m\leq\delta\}\subset\overline{p_i\,p_j}$. Also note, that $\|z-p_i\|\leq\|z-p_{i+1}\|+\|p_i-p_{i+1}\|\leq(1+\eps)\Delta$, similarly for $j$. Hence $p_i$ and $p_j$ are the sought after points, computed in $O(d^2+d\log(1/\eps))$ simple operations, as $\|\cdot\|$ is a convex function.
\end{proof}

\begin{lemma}\label{lem:approx-lineinterval}
Given two edges $\overline{s\,t}$ and $\overline{p\,q}$, and let $\mathrm{dist}^m_{\overline{s\,t}}(x)$ be the minimal $m$-distance between any point $x$ and the edge $\overline{s\,t}$ in $\RR^d$. Then for given $\eps,\Delta>0$ we can compute points $a,b$ along $\overline{p\,q}$, such that 
\[\{z\in\overline{p\,q}\mid\mathrm{dist}_{\overline{s\,t}}(z)\leq\Delta\}\subset\overline{a\,b}\subset\{z\in\overline{p\,q}\mid\mathrm{dist}_{\overline{s\,t}}(z)\leq(1+\eps)\Delta\}\]
with $O(d^3 + d\log(1/\eps))$ simple operations.
\end{lemma}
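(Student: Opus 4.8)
The plan is to reduce the problem to at most three invocations of Lemma~\ref{lem:approxinterval}. For concreteness we use the Euclidean norm, since this is the only setting in which square roots arise. Write $v=t-s$, let $\ell$ be the line through $s$ and $t$, let $\pi_\ell$ be orthogonal projection onto $\ell$, and parametrize $\overline{p\,q}$ by $z(\tau)=(1-\tau)p+\tau q$, $\tau\in[0,1]$. For any $z\in\RR^d$ the point of $\overline{s\,t}$ nearest to $z$ is $s$ if $\langle z-s,v\rangle\le 0$, is $t$ if $\langle z-s,v\rangle\ge\langle v,v\rangle$, and is $\pi_\ell(z)$ otherwise; hence $\mathrm{dist}_{\overline{s\,t}}(z)$ equals $\|z-s\|$, $\|z-t\|$, or $\mathrm{dist}_\ell(z):=\|z-\pi_\ell(z)\|$ in these three regimes.

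First I would determine the two values of $\tau$ solving the linear equations $\langle z(\tau)-s,v\rangle=0$ and $\langle z(\tau)-s,v\rangle=\langle v,v\rangle$ (this needs $O(d)$ simple operations; the degenerate cases $v=0$ and $\langle q-p,v\rangle=0$ leave a single regime and are handled directly), clamp them to $[0,1]$, and so split $\overline{p\,q}$ into at most three consecutive sub-edges, each lying in a single regime. On a sub-edge of the first regime I call Lemma~\ref{lem:approxinterval} on that sub-edge with $r=s$, and on one of the second regime with $r=t$. For a sub-edge $z([\tau_0,\tau_1])$ of the third regime, observe that $\rho(z):=z-\pi_\ell(z)$ is an affine function of $z$, so $\tau\mapsto\rho(z(\tau))$ is affine and $\mathrm{dist}_\ell(z(\tau))=\|\rho(z(\tau))\|$ is exactly the distance from the origin to a point moving affinely along the edge $\overline{\rho(z(\tau_0))\,\rho(z(\tau_1))}$; I apply Lemma~\ref{lem:approxinterval} to that edge with $r$ the origin and transport the two returned points back through the affine reparametrization to points of $\overline{p\,q}$ (if that edge is degenerate, $\overline{p\,q}$ is parallel to $\ell$ on this regime and $\mathrm{dist}_\ell$ is constant there, which is handled directly). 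Each call costs $O(d^2+d\log(1/\eps))$ simple operations plus $O(d)$ for the surrounding bookkeeping.

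Each call produces a sub-segment $I_j$ of $\overline{p\,q}$ with $\{z\in(\text{regime }j)\mid\mathrm{dist}_{\overline{s\,t}}(z)\le\Delta\}\subseteq I_j\subseteq\{z\in\overline{p\,q}\mid\mathrm{dist}_{\overline{s\,t}}(z)\le(1+\eps)\Delta\}$, where the last inclusion uses $\mathrm{dist}_{\overline{s\,t}}(z)\le\|z-s\|$ on the first regime, $\le\|z-t\|$ on the second, and $=\mathrm{dist}_\ell(z)$ on the third. I then output $a$ and $b$ as the two extreme endpoints, in the $p$-to-$q$ order, of the non-empty $I_j$, and report that the intersection is empty if all $I_j$ are empty, mirroring Lemma~\ref{lem:approxinterval}. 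Correctness of $\overline{a\,b}$ uses convexity of $\mathrm{dist}_{\overline{s\,t}}$: both $\{z\in\overline{p\,q}\mid\mathrm{dist}_{\overline{s\,t}}(z)\le\Delta\}$ and $\{z\in\overline{p\,q}\mid\mathrm{dist}_{\overline{s\,t}}(z)\le(1+\eps)\Delta\}$ are sub-segments of $\overline{p\,q}$; the first is covered by $I_1\cup I_2\cup I_3\subseteq\overline{a\,b}$, while $\overline{a\,b}$ is the convex hull of the $I_j$, each contained in the convex $(1+\eps)\Delta$-set, so $\overline{a\,b}$ lies in that set too. Summing the costs gives $O(d^2+d\log(1/\eps))$ simple operations, which is in particular $O(d^3+d\log(1/\eps))$.

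The step I expect to be the main obstacle is handling the ``cylindrical'' middle regime without the running time acquiring a dependence on $\|t-s\|/\Delta$: applying the bounding-box-and-sampling strategy of Lemma~\ref{lem:approxinterval} directly to the $\Delta$-neighbourhood of $\overline{s\,t}$ would place too many sample points along the elongated direction. The resolution is the observation that $\mathrm{dist}_\ell$ is invariant under translation along $\ell$, so projecting out the direction of $v$ turns the middle regime into an honest point-to-edge instance. The remaining work---the degenerate configurations and the ``empty intersection'' conventions---is routine once this is in place.
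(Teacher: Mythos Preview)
Your argument is correct and follows the same high-level plan as the paper: split $\overline{p\,q}$ into the three regimes determined by where the nearest point on $\overline{s\,t}$ lies, call Lemma~\ref{lem:approxinterval} on each piece, and take the convex hull of the results using convexity of $\mathrm{dist}_{\overline{s\,t}}$. The only substantive difference is in how the cylindrical middle regime is reduced to a point-to-edge instance. The paper builds an orthonormal basis with first vector $t-s$ via Gram--Schmidt, changes coordinates, and then drops the first coordinate to land in $\RR^{d-1}$; this basis change is what costs $O(d^3)$ and is the sole reason for the $d^3$ term in the stated bound. You instead observe that the orthogonal-complement projection $\rho(z)=z-\pi_\ell(z)$ can be written down directly with $O(d)$ operations per point, and that $\tau\mapsto\rho(z(\tau))$ is again an edge to which Lemma~\ref{lem:approxinterval} applies with $r=0$. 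This is genuinely more elementary and in fact yields $O(d^2+d\log(1/\eps))$, strictly better than the paper's bound; the paper's orthonormalisation is unnecessary for the argument to go through.
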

\begin{proof}
    Without loss of generality assume, $s=0$.
    As always let $\|\cdot\|=\|\cdot\|_m$.
    We begin by computing a orthogonal basis $V=\{v_1,\ldots,v_d\}$ of $\RR^d$, where $v_1=t$. This can be done with the Gram-Schmidt process in $O(d^3)$ operations. In $O(d^2)$ simple operations we can then transform every point into the new base $V$. The rest of this proof is done with respect to the base $V$. For any point $x$, denote by $x_1$ the first coordinate, and $x'$ the projection of $x$ ignoring the first coordinate.
     
    Then for any point $x\in\RR^d$, we know that
    \[
    \mathrm{dist}_{\overline{s\,t}}(x) =  \begin{cases}
                                                \|x-s\| & \text{if $x_1 \leq 0$}\\
                                                \|x'\| & \text{if $0\leq x_1 \leq t_1$}\\
                                                \|x-t\| & \text{if $x_1 \geq t_1$.}
                                            \end{cases}
    \]
    We then split $\overline{p\,q}$ into (up to) three segments $s_1,s_2$ and $s_3$, depending on which of the above cases is fulfilled. Without loss of generality we have exactly three segments. For the first and third segment, Lemma \ref{lem:approxinterval} computes the sought after set. For the second segment, project the start and end point by ignoring the first coordinate onto $p'$ and $q'$ in $\RR^{d-1}$. Without loss of generality, $p'\neq q'$. Then use Lemma \ref{lem:approxinterval} to compute $p_1'$ and $p_2'$ for disks centered at $0$, refer to Figure~\ref{fig:cylinderporjection}. Since $p'\neq q'$, we have a trivial bijection between $s_2$ and $\overline{p'\,q'}$, such that $\mathrm{dist}_{\overline{s\,t}}(x)=\|x'\|$, with $x\in s_2$ and $x'\in\overline{p'\,q'}$. Hence mapping $p_1'$ and $p_2'$ back onto $s_2$ also computes the sought after set on $s_2$. Finally, form the union of the three computed sets (if they are disconnected, the convex hull, as the distance function is convex), resulting in the sought after $\overline{a\,b}$. The amount of simple operations is bound by $O(d^3)$ for the Gram-Schmidt process, and $O(d\log(1/\eps))$ for each application of Lemma \ref{lem:approxinterval}.
\end{proof}

\begin{figure}
    \centering
    \includegraphics[width=0.9\textwidth]{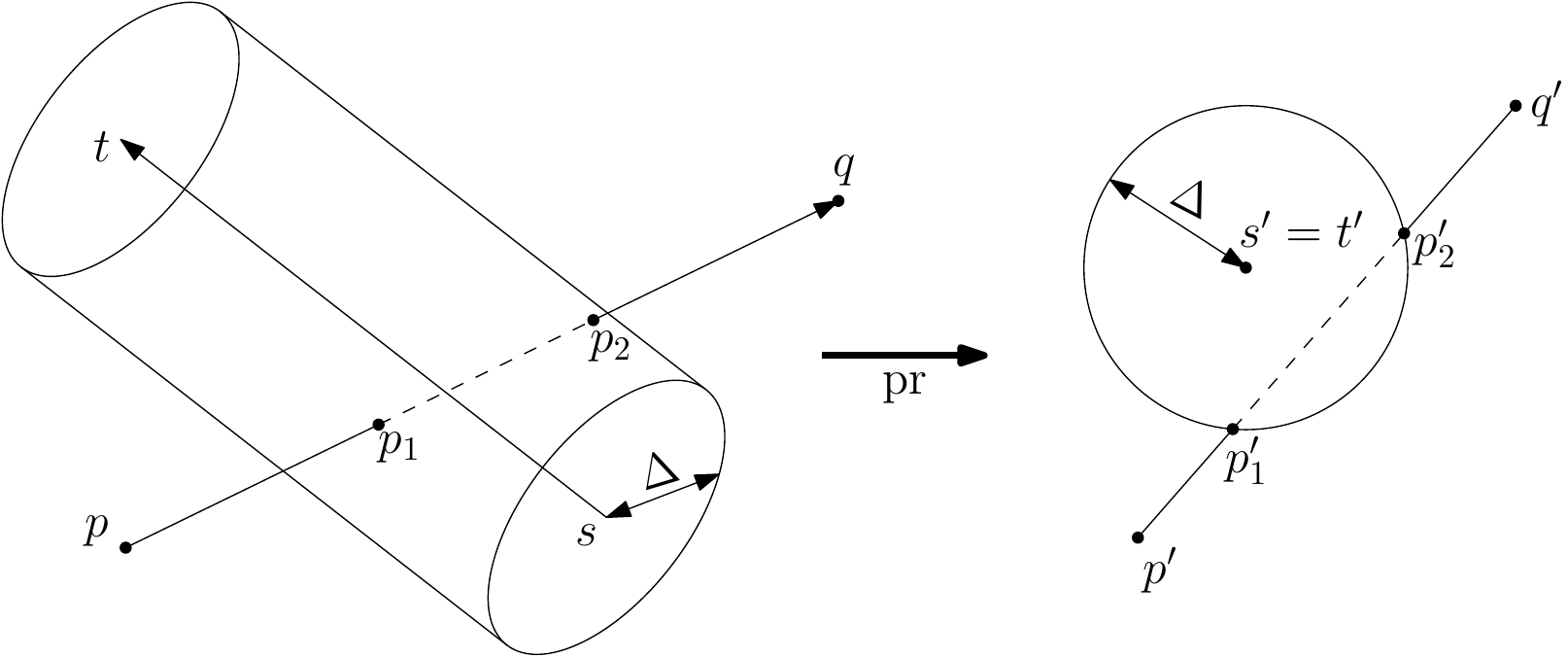}
    \caption{Projection of the cylinder in $\RR^d$ to a circle in $\RR^{d-1}$ preserving the distance function.}
    \label{fig:cylinderporjection}
\end{figure}

\newpage

\end{document}